\newif\ifprintall
\newcommand{\comment}[1]{}
\newtheorem{prop}{Proposition}
\newlength{\lgCase}
\newacronym{SBM}{SBM}{Stochastic Block Model}
\newacronym{MMSBM}{MMSBM}{Mixed Membership Stochastic Block Model}
\newacronym{SIMSBM}{SIMSBM}{Serialized Interacting Mixed Membership Stochastic Block Model}
\newacronym{IMMSBM}{IMMSBM}{Interacting Mixed Membership Stochastic Block Model}
\newacronym{Bi-MMSBM}{Bi-MMSBM}{Bipartite Mixed Membership Stochastic Block Model}
\newacronym{T-MBM}{T-MBM}{Tensorial Mixed Membership Stochastic Block Model}
\newacronym{CRP}{CRP}{Chinese Restaurant Process}
\newacronym{UP}{UP}{Uniform Process}
\newacronym{PY}{PY}{Pitman-Yor Process}
\newacronym{DP}{DP}{Dirichlet Process}
\newacronym{PDP}{PDP}{Powered Dirichlet Process}
\newacronym{DHP}{DHP}{Dirichlet-Hawkes Process}
\newacronym{PDHP}{PDHP}{Powered Dirichlet-Hawkes Process}
\newacronym{MPDHP}{MPDHP}{Multivariate Powered Dirichlet-Hawkes Process}
\newacronym{NMF}{NMF}{Non-negative Matrix Factorization}
\newacronym{SVD}{SVD}{Singular Value Decomposition}
\newacronym{CP}{CP}{Canonical Polyadic Decomposition}
\newacronym{EM}{EM}{Expectation-Maximization Algorithm}
\newacronym{TF}{TF}{Tensor Factorization}
\newacronym{KNN}{KNN}{K-nearest-neighbours}
\newacronym{NB}{NB}{Naive Bayes}
\newacronym{BL}{BL}{Baseline}
\newacronym{P@}{P@}{Precision at}
\newacronym{AUCROC}{AUCROC}{Area Under the Receiving Operator Curve}
\newacronym{AUCPR}{AUCPR}{Area Under the Precision-Recall Curve}
\newacronym{RAP}{RAP}{Rank Average Precision}
\newacronym{AP}{AP}{Average Precision}
\newacronym{NCE}{NCE}{Normalized Coverage Error}
\newacronym{RSS}{RSS}{Residual Sum of Squares}
\newacronym{JS}{JS}{Jensen-Shannon divergence}
\newacronym{MSE}{MSE}{Mean Squared Error}
\newacronym{RMSE}{RMSE}{Root Mean Squared Error}
\newacronym{MAE}{MAE}{Mean Absolute Error}
\newacronym{NMI}{NMI}{Normalized Mutual Information}
\newacronym{RBF}{RBF}{Radial basis function kernel}
\newacronym{SMC}{SMC}{Sequential Monte-Carlo}
\newacronym{Prof.}{Prof.}{Professor}
\newacronym{A.R.Prof.}{A.R.Prof.}{Associate Research Professor}
\newacronym{D.R.}{D.R.}{Director of Research}
\newglossaryentry{independent}
{
    name=independent,
    description={As opposed to interacting, when there is no interaction; the spreading entities do not affect each other. The whole equals the sum of its parts.}
}
\newglossaryentry{interacting}
{
    name=interacting,
    description={As opposed to independent, when there is an interaction; the spreading entities can affect each other. The whole does not equal the sum of its parts.}
}
\newglossaryentry{interaction}
{
    name=interaction,
    description={When several entities are interacting, name given when the whole does not equal the sum of its parts.}
}
\newglossaryentry{infected}
{
    name=infected,
    description={When spreaders take a decision on an entity, they are said to be infected by this entity. Infection denotes the transition between a ``Susceptible'' state and an ``Infected'' state.}
}
\newglossaryentry{entity}
{
    name=entity,
    plural=entities,
    description={Same as ``piece of information''. Any object that is susceptible to have an influence on spreaders. It carries a semantic meaning. For instance: a tweet, a meme, a song, a virus, a news article, etc.}
}
\newglossaryentry{cascade}
{
    name={cascade},
    description={A series of identical decisions taken by different spreaders about an entity. For instance, a tweet that gets retweeted n times makes a cascade of size n+1.}
}
\newglossaryentry{decision}
{
    name={decision},
    description={The action of a spreader when facing an entity. The decision can be endogenous (the spreader acts on the entity alone, e.g. by sharing it, liking it, clicking on it, reacting to it, etc.) or exogenous (the spreader creates a new entity as a consequence of the first one, e.g. a denial, an answer to a mail, to a tweet, etc.).}
}
\newglossaryentry{spread}
{
    name={spread},
    description={An endogenous decision that may infect other spreaders.}
}
\newglossaryentry{outcome}
{
    name={outcome},
    description={A possibly exogenous decision that has been taken given a specific context.}
}
\newglossaryentry{diffusion}
{
    name={diffusion},
    description={A set of individual decisions. For instance: a retweet cascade, an internet buzz, a Youtube trend, etc.}
}
\newglossaryentry{cluster}
{
    name={cluster},
    description={Group formed from a set of objects in such a way that objects in the same group are more similar (in some sense) to each other than to those in other groups.}
}
\newglossaryentry{membership}
{
    name={membership},
    description={The extent to which an entity belongs to a cluster.}
}
\newglossaryentry{type}
{
    name={type},
    description={Entities of the same type have an identical way of expressing their semantic meaning. For instance, two tweets are of the same type ("Tweet"), two news articles are of the same type ("News"), and a tweet and a news article \textit{can} be of the same type ("Textual documents") or not depending on modeling choices.}
}
\newglossaryentry{content}
{
    name={content},
    description={Mean through which entities carry a semantic meaning: text, pixels, sounds, etc.}
}
\newglossaryentry{virality}
{
    name={virality},
    description={Probability of a decision on a piece of information in the absence of interactions.}
}
\newglossaryentry{rich-get-richer}
{
    name={rich-get-richer},
    description={In clustering, a property that gives an entity a higher probability of belonging to the most populated clusters and a lesser probability to belong to the less populated ones.}
}
\newglossaryentry{interaction profile}
{
    name={interaction profile},
    plural={interaction profiles},
    description={A tool for visualizing the strength of interactions between entities as the time separating them grows.}
}
\newglossaryentry{collaborative filtering}
{
    name={collaborative filtering},
    description={A method of making automatic predictions (filtering) about the interests of a user by collecting preferences or taste information from many users (collaborating). Its underlying assumption is that if a person A has the same opinion as a person B on an issue, A is more likely to have B's opinion on a different issue than that of a randomly chosen person.}
}
\newglossaryentry{piece of information}
{
    name={piece of information},
    plural={pieces of information},
    description={Same as ``entity''. Any object that is susceptible to have an influence on spreaders. It carries a semantic meaning. For instance: a tweet, a meme, a song, a virus, a news article, etc.}
}
\newcommand{\maxf}[1]{{\bfseries #1}}
\newcommand{\rd}[1]{{\color{red} #1}}
\newcommand{\bl}[1]{{\color{blue} #1}}
\newcommand{\gr}[1]{{\color{green} #1}}
\newcommand{\ora}[1]{{\color{orange} #1}}
\newcommand{\listequationsname}{List of Equations}
\newcommand{\myequations}[1]{%
\addcontentsline{equ}{myequations}{\protect\numberline{\theequation}#1}\par}
\author{Gaël \textsc{Poux-Médard}} %
\keywords{} %
\begin{document}

\frontmatter %

\pagestyle{plain} %

\begin{titlepage}
\begin{center}

\vspace*{.06\textheight}
{\scshape\LARGE \univname\par}\vspace{1.5cm} %
\textsc{\Large Doctoral Thesis}\\[0.5cm] %

\vspace{1cm}

\HRule \\[0.4cm] %
{\Huge \bfseries \ttitle\par}\vspace{0.4cm} %
\HRule \\[1.5cm] %

{\centering \huge \href{https://gaelpouxmedard.github.io}{\authorname}}

\vspace{2.5cm}

{\centering \large \emph{Supervisors:}\\{\supname}}

\vspace{0.2cm}

{\centering \large \emph{Jury:}\\
\acrshort{Prof.} Christine \textsc{Largeron} (Rapporteur)\\
\acrshort{A.R.Prof.} Camille \textsc{Roth} (Rapporteur)\\
\acrshort{D.R.} Pierre \textsc{Borgnat}\\
\acrshort{Prof.} Fabrice \textsc{Rossi}}

\vspace{1cm}

\large \textit{A thesis submitted in fulfillment of the requirements\\ for the degree of \degreename}\\[0.3cm] %
\textit{in the}\\[0.4cm]
\groupname\\\deptname\\[2cm] %
 
\vfill

{\large \today}\\[4cm] %
 
\vfill
\end{center}
\end{titlepage}

\cleardoublepage

{
\ifprintall

\vspace*{0.2\textheight}

\vspace{1cm}

\noindent\enquote{
\itshape Psychohistory was the quintessence of sociology; it was the science of human behavior reduced to mathematical equations. The individual human being is unpredictable, but the reactions of human mobs could be treated statistically. [...]

\vspace{0.2cm}

\noindent
The Three Theorems of Psychohistorical Quantitivity: 
\begin{itemize}
    \item[-] The population under scrutiny is oblivious to the existence of the science of Psychohistory. 
    \item[-] The time periods dealt with are in the region of 3 generations. 
    \item[-] The population must be in the billions for a statistical probability to have a psychohistorical validity.
\end{itemize}   
}\bigbreak
\hfill Isaac Asimov, \textit{Foundation}, 1942

\begin{abstract}
\addchaptertocentry{\abstractname} %

Since the development of writing 5~000 years ago, human-generated data gets produced at an ever-increasing pace. This rate has been greatly influenced by technical innovations, such as clay tablets, papyrus, paper, press, and more recently the Internet. At the same time, new methods designed to handle and archive these growing information flows emerged: clay archives (Nippur, Mari), early libraries (Alexandria, Rome's Tabularia, Athens' Metroon), religious scriptoriums (abbeys, monasteries), modern libraries and, more recently, machine learning. Each of these archival methods aims at easing information retrieval.

Nowadays, archiving is not enough anymore. The amount of data that gets generated daily is beyond human comprehension, and appeals for new information retrieval strategies.
Instead of referencing every single data piece as in traditional archival techniques, a more relevant approach consists in understanding the overall ideas conveyed in data flows. To spot such general tendencies, a precise comprehension of the underlying data generation mechanisms is required.

In the rich literature tackling this problem, the question of information interaction remains nearly unexplored. Explicitly, few works explored the influence of anterior human-generated data on ulterior data creation mechanisms. In this manuscript, we develop a panel of new machine learning methods that explore this specific aspect of online data generation.

First, we investigate the frequency of such interactions. Building on recent advances made in Stochastic Block Modelling, we explore the role of interactions in several social networks. We find that \textbf{interactions are rare} in these datasets. 

Then, we wonder how interactions evolve over time. Earlier data pieces should not have an everlasting influence on ulterior data generation mechanisms
; an ad may exert a short-term influence on buying behaviours, but would have no influence on them a decade later for instance. 
We model this using dynamic network inference advances on social media datasets. We conclude that \textbf{interactions are brief} and that their intensity typically decays in an exponential fashion.

Finally, as an answer to the previous points, we design a framework that jointly \textbf{models rare and brief interactions}. Doing so, we exploit a recent bridge between Dirichlet processes and Point processes. We improve on this advance and discuss the more general Dirichlet-Point processes. We argue that this new class of models readily fits brief and sparse interaction modelling. We conduct a large-scale application on Reddit and find that \textbf{interactions play a minor role} in this dataset. 

From a broader perspective, our work results in a collection of highly flexible models and in a rethinking of core concepts of machine learning. Consequently, we open a range of novel perspectives both in terms of real-world applications and in terms of technical contributions to machine learning.

\end{abstract}

\begin{acknowledgements}
\addchaptertocentry{\acknowledgementname} %

Au terme de cette thèse, tant de gens méritent d'être reconnus pour avoir su me supporter, dans tous les sens du terme. Je vais tenter ici de dépeindre ce cadre, si agréable, à l'intérieur duquel j'ai pu mener les travaux qui composent cet ouvrage.

En tout premier lieu, il convient évidemment de remercier les personnes sans qui cette thèse n'aurait pu avoir eu lieu, mes directeurs de recherche Julien et Sabine. En outre, et en dehors des convenances cette fois, j'aimerais sincèrement les remercier pour la confiance qu'ils m'ont accordée lors de ces trois années de thèse. Cette confiance, qui s'est notamment exprimée au travers de l'autonomie dont j'ai bénéficié, et d'un réel intérêt pour mes productions. Cette confiance qui m'aura permis de mener à bien mon projet de recherche sous une égide qui m'est chère, l'indépendance.

En étendant ce cadre, on retrouve les personnes qui ont réussi à rendre une vie de laboratoire pourtant ponctuée de confinements et d'isolements, stimulante et enrichissante malgré tout. Je tiens donc à sincèrement remercier Antoine, Arwa, Clément, Enzo, Habiba, Jean, Loïc, Margot, Martial, Robin, Adrien, (un autre) Antoine, Camille, Jairo, (un autre) Julien, et Stéphane.

En élargissant encore ce cadre, on trouve les personnes qui, sans avoir directement pris à part aux travaux présentés ici, leur ont tout de même permis de voir le jour, par le soutien moral qu'ils ont su m'apporter --en affection, en présence et en houblon. Merci Téo, Tony, Nora, Noémie, Nadir, Léa, Lucie, Lucas, Kenza, Delphine, Cyril, Claim, Chloé, Benoît, Audrey, (une autre) Audrey, et Amine !

Enfin, pour des raisons évidentes provenant du fond de mon cœur, qui s'ajoutent à la plupart des remerciements précédents, un grand merci à ma famille, merci Elwenn, merci Lucille, merci Maman, merci Papa, grazie Manuela, merci (une autre !) Audrey, merci mes grands-parents.

Afin que le cadre soit complet, je me dois pour finir de remercier les personnes qui, sans le savoir, m'ont mis sur la voie que j'emprunte aujourd'hui. Je me contenterai d'une citation que je juge pertinente au regard de ma situation : \emph{Si je devais résumer ma vie aujourd'hui, je dirais que c'est d'abord des rencontres. Des gens qui m'ont tendu la main [...]. Et c'est assez curieux de se dire que les hasards, les rencontres forgent une destinée.} Ainsi, j'aimerais remercier (la même) Chloé, qui entre autres choses aura eu ce bon goût de me conseiller de lire Asimov, et qui en une simple phrase m'a mené cinq ans plus tard à rédiger ce manuscrit. Gràcies Marta, qui m'a permis à la fois de concrétiser cette petite idée née de la lecture précédente, mais également de m'avoir mis le pied à l'étrier de la recherche il y a quatre ans, et de m'y avoir donné goût par la même occasion. Enfin, merci à ces personnes dont je connais l'existence mais ignore le nom, qui en m'ouvrant une porte il y a cinq et trois ans, ou en m'en claquant une au nez il y a huit, six, cinq \textit{et} trois ans, m'ont permis de guider mes pas là où j'en suis aujourd'hui.

Merci.

\end{acknowledgements}

\else
\fi 
}

\shorttoc{Contents}{2}

\mainmatter %

\pagestyle{thesis} %

\chapter{Introduction}
\label{Chapter-Introduction}

\section{General considerations}
With the advent of the Internet, society realized that starting a manuscript as ``With the advent of the Internet'' is a banality at best, and obsolete at worst. It has now been thirty years that the amount of available online data grows exponentially. It has been twice as much that tools to automatically handle large corpora began to be developed. The impact of the Internet on societies is now a well-established fact. We know that large flows of data stream through it every second. We know the importance of developing automated means to make sense of these massive datasets. We know how crucial the understanding of the underlying mechanisms from which data emerges is. Or do we?

\subsection{About these large flows of data}
Most Internet users have at least an idea of how little of the total information flows appears on their usual platforms, be it Facebook, Reddit, Twitter, or any user generated \gls{content} platform. Understanding the voices of 
5 billion Internet users is not a human task. What most Internet users do not know, however, is how much data this represents. As an analogy, think of this child's dream of reading every book on the planet –which was eventually doomed after the invention of the press. Taking the 200,000 million books stored at the British Library as a good estimate of the available literature, such a task would imply reading roughly 5,000 books a day for 80 straight years. This amount of information represents a rough estimate of 150 terabytes of textual data. The task seems colossal, even by automated means -- the largest textual model to date GPT-3 is trained on 45TB of text data. However, confronting today's reality, the same amount of textual data gets published on Twitter over the course of a year and a half.
It remains a pretty long time, given the same amount of text gets sent over Whatsapp \textit{every single day}. Moreover, these numbers are for text only and do not account for other types of \gls{content}, such as audio, video, or images.

\subsection{About the automated means to make sense out large corpora}
To ``make sense out of the data'' can have various interpretations depending on the studied object. It is often used as a shortcut in scientific articles' abstracts. A quick query on any scientific search engine shows this expression systematically refers to a different aspect of data modelling: understanding tumour growth from medical reports, boosting a company's value from utility data, identifying depression in a pile of text messages, provoking Internet buzzes, etc. The common feature of these examples is that data are used as a means of an application. To ``make sense'' out of large datasets is to be understood as ``make them usable'' or ``describe them in-depth''. Being able to gather 150TB of Whatsapp messages a day is useless unless we have an application for it --be it descriptive or applied, medical or commercial, etc. Making sense of datasets boils down to extracting elements of interest from them. A lower-level description of ``making sense'' of it implies defining numerical quantities to compare data elements to each other. One of such quantities that will be extensively discussed in this manuscript is \glspl{entity}' group \gls{membership}; a favoured way to describe datasets is to group its elements into \glspl{cluster} and analyse data at a more tractable level. \Glspl{cluster} containing different data elements are likely to tell different stories.

\subsection{About understanding underlying data-generation mechanisms}
This point makes use of the two previous ones. We would like to develop models that explain the emergence of data. For instance, I tweeted this news about the last Disney movie because I wanted to, as the result of a free choice. However, there likely is more to it. I may have tweeted because I heard a Disney song in a supermarket, because a Facebook friend talked about it a few days earlier, because I felt like eating popcorn in the dark after seeing an ad on TV and went to see a random movie, or because \textit{someone} got influenced on her side and asked me to go. This \gls{decision} has certainly been influenced by these previous exposures and their \glspl{interaction} (see Fig.~\ref{fig-freewill}) --the extent of this influence appeals to philosophical notions about free will that are not discussed in the manuscript. Either way, the \gls{decision} could be \textit{explained} by underlying influence mechanisms --advertisement, social relations, hunger, social relations again. At the scale of individuals, understanding these processes is intractable. However, at the scale of 5 billion Internet users, statistics may be sufficient to accurately model what happens under the hood. This is what we call \textit{information \gls{spread}}. A piece of influence travels from one person to the other (people talking), from a media to a person (people browsing the Internet), or from a media to a media (news replication). Understanding the underlying mechanisms of information \gls{spread} can be considered from the angle of individual transmissions: who/what \gls{spread} what to who/what. On this matter, many works have modelled how \glspl{piece of information} \gls{spread} from one \gls{entity} to another. Understanding these mechanisms is crucial for moderation purposes. For instance: how to surgically stop the \gls{spread} of fake news by blocking a few spreader accounts, or by diffusing a denial from strategic spreaders?; how to nudge populations towards healthier behaviours by broadcasting the right \gls{content} on the right platform at the right time --which was a central objective of Obama's Nudge Unit?; how to advertise a product using selected spreaders (or \textit{influencers} in this case) that maximize buys?; etc. 

\begin{figure}
    \centering
    \includegraphics[width=\textwidth]{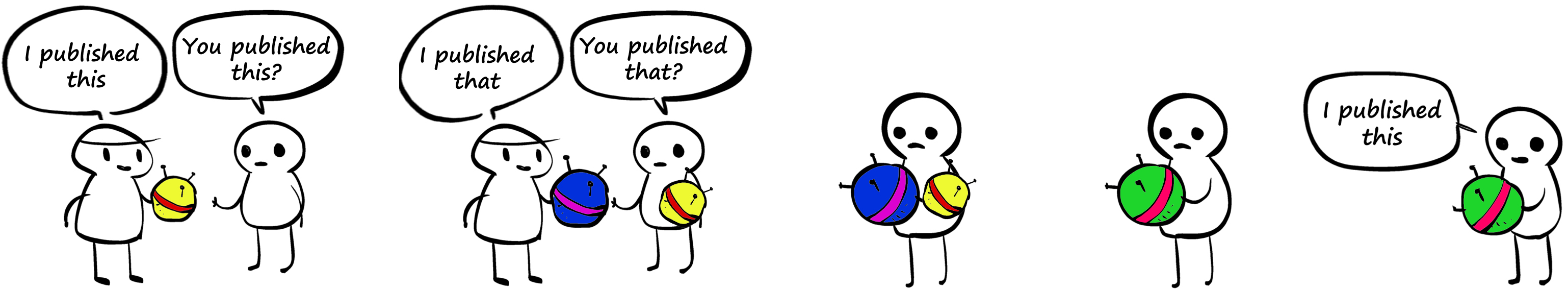}
    \caption[Intro - Information \gls{spread} and \gls{interaction}]{A cartoon illustrating how the \gls{interaction} between previous events can condition present events.}
    \label{fig-freewill}
\end{figure}

\section{Motivations}
The point about understanding underlying data-generation mechanisms is the one driving the present work. As stated in the title, we propose to explore how \glspl{interaction} play a role in information \gls{spread}. A rigorous definition of the \glspl{interaction} considered in this manuscript will be given further in the introduction (Section~\ref{Intro-definitions}); the key point here is that ``\textit{\gls{interacting}}'' is opposed to ``\textit{independent}''. That said, it appears that most works on information \gls{spread} consider spreading \glspl{entity} that are \gls{independent} of each other. This is poorly illustrated in Fig.~\ref{fig-indepcasc}.

\begin{figure}
    \centering
    \includegraphics[width=\textwidth]{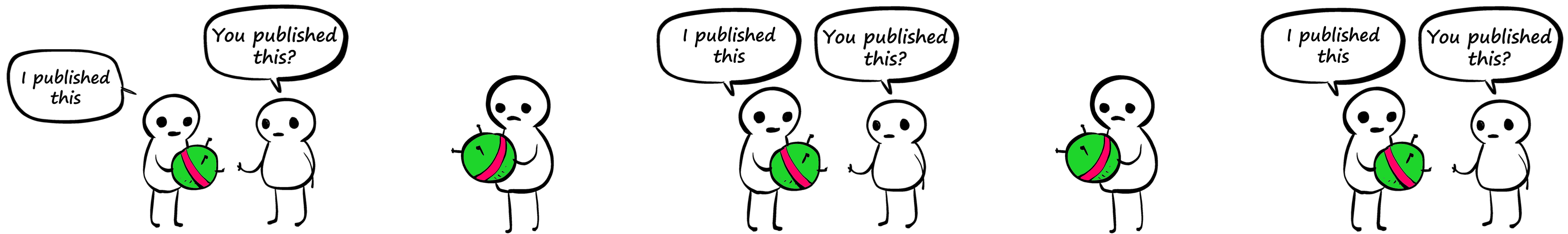}
    \caption[Intro - \Gls{independent} \gls{spread}]{\textbf{\Gls{independent} \gls{spread} assumption} --- Pieces of information \gls{spread} and replicate independently from each other.}
    \label{fig-indepcasc}
\end{figure}

\subsection{Most existing models do not consider information \gls{interaction}}
\paragraph{\Gls{independent} \Gls{cascade} model}
A seminal work that illustrates this paradigm is called the \Gls{independent} \Gls{cascade} model \citep{Kempe2003IndendendentCascade}. In this literature, a node (or user) is said to be \textit{\gls{infected}} when she acts on a \textit{\gls{piece of information}} (retweeting a tweet, liking a post, commenting on news, etc.).
In this work, an initial set of users is \textit{\gls{infected}}, meaning they act as initial spreaders for a given \gls{piece of information} (e.g., a virus, a news, a tweet, etc.). Each spreader has a single chance to infect each of its neighbours in a network. After deciding whether each node gets \gls{infected} by its contagious neighbour(s), the time goes forward, and the process repeats. Another model proposed in \citep{Kempe2003IndendendentCascade} is the Linear Threshold model. The process is roughly the same, except that a node gets \gls{infected} after repeated exposures to contagious neighbours. At each time step, the viral charge of a node increases according to its neighbour’s infection status, and to the strength of the link between them. In these processes, each node is \gls{infected} conditionally to its links to other nodes in the network, but not according to the \textit{\gls{cascade}} of infections flowing through the network. Two \glspl{cascade} spreading simultaneously on a network are assumed to be \gls{independent}.

\paragraph{Extensions and other \gls{independent}-spreading models}
Several more recent works could have illustrated that most research is oriented towards considering \gls{independent} \glspl{cascade} spreading on networks. In \citep{Saito2009ContinuousTimeIndependentCascade}, the authors propose to extend \citep{Kempe2003IndendendentCascade} to model the \gls{diffusion} process in continuous time (instead of considering time steps). In \citep{Larremore2012AvalancheCascade}, the authors propose the avalanche \gls{cascade} model, where one \gls{piece of information} spreads from a single node. This work has been extended in \citep{Poux2020InfluentialSpreaders} to highlight the importance of network structure in the description of \gls{independent} \gls{diffusion} processes. In \citep{Bourigault2016ApprentissageDR}, the authors propose to embed spreaders in a latent space and to model information \gls{diffusion} as heat \gls{diffusion} in this latent space. More elaborate models have been proposed to model \gls{diffusion} processes. Some propose to consider nodes' metadata in the modelling \citep{Saito2011SpreadUserAttr}, other to model the spreading processes according to the spreading \gls{content} \citep{Barbieri2012TopicawareSI,Du2013TopicCascade}, or to do both jointly \citep{Lagnier2013SpreadContentAndUsrProfile}. In all these works, spreading processes are \gls{independent} of each other.

\paragraph{Inferring the network from \gls{independent} \glspl{cascade}}
Reversing the problem, several works proposed to infer the underlying spreading network from the infection \glspl{cascade}. In \citep{GomezRodriguez2011NetRate}, the authors develop the NetRate model that considers infection timestamps to recover the underlying \gls{diffusion} network using this single \gls{piece of information}. With InfoPath \citep{GomezRodriguez2013InfoPath}, they extend their previous method to model time-varying \gls{diffusion} networks, and finally in \citep{GomezRodriguez2013SurvivalAnalysis} they generalize their previous works NetRate and InfoPath in a single survival analysis framework, as well as several ulterior works based on their methodology --MoNet \citep{Wang2012Monet}, KernelCascade \citep{Du2012KernelCascade}. The method has then been extended to consider the spreading \glspl{entity}' \gls{content} \citep{Du2013TopicCascade,Barbieri2017SurvivalFactorization}. However, here again, each spreading process is modelled independently from the others. Two \glspl{cascade} occurring jointly will not affect each other i.e., there is no \gls{interaction} between pieces of information.

\subsection{Should we consider information \gls{interaction}?}
Before devoting --likely significant-- efforts to consider information \gls{interaction} in all the models introduced in this section, we should first conclude on its importance in spreading processes. Answering this question is the guiding thread of the whole work described in this manuscript. 

Studying the role of information \gls{interaction} can be broken down into three main parts --discussed at the very beginning of this manuscript. 
Firstly, we must \textbf{define and characterize information \glspl{interaction} in the large flows of data}. How to spot them, how to measure them, are there particular challenges to solve before being able to study them, etc.?
Secondly, we can develop usable models \textbf{to make sense of large datasets}. Do we improve results on various tasks by considering \glspl{interaction}, do they have a significant role in our corpora, etc.?
Finally, once information \glspl{interaction} have been characterized and shown to influence models' results, come the \textbf{understanding of the unveiled mechanisms} at stake. Providing \glspl{interaction} exist, where do they occur, can we unveil unexpected \gls{interaction} patterns, do these results improve our understanding of spreading processes as a whole, etc.?

Providing insights on \glspl{interaction} through the lens of those guiding questions is the overall motivation of this manuscript. If \glspl{interaction} indeed play a significant role in information \gls{spread}, the impact could be broad, as a whole class of information \gls{spread} literature would have to be rethought from the perspective of \glspl{interaction} modelling. If \glspl{interaction} are shown to have lesser importance, our work would spare ulterior efforts on this problem. However, as it is the norm in research, we do not expect our conclusions to be so definitive and general. Therefore, our motivation is mainly about providing a methodology and models for \gls{interaction} investigation studies, as well as about glimpsing a global conclusion on \glspl{interaction} in spreading processes with a case study. 

\section{Landscape of information \gls{interaction} modelling}
At the end of his --seminal-- PhD thesis on the dynamics of \gls{diffusion} networks, M. Gomez-Rodriguez states ``\textit{we have assumed contagions to propagate independently. However, this is over-simplistic, as noticed recently \citep{Prakash2012WinnerTakesAll,Myers2012CoC}. It would be interesting to relax this assumption.}'' \citep{Rodriguez2013StructureAD}. 

In this section, we review the main efforts that have been made for modelling \glspl{interaction} in information \gls{spread}. Our goal here is to brush a global landscape of what is done on the topic without entering the technical specifics. This section does not substitute to the more technical state-of-the-art that inaugurates each chapter of this manuscript.

\subsection{Theoretical studies}
Several models have been proposed to investigate how information competes for user attention. This first section will treat theoretical works. Authors essentially set up the supposed rules for \gls{diffusion} processes, simulate them, and compare them to ground-truth observations; there is no \textit{learning} from the data.

\paragraph{Information overload as the consequence of micro-\glspl{interaction}}
An early work on the topic \citep{Weng2012CompetitionMemes} develops the concept of \textit{information overload}. The overload is characterized by the entropy of the \textit{meme} topics a user has retweeted. A meme is a \gls{piece of information} that carries a semantic meaning on online platforms. Here, \glspl{interaction} are considered on a global scale: we do not know which are the \gls{interacting} \glspl{piece of information}, but we observe the overall effect of the \gls{interaction}, which is a user's information overload.
The authors consider users have a given affinity regarding certain memes and are exposed to them due to their ties to other people in a network. Affinity is defined as the Maximum Information Path measure \citep{Markines2009PathSimMeasure}. Finally, the authors simulate a \gls{diffusion} process accounting for both user interests and information overload on a synthetic Erd\"os-Renyi network \citep{Erdos1960ERgraph}. In their experiments, they achieve to manually tune their model parameters and recover aggregated measures that are similar to those observed on Twitter.
This work has been followed by a large-scale quantitative study to describe the role of information overload \citep{GomezRodriguez2014InformationOverload}. The authors find that the maximum information processing rate for tweeter users (in 2010) is 30 tweets per hour, beyond which a user is said to be overloaded. Overloaded users restrict their attention to specific information sources. The number of sources a user is likely to retweet reaches a threshold once the number of followees gets large. This study confirms the need for considering information overload in \gls{diffusion} processes. However, considering information overload does not allow to explain how the overload happens. \Glspl{interaction} here are considered on a global scale, but the underlying agent-based \gls{interaction} mechanisms remain unknown.

\paragraph{Modelling micro-\glspl{interaction}}
Other works proposed to tackle information \gls{interaction} modelling from a microscopic perspective. The idea is to consider each \gls{piece of information} individually and observe how it relates to every other spreading \gls{entity}. In \citep{Beutel2012InteractingViruses}, the authors propose a \gls{diffusion} model to infer information outbreaks under the assumption of pair \gls{interaction} between \glspl{piece of information}. The proportion of nodes \gls{infected} by information A, by information B, and by both A and B at time $t$ is described using a set of differential equations --similar to SIR-like models. A node \gls{infected} by a given \gls{piece of information} can inhibit this node's sensitivity to the other one. They illustrate it using web browser (e.g., Firefox, Opera, Chrome, etc.) adoption: in most cases, one user will use one main browser at the time and will be much less likely to download another one once \gls{infected}.
The competition between the spreading pieces of information is accounted for on a global scale --no network is considered. The authors derive some elementary properties of the process they defined and then show their equations approach real-world users’ behaviour on the adoption of either Firefox or Chrome as web browsers. To obtain these results, the \gls{interaction} parameter was tuned manually, as well as several other hyperparameters.

\paragraph{Modelling complex micro-\glspl{interaction}}
Recently, the authors in \citep{Zhu2020CompetitionInformationSocialNet} proposed a complex model to account for cooperation and competition
among information on social networks. This model considers \glspl{interaction} at a microscopic scale, meaning that the influence of each \gls{piece of information} on the others is considered. The authors jointly model several attributes: user affinity, information complexity, bot spreaders (nodes that \gls{spread} every information given to them), user memory, and social reinforcement. The authors define ad-hoc rules to model a \gls{spread} that could consider all these parameters and specifically study the results of the simulations. The proposed modelling reproduces some emergent effects from this micro-model. However, no extensive comparison to real-world data is done.

\paragraph{We should \textit{learn} models from the data}
Note that we presented only a snapshot of current research on theoretical \gls{interaction} modelling which is sufficient for our demonstration. Several other models tackle this problem in an analogous way \citep{Wang2019CoevolutionSpreadNet}. The fact that most works are done in a theoretical framework is decisive in the motivation of the present work. The models we just presented in this section first define the rules for \gls{interacting} processes, and then tune parameters to reproduce global-scale observations on synthetic networks. However, a single effect can arise from several causes. The global observed retweeting behaviour could arise from information \gls{interacting} processes, but it could also arise from other underlying processes --such as hidden ties from other media sources for instance \citep{Myers2012ExternalInfluence}. If most of those studies develop interesting models, we believe it is fundamental to follow the opposite process: first, we should learn the model from the data, and then analyse its characteristics, instead of first defining an ad-hoc model and only then comparing its results to the data. In the next section, we present works that tackle the problem of information \gls{interaction} modelling from a machine learning perspective.

\subsection{Data-driven studies}

\paragraph{Clash of the Contagion}
To our knowledge, Clash of the contagions \citep{Myers2012CoC} is the earliest attempt to learn the \gls{interaction} intensity from the data. This work extensively refers to \citep{Beutel2012InteractingViruses}, as it proposes to infer the pair-\gls{interaction} parameter described in the previous section. Their main motivation is the prediction of retweets on Twitter. It estimates the probability of retweeting a piece of information given the last tweets a user has been exposed to, according to their position in the Twitter feed. To do so, they define a block-model trained on quadruplets \textit{(tweet A, tweet B, $\Delta t$, tweet B retweeted?)} where $\Delta t$ represents the time separation between A and B. Tweets are first grouped into \glspl{cluster} --one different \gls{cluster} at each time slice--, and \glspl{cluster} interact with each other to determine whether yes or no the tweet B has been retweeted by the exposed user. The authors conclude that most \glspl{interaction} between tweets are weak, but that their overall effect cannot be neglected. 

However, the method suffers various flaws. Most importantly, it is based on a questionable hypothesis on the prior probability of a retweet (in the absence of \glspl{interaction}). The probability of retweets in the absence of \glspl{interaction} is defined as equal to the frequency of retweets. \Glspl{interaction} are defined on this ground. We show later in this manuscript that this assumption does not hold, and thus makes conclusions about information \glspl{interaction} sloppy. Other technical problems have been encountered during the replication of their results. Typically, the approach does not have convergence guarantees and is ill-defined -- typically because the model's ``probabilities'' are not constrained to be between 0 and 1. A note on our implementation to alleviate some of these problems is provided in Appendix, Section~\ref{InterRate-implemCoC}.

\paragraph{Correlated \glspl{cascade}}
Another work that tackles \gls{interaction} modelling from a machine learning point of view proposed the Correlated \Gls{cascade} model  \citep{Zarezade2017CorrelatedCascade}. In this work, the authors define a marked Hawkes process to model how existing \glspl{piece of information} condition the appearance of ulterior \glspl{piece of information} in a network. Explicitly, it models the rate at which each of two \glspl{piece of information} flow through the network's edges depending on the (non-)presence of each information type. The Hawkes process models the intensity of the flows between each user, and the \gls{interaction} term is tuned by a hyper-parameter $\beta$. 

The final aim is to infer the latent \gls{diffusion} network of an \gls{interacting} spreading process. If the \gls{interaction} parameter is not directly inferred as in \citep{Myers2012CoC}, its tuning indirectly relies on observation from the data, as it plays a role in the inference of the \gls{diffusion} network. The authors show that their model allows recovering global processes on real-world spreading processes on Twitter. In the conclusion, the authors formulate the open problem of learning several kernels and the \glspl{interaction} intensity parameter $\beta$, that we address in our work.

\paragraph{Further learning-based studies?}
In the previous section, we only presented a snapshot of the available works on information \gls{interaction} modelling from a theoretical perspective. However, to the best of our knowledge, the two models introduced in this section are quite an exhaustive list of machine learning efforts investigating this problem. This concurs with another recent survey on coevolution in information \gls{spread} \citep{Wang2019CoevolutionSpreadNet}, which only cites \citep{Myers2012CoC,Zarezade2017CorrelatedCascade} as examples of learning \glspl{interaction} in information \gls{spread}.

On one hand, \citep{Myers2012CoC} proposes to model the \gls{interaction} term between pair \glspl{interaction}, does not consider continuous-time modelling, suffers various formulation flaws and arguable assumptions. On the other hand, \citep{Zarezade2017CorrelatedCascade} models continuous-time processes and considers pair \glspl{interaction} but does not infer the \gls{interaction} parameter, which must be tuned. This leaves plenty of space for our work to fit in. As we will discuss in the next section, we will fill these gaps by developing a framework that models the \gls{interaction} parameters, over continuous times, and that can consider not only pair \glspl{interaction} but n-order \glspl{interaction} for a reasonable computational cost.

\subsection{Definitions}
\label{Intro-definitions}
In the works introduced in the section above, words such as ``\gls{interaction}'' or ``information'' can have many different meanings. For instance, \glspl{interaction} can take place between users, \glspl{piece of information}, a user and its environment, etc. In this section, we give a strict definition of key concepts present we use throughout this manuscript. In doing so, we frame our work into a specific research area. These concepts are illustrated in Fig.~\ref{fig-illustr-defs}. %

\begin{figure}
    \centering
    \includegraphics[width=\textwidth]{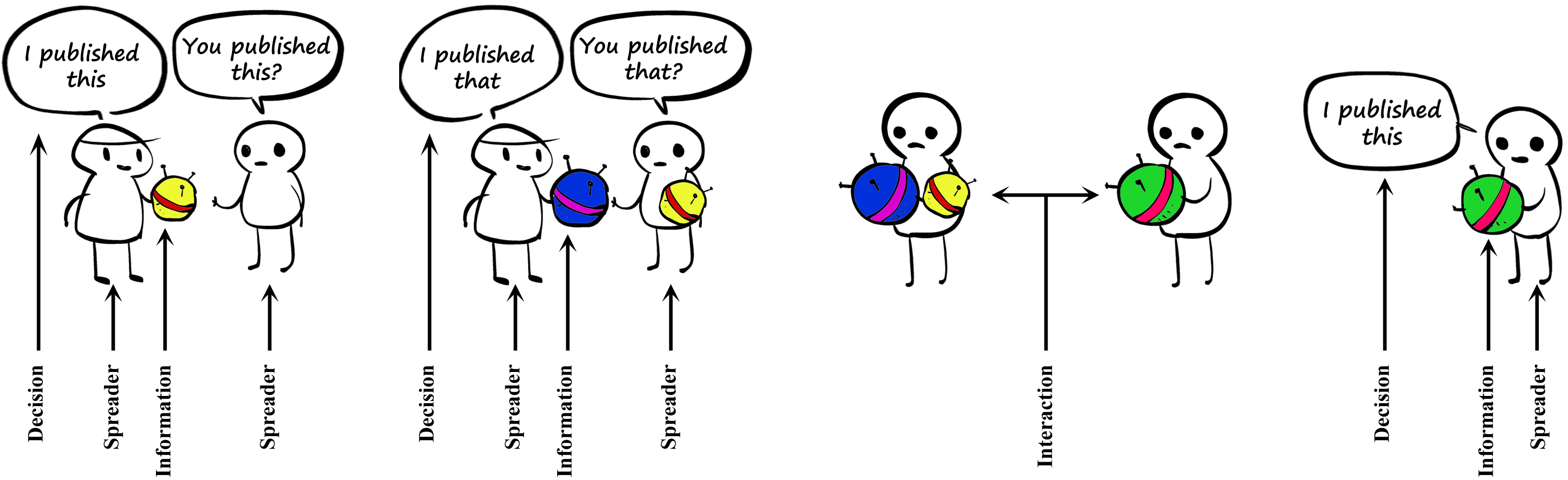}
    \caption[Intro - Illustration of the definitions]{\textbf{Illustration of the definitions} --- Spreader, Information, \Gls{decision}, and \Gls{interaction}.}
    \label{fig-illustr-defs}
\end{figure}

\textit{\textbf{Spreader}} --- An agent that is likely to \gls{spread} a \gls{piece of information} to other spreaders. In the context of social media, spreaders are often referred to as \textit{users}. For instance: a tweeter user, a member of a friends/family group, a journalist, an influencer, etc.

\textit{\textbf{Information, or \Gls{entity}}} --- Any object that is susceptible to have an influence on spreaders. It carries a semantic meaning. For instance: a tweet, a meme, a song, a virus, a news article, etc.

\textit{\textbf{\Gls{decision}}} --- The action of a spreader when facing an \gls{entity}. The \gls{decision} can be endogenous (the spreader acts on the \gls{entity} alone, e.g. by sharing it, liking it, clicking on it, reacting to it, etc.) or exogenous (the spreader creates a new \gls{entity} as a consequence of the first one, e.g. a denial, an answer to a mail, to a tweet, etc.). A \gls{diffusion} is a set of individual \glspl{decision}. For instance: a retweet \gls{cascade}, an internet buzz, a Youtube trend, etc.

\textit{\textbf{\Gls{interaction}}} --- When the joint effect of several \glspl{entity} does not equal the sum of their individual effect on spreaders. For instance, imagine a spreader that retweets (\gls{decision} of retweeting noted $x$) a tweet A given only A with 10\% chance, retweets the same tweet given only tweet B with 50\% chance, but also retweets A given tweet A \textit{and} tweet B with a 15\% chance. We say there is \gls{interaction} between A and B, because ${P(x \vert A)P(x \vert B) = 5\% \neq P(x \vert A,B) = 10\%}$. In this example, this \gls{interaction} raises the \gls{decision} probability by 5\%. We sketch an illustration for this definition in Fig.~\ref{fig-illustr-inter-intro}.

\begin{figure}
    \centering
    \includegraphics{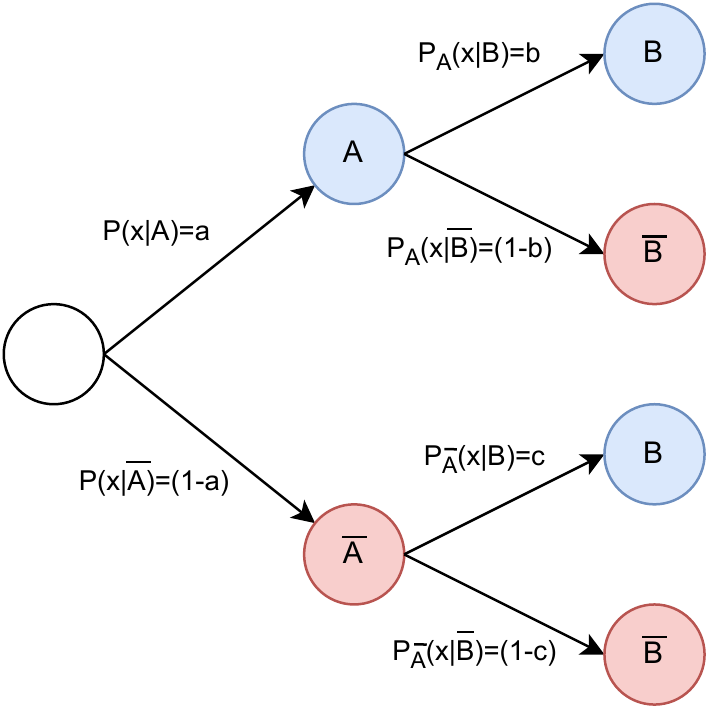}
    \caption[Intro - Definition of an \gls{interaction}]{\textbf{Definition of an \gls{interaction}} --- Probability tree for action $x$ given the presence of $A$ and $B$. We always have the following relation $P(x \vert A,B) = P(x \vert A) P_A(x \vert B)$. 
    Now if $b=c$, we have $P(x \vert B) =  P(x \vert A) P_A(x \vert B) + P(x \vert \bar A) P_{\bar A}(x \vert B) = P_A(x \vert B) = P_{\bar A}(x \vert B)$. The independence relation follows: $P(x \vert A,B) = P(x \vert A) P(x \vert B)$. If $b \neq c$, the independence relation does not hold; there is an \gls{interaction} between $A$ and $B$ regarding $x$.}
    \label{fig-illustr-inter-intro}
\end{figure}

\section{About this manuscript}
\subsection{Problematic}
We formulate our problem in the following terms:
\vspace{0.5cm}

\noindent
\textit{\textbf{\Large How to model \glspl{interaction} between pieces of information so as to evaluate their impact on spreading mechanisms?}}
\vspace{0.cm}

\vspace{0.2cm}
\noindent
We decompose this guiding thread into four more specific questions:
\vspace{0.2cm}

\noindent
\underline{\textbf{Question 1 --- How frequent are \glspl{interaction}?}}\\
Do significant \glspl{interaction} occur between each spreading \gls{entity} individually? Or are they rare? Can we define groups of \glspl{interaction}? How to model possibly rare phenomena?
\vspace{0.2cm}

\noindent
\underline{\textbf{Question 2 --- How persistent are \glspl{interaction}?}}\\
Do \glspl{interaction} last in time? How long does the influence of a \gls{piece of information} remain significant? How to unveil generic \gls{interaction} temporal trends?
\vspace{0.2cm}

\noindent
\underline{\textbf{Question 3 --- Can we efficiently model \glspl{interaction}?}}\\
Given an answer to our two first questions, can we develop a global model that would account for possibly rare \textit{and} brief \glspl{interaction}? Could such a model handle large corpora? What challenges arise in its development, and how to overcome them?
\vspace{0.2cm}

\noindent
\underline{\textbf{Question 4 --- Do \glspl{interaction} play a significant role in spreading processes?}}\\
Provided we develop a way to efficiently model \glspl{interaction}, we can conclude on the importance of \glspl{interaction} in real-world \gls{diffusion} processes. In which corpus is it relevant to consider information \gls{interaction}? Do \glspl{interaction} play a significant role in it? Do we unveil unsuspected \gls{interaction} patterns?
\vspace{0.2cm}

\subsection{Plan and contributions}
Our contributions in this manuscript are multiple. Overall, we develop a methodology to investigate the role of \glspl{interaction} in spreading processes. In doing so, we derive models that are based on radically different fields of machine learning. These models are applied to \gls{interaction} modelling but can serve much more global purposes. 
All our contributions in terms of models are summarized in Table~\ref{table-contrib}.

\begin{table}
    \caption[Intro - Contributions]{\textbf{Contributions presented in this manuscript} --- Our contributions are listed below in the order of their appearance in the text.}
    \label{table-contrib}
	\centering
    \noindent\makebox[\textwidth]{\resizebox{\textwidth}{!}{
	\begin{tabular}{|c|c|c|c|c|c|c|c|}
	\cline{1-8}
	& \hyperref[SIMSBM]{SIMSBM} & \hyperref[IMMSBM]{IMMSBM}& \hyperref[SDSBM]{SDSBM} & \hyperref[Chapter-InterRate]{InterRate} & \hyperref[PDP]{PDP} & \hyperref[PDHP]{PDHP} & \hyperref[MPDHP]{MPDHP}\\
	& Chap.~\ref{Chapter-SBMs} & Chap.~\ref{Chapter-SBMs} & Chap.~\ref{Chapter-SBMs} & Chap.~\ref{Chapter-InterRate} & Chap.~\ref{Chapter-DHPs} & Chap.~\ref{Chapter-DHPs} & Chap.~\ref{Chapter-DHPs} \\
	& Q1 & Q1 & Q1 & Q2 & Q3 & Q2,Q3 & Q1-Q4 \\
	\cline{1-8}
	Self-\glspl{interaction}    & x & x & x & x & x & x & x \\
	\cline{1-8}
	Pair-\glspl{interaction}    & x & x & x & x &   & x & x \\
	\cline{1-8}
	N-\glspl{interaction}       & x &  & x &   &   & x & x \\
	\cline{1-8}
	Clustering           & x & x & x &   & x & x & x \\
	\cline{1-8}
	Discrete time    &   &   & x & x &   & x & x \\
	\cline{1-8}
	Continuous time  &   &   &   & x &   & x & x \\
	\cline{1-8}
	Online inference     &   &   &   &   & x & x & x \\
	\cline{1-8}
	\end{tabular}
	}}
\end{table}

\subsubsection{Chapter~\ref{Chapter-SBMs} -- Stochastic Block Models (Question 1)}
Publications --- Section~\ref{IMMSBM}: \citep{Poux2021IMMSBM}
\vspace{0.2cm}

\noindent
In this chapter, we observe if there are regularities in the way \glspl{interaction} may happen by using recent advances in Stochastic Block Modelling (\acrshort{SBM}). We assume that subsets of \glspl{piece of information} exhibit similar \gls{interaction} patterns.

\vspace{0.3cm}

Section~\ref{SIMSBM} --- We derive a global framework that generalizes several existing Stochastic Block Models. It allows to consider an arbitrarily high number of labels as an input, and an arbitrarily high order of \gls{interaction} between them, to make predictions on multi-label outputs. 

Section~\ref{IMMSBM} --- We consider a special case of this global framework. We investigate the role of \glspl{interaction} on several social media datasets (Twitter, Reddit, and Spotify).
Our conclusions in this section are that most \glspl{interaction} are weak (or that significant \glspl{interaction} are rare), but that they play a non-negligible role on global scales. We emphasize the need for clustering \glspl{piece of information} to unveil global \gls{interaction} patterns.

Section~\ref{SDSBM} --- The models developed in the previous section are highly flexible but do not allow to model time. In this section, we will develop a simple way to incorporate time in the models discussed previously in the form of a Dirichlet prior on observations.

\subsubsection{Chapter~\ref{Chapter-InterRate} -- Temporal \gls{diffusion} networks (Question 2)}
Publications --- Section~\ref{InterRate-model}: \citep{Poux2021InterRate}
\vspace{0.2cm}

\noindent
In this chapter, we consider the temporal aspect of information \gls{interaction}.

\vspace{0.3cm}

Section~\ref{Chapter-InterRate} --- We develop a convex model that infers the temporal evolution of \glspl{interaction} influence. Typically, given two \glspl{entity} at different points in time, our model allows us to investigate how influential was the first \gls{entity} concerning a possible action on the second one. 
The model, baptized InterRate, is convex and can be run in parallel. 
We specifically apply InterRate to \gls{interaction} modelling on Twitter. We find that the most significant \glspl{interaction} happen immediately after the appearance of an influential \gls{piece of information}. 
The overall conclusion of this section is that \glspl{interaction} in spreading processes are brief. Typically, we find that the intensity of \glspl{interaction} on Twitter decreases exponentially as time goes forward. Besides, we find once again that most \glspl{interaction} are weak.

\subsubsection{Chapter~\ref{Chapter-DHPs} -- Dirichlet-Hawkes Processes (Question 3 and Question 4)}
Publications --- Section~\ref{PDHP}: \citep{Poux2021PDHP}
\vspace{0.2cm}

\noindent
In this chapter, we report our steps towards a model that answers the challenges raised in the two previous chapters. 

\vspace{0.3cm}

Section~\ref{PDP} --- We first take the Dirichlet Process (\acrshort{DP}) as a base, and generalize it as a special case of the Powered Dirichlet Process (\acrshort{PDP}). Specifically, it allows to control the importance of the ``\gls{rich-get-richer}'' assumption.

Section~\ref{PDHP} --- We then use the PDP and the Dirichlet-Hawkes Process (\acrshort{DHP}) as a base to develop the Powered Dirichlet-Hawkes Process (\acrshort{PDHP}). This model can model intra-\gls{cluster} temporal \gls{interaction}, meaning that information groups self-interact. It can also handle challenging situations where text is rare or where publication dynamics are intricate.

Section~\ref{MPDHP} --- We extend PDHP to the Multivariate Powered Dirichlet-Hawkes Process (\acrshort{MPDHP}) to model intra- \textit{and} extra-\gls{cluster} temporal \glspl{interaction}. The main output of this model is a temporal \gls{interaction} network between \glspl{cluster} of documents.

Section~\ref{MPDHP-Reddit} --- Finally, we apply the MPDHP to a large-scale real-world Reddit news dataset and conclude on the role of information \gls{interaction} in its underlying spreading processes.

\begin{table}[h]
    \caption[Intro - Codes and datasets]{\textbf{Codes and datasets}}
    \label{table-codesandDS}
	\centering
    \noindent\makebox[\textwidth]{\resizebox{\textwidth}{!}{
	\begin{tabular}{|c|l|c|}
	\cline{1-3}
	Model & Link & External datasets \\
	\cline{1-3}
    \shortstack{\hyperref[SIMSBM]{SIMSBM} \\ \vspace{0.08cm}} & \shortstack{\url{https://github.com/GaelPouxMedard/SIMSBM}\\\vspace{0.08cm}} & \shortstack{\citep{Harper2015ImdbDataset}\\\citep{Guttieres2016MrBanks}} \\
	\cline{1-3}
	\shortstack{\hyperref[IMMSBM]{IMMSBM} \\ \vspace{0.05cm}} & \shortstack{\url{https://github.com/GaelPouxMedard/IMMSBM} \\ \vspace{0.05cm}} & \shortstack{\citep{Hodas2014DataSetTwitter} \\ \citep{Baumgartner2020PushshiftRedditDataset} }\\
	\cline{1-3}
	\shortstack{\hyperref[SDSBM]{SDSBM} \\ \vspace{0.06cm}}  & \shortstack{\url{https://github.com/GaelPouxMedard/SDSBM} \\ \vspace{0.06cm}} & \shortstack{\citep{Kumar2019jodie}\\\citep{ClaussSlabyDataset}}\\
	\cline{1-3}
	\shortstack{\hyperref[Chapter-InterRate]{InterRate} \\ \\ \vspace{0.25cm}}& \shortstack{\url{https://github.com/GaelPouxMedard/InterRate} \\ \\ \vspace{0.2cm}} & \shortstack{\citep{Bereby2006PrisonerDilemaDataset}\\ \citep{Hodas2014DataSetTwitter}\\ \citep{Cao2019AdsDataset}}\\
	\cline{1-3}
	\hyperref[PDP]{PDP} & \url{https://github.com/GaelPouxMedard/PDP} & \citep{ClaussSlabyDataset}\\
	\cline{1-3}
	\hyperref[PDHP]{PDHP} & \url{https://github.com/GaelPouxMedard/PDHP} & \citep{Baumgartner2020PushshiftRedditDataset} \\
	\cline{1-3}
	\hyperref[MPDHP]{MPDHP} & \url{https://github.com/GaelPouxMedard/MPDHP} & \citep{Baumgartner2020PushshiftRedditDataset} \\
	\cline{1-3}
	\end{tabular}
	}}
\end{table}

\subsection{Reproducible research}

\noindent
\emph{``Non-reproducible single occurrences are of no significance to science''}, in \citetitle{Popper1935ReproductibleResearch} \citep{Popper1935ReproductibleResearch} 

\vspace{0.1cm}

\noindent
\emph{``We may say that a phenomenon is experimentally demonstrable when we know how to conduct an experiment which will rarely fail to give us statistically significant results''}, in \citetitle{Yates1935ReproductibleResearch} \citep{Yates1935ReproductibleResearch}

\vspace{0.1cm}

\noindent
\emph{``Improving the reliability and efficiency of scientific research will increase the credibility of the published scientific literature and accelerate discovery.''}, in \textit{A manifesto for reproducible science} \citep{Munafo2017ReproductibleResearch}

\vspace{0.3cm}

All the experiments presented in this manuscript are available on GitHub. The repositories contain the datasets along with the Python implementation used for running the experiments. Note that datasets which are not referenced here have been gathered by us --Spotify, PubMed, Reddit. We reference the material used for each of the models presented throughout this manuscript in Table~\ref{table-codesandDS}.

\chapter{Stochastic Block Models -- \Glspl{interaction} are rare} %
\label{Chapter-SBMs} %

\begin{chapabstract}
A straightforward way to represent \glspl{interaction} is to embed them as a network. \Glspl{entity} are nodes of this network, and \glspl{interaction} between these \glspl{entity} are link between these nodes. These links can be of diverse types; they are labelled. The last years have seen a regain of interest in stochastic block modelling of such labelled networks, which canonical decomposition-based methods are not fit to tackle. 

\underline{Section~\ref{SBMs-intro}}, we first consider static stochastic block models and their use for \gls{interaction} modelling.
Existing models are not fit for modelling \glspl{interaction}: the number of \glspl{entity} that can interact is limited and higher-order \glspl{interaction} are not allowed, which is extremely restrictive when it comes to modelling highly \gls{interacting} systems. 

\underline{Section~\ref{section-static-interaction}}, to answer these limitations, we show in Section~\ref{SIMSBM} that most of the state-of-the-art models are all special cases of a global framework, the Serialized \Gls{interacting} Mixed \gls{membership} Stochastic Block Model (\acrshort{SIMSBM}). This generalization now allows modelling an arbitrarily large context as well as an arbitrarily high order of \glspl{interaction}.

We then consider a special case of SIMSBM in Section~\ref{IMMSBM} to model \glspl{interaction} between \glspl{entity}. 
This particular iteration is denoted by SIMSBM(2), or \Gls{interacting} Mixed \Gls{membership} SBM (\acrshort{IMMSBM}). We investigate the role of \glspl{interaction} between \glspl{entity} (hashtags, words, memes, etc.) and quantify their importance in several real-world datasets.

\underline{Section~\ref{SDSBM}}, we introduce a simple way to incorporate dynamic modelling into SIMSBM in the form of a temporal prior on the model's parameters. The proposed approach relies on the single assumption that dynamics are not abrupt. We demonstrate the interest of our method on several synthetic experiments and four real-world datasets.

\underline{Section~\ref{SBMs-conclusion}}, we report our first conclusion: \textbf{\glspl{interaction} are sparse}. Significant \glspl{interaction} take place only between a limited fraction of \gls{cluster} pairs, and between an even smaller fraction of \gls{entity} pairs. However, their overall impact increases the predictive accuracy of the models. This underlines the necessity of considering \glspl{cluster} of \glspl{entity} to efficiently model such sparse \glspl{interaction}.

\vfill
\noindent
\textit{
Published works: 
\begin{itemize}[noitemsep,topsep=0pt]
    \item \citep{Poux2022SIMSBM} in Section~\ref{SIMSBM}
    \item \citep{Poux2021IMMSBM} in Section~\ref{IMMSBM}
\end{itemize}
}

\end{chapabstract}

\pagebreak

\section{Introduction}
\label{SBMs-intro}
\subsection{Motivation}
Social networks such as Facebook, Reddit or WhatsApp let individuals share and compare ideas. Modelling the mechanisms of these exchanges can help us understand why and how various \glspl{piece of information} (e.g., hashtags, memes, ideas, etc.) flow through communities. We refer to these \glspl{piece of information} as \textit{\glspl{entity}}. In particular, we suppose that \glspl{entity} can interact with each other. Someone's opinion on a given \gls{entity} might be influenced by previous \glspl{entity} this person has been exposed to; we say there is an \gls{interaction} between \glspl{entity}. For instance, a customer that bought a smartphone might be interested in side accessories such as headphones or selfie sticks, but a customer that bought a smartphone \textit{and} a camera lens extension might be more interested in buying a professional camera. 
An approach without \glspl{interaction} would be less successful here, since each product can lead to a greater number of different recommendations. Considering \glspl{interaction} would narrow the field of possible \glspl{outcome}. The same line of reasoning can be applied to the prediction of retweets (user exposures to tweet A and tweet B affects the retweet probability of C), music playlist building (which song should be added to a playlist given the last songs a user listened to), detection of controversial posts (which combinations of words trigger which answers), etc. Our goal is to infer such underlying relations between \glspl{entity}.

Up to now, little work has been done on investigating the role of \gls{interacting} \glspl{entity} in users' \glspl{decision} (retweet, share, comment answer, etc.), or \textit{\glspl{outcome}}. Several previous works on information \gls{diffusion} theory consider a user acting on an isolated \gls{piece of information} \citep{PastorSatorras2015,Poux2020InfluentialSpreaders}. On some occasions, theoretical frameworks have been developed to investigate how the presence of concurrent \glspl{piece of information} affects the action a user exerts on them in a network \citep{Beutel2012InteractingViruses}. However, a fundamental question that remains unanswered is \textit{how} \glspl{piece of information} interact in the informational landscape.

\subsection{Overview of the proposed approaches}
\paragraph{Representing \glspl{interaction} as a network}
A straightforward way to represent \glspl{interaction} is to embed them as a network. \Glspl{entity} are nodes of this network, and \glspl{interaction} between these \glspl{entity} are link between these nodes. Given \glspl{interaction} can give rise to various \glspl{outcome} (e.g., retweet of A, like of B, etc.), there must be one link per possible \gls{outcome} -- the links are labelled.

In real-world applications, the number of \gls{interacting} \glspl{entity} can be considerable. Modelling each individual labelled link between all of these \glspl{entity} is infeasible in practice. However, the task becomes tractable if we suppose that sets of nodes behave similarly with respect to other sets of \glspl{entity}. The idea is to infer such sets, or \textit{\glspl{cluster}}, and model only the labelled ties between those. A schematic representation of these assumptions is proposed in Fig.~\ref{fig:SIMSBM-illustration-SBMs}.

\begin{figure}
    \centering
    \includegraphics[width=0.99\textwidth]{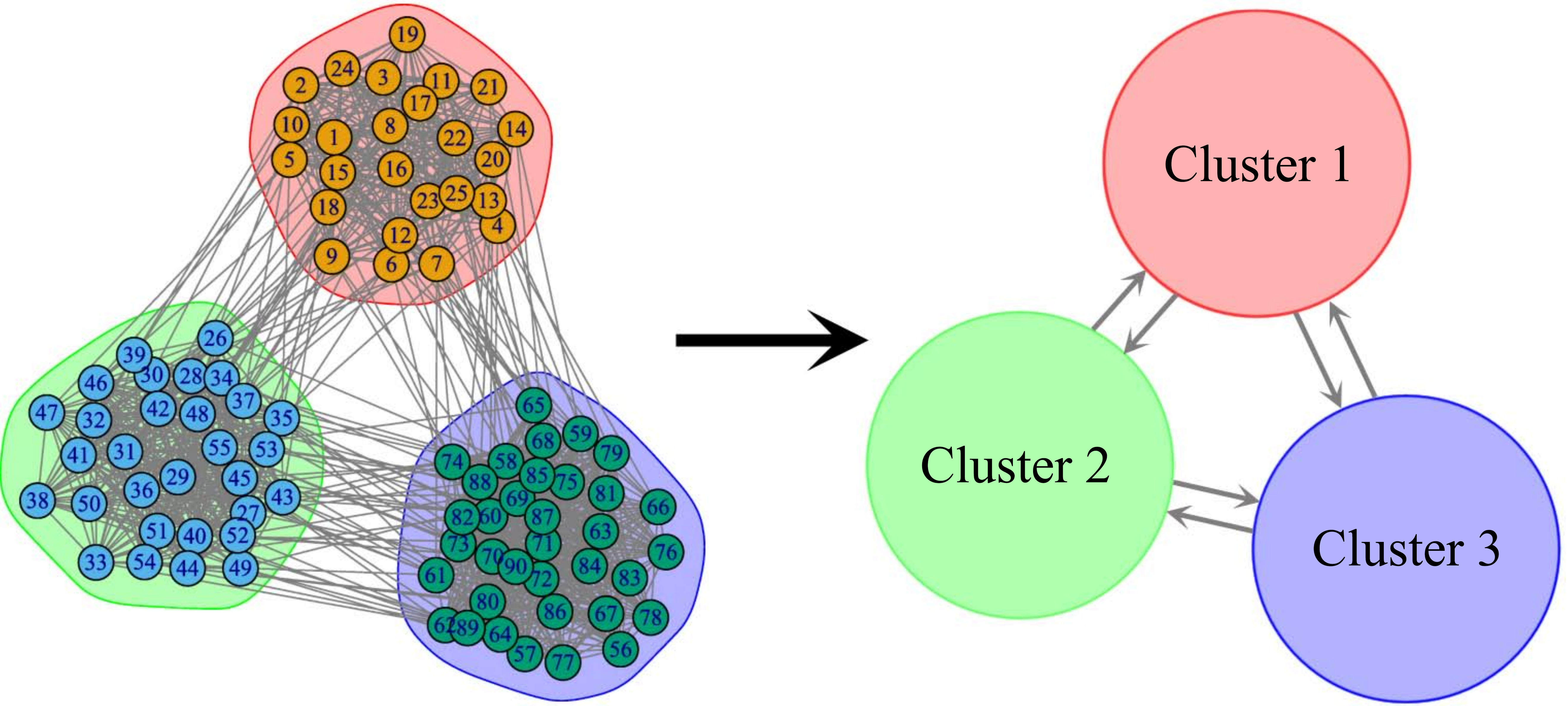}
    \caption[SBM - Illustration of the stochastic block modelling]{\textbf{Illustration of the stochastic block modelling} --- (Left) A set of 100 nodes is densely connected. There is a total of 100x100 links between these nodes to model. The SBM assumption is that sets of nodes behave similarly -- here the nodes that share the same background colour (Right) The SBM approach proposes to model only the relations between these sets of nodes instead of each pair. In this case, it leaves us with 3x3 ties to model (we count the self-ties), as well as the node’s \gls{membership}.}
    \label{fig:SIMSBM-illustration-SBMs}
\end{figure}

\paragraph{Spotting regularities in the \glspl{interaction}}
In general, clustering is a core concept of machine learning. Among other applications, it has proven to be especially fit to tackle real-world recommendation problems. A recommendation consists in guessing an \gls{outcome} given a certain context. This context can often be represented as a high dimensional set of input \glspl{entity}. On retail websites, for instance, the context could be the ID of a user, the last product she bought, the last visited page, the current month, and so on; the \gls{outcome} would be whether this user buys a given product. Clustering algorithms look for regularities in these datasets to reduce the dimensionality of the input context to its most defining characteristics. Continuing the online retail example, a well-designed algorithm would spot that a mouse, a keyboard, and a computer screen are somehow related buys and that the next buy is likely to be another computer device. Besides, as stated before, subsets of users are likely to share a similar interest in a given product if their buying history is similar. We can define such groups of people that share similar behaviours using clustering algorithms; this is called \textit{\gls{collaborative filtering}}. One of the most widely used approaches to perform this task efficiently relies on tensor decomposition.

\paragraph{Tensor decomposition approaches}
Tensor decomposition approaches provide a variety of efficient mathematical tools for breaking a tensor into a combination of smaller components. One of the most popular tensor decomposition methods is Non-negative Matrix Factorization (\acrshort{NMF}). Its application to recommender systems has been proposed in 2006 on Simon Funk's blog for an open competition on movie recommendation \citep{Koren2009MFReco}. The underlying idea is to approximate a target 2-dimensional real-valued observations tensor $D \in {\mathbb{R}^+}^{I \times J}$ (a positive real matrix in this case) 
as the product of two lower dimensional matrices $W \in {\mathbb{R}^+}^{I \times K}$ and $H  \in {\mathbb{R}^+}^{K \times J}$ such that $D = WH$.
This approach has seen numerous developments, such as an algorithm allowing its online optimization \citep{Fund2007OnlineNMF1,Cao2007OnlineNMF2}. Thanks to its low computational cost, this method is still today at the core of many real-world large scale recommender systems. However, a major drawback of NMF is that it can only consider two-dimensional data. Several extensions have been proposed to consider n-dimensional data. A straightforward generalization of NMF is the Tensor Factorization, that generalizes NMF to infer an n-dimensional matrix $D \in \mathbb{R}^{I \times J \times ...}$ as the product of a core tensor $C \in \mathbb{R}^{K \times L \times ...}$ with n smaller matrices $M_1 \in \mathbb{R}^{I \times K}$, $M_2 \in \mathbb{R}^{J \times L}$, and so on, such that $D = C M_1 M_2...M_n$ \citep{Karatzoglou2010TF}. This approach allows to consider a larger context as input data. Several variants have been proposed based on similar ideas \citep{Hidasi2012Itals,Bhargava2015}.

Another popular decomposition method is the Singular Value Decomposition (\acrshort{SVD}) \citep{Klema1980SVD}. It generalizes the concept of eigenvalue decomposition to non-square matrices. As NMF, it has been used to a great extent in recommender systems. The idea is to approximate a positive real-valued matrix $D \in {\mathbb{R}^+}^{I \times J}$ by the product of three matrices $U \in {\mathbb{R}^+}^{I \times K}$, $S \in {\mathbb{R}^+}^{K \times K}$ and $V \in {\mathbb{R}^+}^{K \times J}$. Geometrically, SVD can be interpreted as the decomposition of the initial transformation matrix $D$ as the composition of 3 elementary operations: one scaling $S$ and two rotations $U$ and $V$.

Yet another class of tensor decomposition is called Tensor rank or Canonical Polyadic (\acrshort{CP}) decomposition. It is at the base of several popular decomposition methods that consider a sum of rank-one matrices instead of a product decomposition. In this case, an n-dimensional tensor is approximated as the sum of rank-one tensors \citep{Harshman1970CP,Carroll1970CP}. Several extensions based on CP have been proposed \citep{Acar2010CP1,Filipovi2015CP2}. 

\paragraph{Limitations of tensor decomposition methods}
All the decomposition methods introduced in the previous paragraph are based on the linear decomposition of a real-valued tensor $D$, which makes them unfit to tackle discrete problems. These methods can efficiently infer continuous outputs (the rating of a movie as in \citep{Koren2009MFReco} for instance, or the number of buys of a product) but must be tweaked to consider discrete outputs (the next buy on an online retail website for instance). In this case, a possible approach consists in mapping all possible discrete outputs as a continuous variable. This is straightforward as in the case of movie ratings, because the set of possible ratings (1, 2, 3, ...) can be ordered on a continuous scale. However, for recommending one of several products (mouse, keyboard, computer, ...), the mapping of possible outputs to a continuous value is not trivial.
To consider discrete data in decomposition methods, we can add another dimension to the input tensor, whose size equals the number of possible outputs. Then, the algorithm optimizes the model based on the frequency of each of those items. This trick induces a strong bias and increases the complexity of the algorithm. 

\paragraph{Bayesian network modelling}
To answer this problem, recent years have seen a growing interest in the literature on Stochastic Block Modelling (\acrshort{SBM}). The core idea is to represent the data in the form of a network and apply Bayesian network inference methods to the resulting graph. In particular, these methods assume a block structure. Each node of the network is associated with a block, and blocks relate to each other through a latent block \gls{interaction} network, which can be labelled. The labels correspond to each possible output, and must not be mapped on a continuous scale, as in \citep{Carroll1970CP,Klema1980SVD,Koren2009MFReco}. This makes the model fit to explicitly consider problems such as movie recommendation without the need for mapping movies as an additional dimension in the observation tensor.
Besides, learning does not rely on linear algebra decomposition methods, but on Bayesian learning. It provides greater interpretability of the results and relies on solid statistical foundations; the model actually learns from the data. Besides, Bayesian modelling allows us to incorporate \textit{a priori} knowledge on the data --and we will make extensive use of it in Section~\ref{SDSBM}.

We represent a schematic example of the type of Stochastic Block Model we consider in this chapter in Fig.~\ref{fig:SIMSBM-illustration-IMMSBM}. In this case, we have two input \glspl{entity} that are embedded as nodes. They are linked together by labelled edges, which represent the probability of an output given the \glspl{entity}' combination. Instead of directly modelling such links, we assume a block structure to this network. Nodes belong to \glspl{cluster} to a certain extent, and only the labelled links between these \glspl{cluster} are modelled.

We extensively review this class of models in the sections below and present our proposed improvements to tackle \gls{interaction} modelling. Explicitly, we first generalize the existing labelled stochastic block models so that they can model high order \glspl{interaction}, arbitrary large input contexts, and provide predictions on as many output labels as wanted (Section~\ref{section-static-interaction}). In a second time, we design a temporal plug-in for our model. It allows to model temporal variations of the block model's parameters with high accuracy at a minimal cost --in terms of data needed, simplicity of use and numerical complexity (Section~\ref{SDSBM}).

\begin{figure}
    \centering
    \includegraphics[width=\textwidth]{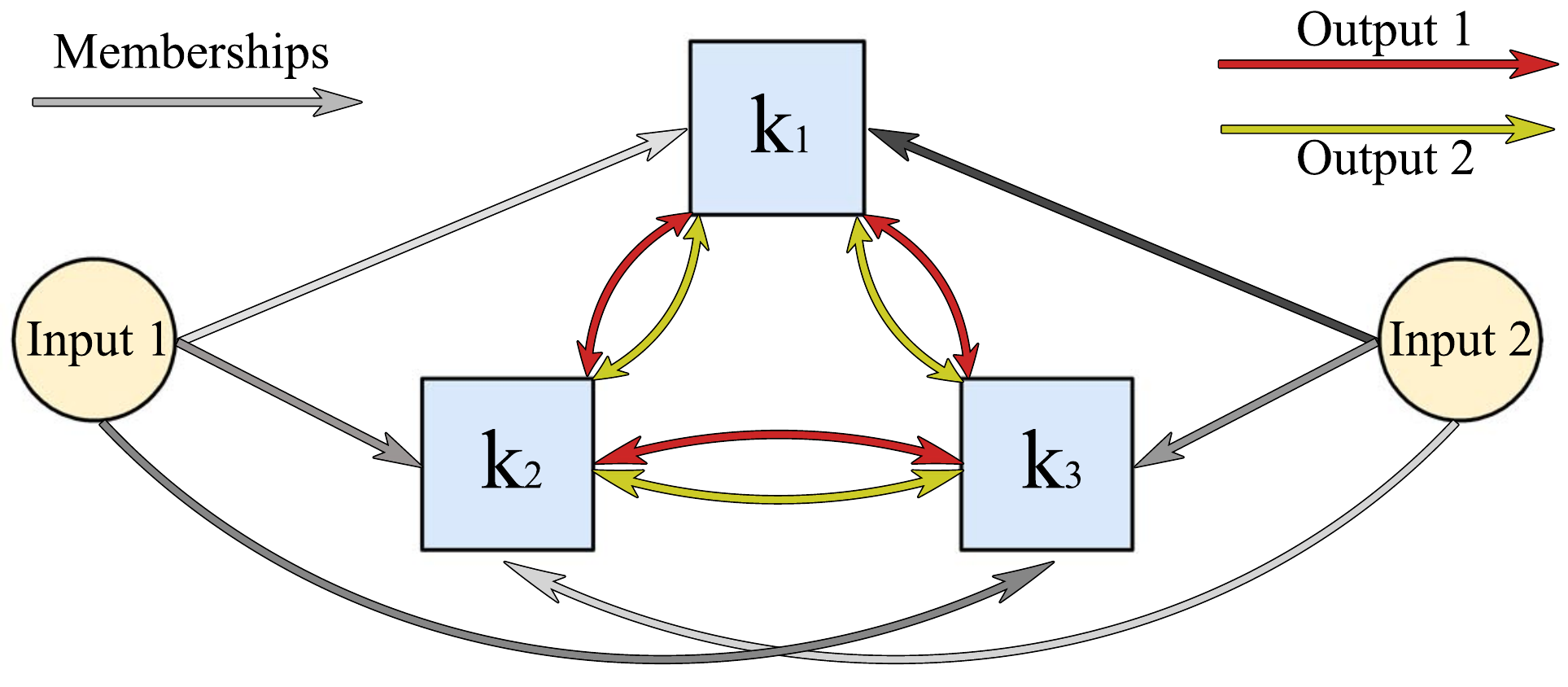}
    \caption[SBM - Illustration of the principle]{\textbf{Illustration of the SIMSBM principle} --- Two input \glspl{entity} are embedded as nodes, linked together by labelled edges. Edges represent the probability of an output given a combination of nodes. Instead of directly modelling such links, we assume a block structure to this network. Nodes belong to \glspl{cluster} to a certain extent, and only the labelled links between these \glspl{cluster} are modelled.}
    \label{fig:SIMSBM-illustration-IMMSBM}
\end{figure}

\section{Static \glspl{interaction}}
\label{section-static-interaction}

\subsection{State of the art, limitations, and contributions}
\label{SBM-SotA-lim-contrib}

\subsubsection{Modelling static \glspl{interaction}}
Accounting simultaneously for multiple \glspl{piece of information} is motivated by numerous descriptive studies on multimodal networks structure \citep{YizhouHan2012,Chuan2016,Huan2016,Rashed2020}. Typically, in \citep{YizhouHan2012}, the authors study \gls{interaction} between multiple \gls{entity} \glspl{type} \textit{via} a heterogeneous network representation and define \glspl{cluster} of \glspl{entity} based on the structural properties of the resulting graph. 
However, as pointed out by the authors, this method is heavily influenced by the structural clustering method used --in this case a meta-path-based clustering \citep{Yizhou2011}. Moreover, defining edge weights in heterogeneous graphs is subjective and requires additional learning algorithms.

As seen in the introduction, several works proposed to model the \gls{diffusion} of information as the result of an \gls{interaction} between \glspl{entity} \citep{Beutel2012InteractingViruses}. Following a similar idea, \citep{Myers2012CoC} investigated \glspl{interaction} between contagions on Twitter. The authors aim to find the \gls{interaction} factors between different tweets in activity feeds. Their findings suggest that \glspl{interaction} between tweets play a determinant role in their retweet probability. The authors assume that there is an inherent \gls{virality} for every tweet (that is an inherent probability to be retweeted) computed from the frequency of retweets, to which is added a small \gls{interaction} term. 

\citep{Myers2012CoC} paves the way to model \glspl{interaction} in information \gls{spread} but presents various limitations that we are alleviated by using Stochastic Block Models.
Firstly, the method proposed by the authors makes predictions solely based on tweets that have been observed in the feed of a given user. It therefore limits the application range of the model uniquely to systems based on the retweets (or share) concept, where information has to appear first to be \gls{spread}. This model is hardly applicable to systems that are based on exogenous reactions (e.g., online forums, playlist building and recommender systems) where information can appear as a consequence of different \glspl{entity} (``Capital'' + ``Netherlands'' $\rightarrow$ ``Amsterdam''). An SBM-based modelling allows \glspl{outcome} that are different from the input \glspl{entity}.
Secondly, \gls{interaction} is defined as a modulation of the frequency of retweets of a given tweet in any context (i.e., $P(X) = f_X + \Delta P_{inter}(X)$). We argue this can lead to false conclusions about \glspl{interaction}. Imagine that \glspl{interaction} lead to a shift of $\Delta P(X)$ on the actual \gls{virality} $V_X$ of a retweet. This \gls{interaction} happens in a fraction $s$ of all observations of a given tweet being retweeted. The \gls{virality} $V_{X, hyp}$ as defined in \citep{Myers2012CoC} equals the frequency of retweets: $V_{X, hyp} = V_X(1-s) + (V_X+\Delta P(X)) s = V_X + s \Delta P(X)$, which is by construct larger than the actual \gls{virality} of a retweet by a term $s \Delta P(X)$. Therefore, defining \gls{interaction} according to this quantity is wrong. \Gls{virality} needs to be inferred by the model at the same time as the contribution of \glspl{interaction} to be properly defined, which the proposed SBM-based modelling allows.

\subsubsection{Stochastic Block Models}
\label{SBM-SotA}
\paragraph{Stochastic Block Models}
As already stated, recent years saw a growing interest for Stochastic Block Models (\acrshort{SBM}) to tackle \gls{collaborative filtering} problems in recommender systems \citep{Antonia2016AccurateAndScalableRS,Tarres2019TMBM,Poux2021MMSBMMrBanks}. These models first \gls{cluster} input \glspl{entity} together and then analyse how \glspl{cluster} relate to each other. Each input \gls{entity} can either be associated to one \gls{cluster} (Single-\Gls{membership} SBM, or SBM) \citep{Holland1983SBM,Guimera2012HumanPrefSBM,Funka2019ReviewSBM} or to a distribution over available \glspl{cluster} (Mixed \Gls{membership} SBM, or MMSBM) \citep{Airoldi2008MMSBM}. While the single \gls{membership} SBM has been successfully applied to a wide range of problems \citep{Guimera2012HumanPrefSBM,Guimera2013DrugdrugSBM,CoboLopez2018SocialDilemma,Funka2019ReviewSBM}, inference is done using greedy algorithms, typically simulated annealing, making it unfit for large scale real-world applications \citep{Antonia2016AccurateAndScalableRS}. 

\paragraph{Mixed \Gls{membership} Stochastic Block Models}
The Mixed-\Gls{membership} SBM (\textbf{MMSBM}) is a major extension of Single-\Gls{membership} SBM that has been proposed in the seminal work \citep{Airoldi2008MMSBM}. 
In the frame of \gls{collaborative filtering} for recommender systems, \citep{Antonia2016AccurateAndScalableRS} proposed the \textbf{\acrshort{Bi-MMSBM}} extension. This formulation generalizes the Matrix Factorization model. In this approach, \glspl{entity} of different \gls{type} (e.g., users and movies in this case) are grouped into distinct \glspl{cluster} whose \gls{interaction} result in an \gls{outcome} (e.g., a rating on a movie). This model has later been extended as the \textbf{\acrshort{T-MBM}} to consider triples of input \glspl{entity} instead of pairs \citep{Tarres2019TMBM}. It assumes that all the \glspl{entity} in a given triplet are linked together by a given label. This boils down to assuming data can be represented in the form of a tripartite network instead of a bipartite network.

\paragraph{The need for a more global framework}
However, the existing literature does not answer all the challenges inherent to \gls{interaction} modelling. 

\textbf{First}, these models allow considering only \gls{interaction} pairs, which are arguably not exhaustive enough to correctly model the \gls{interaction} processes at stake \citep{Steck2021Interactions}. As stated in the introduction, \glspl{interaction} can involve an arbitrarily large number of \glspl{entity}. For instance, a Twitter user might want to retweet an \gls{entity} based on its \gls{content}, but also on the previous tweets she has been exposed to, the hour of the day, the last tweet she published, etc. No existing MMSBM variation proposes a solution for considering an arbitrarily large number of \gls{interacting} \glspl{entity}.

\textbf{Second}, none of the existing models consider the case where \gls{interacting} \glspl{entity} are of the same \gls{type}. \Glspl{entity} of the same \gls{type} carry the same semantic meaning (e.g., two movies are of the same \gls{type} ``movie'', but a movie and a director are of different \glspl{type}). When considering input \glspl{entity} of the same \gls{type}, permutation-invariant clustering must be accounted for: the probability of an output given \glspl{entity} $(A,B)$ should not differ from the probability of this output given $(B,A)$. This may not hold when the order of \glspl{entity}’ apparition is considered, which is not the case here. We consider as input an unordered set of \glspl{entity}, hence the need for permutation-invariant modelling.

Considering the \gls{interaction} between \glspl{entity} of the same \gls{type} is novel and important. When ignoring it, a user on Netflix is predicted to like a given movie because she is partially part of the group that liked the movie A and partially part of the group that disliked movie B, but all those groups are \gls{independent} of one another \citep{Koren2009MFReco,Antonia2016AccurateAndScalableRS}. The friendship between individuals is determined based on the \gls{independent} groups of friends they belong to, but not based on the joint belonging to various groups \citep{Airoldi2008MMSBM,Jamali2011}. A drug might interact with another one, but the joint \gls{interaction} of two drugs on a third one cannot be investigated \citep{Guimera2013DrugdrugSBM}. 
Typically, in \citep{Antonia2016AccurateAndScalableRS}, embedding \glspl{piece of information} of the same \gls{type} in a bipartite graph seems irrelevant: an \gls{entity} should not interact differently with another one (or belong to different \glspl{cluster}) because it is on the left or the right side of this bipartite graph. Instead, we need to enforce a permutation-invariant \gls{interaction}.

\subsubsection{Contributions}
In this section, we present the Serialized Interacting Mixed Membership Stochastic Block Model (\acrshort{SIMSBM}). The SIMSBM is a global framework that generalizes several state-of-the-art models which tackle the problem of discrete recommendation by a Bayesian network approach --including but not limited to \citep{Airoldi2008MMSBM,Koren2009MFReco,Antonia2016AccurateAndScalableRS,Tarres2019TMBM} and the IMMSBM model presented in Section~\ref{IMMSBM}. 

First, we present the proposed framework and develop an Expectation Maximization (\acrshort{EM}) optimization procedure (Section~\ref{SIMSBM-model}). Then, we review previous works on Stochastic Block Models that are used in \gls{collaborative filtering} and detail for each one how to recover it as a special case of our framework (Section~\ref{SIMSBM-generalizes}). Experiments are then conducted on 6 real-world datasets and compared to standard baselines of the literature (Section~\ref{SIMSBM-XP}). We show our formulation allows us to obtain better recommendations than existing methods either by adding layers to the modelled multipartite graph or by adding higher-order \gls{interaction} terms in the modelling.

Then, we proceed to an in-depth study of a special case of the SIMSBM applied to \glspl{interaction} modelling, named IMMSBM. We consider the case where two input \glspl{entity} of the same \gls{type} interact with each other and influence the probability of an output (Section~\ref{IMMSBM-model}). We consider four real-world datasets that are expected to comprise underlying \gls{interaction} processes: Twitter (\glspl{entity} are tweets and \glspl{outcome} are retweets), Reddit (\glspl{entity} are words in a post and \glspl{outcome} are words in an answer), Spotify (\glspl{entity} are songs in a user-generated playlist and \glspl{outcome} are songs added next) and PubMed (\glspl{entity} are symptoms and \glspl{outcome} are diagnosed diseases) (Section~\ref{IMMSBM-XP}). We compare to similar works that do not consider \glspl{interaction}; we also discuss and discard a core assumption made in \citep{Myers2012CoC} on \gls{interaction} modelling in information \gls{spread} (Section~\ref{IMMSBM-discussion}). We investigate in detail the role of \glspl{interaction} in the proposed corpora. Finally, we conclude that significant \glspl{interaction} between spreading \glspl{entity} are rare (Section~\ref{IMMSBM-conclusion}).

\subsection{SIMSBM -- A global MMSBM framework}
\label{SIMSBM}
\textit{This work has been published, see \citep{Poux2022SIMSBM}}

\subsubsection{SIMSBM -- Serialized \Gls{interacting} MMSBM}
\label{SIMSBM-model}
\paragraph{Overall idea}
The goal of the SIMSBM is to recommend an \textit{output \gls{entity}}, that is one $o$ in $O$ possible outputs \glspl{entity}, given a context --for instance, $o$ can take any label in $O = \{ 1, 2, 3 \}$. To do so, it considers data in the form of a multipartite network. A multipartite network is a generalization of the bipartite network (two layers of nodes with no intra-connection within layers) to any number $N$ of layers.

The network's nodes are the context elements. They are called \textit{input \glspl{entity}} $f_n$, each of which is one of $F_n$ possible input \glspl{entity} --for instance, $f_1$ can take one value in $F_1 = \{ A,B,C \}$, $f_2$ can take one value in $F_2 = \{ D,E,F \}$, etc. One hyper-edge between the $N$ input \glspl{entity} (or nodes) of a context represents the probability of a given output \gls{entity} $o$, that is ${P(o \in O \vert f_1 \in F_1, ..., f_N \in F_N)}$. 

The SIMSBM \glspl{cluster} input \glspl{entity} (or nodes) together and infers edges between these \glspl{cluster} to form a smaller multipartite network. Each node $f_n$ is associated with a certain extent to each \gls{cluster} $k_n$ among $K_n$ possible \glspl{cluster} for layer $n$.
The adjacency matrix of this network is noted ${p(o \in O \vert k_1 \in K_1, ..., k_N \in K_N)}$ --note the lowercase ${p}$ for the \glspl{cluster}-\gls{interaction} network. This projection is illustrated in Fig.~\ref{fig-schema}. The notations used throughout this section are given in Table.~\ref{table-not}.

\paragraph{Input data}
Formally, input data of the SIMSBM is noted ${R^{\circ}}$. Its entries are tuples of form ${(\vec{f}, o)}$. The vector ${\vec{f}=(f_1, ..., f_N)}$ represents a context, whose entries $f_n$ are the input \glspl{entity} that can be one of $F_n$ possible input \glspl{entity}; outputs are designated by $o$ and can be any of the $O$ possible outputs. Each $F_n$ and $O$ are unordered sets containing the labels each $f_n$ and $o$ can take. Each input \gls{entity} is represented as a node in a layer of the multipartite network (circles in Fig.~\ref{fig-schema}). The number of layers of the multipartite graph is then $N = \vert \vec{f} \vert$; each layer $n$ comprises $F_n$ nodes. For each node in each layer, the SIMSBM infers a vector ${\vec{\theta_i} \in \left[ 0,1 \right]^{K}}$ that represents its probability to belong to each of $K$ possible \glspl{cluster} (i.e., associate each circle to a distribution over the squares in Fig.~\ref{fig-schema}). 

Input \glspl{entity} are of a given \textit{\gls{type}}; \glspl{entity} of the same \gls{type} carry the same semantic meaning and are drawn from the same set of possible values. We note $a(f)$ the function that associates an input \gls{entity} $f$ to its given \gls{type}, and $K_{a(f_n)}$ the number of available \glspl{cluster} for this \gls{type}. \Glspl{entity} of the same \gls{type} can interact with each other but are forced to share the same \gls{cluster} \gls{membership} matrix. For instance, consider a user rating a movie according to the pair of actors starring in it: the context vector takes the shape $(user, actor 1, actor 2)$. In this example, one \gls{entity} is of \gls{type} ``user'', and the two others are of \gls{type} ``actor'', and each of the three \glspl{entity} is embedded in its own layer of a multipartite graph. When creating \glspl{cluster}, the SIMSBM enforces that the two layers of \gls{type} ``actor'' share the same \gls{membership} matrix, while the layer accounting for the \gls{type} ``user'' has its own \gls{membership} matrix. This structure is exactly the same as the one depicted in Fig.~\ref{fig-schema}. This is needed to get results that are permutation \gls{independent}, meaning in our example that $P(o \vert (user, actor 1, actor 2)) = P(o \vert (user, actor 2, actor 1))$. Our formulation as a multipartite network allows us to consider higher-order \glspl{interaction}, in contrast to \citep{Antonia2016AccurateAndScalableRS} which only considers pair \glspl{interaction}.
\begin{figure}[h]
    \centering
    \includegraphics[width=0.99\columnwidth]{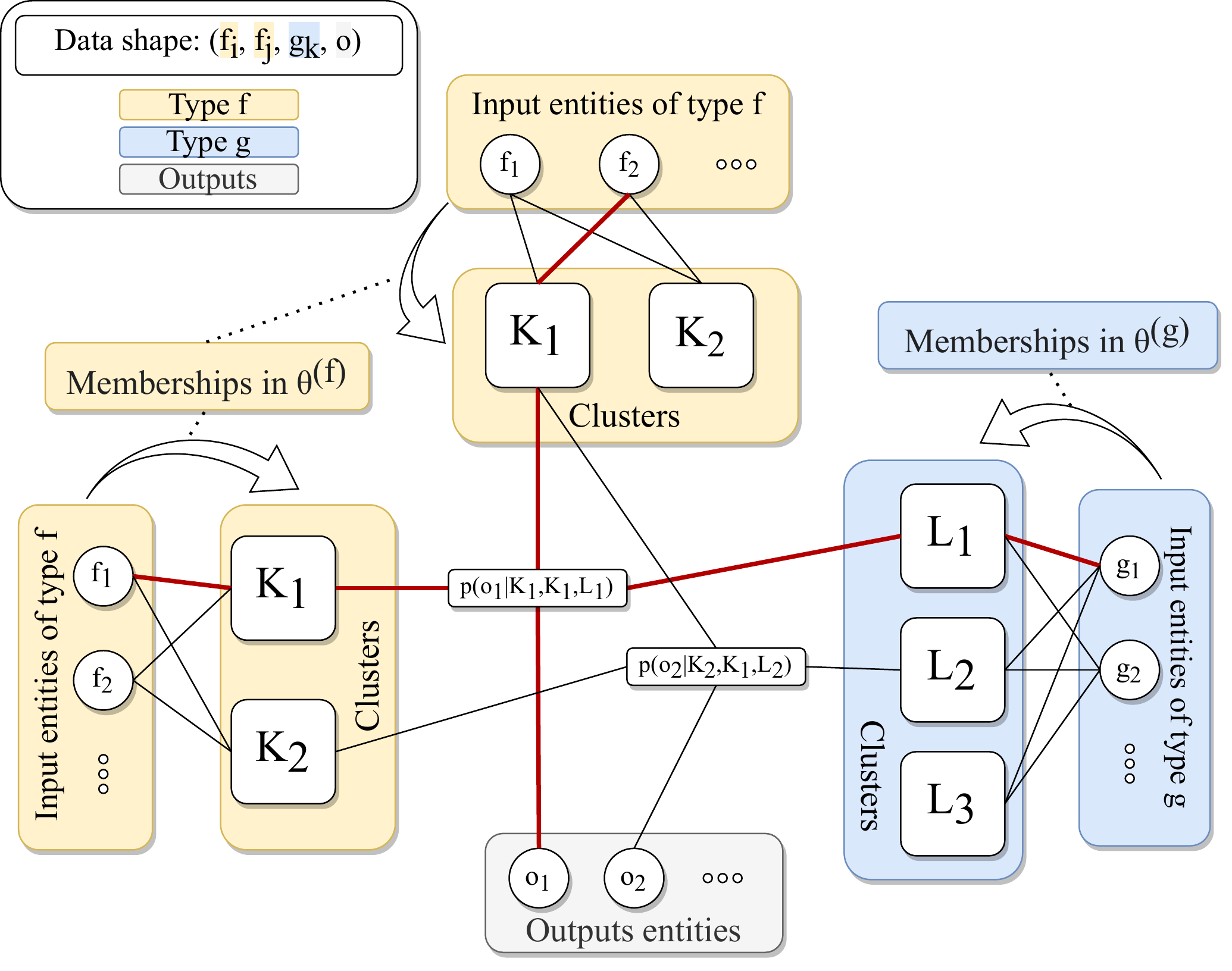}
    \caption[SIMSBM - Illustration of the SIMSBM]{\textbf{Illustration of the SIMSBM} --- For input \glspl{entity} of \gls{type} $f$ and $g$, where \glspl{entity} of \gls{type} $f$ interact with each other as pairs and \glspl{entity} of \gls{type} $g$ do not interact with each other. The \gls{membership} of an \gls{entity} $f_i$ of \gls{type} $f$ to a \gls{cluster} $K_n$ is encoded into the \gls{membership} matrix entry $\theta^{(f)}_{f_i,K_n}$. The \gls{interaction} between \glspl{cluster} is embedded in a multipartite network, whose adjacency tensor is $\mathbf{p}$. A weighted edge between several \glspl{cluster} and one output represents the probability of this output given the context \glspl{cluster} --only two such edges are represented here. \textbf{In red}, we represent the probability of $o_1$ given $f_1$ belonging to $K_1$, $f_2$ belonging to $K_1$ and $g_1$ belonging to $L_1$, which is equal to $\theta_{f_1,K_1}^{(f)}\theta_{f_2,K_1}^{(f)}\theta_{g_1,L_1}^{(g)}p(o_1|K_1,K_1,L_1)$. 
    }
    \label{fig-schema}
\end{figure}

\paragraph{Model parameters}
The \gls{membership} of an \gls{entity} must sum to 1 over all the $K_{a(f_n)}$ available \glspl{cluster}, hence the following constraint:
\begin{equation}
\label{eq:normTheta}
    \sum_k^{K_{a(f_n)}} \theta_{f_n,k}^{(a(f_n))} = 1 \,\,\forall f_n
\end{equation}
Once the node \gls{membership} is known, the SIMSBM infers the \glspl{cluster} multipartite network, whose weighted hyper-edges stand for the probability of an output $o$ given a combination of \glspl{cluster}. 

Note that there are two possible views on this: we can consider that each possible $o$ is associated with its own multipartite network, or that each $o$ is a node in an additional output layer, as in Fig.~\ref{fig-schema}. These views are strictly equivalent. 

The hyper-edge corresponding to the clustered context ${\vec{k}=(k_1, ..., k_N)}$ associated with the output $o$ is written ${p_{\vec{k}}(o) \in \mathbb{R}^{K_{a(f_1)} \times ... \times K_{a(f_N)} \times O}}$ --it is one entry of the multipartite network's adjacency matrix. 
Note that the clustered context $\vec{k}$ can take any value among all the possible permutations ${\vec{K} = \{K_1, ..., K_N\}_{K_1=1, ..., K_{a(f_1)}; ...; K_N=1,...,K_{a(f_N)}}}$.
As we want SIMSBM to infer a distribution over possible outputs in a given context, the edges of the multiparted graph are related by the following constraint:
\begin{equation}
    \sum_o p_{\vec{k}}(o) = 1 \,\,\,\, \forall \vec{k} \in \vec{K}
\end{equation}
Finally, the probability of an output $o$ given a context of input \glspl{entity} $\vec{f}$ can be written as:
\begin{equation}
    \label{eq-probfinale}
    P(o \vert\vec{f}) = \sum_{\vec{k} \in \vec{K}} p_{\vec{k}}(o) \prod_{n \in N} \theta_{f_n,k_n}^{(a(f_n))}
\end{equation}
From Eq.~\ref{eq-probfinale}, we can define the log-likelihood of the model as:
\begin{equation}
\label{eq:likelihood}
   \ell := \log P(R^{\circ} \vert \theta, p) = \sum_{(\vec{f},o) \in R^{\circ}} \log \left( \sum_{\vec{k} \in \vec{K}} p_{\vec{k}}(o) \prod_{n \in N} \theta_{f_n,k_n}^{(a(f_n))} \right)
\end{equation}
\myequations{\ \ \ \ SIMSBM - Likelihood}

\begin{table}
    \caption[SIMSBM - Notations]{Notations for the SIMSBM}
    \label{table-not}
	\centering
	\begin{tabular}{|l|l|}
	$f_n$ & An input \gls{entity}, can take any value in $F_n$\\
	$F_n$ & Set of possible input \glspl{entity} for layer $n$\\
	$\vec{f}$ & Context vector $(f_1, ..., f_N)$\\
	$o$ & Output \gls{entity}, can take any value in $O$ \\
	$N$ & Number of input layers $\vert \vec{f} \vert$ \\
	$R^{\circ}$ & Data, a list of (N+1)-plets $(f_1, ..., f_N, o)$\\
	$a(f_n)$ & \gls{type} of \gls{entity} $f_n$\\
	$K_{a(f_n)}$ & Number of available \glspl{cluster} for \gls{type} $a(f_n)$\\
	$\theta^{(a(f_n))}$ & \Gls{membership} matrix for \glspl{entity} of \gls{type} $a(f_n)$ \\
	$\mathbf{p}(o)$ & \Glspl{cluster}' multipartite network for output $o$\\
	$\vec{K}$ & Every possible \glspl{cluster} permutation $\{(k_1,..., k_N)\}_{k_1=1 \text{ to } K_1;...;k_N=1 \text{ to } K_N}$\\
	$\vec{k}$ & One permutation of \glspl{cluster} indices $(k_1,..., k_N) \in \vec{K}$ \\
	$c_v(x)$ & Count of element $x$ in vector $\vec{v}$\\
	$C_{f_n}$ & Total count of $f_n$ in $R^{\circ}$\\
	\end{tabular}
\end{table}

\subsubsection{Inference}
In this section, we derive an Expectation-Maximization algorithm for inferring the model's parameters $\mathbf{p}, \theta$. Such an algorithm guarantees the convergence towards a local maximum of the likelihood function \citep{Neal1998ConvergenceEM}.

\paragraph{E-step}
\label{SIMSBM-Estep}
The derivation presented in this paragraph follows a well-known general derivation of the EM algorithm, which can be found in \citep{Bishop2006} for instance.

We recall that one entry of the dataset $R^{\circ}$ takes the form of a tuple $(\vec{f}, o)$, where $\vec{f}$ is the features input vector in which an output $o$ has been observed. Each feature in $\vec{f}$ is associated with a \gls{cluster}. We note $\vec{k} \in \vec{K}$ the set of latent variables accounting for each \gls{cluster} allocation of a given feature vector among the space of $\vec{K}$ possible allocation combinations. The probability of an output given a combination of \glspl{cluster} is given by $p_{\vec{k}}(o)$

The total log-likelihood (Eq.~\ref{eq:likelihood}) is expressed as the sum of each observation’s individual log-likelihood: $\log P(R^{\circ} \vert \theta, p) = \sum_{(\vec{f}, o) \in R^{\circ}}\log P(\vec{f}, o \vert \theta, p)$. Without loss of generality, we focus here on a single observation for clarity.

We assume a distribution $Q(\vec{k})$ on the latent variables associated with one observation in the dataset $R^{\circ}$; this distribution is yet to be defined. Because $\vec{k}$ takes values in $\vec{K}$, we have $\sum_{\vec{k} \in \vec{K}} Q(\vec{k}) = 1$. Given this normalization condition, we can decompose Eq.\ref{eq:likelihood} for any distribution $Q(\vec{k})$ as:

\begin{align}
    \label{SIMSBM-eq-decomp}
    \log P(\vec{f}, o \vert \theta, p) =& \underbrace{\log P(\vec{f}, o, \vec{k} \vert \theta, p) - \log P(\vec{k} \vert \vec{f}, o, \theta, p)}_{\text{This difference does not depend on $\vec{k}$}} \notag \\
    = &\left( \log P(\vec{f}, o, \vec{k} \vert \theta, p) - \log P(\vec{k} \vert \vec{f}, o, \theta, p) \right) \underbrace{\sum_{\vec{k} \in \vec{K}} Q(\vec{k})}_{=1}  \notag \\
    = &\sum_{\vec{k} \in \vec{K}} Q(\vec{k}) \log P(\vec{f}, o, \vec{k} \vert \theta, p) - \sum_{\vec{k} \in \vec{K}} Q(\vec{k}) \log P(\vec{k} \vert \vec{f}, o, \theta, p) \notag \\
    = &\sum_{\vec{k} \in \vec{K}} Q(\vec{k}) \log \frac{P(\vec{f}, o, \vec{k} \vert \theta, p)}{Q(\vec{k})} - \sum_{\vec{k} \in \vec{K}} Q(\vec{k}) \log \frac{P(\vec{k} \vert \vec{f}, o, \theta, p)}{Q(\vec{k})}
\end{align}

In this equation, the first term of the last line is proportional to the expectation of the complete log-likelihood $P(\vec{f}, o, \vec{k} \vert \theta, p)$ with respect to the latent variables $\vec{k}$ --minus the entropy of the distribution over $\vec{k}$. This explains the name ``Expectation step''.

We note that the second term in the last line of Eq.\ref{SIMSBM-eq-decomp} is the Kullback-Leibler (KL) divergence between $P(\vec{k} \vert \cdot)$ and $Q(\vec{k})$, noted $KL(P \vert \vert Q)$. The KL divergence obeys $KL(P \vert \vert Q) \geq 0$, and is null iif $P$ equals $Q$. Therefore, the expectation of the complete log-likelihood is interpreted as a lower bound on the log-likelihood $\log P(\vec{f}, o \vert \theta, p)$.

The goal of the E-step is to find the expression of $Q(\vec{k})$ that maximizes the expectation of the complete log-likelihood according to the latent variables $\vec{k}$, which is the same as maximizing the lower bound of the log-likelihood $\log P(\vec{f},o \vert \theta, p)$. Given that the log-likelihood does not depend on $Q(\vec{k})$ and that $KL(P \vert \vert Q) \geq 0$, it is maximized when $KL(P \vert \vert Q) = 0$, which occurs for $Q(\vec{k}) = P(\vec{k} \vert \vec{f}, o, \theta, p)$. In this case, the expectation of the complete log-likelihood equals the log-likelihood itself. Its expression reaches a global maximum with respect to the latent variables $\vec{k}$ for fixed parameters $\theta$ and $p$.

Given the definition of the SIMSBM, the derivation of $P(\vec{k} \vert \vec{f}, o, \theta, p)$ is straightforward (see Eq.\ref{eq:likelihood}). The probability of one combination of \glspl{cluster} $\vec{k}$ among $\vec{K}$ possible combinations given an input features vector $\vec{f}$ and an output $o$ is proportional to ${p_{\vec{k}}(o) \prod_{n \in N} \theta_{f_n,k_n}^{(a(f_n))}}$. Therefore, we can write the now-known distribution $Q(\vec{k})$, noted $\omega_{\vec{f}, o}(\vec{k})$, as:

\begin{align}
    \label{eq:expectation}
     \omega_{\vec{f}, o}(\vec{k}) := Q(\vec{k}) &=  P(\vec{k} \vert \vec{f}, o, \theta, p) \notag \\
     &= \frac{p_{\vec{k}}(o) \prod_{n \in N} \theta_{f_n,k_n}^{(a(f_n))}}{\sum_{\vec{k'} \in \vec{K}} p_{\vec{k'}}(o) \prod_{n \in N} \theta_{f_n,k'_n}^{(a(f_n))}}
\end{align}

By substituting $Q(\vec{k})$ and $P(\vec{k} \vert \vec{f}, o, \theta, p)$ by $\omega_{\vec{f}, o}(\vec{k})$ in Eq.~\ref{SIMSBM-eq-decomp}, we get an expression for the log-likelihood which is maximized with respect to the latent variables. Explicitly:
\begin{equation}
    \label{SIMSBM-eq-lik-estep}
    \ell = \sum_{(\vec{f},o) \in R^{\circ}} \sum_{\vec{k} \in \vec{K}} \omega_{\vec{f},o}(\vec{k}) \cdot \log  \left( \frac{p_{\vec{k}}(o) \prod_{n \in N} \theta_{f_n,k_n}^{(a(f_n))}}{\omega_{\vec{f},o}(\vec{k})} \right)
\end{equation}
\myequations{\ \ \ \ SIMSBM - E-step}

The maximization step follows by maximizing Eq.\ref{SIMSBM-eq-lik-estep} with respect to the parameters $\theta$ and $p$, holding $\omega (\cdot)$ constant.

\paragraph{M-step}
We take back Eq.~\ref{eq:likelihood} and add Lagrangian multipliers $\phi$ to account for the constraints on $\theta$. We maximize of the resulting constrained likelihood $\ell_c$ according to each latent variable:
\begin{align}
\label{eq:m-theta}
    &\frac{\partial \ell_c}{\partial \theta_{mn}^{(a(m))}} = \frac{\partial}{\partial \theta_{mn}^{(a(m))}} \left[ \ell - \sum_i \phi_i^{(a(i))} \left( \sum_k \theta_{ik}^{(n)} - 1 \right) \right] \notag\\
    &\Leftrightarrow \ \ \phi_{m}^{(a(m))} = 
     \sum_{(\vec{f},o) \in \partial m} \sum_{\vec{k} \in \vec{K}} \frac{ c_{\vec{k_{i_m}}}(n) \omega_{\vec{f},o}(\vec{k})}{\theta_{mn}^{(a(m))}}  \notag \\
    &\Leftrightarrow \ \ \theta_{mn}^{(a(m))} \phi_{m}^{(a(m))} = 
     \sum_{(\vec{f},o) \in \partial m} \sum_{\vec{k} \in \vec{K}} c_{\vec{k_{i_m}}}(n) \omega_{\vec{f},o}(\vec{k})
\end{align}
The term $c_{\vec{k_{i_m}}}(n)$ arises because of the non-linearity induced by the \gls{interaction} between input \glspl{entity} of the same \gls{type}.
Let $i_m$ be the indices where \gls{entity} $m$ appears in the input vector $\vec{f}$. The corresponding entries of the permutation vector $\vec{k}$ are noted $\vec{k_{i_m}}$. Then, ${c_{\vec{k_{i_m}}}(n)=\vert \{ 1 \vert \vec{k}_{i}=n \}_{i \in i_m} \vert}$ is the count of $n$ in $\vec{k}_{i_m}$.
When $n$ appears $c_{\vec{k_{i_m}}}(n)$ times in a permutation comprising $\vec{k_{i_m}}$, so will a term $\log \theta_{nm}^{c_{\vec{k_{i_m}}}(n)}$, whose derivative is $\frac{c_{\vec{k_{i_m}}}(n)}{\theta_{nm}}$, hence this term arising. 
Note that $c_{\vec{k_{i_m}}}(n)=0$ nullifies the contribution of permutations $\vec{k}$ where $n$ does not appear in $\vec{k_{i_m}}$. Therefore, we can restrict the sum over $\vec{k}$ in Eq.~\ref{eq:m-theta} to the set ${\partial n = \{ \vec{k} \vert \vec{k}\in \vec{K}, n \in \vec{k}_{i_m} \}}$. We also defined the set ${\partial m = \{ (\vec{f}, o) \vert (\vec{f}, o) \in R^{\circ}, m \in \vec{f}_{i_m} \}}$. 

Using Eq.~\ref{eq:normTheta} and Eq.~\ref{eq:m-theta}, we compute $\phi_m^{(a(m))}$:
\begin{align}
\label{eq:phi}
    \sum_n^{K_{a(m)}} \phi_{m}^{(a(m))}& \theta_{mn}^{(a(m))}  = \sum_{(\vec{f},o) \in \partial m} \sum_n^{K_{a(m)}} \sum_{\vec{k} \in \vec{K}}  c_{\vec{k_{i_m}}}(n) \omega_{\vec{f},o}(\vec{k})  \notag \\
    = \phi_{m}^{(a(m))} &= \sum_{(\vec{f},o) \in \partial m}  \underbrace{\sum_{\vec{k} \in \vec{K}} \omega_{\vec{f}, o}(\vec{k})}_{=1 \text{ (Eq.~\ref{eq:expectation})}} \underbrace{\sum_n^{K_{a(n)}} c_{\vec{k_{i_m}}}(n)}_{=c_{\vec{f}}(m)}  \notag \\
    &= \sum_{(\vec{f},o) \in \partial m}  c_{\vec{f}}(m) = C_m
\end{align}
When summing over $n$, $c_{\vec{k_{i_m}}}(n)$ successively counts the number of times each $n$ appears in $\vec{k_{i_m}}$, which equals the length of $\vec{k_{i_m}}$. Therefore $\sum_n c_{\vec{k_{i_m}}}(n) = \vert \vec{k_{i_m}} \vert = c_{\vec{f}}(m)$ is the number of times input \gls{entity} $m$ appears in the entry $(\vec{f},o)$ considered, which does not depend on $\vec{k}$. $C_m$ is the total count of $m$ in the dataset. Note that this differs from the derivation proposed in \citep{Tarres2019TMBM}, where nonlinear terms are not accounted for. 

The derivation of the maximization equation for $\mathbf{p}$ is very similar. Let the set $\partial s = \{ (\vec{f}, o) \vert (\vec{f}, o) \in R^{\circ}, o = s \}$. We solve:
\begin{align}
    \label{eq:max-p}
    &\frac{\partial \ell_c}{\partial p_{\vec{r}}(s)} = \frac{\partial}{\partial p_{\vec{r}}(s)} \left[ \ell - \sum_{\vec{k}} \psi_{\vec{k}} \left( \sum_o p_{\vec{k},o} - 1 \right) \right]=0 \notag\\
    &\Leftrightarrow \psi_{\vec{r}} = \sum_{(\vec{f}, o) \in \partial s}  \frac{\omega_{\vec{f},o}(\vec{r})}{p_{\vec{r}}(s)} \notag\\
    &\Leftrightarrow \sum_{n}\psi_{\vec{r}} p_{\vec{r}}(s) = \psi_{\vec{r}} = \sum_{(\vec{f}, o) \in R^{\circ}}  \omega_{\vec{f},o}(\vec{r})
\end{align}
Finally, combining Eq.~\ref{eq:m-theta} with Eq.~\ref{eq:phi}, and the two last lines of Eq.~\ref{eq:max-p}, the maximization equations are:
\begin{equation}
    \begin{cases}
        \label{eq:maxTheta}
        \theta_{mn}^{(a(m))} = \frac{\sum_{(\vec{f},o) \in \partial m} \sum_{\vec{k} \in \partial n} c_{\vec{k_{i_m}}}(n) \omega_{\vec{f},o}(\vec{k})}{C_m}\\
        p_{\vec{r}}(s) = \frac{\sum_{(\vec{f}, o) \in \partial s} \omega_{\vec{f},o}(\vec{r})}{\sum_{(\vec{f}, o) \in R^{\circ}} \omega_{\vec{f},o}(\vec{r})}
    \end{cases}
\end{equation}
\myequations{\ \ \ \ SIMSBM - M-step}
From Eq.~\ref{eq:maxTheta} we see that for a given $(K_{a(f_1)}, ..., K_{a(f_N)})$, one iteration of the EM algorithm runs in $\mathcal{O}(\vert R^{\circ} \vert)$.

\subsubsection{SIMSBM generalizes several state-of-the-art models}
\label{SIMSBM-generalizes}
The formulation of the SIMSBM allows a great flexibility on modelling choices. Now, we develop how our formulation allows to recover several state-of-the-art models. Throughout this section, we denote input \glspl{entity} of distinct \glspl{type} by different letters (e.g., $f_1$ is not of the same \gls{type} as $g_1$), and the model's output as $o$. The set of corresponding \gls{membership} matrices for each \gls{type} is noted as $\Theta = \{ \theta^{(f)}, \theta^{(g)}, ... \}$ and one edge of the multipartite \glspl{cluster}-\gls{interaction} tensor is noted $(p_{k(f_1), k(f_2), ...}(o))$ where $k(f_i)$ is one of the possible \gls{cluster} indices for an input \gls{entity} of \gls{type} $f$.

\paragraph{Nomenclature}
We must define a nomenclature to refer to each special case of the SIMSBM --what input \gls{entity} \glspl{type} are considered, and how many \glspl{interaction} for each \gls{type}. We use the notation SIMSBM(number \glspl{interaction} \gls{type} 1, number \glspl{interaction} \gls{type} 2, ...). For instance, SIMSBM(2,3) represents a case where the SIMSBM considers two \glspl{type} of input \glspl{entity}, with the first one \gls{interacting} as pairs with other \glspl{entity} of the same \gls{type}, and the second one \gls{interacting} as triples with \glspl{entity} of the same \gls{type}. The corresponding data has a shape $(f_1, f_2, g_1, g_2, g_3, o)$ where $f$ and $g$ are the two considered \glspl{type}.

\paragraph{MMSBM}
The historical \acrshort{MMSBM} has been proposed in \citep{Airoldi2008MMSBM} and is at the base of most models discussed in this section. MMSBM takes pairs $(f_1, o)$ as input data. We can recover this model with our framework by setting $\Theta = \{ \theta^{(f)} \}$. The multipartite network then becomes ``unipartite'', which is a simple one-layer clustering of \glspl{entity}. The probability of an output is defined by \glspl{entity}' \gls{cluster} \gls{membership} only. The tensor $p$ then takes the shape $p_{f_1}(o)$. Using the SIMSBM notation, this corresponds to SIMSBM(1).

\paragraph{Bi-MMSBM}
The Bi-MMSBM has first been proposed in \citep{Antonia2016AccurateAndScalableRS} and has since been applied on several occasions \citep{Tarres2019TMBM}. In this modelling, data is made of triplets $(f_1, g_1, o)$. Each \gls{entity} is associated with a node on a side of a bipartite graph ($f_i$'s on one side, $g_i$'s on the other) and edges represent the probability of an output $o$.
We recover the Bi-MMSBM with our model by setting $\Theta = \{ \theta^{(f)}, \theta^{(g)} \}$ and the bipartite \glspl{cluster} network tensor $(p_{k(f_1), k(g_1)}(o))$. This corresponds to SIMSBM(1,1).

\paragraph{T-MBM}
The T-MBM is a model proposed in \citep{Tarres2019TMBM} that goes a step further than \citep{Antonia2016AccurateAndScalableRS} by adding a layer to the bipartite network used to model quadruplet data $(f_1, f_2, g_1, o)$. This model considers \glspl{interaction} between \glspl{entity} of the same \gls{type} (as in Section~\ref{IMMSBM}, see below) by clustering $f_1$ and $f_2$ using the same \gls{membership} matrix, but does not account for nonlinear terms. 
We recover the T-MBM modelling by setting $\Theta = \{ \theta^{(f)}, \theta^{(g)} \}$ and $(p_{k(f_1), k(f_2), k(g_1)}(o))$. Our formulation allows to go further by adding an arbitrary number of layers to the multipartite networks proposed in \citep{Antonia2016AccurateAndScalableRS,Tarres2019TMBM}. This corresponds to SIMSBM(2,1).

\paragraph{IMMSBM}
The \acrshort{IMMSBM} we propose in Section~\ref{IMMSBM} models \glspl{interaction} between \glspl{entity} of the same \gls{type} to predict an output. The data takes the form $(f_1, f_2, o)$. Each input \gls{entity} is still associated with one node on either side of a bipartite graph, except that here the \gls{membership} matrix is shared between the two layers. The links between each pair of \glspl{cluster} represent the probability of an output $o$.
We recover the IMMSBM with our model by setting $\Theta = \{ \theta^{(f)} \}$ and the bipartite \glspl{cluster} network tensor $(p_{k(f_1), k(f_2)}(o))$. Importantly, our formulation allows to consider \glspl{interaction} between $n$ input \glspl{entity} instead of simply pair \glspl{interaction}. This corresponds to SIMSBM(2).

\paragraph{Indirect generalizations}
We did not detail the generalization of other families of block models because our algorithm does not readily fit these cases. However, it is worth mentioning that MMSBM has been developed as an alternative to Single \Gls{membership} SBM \citep{Yuchung1987} that allows more flexibility \citep{Airoldi2008MMSBM}. Our model reduces to most existing SBM by modifying the definition of the entries of $\theta^{(n)}$. In the Single \Gls{membership} SBM, Eq.~\ref{eq:normTheta} reads $\theta_{f_n, k}^{(n)}=\delta_{k, x}$ where $x$ is one of the $K_n$ possible values for $k$ and $\delta$ is the Kronecker's delta. This means the \gls{membership} vector of each input \gls{entity} equals 1 for one \gls{cluster} only, and 0 anywhere else. Therefore, the optimization process is not the same as we described. In practice, optimization is done with greedy algorithms such as the simulated annealing \citep{CoboLopez2018SocialDilemma,Poux2021MMSBMMrBanks}.

It has also been shown in \citep{Antonia2016AccurateAndScalableRS}-Eq.7 that the Bi-MMSBM model generalizes matrix factorization. Therefore, it follows that SIMSBM also generalizes it. The underlying idea is to remove the multipartite network tensor $p$ and define \glspl{cluster} that are shared by both sides of the bipartite network. This way, \glspl{cluster} do not interact with each other because they are not embedded into a multipartite network; input \glspl{entity} on one side of the bipartite network belonging to one \gls{cluster} are solely linked to \glspl{entity} on the other side belonging to this same \gls{cluster}.

\subsubsection{Experiments}
\label{SIMSBM-XP}

\begin{table*}
\scriptsize
    \setlength{\lgCase}{2.0cm}
    \caption[SIMSBM - Datasets considered]{Datasets considered. The number of discrete values each input or output \gls{entity} \gls{type} can take is given between parentheses. We also provide the number of \glspl{cluster} for each \gls{entity} \gls{type} used in the experiments.}
    \label{table-DS}
	\centering
	\noindent\makebox[\textwidth]{\resizebox{\textwidth}{!}{
	\begin{tabular}{|p{0.7\lgCase}|p{3.05\lgCase}|p{0.85\lgCase}|p{0.90\lgCase}|p{0.3\lgCase}|p{0.6\lgCase}|}
    \cline{1-6}
    & Type of the input \glspl{entity} & \# \glspl{interaction} & Type outputs & $\vert R^{\circ} \vert$ & \# \glspl{cluster}\\
    \cline{1-6}
    
    MrBanks 1 & \{Player (280), Situation (7), Gender (2), Age (6)\} & Situations: 3 & User guess (2) & 16k & \{5,5,3,3\} \\
    MrBanks 2 & \{Player (280), Situation (7)\} & Situations: 3 & User guess (2) & 16k & \{5,5\} \\
    Spotify & \{Artists (143)\} & Songs: 3 & Artist (740) & 50k & \{20\} \\
    PubMed & \{Symptoms (13)\} & Symptoms: 3 & Disease (280) & 2M & \{20\} \\
    Imdb 1 & \{User (2502), Casting (809)\} & Casting: 2 & Rating (10) & 1M & \{10,8\} \\
    Imdb 2 & \{User (2502), Director (255), Casting (809)\} & None & Rating (10) & 700k & \{10,10,10\} \\

    \cline{1-6}
	\end{tabular}
	}}
\end{table*}

\paragraph{Range of application}
As shown in the previous section, our formulation generalizes several existing models from the state-of-the-art. Therefore, it is readily applicable to any of the datasets considered in these works. This includes recommendation datasets (movies \citep{Antonia2016AccurateAndScalableRS}, songs), medical datasets (symptoms-disease networks, drug \gls{interaction} networks \citep{Guimera2013DrugdrugSBM,Tarres2019TMBM}) and social behaviour datasets (social dilemmas \citep{CoboLopez2018SocialDilemma,Poux2021MMSBMMrBanks}, e-mail networks \citep{Godoy2016EmailNetwork,Tarres2019TMBM}). In general, it applies to datasets where there is a given number of input \glspl{entity} leading to a set of possible outputs. In this section, we propose to illustrate an application of our model on 6 different datasets. 

\begin{table}
    \small
    \caption[SIMSBM - Experimental results on real-world datasets]{Results for every dataset presented. The letters in superscript represent the model SIMSBM generalizes in this particular configuration: MMSBM \citep{Airoldi2008MMSBM}=$^{\text{a}}$; Bi-MMSBM \citep{Antonia2016AccurateAndScalableRS}=$^{\text{b}}$; IMMSBM (Section~\ref{IMMSBM})=$^{\text{c}}$; T-MBM \citep{Tarres2019TMBM}=$^{\text{d}}$. The standard error on the last digits over all 100 runs is indicated between parenthesis -- $0.123(12) \Leftrightarrow 0.123 \pm 0.012$. The models presented in this chapter are \underline{underlined}. Overall, we see that our formulation allows to improve results on every dataset.}
    \label{table-res-SIMSBM}
	\centering
	\noindent\makebox[\textwidth]{\resizebox{\textwidth}{!}{
	\begin{tabular}{|l|l|l|S|S|S|S|S|S|S}

		\cline{1-9}
		& & & \text{F1} & \text{P@1} & \text{AUCROC} & \text{AUCPR} & \text{RAP} & \text{NCE} \\ 

		\cline{2-9}
		\multirow{8}{*}{\rotatebox[origin=c]{90}{\footnotesize \text{\textbf{MrBanks 1}}}}

		& \multirow{8}{*}{\rotatebox[origin=c]{90}{\footnotesize \text{\textbf{Ply, Sit (3), Gen, Age}}}}
		&  \underline{SIMSBM(1,1,1,1)} &  \num{ 0.7124 +- 0.0002 } &  \num{ 0.6549 +- 0.0003 } &  \num{ 0.7071 +- 0.0002 } &  \num{ 0.7141 +- 0.0003 } &  \num{ 0.8274 +- 0.0001 } &  \num{ 0.1726 +- 0.0001 } \\ 
		& & \underline{SIMSBM(1,2,1,1)} &  \num{ 0.7107 +- 0.0002 } &  \num{ 0.6696 +- 0.0005 } &  \num{ 0.7120 +- 0.0004 } &  \num{ 0.7158 +- 0.0005 } &  \num{ 0.8348 +- 0.0003 } &  \num{ 0.1652 +- 0.0003 } \\ 
		& & \underline{SIMSBM(1,3,1,1)} &  \maxf{\num{ 0.7348 +- 0.0002 } } & \maxf{\num{ 0.7172 +- 0.0005 } } & \maxf{\num{ 0.7610 +- 0.0004 } } & \maxf{\num{ 0.7646 +- 0.0004 } } & \maxf{\num{ 0.8586 +- 0.0003 } } & \maxf{\num{ 0.1414 +- 0.0003 } } \\ 
		& & TF &  \num{ 0.6795 } &  \num{ 0.6037 } &  \num{ 0.4702 } &  \num{ 0.4967 } &  \num{ 0.8019 } &  \num{ 0.1981 } \\ 
		& & NMF &  \num{ 0.7178 } &  \num{ 0.6976 } &  \num{ 0.7232 } &  \num{ 0.7182 } &  \num{ 0.8409 } &  \num{ 0.1591 } \\ 
		& & KNN &  \num{ 0.7023 } &  \num{ 0.6648 } &  \num{ 0.6859 } &  \num{ 0.6623 } &  \num{ 0.8324 } &  \num{ 0.1676 } \\ 
		& & NB &  \num{ 0.6867 } &  \num{ 0.6382 } &  \num{ 0.6323 } &  \num{ 0.6250 } &  \num{ 0.8191 } &  \num{ 0.1809 } \\ 
		& & BL &  \num{ 0.6795 } &  \num{ 0.6037 } &  \num{ 0.5000 } &  \num{ 0.5215 } &  \num{ 0.8019 } &  \num{ 0.1981 } \\ 

		\cline{1-9}
		
		\multirow{8}{*}{\rotatebox[origin=c]{90}{\footnotesize \text{\textbf{MrBanks 2}}}}

		& \multirow{8}{*}{\rotatebox[origin=c]{90}{\footnotesize \text{\textbf{ Ply, Sit (3) }}}}
		& \underline{SIMSBM(1,1)$^{\text{b}}$} &  \num{ 0.7032 +- 0.0001 } &  \num{ 0.6700 +- 0.0003 } &  \num{ 0.7049 +- 0.0002 } &  \num{ 0.7018 +- 0.0002 } &  \num{ 0.8350 +- 0.0002 } &  \num{ 0.1650 +- 0.0002 } \\ 
		& & \underline{SIMSBM(1,2)$^{\text{d}}$} &  \num{ 0.7032 +- 0.0002 } &  \num{ 0.6679 +- 0.0005 } &  \num{ 0.7028 +- 0.0004 } &  \num{ 0.7010 +- 0.0004 } &  \num{ 0.8340 +- 0.0003 } &  \num{ 0.1660 +- 0.0003 } \\ 
		& & \underline{SIMSBM(1,3)} & \maxf{\num{ 0.7290 +- 0.0003 } } & \maxf{\num{ 0.7067 +- 0.0006 } } & \maxf{\num{ 0.7547 +- 0.0005 } } & \maxf{\num{ 0.7530 +- 0.0006 } } & \maxf{\num{ 0.8533 +- 0.0003 } } & \maxf{\num{ 0.1467 +- 0.0003 } } \\ 
		& & TF &  \num{ 0.6775 } &  \num{ 0.5953 } &  \num{ 0.5054 } &  \num{ 0.5259 } &  \num{ 0.7976 } &  \num{ 0.2024 } \\ 
		& & NMF &  \num{ 0.7137 } &  \num{ 0.6908 } &  \num{ 0.7246 } &  \num{ 0.7128 } &  \num{ 0.8397 } &  \num{ 0.1603 } \\ 
		& & KNN &  \num{ 0.7100 } &  \num{ 0.6699 } &  \num{ 0.7126 } &  \num{ 0.6856 } &  \num{ 0.8349 } &  \num{ 0.1651 } \\ 
		& & NB &  \num{ 0.6802 } &  \num{ 0.6512 } &  \num{ 0.6329 } &  \num{ 0.6225 } &  \num{ 0.8256 } &  \num{ 0.1744 } \\ 
		& & BL &  \num{ 0.6775 } &  \num{ 0.5953 } &  \num{ 0.5000 } &  \num{ 0.5181 } &  \num{ 0.7976 } &  \num{ 0.2024 } \\ 

		\cline{1-9}
		
		\multirow{8}{*}{\rotatebox[origin=c]{90}{\footnotesize \text{\textbf{Spotify}}}}

		& \multirow{8}{*}{\rotatebox[origin=c]{90}{\footnotesize \text{\textbf{ Artists (3)}}}}
		& \underline{SIMSBM(1)$^{\text{a}}$} &  \num{ 0.1741 +- 0.0004 } &  \num{ 0.2155 +- 0.0007 } & \maxf{\num{ 0.7908 +- 0.0006 } } &  \num{ 0.1603 +- 0.0003 } &  \num{ 0.3827 +- 0.0004 } &  \num{ 0.0786 +- 0.0003 } \\ 
		& & \underline{SIMSBM(2)$^{\text{c}}$} &  \num{ 0.3156 +- 0.0005 } & \maxf{\num{ 0.3348 +- 0.0004 } } &  \num{ 0.7661 +- 0.0005 } &  \num{ 0.2545 +- 0.0003 } & \maxf{\num{ 0.4528 +- 0.0003 } } &  \num{ 0.0938 +- 0.0006 } \\ 
		& & \underline{SIMSBM(3)} & \maxf{\num{ 0.3243 +- 0.0004 } } &  \num{ 0.3209 +- 0.0003 } &  \num{ 0.7384 +- 0.0006 } & \maxf{\num{ 0.2613 +- 0.0003 } } &  \num{ 0.4366 +- 0.0003 } &  \num{ 0.1079 +- 0.0007 } \\ 
		& & TF &  \num{ 0.0262 } &  \num{ 0.0042 } &  \num{ 0.4805 } &  \num{ 0.0159 } &  \num{ 0.0962 } &  \num{ 0.1550 } \\ 
		& & NMF &  \num{ 0.0371 } &  \num{ 0.0658 } &  \num{ 0.5650 } &  \num{ 0.0403 } &  \num{ 0.1762 } &  \num{ 0.2557 } \\ 
		& & KNN &  \num{ 0.3201 } &  \num{ 0.3009 } &  \num{ 0.7079 } &  \num{ 0.2400 } &  \num{ 0.3941 } &  \num{ 0.5212 } \\ 
		& & NB &  \num{ 0.0463 } &  \num{ 0.0846 } &  \num{ 0.7005 } &  \num{ 0.0576 } &  \num{ 0.2264 } & \maxf{\num{ 0.0763 } } \\ 
		& & BL &  \num{ 0.0262 } &  \num{ 0.0532 } &  \num{ 0.5000 } &  \num{ 0.0135 } &  \num{ 0.1879 } &  \num{ 0.0969 } \\ 

		\cline{1-9}
		
		\multirow{8}{*}{\rotatebox[origin=c]{90}{\footnotesize \text{\textbf{PubMed}}}}

		& \multirow{8}{*}{\rotatebox[origin=c]{90}{\footnotesize \text{\textbf{ Symptoms (3)}}}}
		& \underline{SIMSBM(1)$^{\text{a}}$} &  \num{ 0.2915 +- 0.0002 } &  \num{ 0.5576 +- 0.0004 } &  \num{ 0.7475 +- 0.0001 } &  \num{ 0.2658 +- 0.0001 } &  \num{ 0.4641 +- 0.0001 } &  \num{ 0.2033 +- 0.0001 } \\ 
		& & \underline{SIMSBM(2)$^{\text{c}}$} &  \num{ 0.3127 +- 0.0001 } &  \num{ 0.5704 +- 0.0001 } &  \num{ 0.7613 +- 0.0001 } &  \num{ 0.2840 +- 0.0001 } &  \num{ 0.4838 +- 0.0001 } &  \num{ 0.1991 +- 0.0001 } \\ 
		& & \underline{SIMSBM(3)} & \maxf{\num{ 0.3219 +- 0.0001 } } & \maxf{\num{ 0.5790 +- 0.0001 } } & \maxf{\num{ 0.7666 +- 0.0001 } } & \maxf{\num{ 0.2895 +- 0.0001 } } & \maxf{\num{ 0.4937 +- 0.0001 } } & \maxf{\num{ 0.1983 +- 0.0001 } } \\ 
		& & TF &  \num{ 0.1607 } &  \num{ 0.1003 } &  \num{ 0.5605 } &  \num{ 0.1777 } &  \num{ 0.1370 } &  \num{ 0.5118 } \\ 
		& & NMF &  \num{ 0.1606 } &  \num{ 0.0293 } &  \num{ 0.5368 } &  \num{ 0.2158 } &  \num{ 0.2321 } &  \num{ 0.2959 } \\ 
		& & KNN &  \num{ 0.2414 } &  \num{ 0.3251 } &  \num{ 0.6154 } &  \num{ 0.2324 } &  \num{ 0.2891 } &  \num{ 0.7730 } \\ 
		& & NB &  \num{ 0.2600 } &  \num{ 0.1618 } &  \num{ 0.7054 } &  \num{ 0.2389 } &  \num{ 0.2036 } &  \num{ 0.3058 } \\ 
		& & BL &  \num{ 0.1607 } &  \num{ 0.1003 } &  \num{ 0.5000 } &  \num{ 0.1026 } &  \num{ 0.2464 } &  \num{ 0.2834 } \\ 

		\cline{1-9}
		
		\multirow{7}{*}{\rotatebox[origin=c]{90}{\footnotesize \text{\textbf{Imdb 1}}}}

		& \multirow{7}{*}{\rotatebox[origin=c]{90}{\footnotesize \text{\textbf{ Usr, Cast (2) }}}}
		& \underline{SIMSBM(1,1)$^{\text{b}}$} & \maxf{\num{ 0.3212 +- 0.0001 } } & \maxf{\num{ 0.2434 +- 0.0001 } } & \maxf{\num{ 0.6265 +- 0.0001 } } & \maxf{\num{ 0.2502 +- 0.0001 } } &  \num{ 0.4360 +- 0.0001 } &  \num{ 0.3504 +- 0.0001 } \\ 
		& & \underline{SIMSBM(1,2)$^{\text{d}}$} & \num{ 0.2546 +- 0.0001 } &  \num{ 0.1006 +- 0.0038 } &  \num{ 0.4998 +- 0.0003 } &  \num{ 0.1509 +- 0.0001 } &  \num{ 0.2928 +- 0.0042 } &  \num{ 0.4527 +- 0.0051 } \\ 
		& & TF &  \num{ 0.2546 } &  \num{ 0.2300 } &  \num{ 0.4960 } &  \num{ 0.1485 } &  \num{ 0.4568 } &  \num{ 0.2702 } \\ 
		& & NMF &  \num{ 0.1329 } &  \num{ 0.0593 } &  \num{ 0.5007 } &  \num{ 0.1531 } &  \num{ 0.1549 } &  \num{ 0.8087 } \\ 
		& & KNN &  \num{ 0.2578 } &  \num{ 0.1899 } &  \num{ 0.5489 } &  \num{ 0.1679 } &  \num{ 0.3290 } &  \num{ 0.5328 } \\ 
		& & NB &  \num{ 0.2555 } &  \num{ 0.2351 } &  \num{ 0.5308 } &  \num{ 0.1607 } &  \maxf{\num{ 0.4619 } } &  \maxf{\num{ 0.2596 } } \\ 
		& & BL &  \num{ 0.2546 } &  \num{ 0.2300 } &  \num{ 0.5000 } &  \num{ 0.1508 } &  \num{ 0.4605 } &  \num{ 0.2586 } \\ 
		
		\cline{1-9}

		\multirow{6}{*}{\rotatebox[origin=c]{90}{\footnotesize \text{\textbf{Imdb 2}}}}

		& \multirow{6}{*}{\rotatebox[origin=c]{90}{\footnotesize \text{\textbf{Usr, Dir, Cast}}}}
		& \underline{SIMSBM(1,1,1)} & \maxf{\num{ 0.3896 +- 0.0001 } } & \maxf{\num{ 0.3437 +- 0.0002 } } & \maxf{\num{ 0.7593 +- 0.0001 } } & \maxf{\num{ 0.3293 +- 0.0002 } } & \maxf{\num{ 0.5705 +- 0.0001 } } & \maxf{\num{ 0.1654 +- 0.0001 } } \\ 
		& & TF &  \num{ 0.2547 } &  \num{ 0.2238 } &  \num{ 0.5039 } &  \num{ 0.1513 } &  \num{ 0.4549 } &  \num{ 0.2636 } \\ 
		& & NMF &  \num{ 0.1127 } &  \num{ 0.0483 } &  \num{ 0.5005 } &  \num{ 0.1529 } &  \num{ 0.1406 } &  \num{ 0.8319 } \\ 
		& & KNN &  \num{ 0.2596 } &  \num{ 0.1890 } &  \num{ 0.5501 } &  \num{ 0.1681 } &  \num{ 0.3268 } &  \num{ 0.5248 } \\ 
		& & NB &  \num{ 0.2558 } &  \num{ 0.2373 } &  \num{ 0.5362 } &  \num{ 0.1617 } &  \num{ 0.4632 } &  \num{ 0.2571 } \\ 
		& & BL &  \num{ 0.2547 } &  \num{ 0.2286 } &  \num{ 0.5000 } &  \num{ 0.1507 } &  \num{ 0.4598 } &  \num{ 0.2574 } \\

		\cline{1-9}
	\end{tabular}
    }}
\end{table}

\paragraph{Datasets}
The \textbf{MrBanks datasets} has been gathered from a social experiment led in Barcelona in 2013, detailed in \citep{Guttieres2016MrBanks}. The experiment takes the form of a game where a player must guess whether a stock market curve will go up or down at the next time step. To do so, she can access various \glspl{piece of information}, from which we selected the most relevant subset based on the description in \citep{Guttieres2016MrBanks,Poux2021MMSBMMrBanks}: direction of the market on the previous day (up/down), whether she guessed right (yes/no), and an expert's advice who is correct 60\% of the time (up/down/not consulted). Those are the 7 \gls{interacting} \glspl{piece of information} that define a situation. If the model considers pair \glspl{interaction} for instance, a situation can be defined as ``market went down and user guessed wrong'', or ``market went up and expert advised up''. A triplet \gls{interaction} allows one to get the full picture according to the selected \glspl{piece of information}. In addition, we have access to the player’s age and gender. The goal is to predict whether the user will guess up or down given the available information.

For the \textbf{Spotify dataset} we collected user-made playlists on Spotify using the Spotipy python API. Our goal is to predict which next artist the user will add to the playlist, given the previous artists he already added. We consider the last 4 artists added by the user and their \gls{interaction} to guess the next one. Note that it often happens for an artist to be added several times in a row.

The \textbf{PubMed dataset} is made of medical reports we gathered using the PubMed API. We use provided keywords to isolate symptoms and diseases in the text, as in \citep{Zhou2014DiseaseSymptomNet}. Our goal is to guess which diseases are discussed in the report given the symptoms that are listed in the document. Our guess is that a combination of symptoms helps narrow the set of possible diagnoses. 

Finally, the \textbf{Imdb datasets} are provided and discussed in \citep{Harper2015ImdbDataset}. The original dataset comes with information about movies such as the lead actors starring in them and the movie's director. It also provides a list of users' ratings on movies. We aim to predict which rating a movie will get according to several combinations of parameters and their \glspl{interaction}: who directed the movie, who played in it, who gave the rating, etc.

The datasets we consider here are summarized in Table~\ref{table-DS}. 90\% of each dataset's documents are used as a training set, and the other 10\% are used as an evaluation set. Each iteration of the SIMSBM is run 100 times on every dataset. The EM algorithm stops once the relative variation of the likelihood falls below $10^{-4}$ for 30 iterations in a row. We present the average results over all the runs. The number of \glspl{cluster} has been chosen based on the existing literature on similar datasets (Imdb \citep{Antonia2016AccurateAndScalableRS}, MrBanks \citep{Poux2021MMSBMMrBanks}) and heuristic methods (Section~\ref{IMMSBM-XP}) for demonstration purposes; dedicated work would be needed to infer their optimal number for every dataset.

Finally, when SIMSBM is evaluated on a dataset containing more \glspl{interaction} than it is designed to consider, the model is trained on the lower-order corresponding dataset. For instance, imagine a dataset considering one \gls{type} \gls{interacting} three times. This dataset is made of one observation only $(1, 2, 3, o)$. A SIMSBM iteration that considers pair \glspl{interaction} will then be trained on triplets $(1,2,o), (1,3,o)$ and $(2,3,o)$, and evaluation will be performed accordingly. 

\paragraph{Baselines and evaluation}
Evaluation is done by considering test set entries $\left( \vec{f}, o_{true} \right)$. We ask the SIMSBM to provide a probability for each possible output $o \in O$ given the unobserved input context $\vec{f}$. We then compare the resulting probability distribution to $o_{true}$. We evaluate our results according to the maximum \textbf{F1} score, the precision at 1 (\textbf{\acrshort{P@}1}), the area under the ROC curve (\textbf{\acrshort{AUCROC}}), the area under the Precision-Recall curve (\textbf{\acrshort{AUCPR}}, or Average Precision). Since the problem is about multi-label classification, we consider the weighted version of these metrics --metrics are computed individually for each class, and averaged with each class’s weight being equal to the number of true instances in the class. The presented results are averaged over all 100 runs. We also consider the rank average precision (\textbf{\acrshort{RAP}}) and the normalized covering error (\textbf{\acrshort{NCE}}, only here lower is better).\footnote{\label{note1}We used the sklearn Python library implementation.}
We compare to several standard baselines:
\begin{itemize}
    \item \acrshort{BL}: the most naive baseline, where each output is predicted according to its frequency in the training set, without any context.
    \item \acrshort{NB}$^{\text{\ref{note1}}}$: the Naive Bayes baseline assumes conditional independence between the input \glspl{entity} and updates the posterior probability according to Bayes law.
    \item \acrshort{KNN}$^{\text{\ref{note1}}}$: K-nearest-neighbours. The output probabilities for a given \glspl{entity} array are defined according to a majority vote among the most similar \glspl{entity} arrays.
    \item \acrshort{NMF}$^{\text{\ref{note1}}}$ and \acrshort{TF}: Tensor Factorization baselines. For TF, we use the implementation provided by authors of \citep{Karatzoglou2010TF}. As discussed in the introduction, to make these methods fit our problem, we have to define a continuous quantity to train the model. Instead of requiring an additional model to map possible outputs into a continuous space, we train the model on the frequency of outputs in a given context. Since NMF can only consider one \gls{entity} as an input, we consider each different context as an \gls{independent} \gls{entity}. Outputs are added as an additional dimension to the data matrix instead of being a proper objective --their frequency is now the objective. The TF baseline is run for the same number of \glspl{cluster} as for the SIMSBM.
    \item \acrshort{MMSBM} \citep{Airoldi2008MMSBM}, \acrshort{Bi-MMSBM} \citep{Antonia2016AccurateAndScalableRS}, \acrshort{IMMSBM} (Section~\ref{IMMSBM}), \acrshort{T-MBM} \citep{Tarres2019TMBM}: as discussed before, each of these models are particular cases of SIMSBM. For presentation purpose, for each model, we keep the SIMSBM notation and indicate in superscript which of these it reduces to in this context. MMSBM=$^{\text{a}}$; Bi-MMSBM=$^{\text{b}}$; IMMSBM=$^{\text{c}}$; T-MBM=$^{\text{d}}$.
\end{itemize}

\subsubsection{Discussion}
We present our main results in Table~\ref{table-res-SIMSBM}. In this table, we see that our formulation systematically outperforms the proposed baselines, as well as the ones it generalizes. In most cases, the possibility to add a layer or to consider higher-order \glspl{interaction} improves the performance over the existing baselines (MMSBM, Bi-MMSBM, IMMSBM and T-MBM). About the Spotify dataset, as stated before, artists are often added to a playlist in a row, leading to the probability of the next artist being the same as the one immediately before her to be higher. In this context, adding \gls{interaction} terms adds noise to the modelling. This explains why the triple \glspl{interaction} version of SIMSBM does not perform better than its pair-\glspl{interaction} \citep{Antonia2016AccurateAndScalableRS} or no-\gls{interaction} \citep{Airoldi2008MMSBM} iterations.

We also ran SIMSBM on the datasets considered in \citep{Antonia2016AccurateAndScalableRS} and \citep{Poux2021MMSBMMrBanks} in the corresponding model's specifications and obtained comparable results as theirs (provided in Appendix, Section~\ref{SI-SIMSBM-Additional-Results}). In the case of \citep{Poux2021MMSBMMrBanks}, the authors propose to describe a given situation in form of a unique key, where each key is \gls{independent} of the others. Interestingly, the formulation with triple \glspl{interaction} improves the results on the same dataset the authors provided. This is because the constituents of a situation are not \gls{independent} anymore but instead behave as elementary \gls{interacting} pieces of context, which provides a more accurate description of reality: a situation is not considered as a whole anymore, but instead as the combination of several \glspl{piece of information}.

\subsubsection{Conclusion}
In this section, we developed a global framework, SIMSBM, that generalizes several models from the literature as particular cases, such as MMSBM, Bi-MMSBM, IMMSBM and T-MBM. Our derivation accounts for the nonlinearity induced by the \glspl{interaction}, which has not been taken into account in state-of-the-art works.

This results in a highly flexible model that can be applied to a broad range of problems, as shown through systematic evaluation of the proposed formulation on several real-world datasets. In particular, we cited throughout the text several experimental studies conducted in medicine, social behaviour and recommendation using special cases of our model; using alternative iterations of the SIMSBM framework may help further improve the description and understanding of the \gls{interacting} processes at stake between an arbitrary greater number of input \glspl{entity}.

To summarize, we developed a framework that allows to consider MMSBMs for any number of \gls{entity} \glspl{type} that can have arbitrarily high-order \glspl{interaction}. This answers the two challenges raised in Section~\ref{SBM-SotA}. We now propose to restrict our study of \glspl{interaction} in information \gls{spread} to a special case of the SIMSBM --the IMMSBM.

\subsection{IMMSBM -- A study of pair \glspl{interaction}}
\label{IMMSBM}
\textit{This work has been published, see \citep{Poux2021IMMSBM}.}

\paragraph{About this section}
As stated at the end of Section~\ref{SIMSBM}, the \acrshort{IMMSBM} introduced here is a special case of the SIMSBM, noted SIMSBM(2). As such, the motivations and derivations detailed in Section~\ref{SIMSBM-model} are readily applicable to the case at hand. 

In particular, Section~\ref{IMMSBM-model} and Section~\ref{IMMSBM-inference-params} may seem redundant at first glance. However, these sections' argumentation is specifically focused on \gls{interaction} modelling. Besides, we present an alternative, simpler, derivation of the EM algorithm for SIMSBM using Jensen's inequality in Section~\ref{IMMSBM-inference-params}.

The novel results of SIMSBM(2), or IMMSBM, applied to \glspl{interaction} modelling are discussed from Section~\ref{IMMSBM-XP} to Section~\ref{IMMSBM-conclusion}.

\subsubsection{IMMSBM -- \Gls{interacting} MMSBM}
\label{IMMSBM-model}

\paragraph{MMSBM to model \glspl{interaction}}
In this section, we consider a specific iteration of the \acrshort{SIMSBM} framework, referred to as \acrshort{IMMSBM} (for \Gls{interacting} Mixed \Gls{membership} Stochastic Block Model). Studying \glspl{interaction} using a model built on the mixed-\gls{membership} SBMs allows us to assume that each \gls{entity} does not have only one \gls{membership}. This is in line with what is expected for real-life situations, where \glspl{entity} can belong to several \glspl{cluster} simultaneously (a song can be rock \textit{and} blues, a word can be used in two topics, etc.).
The IMMSBM requires no prior information on the system and its numerical implementation is possible via the scalable Expectation-Maximization algorithm detailed in the previous section. The EM complexity scales linearly with the size of the dataset. Our method offers a better predictive power than non-\gls{interacting} baselines.

The goal of the model is to predict the most likely result of an \gls{interaction} between two \glspl{entity} ($i$ and $j$ in Fig.~\ref{figModel}). Typically, IMMSBM yields the probability of a tweet getting retweeted given a user's previous exposure to pairs of tweets, or the product the most likely to be bought given item pairs in a user's purchase history. Another example about assisted medical diagnosis: observing the words ``fatigue'' and ``cough'' in a medical report is more likely to imply the observation of ``flu'' than ``anaemia'', despite ``anaemia'' being often associated with the ``fatigue'' symptom. The model will group data into \glspl{cluster} (each user's \gls{membership} is encoded in the matrix $\theta$, see Fig.~\ref{figModel}) that interact symmetrically with each other (\glspl{interaction} tensor \textbf{p}), resulting in a probability over the possible outputs to appear (histograms Fig.~\ref{figModel}-left). We have no prior knowledge of the \gls{content} of the groups, and we need to set the number of \glspl{cluster} $K$.
\begin{figure}
\centering
    \noindent\makebox[\textwidth]{
    \includegraphics[width=.45\textwidth]{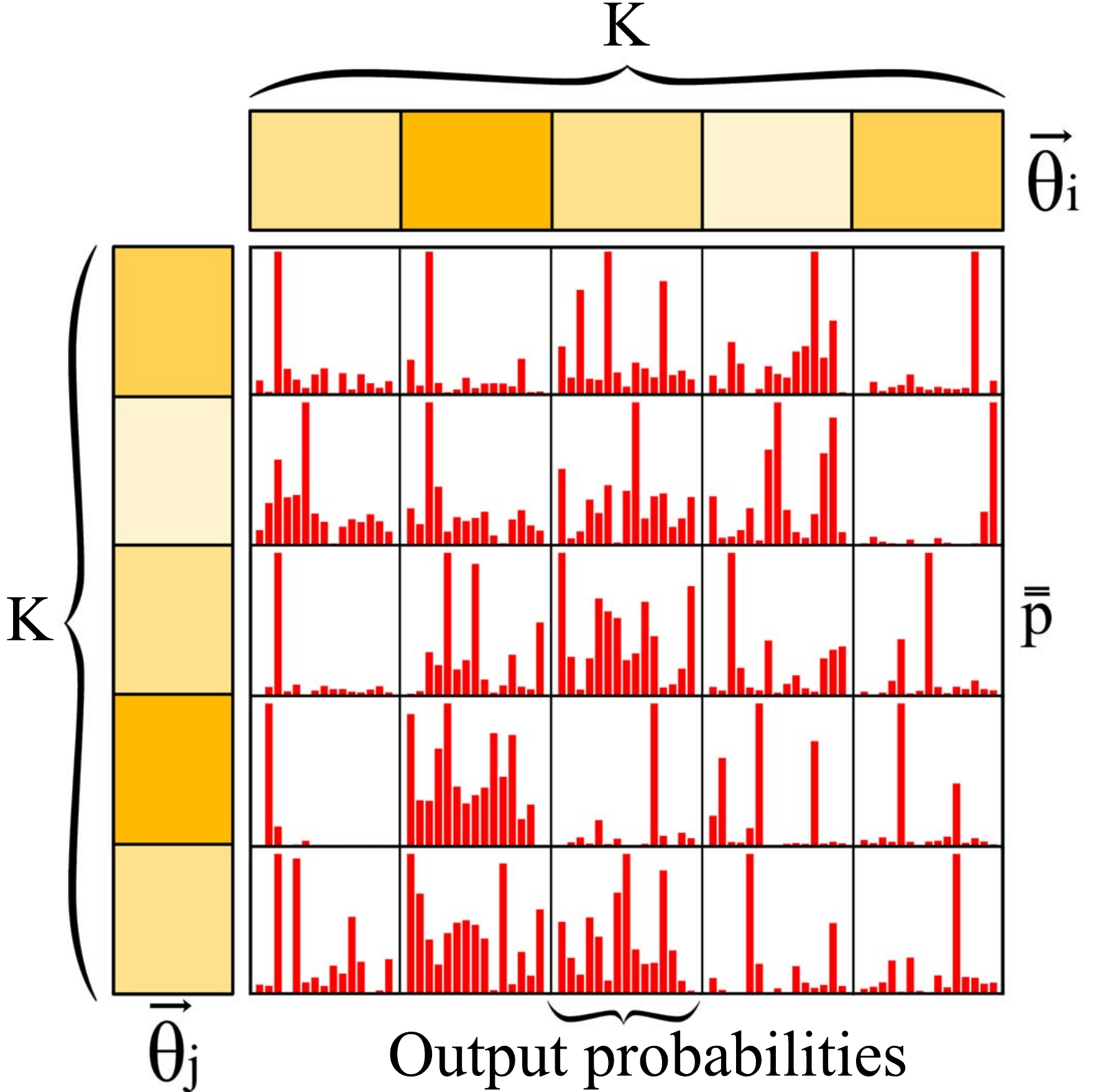}
    \includegraphics[width=.45\textwidth]{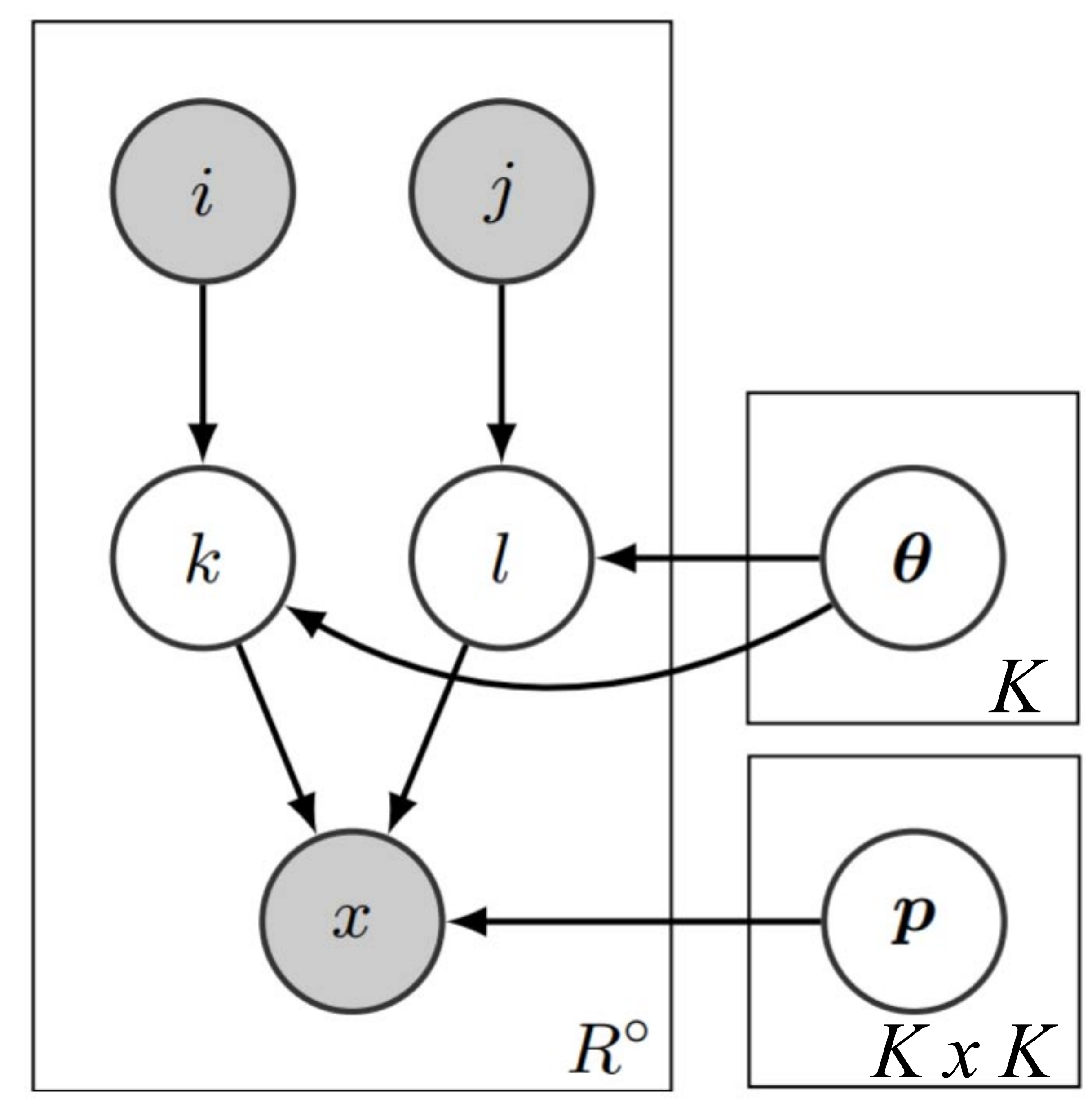}
    }
    \caption[IMMSBM - Graphical representation of the model]{\textbf{Illustration of the model} --- \textbf{(Left)} Schema of IMMSBM for a single pair of \glspl{entity} $(i,j)$ (which could be ``fever'' and ``cough'' for instance). Input \glspl{entity} are grouped into K \glspl{cluster} in different proportions; the proportion to which they belong to each \gls{cluster} is quantified by a $\theta$ matrix (dimension $[I \times K]$ where $I$ is the input space). The \glspl{cluster} then interact to generate a probability distribution over the output \glspl{entity} defined by the \glspl{interaction} tensor \textbf{p} (dimension $[K \times K \times O]$ where $O$ is the output space). \textbf{(Right)} Alternative representation of the IMMSBM as a graphical model. To generate each output, for each observation ($i$, $j$, $x$) in the set $R^{\circ}$, a \gls{cluster} ($k$ and $l$) is drawn for each input \gls{entity} ($i$, $j$) from a distribution encoded in the matrix $\theta$. The generated output $x$ is drawn from a multinomial distribution conditioned by the previously drawn \glspl{cluster} k and l encoded in $\Vec{p}$.} \label{figModel}
\end{figure}

\paragraph{Model}
\label{likelihoodSec}
We refer to the \gls{interacting} \glspl{entity} as input \glspl{entity} $(i,j) \in I^2$, and to an output as $x \in O$. $I$ is the input space (the \glspl{entity} that interact: products, symptoms, or songs for instance) and $O$ is the output space (the \glspl{entity} resulting from the \gls{interaction} in an answer, diagnosed diseases for instance). We illustrate in Fig.~\ref{figModel}-left how the input space and output space are related according to IMMSBM: input \glspl{entity} are clustered, and the \gls{interaction} between those \glspl{cluster} yields a probability distribution over the possible output. Note that $I$ and $O$ can be disjoint, unlike in \citep{Myers2012CoC}. In the case of retweet prediction, $I$ accounts for tweets in a user's history, and $O$ for retweeted tweets --that do not necessarily appear in the user's feed.
As an alternative visualization, we present the graphical generative model of the IMMSBM in Fig.~\ref{figModel}-right. The observed data then takes the form of triplets ($i$, $j$, $x$) signifying that the combined presence of input \glspl{entity} i and j leads to the output \gls{entity} x.

We assume there are regularities in the studied datasets. Given subsets of inputs may exhibit a similar behaviour. In the medical dataset example, if symptoms such as ``fever'' and ``pallor'' often come in pairs, they are considered as similar regarding the diagnosis. They belong to the same \gls{cluster}. We define the \gls{membership} matrix $\theta$ associating each input \gls{entity} with \glspl{cluster} in different proportions, such that $\theta_i$ is a $[1 \times K]$ vector with $\sum_k^K \theta_{i,t}=1$. Note that unlike in single \gls{membership} stochastic block models an \gls{entity} does not have to belong to only one group \citep{Funka2019ReviewSBM}. Given the possible semantic variation of \glspl{entity} (polysemy of words in natural languages -e.g., ``like'', ``swallow''-, symptoms with various causes in medicine -``headache'', ``fever''-, etc.), an approach \textit{via} a mixed-\gls{membership} clustering is more in line with reality.

We then define the \gls{cluster} \glspl{interaction} tensor $p_{k,l}(X_{k,l}=x)$ whose dimensions are ${[K \times K \times O]}$ as the probability that the \gls{interaction} between \glspl{cluster} $k \in K$ and $l \in K$ gives rise to the output $x \in O$.
By definition, $\sum_{x} p_{k,l}(X_{k,l}=x) = 1 \ \forall k,l$. The role of those two quantities is schematized in Fig.~\ref{figModel}.

We choose to consider only one \gls{membership} matrix $\theta$ for all of the inputs, instead of one per input entry as in \citep{Antonia2016AccurateAndScalableRS}. It enforces the model to be permutation-independent with respect to the input \glspl{entity}. Observation ($i$, $j$, $x$) is equivalent to ($j$, $i$, $x$). This follows the idea of \citep{Beutel2012InteractingViruses} where it is assumed that the \gls{interaction} between two viruses is symmetric, meaning the \gls{interaction} influences both viruses with the same magnitude. There is no need to consider a different clustering for inputs $i$ and $j$, which motivates the use of a single \gls{membership} matrix $\theta$. This differs from other recent works on \gls{interaction} modelling that do not consider permutation-invariant clustering \citep{Christakopoulou2014HOSLIM,Steck2021Interactions} or the non-linearity induced by symmetric \glspl{interaction} \citep{Myers2012CoC,Tarres2019TMBM}.

We now propose to define the \glspl{entity} \glspl{interaction} tensor $P_{i,j}(X_{i,j}=x)$, representing the probability that the \gls{interaction} between inputs i and j implies the output x as:
\begin{equation}
    P_{i,j}(X_{i,j}=x) = \sum_{k,l} \theta_{i,k}\theta_{j,l}p_{k,l}(X_{k,l}=x)
\end{equation}
For the sake of brevity, from now on we will refer to $p_{k,l}(X_{k,l}=x)$ as $p_{k,l}(x)$. We define the likelihood of the observations given the parameters as:
\begin{equation}
    \label{eqLikelihood}
    P(R^{\circ} \vert \theta, p) = \prod_{(i,j,x) \in R^{\circ}} \sum_{k,l} \theta_{i,k}\theta_{j,l}p_{k,l}(x)
\end{equation}
\myequations{\ \ \ \ IMMSBM - Likelihood}
where $R^{\circ}$ denotes the set of triplets in the training set (input, input, output). Note that the remaining triplets $R \setminus R^{\circ}$ are used as test set.

From the definitions above, it is straightforward to show that IMMSBM corresponds to the special case SIMSBM(2), detailed in the previous section.

\subsubsection{Inference of the parameters}
\label{IMMSBM-inference-params}
\paragraph{Expectation step}
\label{IMMSBM-Estep}
We detailed a general derivation of the E-step of EM algorithm for mixture models in Section~\ref{SIMSBM-model}. Here, we present an alternative (and quicker) derivation of the E-step using Jensen's inequality. The two methods yield identical results.
Taking the logarithm of the likelihood as defined in Eq.~\ref{eqLikelihood}, denoted $\ell$, we have:
\begin{equation}
\label{eqJensen}
    \begin{split}
         \ell &= \sum_{(i,j,x) \in R^{\circ}} \ln \sum_{k,l} \theta_{i,k}\theta_{j,l}p_{k,l}(x) \\
         &= \sum_{(i,j,x) \in R^{\circ}} \ln \sum_{k,l} \omega_{i,j,x}(k,l) \frac{\theta_{i,k}\theta_{j,l}p_{k,l}(x)}{\omega_{i,j,x}(k,l)} \\
         & \geq \sum_{(i,j,x) \in R^{\circ}} \sum_{k,l} \omega_{i,j,x}(k,l) \ln \frac{\theta_{i,k}\theta_{j,l}p_{k,l}(x)}{\omega_{i,j,x}(k,l)}
    \end{split}
\end{equation}
\myequations{\ \ \ \ IMMSBM - E-step}
We used Jensen's inequality to go from the 2nd to 3rd line.
The inequality in Eq.~\ref{eqJensen} becomes an equality for: \begin{equation}
\label{eqOmega}
    \omega_{i,j,x}(k,l) = \frac{\theta_{i,k}\theta_{j,l}p_{k,l}(x)}{\sum_{k',l'} \theta_{i,k'}\theta_{j,k'}p_{k', l'}(x)}
\end{equation}
where $\omega_{i,j,x}(k,l)$ is interpreted as the probability that the observation $(i,j,x)$ is due to i belonging to the group $k$ and j to $l$, that is the expectation of the likelihood of the observation $(i,j,x)$ with respect to the latent variables $k$ and $l$. Therefore, Eq.~\ref{eqOmega} is the formula for the expectation step of the EM algorithm. As stated earlier, an alternative derivation of the expectation step is discussed in \citep{Bishop2006} and in Section~\ref{SIMSBM-Estep}.

\paragraph{Maximization step}
\label{maximization}
This step consists in maximizing the likelihood using the parameters of the model $\theta$ and p, independently of the latent variables.
To take into account the normalization constraints, we introduce the Lagrange multipliers $\phi$ and $\psi$. Following this, the constrained log-likelihood reads:
\begin{equation}
\label{eqLagMult}
    \ell_c = \ell - \sum_i (\phi_i \sum_{t} \theta_{i,t} - 1) - \sum_{k,l} (\psi_{k,l} \sum_{x} p_{k,l}(x) - 1)
\end{equation}
We first maximize $\ell_c$ with respect to each entry $\theta_{mn}$.
\begin{equation}
\label{eqDerivTheta}
    \begin{split}
        \frac{\partial \ell_c}{\partial \theta_{mn}} =& \frac{\partial \ell}{\partial \theta_{mn}} - \phi_m = 0\\
        =& \sum_{\partial m} \left( \sum_{l} \frac{\omega_{m,j,x}(n,l)}{\theta_{mn}} + \sum_{k} \frac{\omega_{j,m,x}(k,n)}{\theta_{mn}} \right) - \phi_m\\
        =& \frac{1}{\theta_{mn}}\sum_{\partial m} \left( \sum_{t} \omega_{m,j,x}(n,t) + \omega_{j,m,x}(t,n) \right) - \phi_m\\
        \Leftrightarrow\ \theta_{mn} =& \frac{1}{\phi_m}\sum_{\partial m} \sum_{t} \omega_{m,j,x}(n,t) + \omega_{j,m,x}(t,n)
    \end{split}
\end{equation}
where $\partial m = \{ (j,x) \vert (m,j,x) \in R^{\circ} \}$ stands for the set of observations in which the entry $m$ appears. Note that summing over this set in line 2 of Eq.~\ref{eqDerivTheta} implies that the relation between two inputs \glspl{entity} is symmetric --if ${(i,j,x) \in R^{\circ}}$ implies ${(j,i,x) \in R^{\circ}}$. Summing the last line of Eq.~\ref{eqDerivTheta} over $n \in K$ then multiplying it by $\phi_m$, we get an expression for $\phi_m$:
\begin{equation}
        \phi_m\sum_n \theta_{mn} = \phi_m = \sum_{\partial m} \sum_{t,n} \omega_{m,j,x}(n,t) + \omega_{i,m,x}(t,n) = \sum_{\partial m} 2 = 2 \cdot n_m
\end{equation}
Where $n_m$ is the total number of times that m appears as an input in $R^{\circ}$. Finally, plugging back this result in Eq.~\ref{eqDerivTheta}, we get:
\begin{equation}
    \label{eqMaxTheta}
    \theta_{mn} = \frac{\sum_{\partial m} \left( \sum_{t} \omega_{m,j,x}(n,t) + \omega_{i,m,x}(t,n) \right)}{2.n_m}
\end{equation}

Following the same line of reasoning for p, we get:
\begin{equation}
    \label{eqMaxp}
    p_{r,s}(l) = \frac{\sum_{\partial l} \omega_{i,j,l}(r,s)}{\sum_{(i,j,l') \in R^{\circ}} \omega_{i,j,l'}(r,s)}
\end{equation}
\myequations{\ \ \ \ IMMSBM - M-step}
The set of Eq.~\ref{eqMaxTheta} and Eq.~\ref{eqMaxp} constitutes the maximization step of the EM algorithm. They hold only if the input \glspl{entity} \glspl{interaction} are symmetric (e.g., when $\{ (j,x) \vert (m,j,x) \in R^{\circ} \} = \{ (i,x) \vert (i,m,x) \in R^{\circ} \}$), which is what we aimed to do.
It is worth noting that the proposed algorithm offers linear complexity with the size of the dataset $\mathcal{O}(\vert R^{\circ} \vert)$ provided the number of \glspl{cluster} is constant, while guaranteeing convergence to a local maximum.

\paragraph{Instantaneous derivation as a special case of SIMSBM}
The EM equations we just derived could be directly retrieved, given that IMMSBM is corresponds to the special case SIMSBM(2). We define the \gls{membership} matrices set $\Theta = \{ \theta^{(f)} \}$, the \gls{interaction} tensor $p_{k(f_1),k(f_2)}(o)$ and consider data shaped as $(f_1, f_2, o)$. Plugging these into Eq.~\ref{eq:expectation} and Eq.~\ref{eq:maxTheta} yields identical expressions as Eq.~\ref{eqOmega}, Eq.~\ref{eqMaxTheta} and Eq.~\ref{eqMaxp}.

\subsubsection{Experiments}
\label{IMMSBM-XP}
\paragraph{Datasets and evaluation protocol}
We assess the performance of IMMSBM on 4 different datasets. The \textbf{PubMed} dataset is built with 15,809,271 medical reports collected from the PubMed database as a good approximation for human-disease network \citep{Zhou2014DiseaseSymptomNet}. This dataset is not explicitly about recommender systems but provides an intuitive way to understand how our recommendation approach works by suggesting likely diseases given a collection of symptoms. 
The \textbf{Twitter} dataset with 139,098 retweets gathered in October 2010 associated with the 3 last tweets in the feed preceding each retweet \citep{Hodas2014DataSetTwitter}. The task is to infer which tweets a user is the most likely to retweet. A possible application would be a personalized recommendation of such tweets in the ``Trends for you'' Twitter section. 
The \textbf{Reddit} dataset with the entirety of posts in the subreddit r/news in May 2019 (225,485 message-answer relationships in total). We aim at predicting the \gls{content} of the answer given the incoming message. A possible application would be similar to what Gmail does when suggesting automated answers to an email given keywords present in the message.
Finally, \textbf{Spotify} dataset is built with 2,000 music playlists associated with keywords ``English'' and ``rock'' of random Spotify users. We predict the next song a user will add to a playlist given this user's history. A RS application would suggest a ranked list of songs to the user is likely to add to a playlist.
Each dataset is formed by associating every pair of inputs in a \textit{message} (i.e., a list of symptoms, a user's feed, a Reddit post, and a playlist's last artists) with an \textit{answer} (i.e., a disease, a retweet, a Reddit answer, an artist added to a playlist). This is illustrated in Fig.~\ref{IMMSBM-fig-corpus-pairs}. The building process of datasets is further detailed in Appendix, Section~\ref{SI-IMMSBM-datasets}, together with direct links to access them for possible replication studies.

\begin{figure}
    \centering
    \includegraphics[width=\textwidth]{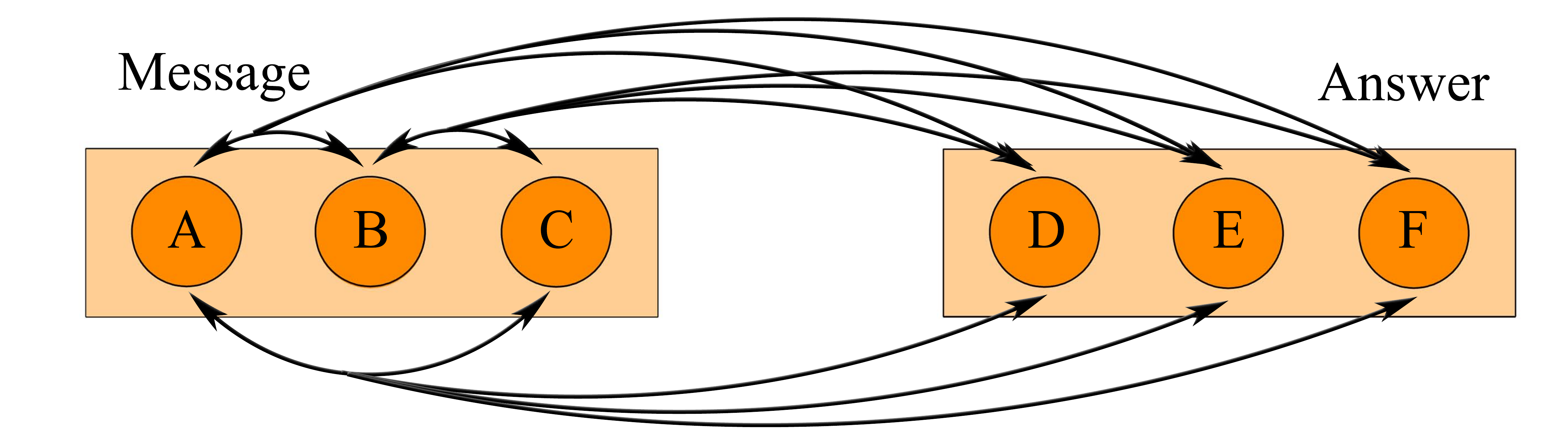}
    \caption[IMMSBM - Illustration of dataset generation]{\textbf{Illustration of dataset generation} --- Each dataset is organized as a list of message-answer pairs. Each message and each answer can comprise several \glspl{entity} (words, URLs, etc.). We consider every possible pair of \glspl{entity} in the message, and link each of them to each output in the answer to create our triplets dataset $\left( i,j,x \right)$}
    \label{IMMSBM-fig-corpus-pairs}
\end{figure}

From the raw datasets, we form the test set by randomly sampling 10\% of the (message$\rightarrow$answer) data entries. The 90\% entries left are used as a training set. The number of \glspl{cluster} is determined using the elbow method (see Fig.~\ref{figCalib}). The number of \glspl{cluster} considered for each corpus: 30 for PubMed, 15 for Twitter, 30 for Reddit, and 15 for Spotify. We perform 100 \gls{independent} runs, each with \gls{independent} random initialization of the parameters $\theta$ and $p$. The EM loop stops once the relative variation of the likelihood between two iterations is less than 0.001\%.

\paragraph{Baselines}
\label{baselines}

\begin{itemize}
    \item \textbf{Naive baseline} -- The naive baseline is simply the frequency of the outputs in the test set. This naive classifier predicts the value of every output independently of the inputs.

    \item \textbf{\acrshort{MMSBM}} -- We use the classical MMSBM as a second baseline. In this formulation, \glspl{interaction} are not taken into account \citep{Airoldi2008MMSBM}. Instead of considering triplets (input, input, output), we consider pairs (input, output). We then train this MMSBM baseline whose log-likelihood is defined as $\ell_{BL} = \sum_{(i,x) \in R^{\circ}} \ln \sum_{k} \theta_{ik} p_k(x)$ on the same datasets as in the main experiments. We make 100 \gls{independent} runs with random initialization and discuss the results of the highest likelihood run. This baseline provides a way to quantify the importance of \glspl{interaction} by comparing to the case where these are taken into account. We expect the baseline model to find results that are equivalent to the diagonal probabilities of the main model (i.e., similar to $P_{i,i}(x) = \sum_{k,l}\theta_{i,k}\theta_{i,l}p_{k,l}(x)$). This is because the diagonal $P_{i,i}(x)$ is supposed to account for the apparition of x given only the presence of i. Furthermore, this model provides insight into the generalization of the assumption made in \citep{Myers2012CoC}, stating that the probability of an output is mostly equal to the frequency of this output.

    \item \textbf{Perfect modelling} - Upper limit to prediction -- We compare our results to an upper limit to predictions. In most situations, the dataset simply does not allow for perfect performances. Consider as an example a case where the test set contains twice the triplet (``fever'', ``pallor'', ``influenza'') and once the triplet (``fever'', ``pallor'', ``anaemia''): a model yielding a single value for the input pair (``fever'', ``pallor'') cannot make a prediction better than 66\%. In Appendix, Section~\ref{SI-IMMSBM-upperLim}, we develop a general method to derive this upper limit to predictions.
\end{itemize}

\paragraph{Results}
\begin{figure}
    \includegraphics[width=1.0\textwidth]{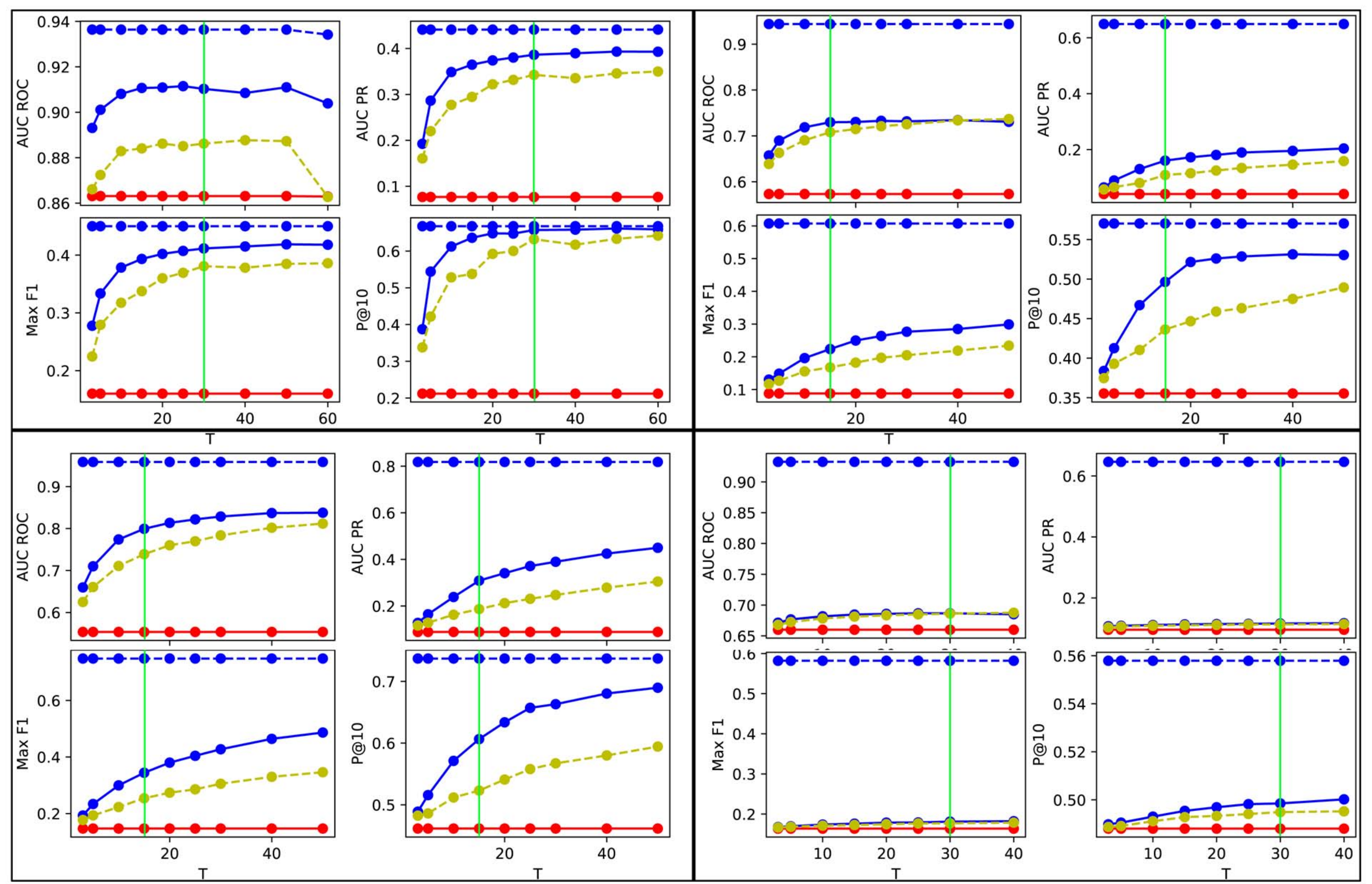}
    \caption[IMMSBM - Choosing the number of \glspl{cluster}]{\textbf{Choosing the number of \glspl{cluster}} --- Performance variations on all the metrics for every dataset considered. Dashed blue line: upper limit to performances ; blue line: IMMSBM ; yellow dashed line: MMSBM ; red line: naive baseline. Top left: PubMed ; top right: Spotify ; bottom left: Twitter ; bottom right: Reddit. The vertical green line shows the selected number of \glspl{cluster}; it is chosen using the AIC criterion, which matches with the elbow of the various metrics considered.} \label{figCalib}
\end{figure}
The metrics we use to assess the performance of our model are the \textbf{max-F1} score, the area under the receiver operating characteristic curve (\textbf{\acrshort{AUCROC}}) curve and the precision@10 (\textbf{P@10}). For all of these quantities, the closer to 1 they are, the better the performance.

For evaluation, we adopt a guessing process as follows. For every pair of inputs, we compute the probability vector for the presence of every possible output. Then we predict all of the probabilities larger than a given threshold to be ``present'', and all the others to be ``absent''. Comparing those predictions with the observations in the test set, we get the confusion matrix for the given threshold. We then lower the threshold and repeat the process to compute the various metrics. 

We recall that we do not compare our approach to \citep{Myers2012CoC} because its formulation does not allow us to make a prediction on exogenous outputs -- when the output is not part of the input pair.

\begin{table*}
\caption[IMMSBM - Experimental results on real-world datasets]{Experimental results for the four metrics considered, from each model applied to each corpus. We see that our model outperforms the proposed baselines in every dataset for almost every evaluation metric -- the error bars overlap for the AUC (ROC) on the PubMed corpus. The given error corresponds to the standard deviation over the 10 runs -- $0.123(12) \Leftrightarrow 0.123 \pm 0.012$. The naive baseline and upper limit results are constant over the runs and therefore have no variance. The models presented in this chapter are \underline{underlined}. \label{tabMetrics-IMMSBM}}
\centering
\setlength{\lgCase}{3.15cm}
\noindent\makebox[\textwidth]{
\begin{tabular}{|p{0.08\lgCase}|p{0.6\lgCase}|p{\lgCase}|p{\lgCase}|p{\lgCase}|p{0\lgCase}}
\cline{3-5}
  \multicolumn{1}{c}{} & \multicolumn{1}{c|}{} & \centering P@10 & \centering Max-F1 & \centering AUC ROC & \\
 \cline{1-5}
    \centering\multirow{4}{*}{\centering\rotatebox[origin=c]{90}{\textbf{PubMed}}} & \centering Naive & \centering 0.212 & \centering 0.160 & \centering 0.863 & \\
     
     & \centering MMSBM & \centering 0.627(2) & \centering 0.393(2) & \centering \textbf{0.911(0)} & \\
     
     & \centering \underline{IMMSBM} & \centering \textbf{0.656(1)} & \centering \textbf{0.411(1)} & \centering \textbf{0.911(2)} & \\
     
     & \centering Up.lim. & \centering 0.668 & \centering 0.450 & \centering 0.936 & \\
 \cline{1-5}

     \centering\multirow{4}{*}{\centering\rotatebox[origin=c]{90}{\textbf{Twitter}}} & \centering Naive & \centering 0.462 & \centering 0.147 & \centering 0.554 & \\
    
     & \centering MMSBM & \centering 0.529(5) & \centering 0.254(5) & \centering 0.741(4) & \\
     
     & \centering \underline{IMMSBM} & \centering \textbf{0.610(4)} & \centering \textbf{0.349(6)} & \centering \textbf{0.800(1)} & \\
     
     & \centering Up.lim. & \centering 0.737 & \centering 0.748 & \centering 0.959 & \\
 \cline{1-5}
 
     \centering\multirow{4}{*}{\centering\rotatebox[origin=c]{90}{\textbf{Reddit}}} & \centering Naive & \centering 0.488 & \centering 0.164 & \centering 0.660 & \\
    
     & \centering MMSBM & \centering 0.495(0) & \centering 0.177(0) & \centering 0.686(0) & \\
     
     & \centering \underline{IMMSBM} & \centering \textbf{0.499(0)} & \centering \textbf{0.181(0)} & \centering \textbf{0.687(0)} & \\
     
     & \centering Up.lim. & \centering 0.558 & \centering 0.582 & \centering 0.933 & \\
 \cline{1-5}
 
     \centering\multirow{4}{*}{\centering\rotatebox[origin=c]{90}{\textbf{Spotify}}} & \centering Naive & \centering 0.355 & \centering 0.088 & \centering 0.573 & \\
    
     & \centering MMSBM & \centering 0.426(6) & \centering 0.167(3) & \centering 0.707(2) & \\
     
     & \centering \underline{IMMSBM} & \centering \textbf{0.502(6)} & \centering \textbf{0.228(5)} & \centering \textbf{0.723(2)} & \\
     
     & \centering Up.lim. & \centering 0.570 & \centering 0.607 & \centering 0.944 & \\
 \cline{1-5}

\end{tabular}
}

\end{table*}

In Table~\ref{tabMetrics-IMMSBM}, we show the performances of our model compared with the baselines introduced in Section~\ref{baselines}. We see that IMMSBM outperforms the proposed baselines in most cases. As expected, taking \glspl{interaction} between \glspl{entity} into account leads to improved accuracy in the prediction of the test outputs. This correlates to the conclusions drawn in \citep{Myers2012CoC}, stating the importance of \glspl{interaction} in real-world phenomenon modelling.

We notice however that accounting for \glspl{interaction} did not lead to a significant improvement in performance over the two baselines on the Reddit corpus. The lack of improvement can be imputed to the dataset itself. The Reddit dataset contains few observations for every possible pair, due to the wide range of available vocabulary of natural language \citep{Loreto2016}. Likely, the model has not been trained with enough data to learn significant regularities in pair \glspl{interaction}. This can also be seen during the building of the test set: approximately one-half of the pairs have never been observed in the training set. As future perspectives, it might be interesting to answer this problem by considering a corpus of pre-clustered \glspl{entity} instead of \gls{independent} named \glspl{entity}, hence reducing the vocabulary range and adding to the regularity of the dataset.

Our results show that taking \glspl{interaction} between \glspl{entity} into account is particularly relevant in the case of the PubMed corpus (98.2\% of the maximum reachable precision@10 vs 93.8\% for the non-\gls{interacting} baseline). It seems reasonable to consider that a diagnosis is better determined by joint observation of given symptoms, and not only by the sum of their individual effect. The \gls{interaction} aspect is especially relevant given the small number of observed symptoms (322) used to predict the possible diseases (4,442).

For the Twitter corpus, we confirm the results of \citep{Myers2012CoC} on the importance of \glspl{interaction} between URLs in their \gls{spread}. Our model consequently outperforms the non-\gls{interacting} baseline.

Finally, our model performs better than the baselines on the Spotify dataset. In particular, it achieves better prediction for the top-10 artists one would listen to (+7.6\%). A good P@10 precision is of key interest in the application of any model to playlist building and recommender systems in general. Taking into account artists’ \gls{interaction} clearly added to the level of prediction details.

\subsubsection{Discussion}
\label{IMMSBM-discussion}
\comment{
\paragraph{A good estimate of intrinsic \gls{virality}}
First of all, we need to confirm our hypothesis that the diagonal elements i=j of $P_{i,j}(x)$ account for the intrinsic \gls{virality} of an input on an output. To confirm it, we compare the results of our model with the ones yielded by the non-\gls{interacting} baseline. By design, it only considers the probability of an output given a single input (without any \gls{interaction}), which corresponds to the definition of \gls{virality} (that is the probability of observing an output given only one input, see Section~\ref{baselines} - MMSBM). We therefore have access to the true \gls{virality} of an \gls{entity}, noted $P_{i,BL}(x)$. 

To assess whether the diagonal elements i=j of $P_{i,j}(x)$ account for the intrinsic \gls{virality} on an output, we compare them with $P_{i,BL}(x)$. More precisely, we focus on the average of their absolute difference for all the inputs i, noted $\Delta_i (x)=\vert P_{i,i}(x)-P_{i,BL}(x) \vert$. The closer to 0 the value, the better our model accounts for intrinsic \gls{virality}. The results are presented in Table~\ref{tabQtyInter} (line 1). We see that the \gls{virality} yielded by our model is close to the actual \gls{virality} (0 to 5\% of error depending on the corpus). Therefore, we conclude that our model accurately accounts for the intrinsic \gls{virality} of \glspl{entity}.

We also compare the actual \gls{virality} to the hypothesis made in \citep{Myers2012CoC}, that is: intrinsic \gls{virality} is equal to the fraction of times an input gives rise to itself in the output. We denote this quantity $P^{H}(i)$ and compare it to the non-\gls{interacting} baseline. We focus once again on the average of the absolute difference $\Delta_i^H (x)=\vert P_{i,i}(i)-P^{H}(i) \vert$. Note that due to the definition of $P^{H}(i)$, this quantity cannot be calculated for the PubMed corpus, since input symptoms do not give rise to symptoms as output (but only to diseases). The results testing the hypothesis made in \citep{Myers2012CoC} are presented Table~\ref{tabQtyInter} (line 2). Here we see that the \gls{virality} as defined in this work differs from the actual \gls{virality} up to 40\%, and is not defined for datasets where the output space is disjoint from the input space. Therefore, we conclude that it lacks generality.

\begin{table}
\caption[IMMSBM - \Gls{virality} estimation]{Results on the accuracy of the \gls{virality} as defined in our model and in \citep{Myers2012CoC}. Lower is better. Average $\Delta_{i}^H(x)$ is not defined for the PubMed corpus because the definition proposed in \citep{Myers2012CoC} assumes that the input space is the same as the output space. \textbf{Line 1}: average of the absolute difference between the \gls{virality} as defined in our model and the actual \gls{virality}. \textbf{Line 2}: average of the absolute difference between the \gls{virality} as defined in \citep{Myers2012CoC} and the actual \gls{virality}. \label{tabQtyInter}}
\centering
\setlength{\lgCase}{2.6cm}
\begin{tabular}{ |p{1.3\lgCase}|p{1.\lgCase}|p{0.65\lgCase}|p{0.65\lgCase}|p{0.65\lgCase}|p{0\lgCase}}
 \cline{2-5}
 \multicolumn{0}{c|}{\rotatebox[origin=c]{0}{}} & \centering\rotatebox[origin=c]{0}{Pubmed} & \centering\rotatebox[origin=c]{0}{Twitter} & \centering\rotatebox[origin=c]{0}{Reddit} & \centering\rotatebox[origin=c]{0}{Spotify} & \\
 \cline{1-5}
 \centering Average $\Delta_{i}(x)$ & \centering 3.0\% & \centering 4.4\% & \centering 0.1\% & \centering 2.5\% & \\
 \centering Average $\Delta_{i}^H(x)$ & \centering Not defined & \centering 40.8\% & \centering 2.0\% & \centering 16.1\% & \\
 
 \cline{1-5}
\end{tabular}
\end{table}

In summary, Table~\ref{tabQtyInter} shows that our model correctly accounts for the \gls{virality} of an \gls{entity}, while the hypothesis made in \citep{Myers2012CoC} does not in every case. In particular, it does not work well for Twitter URLs. This last statement makes the conclusions of \citep{Myers2012CoC} debatable. On the other hand, our model allows for a correct quantification of the \gls{virality} with no need for a strong prior assumption.

}

\paragraph{Global impact of \glspl{interaction}}
IMMSBM infers \gls{virality} along with \gls{interaction} terms and yields better results than state-of-the-art methods (see Table~\ref{tabMetrics-IMMSBM}), which provides solid ground for analysing the effect of information \gls{interaction}. Close analysis of \glspl{interaction} between \glspl{piece of information} has been little considered in literature --what lexical fields, groups of symptoms, musical genres, and kinds of tweets interact with each other.
We can evaluate the importance of \glspl{interaction} between \glspl{entity} based on inferred \gls{virality}. We study two quantities for each corpus: the overall relative impact of the \glspl{interaction} on the probability of an \gls{outcome} and the contribution of each pair of \glspl{cluster} in the modification of \gls{outcome} probabilities.

\begin{figure}
    \centering
    \includegraphics[width=\textwidth]{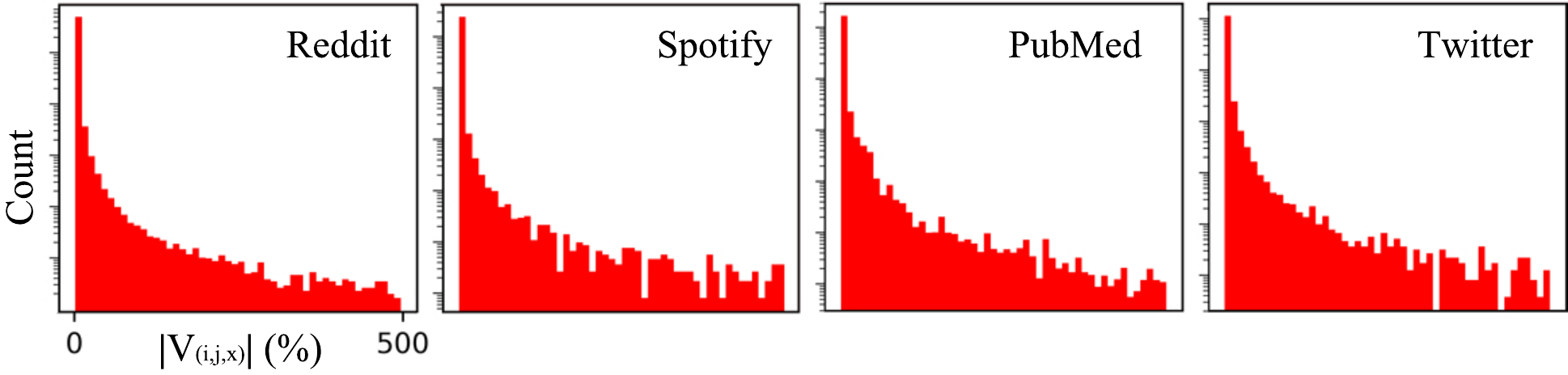}
    \caption[IMMSBM - Histogram of relative impact of \glspl{interaction}]{\textbf{Relative impact of \glspl{interaction}} --- Histogram of the relative impact of \glspl{interaction} on the base \gls{virality} of outputs. Overall, most \glspl{interaction} do not lead to any notable change in the probability of an output.}
    \label{fig-IMMSBM-histRelInter}
\end{figure}

To evaluate the global impact of the \glspl{interaction}, we compute the relative change of probability according to the inferred \gls{virality} for each triplet $\left( i,j,x \right)$, noted $V_{i,j,x}$ and average this quantity over all the triplets in the corpus. We note this quantity $\bar V$:
\begin{equation}
\label{IMMSBM-eq-relImportInter}
\bar V = \frac{1}{\vert R^{\circ} \vert} \sum_{(i,j,x) \in R^{\circ}} \underbrace{\frac{\vert P_{i,i}(x) - P_{i,j}(x) \vert}{P_{i,i}(x)}}_{V_{i,j,x}}
\end{equation}
where $P_{i,j}(x) = \sum_{k,l}\theta_{i, k} \theta_{j, l} p_{k,l}(x)$ denotes the probability of \gls{outcome} $x$ given the \glspl{entity} $i$ and $j$; as shown in the previous section, the diagonal elements $P_{i,i}(x)$ account for the \gls{virality} of $i$ on $x$. 

\begin{figure}[h]
    \centering
    \includegraphics[width=1.\textwidth]{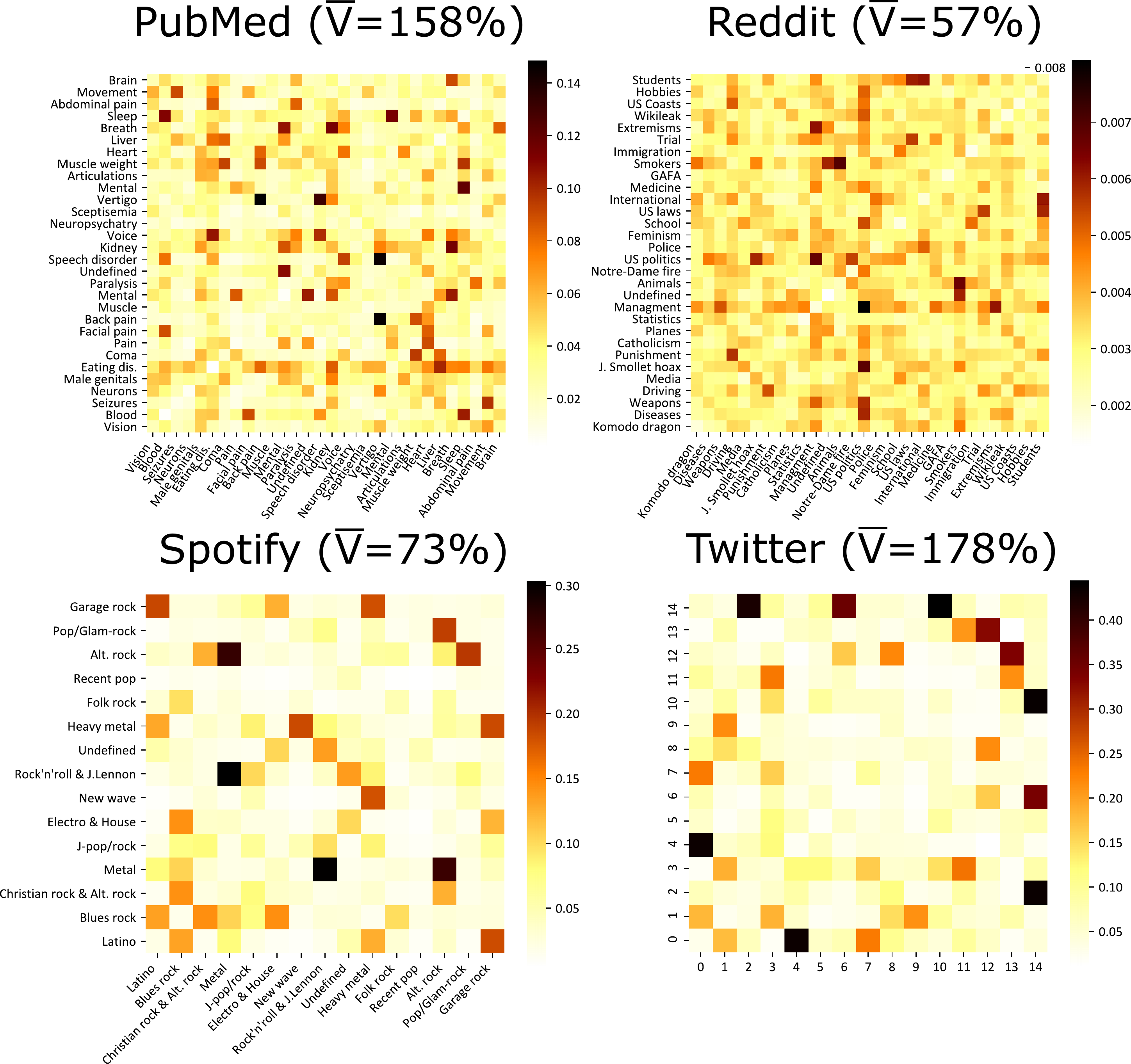}
    \caption[IMMSBM - Importance of \glspl{interaction}]{\textbf{Importance of \glspl{interaction}} - Contribution of each pair of \glspl{cluster} $V_{k,l}$ and average impact of the \glspl{interaction} $\bar V$ (on the right) in \gls{outcome} probabilities for each corpus. \Glspl{cluster} typically interact with a limited number of others; these \glspl{interaction} still play a significant role in \glspl{outcome} probabilities. The \gls{cluster} have been annotated manually.}
    \label{figGpesInter}
\end{figure}

First, we report in Fig.~\ref{fig-IMMSBM-histRelInter} the distribution $V_{i,j,x}$ over all triplets in every dataset. In this figure, we plot the histogram of the relative impact of \glspl{interaction} on the base \gls{virality} of outputs. Overall, we confirm the conclusions of \citep{Myers2012CoC} that most \glspl{interaction} do not lead to any notable change in the probability of an output. However, a non-negligible part of them leads to changes in probability up to 500\% the base \gls{virality}. The overall impact of \glspl{interaction} if the weighted average of this histogram, calculated in Eq.~\ref{IMMSBM-eq-relImportInter}. These results are shown in Fig.~\ref{figGpesInter}.

For every corpus, the \gls{interaction} between \glspl{entity} exerts a non-negligible overall influence $\bar V$ on the probability of outputs. Those results confirm previous work done on \glspl{interaction} modelling, stating the importance of taking \glspl{interaction} into account when analysing real-world datasets \citep{Myers2012CoC}. \Glspl{interaction} increase the \gls{virality} of an output by a factor of 2.58 in the PubMed corpus, 2.78 in the Twitter corpus, 1.73 in the Spotify corpus and 1.57 in the Reddit corpus. \Glspl{interaction} have a greater effect on output probabilities for PubMed and Twitter corpora, and a lesser role for the Spotify and Reddit ones. Besides, our model applied to a dataset where \glspl{interaction} do not play any role ($\bar V = 0$) reduces to the non-\gls{interacting} MMSBM baseline. This metric therefore allows us to assert the importance of the \glspl{interaction} in a given corpus. 

\paragraph{Which \glspl{cluster} interact}
To evaluate \glspl{cluster} pair-\gls{interaction}, we consider the following quantity:
$$
V_{k,l} = \frac{\sum_{(i,j,x) \in R^{\circ}} \theta_{i,k} \theta_{j,l} \vert p_{k,l}(x) - P_{i,i}(x) \vert }{\sum_{(i,j,x) \in R^{\circ}} \theta_{i,k} \theta_{j,l}}
$$
This quantity is the weighted average of the absolute change in output probability with respect to \gls{virality} due to the \gls{interaction} between every pair of \glspl{cluster} $(k,l)$ for each possible pair of \glspl{entity}. The results are shown in Fig.~\ref{figGpesInter}. \Glspl{cluster} have been annotated manually.

We see that most of the \glspl{cluster} do not interact with each other; the \glspl{interaction} essentially take place between a limited number of \glspl{cluster}. Typically, a \gls{cluster} interacts significantly with only one or two other \glspl{cluster} in every corpus (``Vertigo'' and ``Speech disorder'' in PubMed, ``Students'' and ``Schools'' in Reddit, etc.). We also notice that in each corpus, the model forms some non-\gls{interacting} \gls{cluster} with low values of $V_{k,l}$ (``Neuropsychiatry'' in PubMed, ``Recent pop'' in Spotify, etc.); for those, the probability of an output is essentially equal to the \gls{virality} of this output. We also notice that the diagonal of the $V_{k,l}$ matrices comprises low values. The \gls{interaction} of a group with itself leads to an output probability close to its \glspl{entity}' \gls{virality}. To picture how this makes sense, we can imagine diagnosing a disease based on two ear-related symptoms (``earache'' and ``hearing disorders''): the diagnosis is likely to be related to the ear as we would have guessed with only one symptom (its probability equals the \gls{virality}). Now imagine two symptoms of different kinds (``earache'' and ``speech disorder''): the diagnosis is then likely to be related to the brain and less to the ear, so the \gls{interaction} lowers the base probability (\gls{virality}) of the ``ear disease'' output and increases the one of the brain disease.

Being able to see in detail the extent to which \glspl{interaction} exert an influence in a corpus and between which categories they take place opens new perspectives in research. Models that allow explaining the underlying mechanisms are of interest for applied social sciences \citep{Guimera2012HumanPrefSBM,CoboLopez2018SocialDilemma,Poux2021MMSBMMrBanks}.

\paragraph{Entropy of \gls{membership}}
We now consider the \gls{membership} entropy of the \glspl{entity}. It measures how \glspl{entity}’ \gls{membership} is \gls{spread} over the \glspl{cluster}. When it is low, \glspl{entity} belong to a small number of \glspl{cluster} to a great extent; when it is high, it means \glspl{entity} belong to every \gls{cluster} to roughly the same extent. We use the normalized Shannon entropy of \glspl{membership} of user $i$ as a metric, noted $S_{i}^{(m)}$:
\begin{equation}
    S_{i}^{(m)} = \frac{1}{\log_2 \frac{1}{K}}\sum_k^K \theta_{i,t} \log_2 \theta_{i,t}
\end{equation}
Here the lowest entropy reachable is 0, which corresponds to an \gls{entity} belonging to only one group ; the largest is 1 corresponding to belonging to every \gls{cluster} evenly (with probability $\frac{1}{K}$).

Overall, the entropy of \glspl{membership} is low. The average entropy values per corpus are: 0.320 for PubMed (equivalent to belonging on average to 2-3 \glspl{cluster}), 0.324 for Twitter (2 \glspl{cluster}), 0.561 for Reddit (6-7 \glspl{cluster}) and 0.364 for Spotify (2-3 \glspl{cluster}). The small number of \glspl{entity} \gls{spread} among \glspl{cluster} means that the clustering done by our model is easy to interpret --which eased the manual annotation of the \glspl{cluster} presented in the previous section.

\subsubsection{Conclusion}
\label{IMMSBM-conclusion}
In most previous approaches to information \gls{spread}, the effect of \glspl{interaction} between diffusing \glspl{entity} has been neglected. Here, we proposed an in-depth study of the IMMSBM (corresponding to the special case SIMSBM(2)) that allows us to investigate the detail of \glspl{interaction} strength. 
Note that the aim of IMMSBM includes but is not restricted to \gls{interaction} modelling. Throughout this section, we also illustrated the interest of IMMSBM for recommender systems (Spotify and PubMed datasets). 

Our conclusions specific to \glspl{interaction} in information \gls{spread} come from the Twitter dataset. We show that their effect is not trivial (average relative change of 178\%) and that taking them into account increases predictive performance (+0.06 AUCROC over the non-\gls{interacting} baseline). However, these \glspl{interaction} appear to be sparse: only 5 pairs of \glspl{cluster} out of 105 possible pairs seem to have significant \glspl{interaction}. In most cases, \gls{virality} seems to be enough to predict an output with good accuracy. 

However, all the models discussed in this section exhibit a major flaw: they are all static. The data collected on Twitter and Reddit used in our experiments spans approximately over a month. There is no a priori reason for the underlying \gls{interaction} mechanisms to remain the same over this period. Slicing this data into time intervals would provide deeper insights into the \gls{interaction} mechanisms at stake and their evolution over time.

\section{Dynamic \glspl{interaction}}
\label{SDSBM}

\subsection{Introduction}
Dynamic networks are powerful tools to visualize and model \glspl{interaction} between different \glspl{entity} that can evolve over time. The network's nodes represent the \gls{interacting} \glspl{entity}, and ties between these nodes represent an \gls{interaction}. In many real-world situations, the strength of the ties can vary over time --on music streaming websites for instance, users' affinity with various musical genres can vary greatly over time \citep{Kumar2019jodie,RothVillermet2021MusicConsoSpotify}. Such network is said \textbf{dynamic}.

Now, every \gls{interaction} does not have the same importance. A music listener can like both Rock and Jazz, but might prefer one over the other. This person's tie to each musical genre does not have the same intensity; each tie is associated with a number, representing the strength of the \gls{interaction}. The network is said to be \textbf{dynamic and valued}. 

However, in many real-world situations, valued networks are not enough to fully represent a given situation. The same music listener as before can have different opinions on musical genres; she can like it, dislike it, be bored of it, like to listen to it only in the morning, or at night, etc. Each of these relations can be represented by its own tie in the network, each having its own value. The network is said to be \textbf{dynamic, valued and labelled}.

\paragraph{Inferring dynamic, valued and labelled networks}
Networks are high-dimensional objects, whose exact inference is a difficult problem --as stated in the previous section. Several ways to achieve this task have been proposed, the Stochastic Block Models (\acrshort{SBM}) family being one of the most popular approaches, see Chapter~\ref{Chapter-SBMs} and \citep{Holland1983SBM,Guimera2013DrugdrugSBM,CoboLopez2018SocialDilemma}. The underlying assumption is that certain sets of nodes behave similarly. They can be described using a single mathematical object instead of individual ones: the so-called \glspl{cluster}. Instead of modelling every edge for every node in a network, only the edges between these sets are modelled, which makes the task much more tractable. Each \gls{cluster} is then associated with a labelled edge, and each node is associated with a \gls{cluster}. A variant of SBM that allows more expressive power is called the Mixed-\Gls{membership} SBM (\acrshort{MMSBM}), where each node can belong to several \glspl{cluster} in different proportions at the same time \citep{Yuchung1987,Airoldi2008MMSBM,Antonia2016AccurateAndScalableRS,Tarres2019TMBM}. A major advantage of this model’s family is that it yields readily interpretable results (interpretable \glspl{cluster}), unlike most existing neural-network-based modelling of labelled networks \citep{Fan2021InterpretabilityNeuralNet}.

\paragraph{Overview of the proposed approach}
The goal of this section is therefore to provide a way to infer networks that are dynamic, valued and labelled by assuming a block structure -- by using a mixed-\gls{membership} SBM. After a careful review of dynamic network inference literature, it seems that no prior work tackles such a task. Although some previous works handle similar problems, their adaptation to the case at hand is not trivial. Besides, the solution we propose here is conceptually much simpler than those present in the literature. Last but not least, our method is readily pluggable into most existing MMSBM for labelled and valued networks (such as SIMSBM and its special cases, comprising the IMMSBM and each of \citep{Airoldi2008MMSBM,Antonia2016AccurateAndScalableRS,Tarres2019TMBM}) as their temporal extension.

We develop an EM optimization algorithm that scales linearly with the size of the dataset, demonstrate the effectiveness of our approach on both synthetic and real-world datasets, and further detail a possible application scenario.

\begin{figure}
    \centering
    \includegraphics[width=0.99\columnwidth]{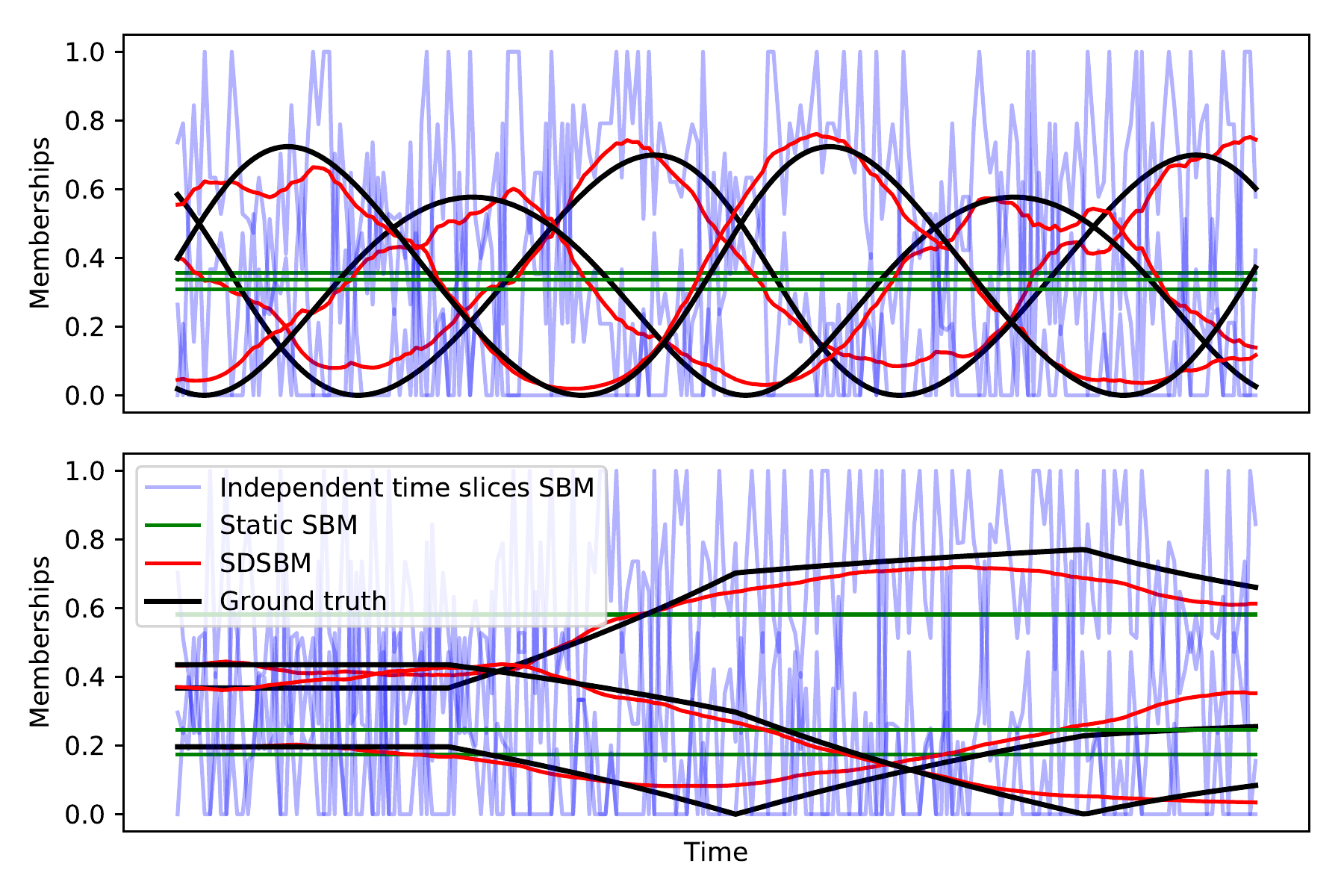}
    \caption[SDSBM - Users' attachment to groups can vary over time]{\textbf{Users' attachment to groups can vary over time} --- A music listener could cyclically prefer Rock, Jazz, or Pop music (top), or listen to either of these without any specific pattern (bottom). For 200 epochs each containing only 5 observations, our approach (in red) infers any smooth dynamic \gls{membership} pattern and does it more accurately than static models (in green \citep{Antonia2016AccurateAndScalableRS}) and models that consider each time slice independently (in blue \citep{Tarres2019TMBM}).
    }
    \label{fig-illustration}
\end{figure}

\subsection{State of the art and limitations}

\subsubsection{Notations}
We consider a network of $I$ nodes and $O$ labels. All the clustering models discussed in this section can be represented using a matrix $\theta \in \mathbb{R}^{I \times K}$ accounting for \glspl{membership} over a set of $K$ possible \glspl{cluster} for each of $I$ nodes, and a block-\gls{interaction} matrix $p \in \mathbb{R}^{K \times K \times O}$ linking each of $K$ \glspl{cluster} to every of $O$ labels. Both $\theta$ and $p$ can vary over time. The network is said to be unlabelled when $O=1$, and binary (as opposed to valued) if an edge can only exist or not exist.

\subsubsection{Dynamic unlabelled networks - Single-\gls{membership}}
Single-\gls{membership} SBM considers a \gls{membership} matrix such as $\theta \in \{0;1\}^{I \times K}$: each \gls{membership} vector equals $1$ for one \gls{cluster}, and $0$ everywhere else. Literature also speaks of ``hard'' clustering.
The authors in \citep{Xu2014DynamicSBM,Xu2014DynamicSBM2} proposed to model a binary unlabelled dynamic network using a label-switching inference method along with a Sequential Monte Carlo algorithm \citep{Jin2021ReviewSBMs}. Both the \gls{membership} and the \gls{interaction} matrices can vary over time, thus supposing two \gls{independent} underlying Markov processes \citep{Jin2021ReviewSBMs}.

In \citep{Yang2010DetectingCA,Tang2014}, the authors propose to model a binary dynamic and unlabelled network. The \gls{cluster} \gls{interaction} matrix $p$ can vary over time while keeping the \glspl{membership} $\theta$ fixed over time. The entries of $p$ are drawn from a Dirichlet distribution and expressed as a Chinese Restaurant Process. This process converges to a Dirichlet distribution and allows to infer a potentially infinite number of \glspl{cluster}. This model is therefore non-parametric and inferred using an MCMC algorithm. 

A novel way of tackling the problem has been proposed in \citep{Matias2017DynSBM,Matias2018DynSBM}, where the authors propose to model the \gls{cluster} \gls{interaction} and \gls{membership} matrices as Poisson processes, that explicitly model the temporal dependency without slicing the time dimension into episodes. The method allows to infer varying \gls{membership} \textit{and} \gls{interaction} matrices for dynamic binary or valued networks, but their results have shown that allowing both to vary simultaneously leads to identifiability and label switching issues \citep{Funka2019ReviewSBM}. This conclusion seems reasonable, given none of these SBM algorithms can reach a global optimum in the likelihood function. During optimization, a model with both \gls{membership} $\theta$ and \gls{interaction} $p$ matrices is all the more likely to get stuck into a local maximum if both can vary over time.

Finally, we can mention the existence of SBM variants that account for dynamic degree-correction \citep{Wilson2019ModelingDynDegCorr} or that enforce a scale-free characteristic \citep{Wu2019ScaleFreedSBM}. However, all these methods consider unlabelled networks and consider a hard clustering which does not allow for as much expressive power as the Mixed-\Gls{membership} approaches.

\subsubsection{Dynamic unlabelled networks - Mixed-\gls{membership}}
Mixed-\gls{membership} SBM considers a \gls{membership} matrix such as $\theta \in \mathbb{R}^{I \times K}$, where each \gls{membership} vector is normalized to $1$. Literature also refers to it as ``soft'' clustering.

Similar to \citep{Yang2010DetectingCA,Tang2014}, a method for inferring dynamical binary unlabelled networks has been proposed in \citep{Fan2015DynInfMMSBM}. The \gls{membership} vector of each \gls{piece of information} is drawn from a Chinese Restaurant Process (\acrshort{CRP}) according to the number of times a node has already been associated with each \gls{cluster} before. The resulting process yields a distribution over an infinity of available \glspl{cluster}. The formulation as a CRP arises naturally because the prior on \gls{membership} vectors is typically a Dirichlet distribution; CRP naturally converges to this distribution. The block-\gls{interaction} matrix $p$ does not evolve with time. The article shows a complexity analysis that suggests the methods run with a complexity of $\mathcal{O}(N^2)$ which makes it unfit for large-scale real-world applications.

The work the most closely related to ours is \citep{Xing2010dMMSBM}. This seminal work proposed the dMMSB as a way to model dynamic binary unlabelled networks using a variational algorithm \citep{Lee2019ReviewSBMs}. To do so, the authors modify the original MMSBM \citep{Airoldi2008MMSBM} to consider a logistic normal distribution as a prior on the \gls{membership} vectors $\vec{\theta_i}$. This choice allows to model correlations between \gls{membership} vectors' evolution \citep{Ahmed2007LogisticNormal}. The \gls{membership} vectors are then embedded in a state-space model, which is a space where we can define a linear transition between two time points for a given variable. The authors define such a trajectory for the \gls{membership} vectors as a linear function of the previous time point. The trajectory is estimated and smoothed using a Kalman Filter. This approach is the most closely related to ours, as it proposes to consider the temporal dependency directly in a prior distribution over \glspl{membership}, noted $P(\theta)$. 

However, this model is not fit for the task at hand. It considers unlabelled and binary networks \citep{Lee2019ReviewSBMs}, and extension to labelled and valued networks is not trivial. The proposed optimization algorithm requires a loop until convergence at each EM iteration, making it unable to handle large datasets. It is not designed to let the \glspl{cluster} \gls{interaction} matrix $p$ depend on time, which we alleviate here. And most importantly, it assumes a linear transition between time steps in the state space, while we do not assume any kernel function in our proposed approach. \citep{Xing2010dMMSBM} has been extended to consider $P(\theta)$ as a logistic normal mixture prior \citep{Ho2011dM3SB}, which improves its expressive power. However, it does not address the aforementioned points.

\subsubsection{Static labelled networks - Mixed-\gls{membership}}
Recent years saw a rise of Bayesian methods for inferring static valued labelled networks using MMSBM variants \citep{Antonia2016AccurateAndScalableRS,Tarres2019TMBM,Poux2021MMSBMMrBanks}. Note that in \citep{Tarres2019TMBM}, the authors consider a temporal slicing of the data and consider each slice as \gls{independent} from the others; a time slice is considered as a node in a tripartite network. We will compare our approach to this modelling later. We do not present the details of these works here, as we will develop their in-depth functioning in Section~\ref{section-model}.

The method we propose here uses these works as a base. This section focuses on making the prior probability of both $\theta$ and $p$ time-dependent to model these parameters’ dynamics. We provide a ready-to-use temporal plug-in for each of the works we have just presented in this section and in Section~\ref{section-static-interaction}. It applies to dynamical, valued, and labelled networks in a mixed-\gls{membership} context, and inference is conducted with a scalable variational EM algorithm.

\subsection{SDSBM -- Simple Dynamic labelled MMSBM}
\label{section-model}

\subsubsection{Base model}
In this section, we present the simplest form of a labelled MMSBM, or SIMSBM(1), for demonstration purposes. We illustrate the structure of this model in Fig.~\ref{fig-schema_SIMSBM1}. Its derivation trivially extends to \citep{Antonia2016AccurateAndScalableRS,Tarres2019TMBM,Poux2021MMSBMMrBanks}, the IMMSBM, and any of the SIMSBM iterations discussed Section~\ref{SIMSBM} (we detail our work's inclusion in these more complex models in Appendix, Section~\ref{SI-SDSBM-inclusion-SotA}).

\begin{figure}
    \centering
    \includegraphics[width=\textwidth]{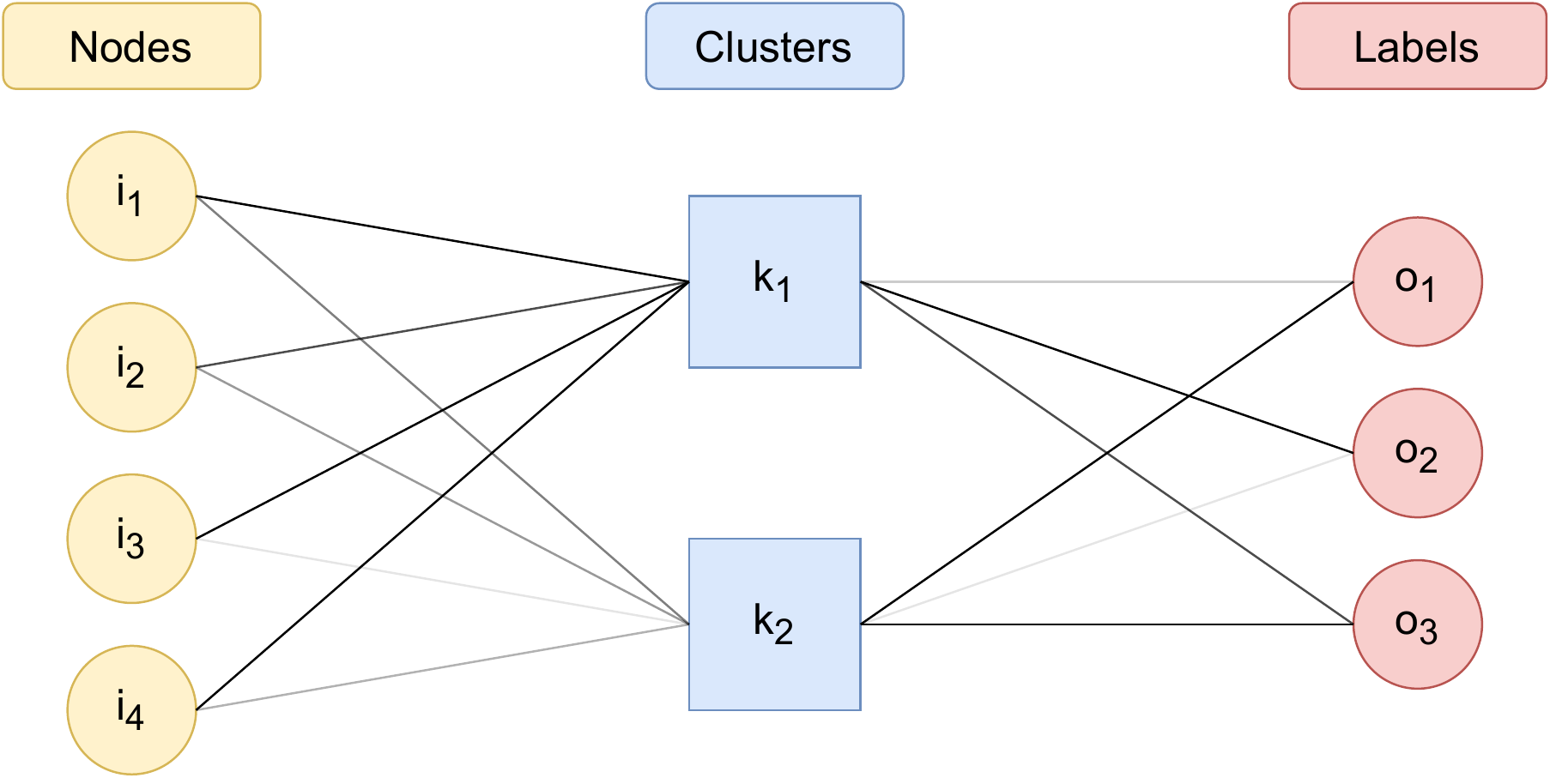}
    \caption[SDSBM - Illustration of the base model SIMSBM(1)]{Illustration of the SIMSBM(1), which is the base model coupled to the SDSBM prior. Nodes are associated with \glspl{cluster}, and \glspl{cluster} are associated with labels. Ties between each \gls{entity} represent a \gls{membership} and can take values between 0 and 1.}
    \label{fig-schema_SIMSBM1}
\end{figure}

We consider a set of $I$ nodes that can be associated with $O$ possible labels on a discrete time interval, or epoch, written $t$. We assume that nodes can be efficiently represented as a mixture of $K$ available \glspl{cluster} at each time step, each of which is in turn linked to the labels. The \gls{membership} of each of $I$ nodes into each of the $K$ possible \glspl{cluster} at time $t$ is encoded in the \gls{membership} matrix $\theta^{(t)} \in \mathbb{R}^{I \times K}$. One vector $\theta_i^{(t)}$ represents the probability that $i$ belongs to any of the $K$ \glspl{cluster} at time $t$, and is normalized as:
\begin{equation}
    \label{eq-normtheta}
    \sum_{k \in K} \theta_{i,k}^{(t)} = 1 \ \forall i, t
\end{equation}
The probability of each \gls{cluster} to be associated with each label at time $t$ is encoded in the matrix $p^{(t)} \in \mathbb{R}^{K \times O}$. An entry $p_k^{(t)}(o)$ represents the probability that \gls{cluster} $k$ is associated with label $o$ at time $t$, and thus is normalized as:
\begin{equation}
    \label{eq-normp}
    \sum_{o \in O} p_k^{(t)}(o) = 1 \ \forall k, ^{(t)}
\end{equation}
Finally, the probability that a node $i$ is associated with label $o$ at time $t$ (this can be seen as the probability an edge between $i$ and $o$ exists at time $t$) is written:
\begin{equation}
    P^{(t)}(i \rightarrow o) = \sum_{k \in K} \theta_{i,k}^{(t)} p_k^{(t)}(o)
\end{equation}
Given a set $R^{\circ}$ of observed triplets $(i,o,t)$, the model's posterior distribution can be written \citep{Antonia2016AccurateAndScalableRS,Tarres2019TMBM}:
\begin{align}
    \label{eq-L}
    P(\theta, p \vert R^{\circ}) &\propto P(R^{\circ} \vert \theta, p) \prod_t P(\theta^{(t)})P(p^{(t)}) \\ 
    = \prod_{(i,o,t) \in R^{\circ}} &\sum_{k \in K} \theta_{i,k}^{(t)} p_k^{(t)}(o) \prod_t \left( \prod_i P(\theta_i^{(t)}) \prod_k P(p_k^{(t)}) \right) \notag 
\end{align}
Now, before we describe the optimization procedure, we must choose the priors $P(\theta^{(t)})$ and $P(p^{(t)})$.

\subsubsection{Simple Dynamic prior}
We formulate the prior distribution over $\theta^{(t)}$ and $p^{(t)}$ following a simple assumption: the parameters at a given time are unlikely to vary abruptly at small time scales --- \textit{the apple does not fall far from the tree}. It means an entry $\theta^{(t_1)}_{ik}$ should not differ so much from $\theta^{(t_2)}_{ik}$ for every $t_2$ close enough to $t_1$. Such entries close to a reference time are called \textbf{temporal neighbours}.

Our \textit{a priori} knowledge on each entry $\theta_i^{(t)}$ and $p_k^{(t)}$ is that they should not differ significantly from their temporal neighbours. This is a fundamental difference with \citep{Xing2010dMMSBM}, where the next parameters values are estimated using a Kalman Filter that only considers the previous time step. Moreover, the authors assume a linear transition function, while we do not make such a hypothesis. An illustration of the proposed approach is given in Fig.~\ref{fig-dirichlet}, where the prior probability of a \gls{membership} vector depends on its temporal neighbours (in white).

\paragraph{Dirichlet distribution}
Since each entry $\theta_i^{(t)}$ and $p_k^{(t)}$ is normalized to 1, we consider a Dirichlet distribution as a prior, which naturally yields normalized vectors such that $\sum_n x_n = 1$. It reads:
\begin{equation}
    \label{eq-dir}
    Dir(x \vert \alpha) = \frac{1}{B(\alpha)}\prod_n x_n^{\alpha_n-1}
\end{equation}
where $B(\cdot)$ is the multivariate beta function. In Eq.~\ref{eq-dir}, the vector $\alpha$ is called the concentration parameter and must be provided to the model. Importantly to the model introduced in this section, it allows to control the mode and the variance of the Dirichlet distribution.

We consider a concentration parameter such as $\alpha = 1 + \beta \alpha_0$, so that for $\beta=0$ we recover a uniform prior over the simplex, and $\alpha>1$ so that the prior has a unique mode.
The most frequent value drawn from Eq.~\ref{eq-dir} (or mode) is $\mathbb{M}(x_n) = \frac{\alpha_n-1}{\sum_n' (\alpha_n' - 1)} = \frac{\alpha_{0,n}}{\sum_n' \alpha_{0,n}}$. We recover a uniform prior for $\beta=0$; the variance vanishes with $\beta \gg 1$ as $\frac{1}{\beta}$. The effect of various values of $\beta$ on the prior distribution is illustrated in Fig.~\ref{fig-dirichlet}.

\begin{figure*}
    \centering
    \includegraphics[width=\textwidth]{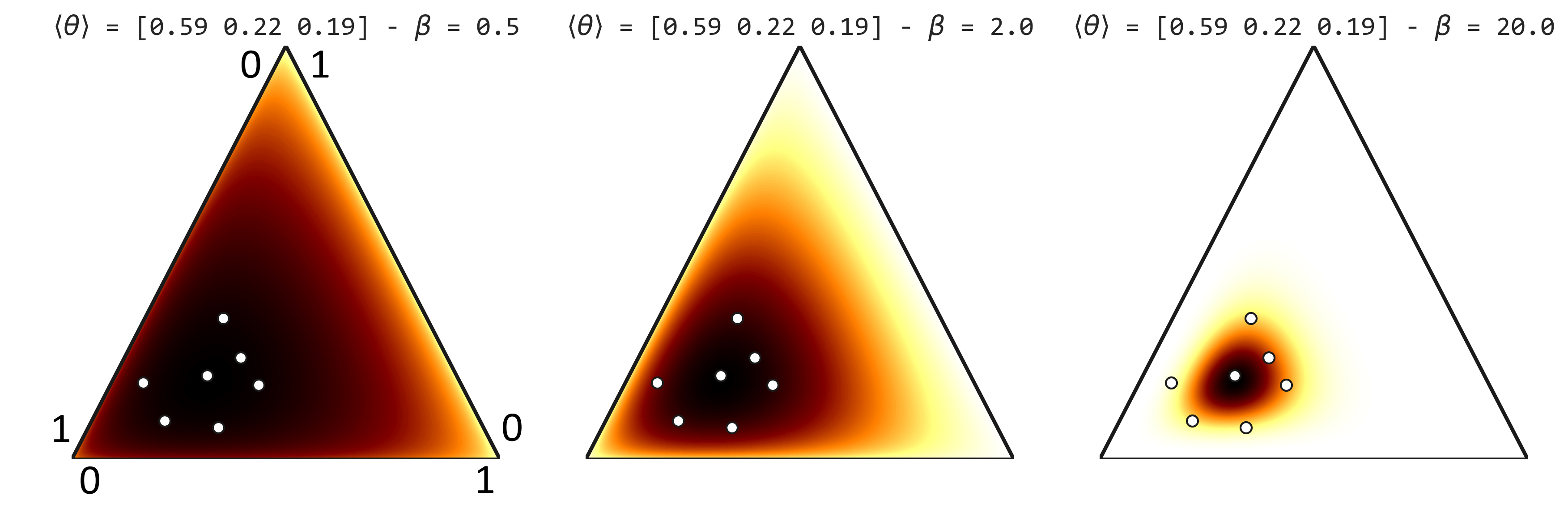}
    \caption[SDSBM - Illustration of dynamic prior probability]{\textbf{Prior probability on a \gls{membership} vector for various values of $\beta$ according to temporal neighbourhood} --- Darker means higher probability. Projected on a simplex tri-space (each of 3 axes ranges from 0 to 1). The white dots represent the temporal neighbours of the considered 3D vector. Their average is given as $\langle \theta \rangle$ using a uniform weight function $\kappa(t,t')$ for illustration purpose. $\beta$ controls the variable's prior variance around its neighbours.}
    \label{fig-dirichlet}
\end{figure*}

\paragraph{Prior's mode}
Our main assumption states that $\theta_i^{(t)}$ and $p_k^{(t)}$ do not vary abruptly. To enforce this, we define their prior probability mode with respect to their close temporal neighbours. The hyper-parameter $\beta$ controls the variance of the prior --that is, how much it should impact the inference procedure. We express the Simple Dynamic prior parameters for $\theta_i^{(t)}$ as:
\begin{equation}
    \label{eq-prior}
    \alpha_{i,k}^{(t,\theta)} = 1+ \beta \underbrace{\left(\frac{\sum_{t' \neq t} \kappa(t,t') \theta_{i,k}^{(t')}}{\sum_{t' \neq t} \kappa(t,t')}\right)}_{\langle\theta_{i,k}^{(t)}\rangle}
\end{equation}
\myequations{\ \ \ \ SDSBM - Temporal prior}
\noindent 
where $\kappa(t,t')$ is a weight function, and $\alpha^{(t,\theta)}$ corresponds to the concentration parameter for $\theta$ at time $t$. In following experiments, we define the weight function as $\kappa(t,t') = \frac{N_{t'}}{\vert t-t' \vert}$, where $N_{t'}$ is the number of observations made at time $t'$, so that temporal neighbours' influence decrease as the inverse of temporal distance. We illustrate the influence of this particular kernel function on the prior probability on \gls{membership} at all times in Fig.~\ref{fig-illustration-kernel}. In particular we see that with this expression, the prior probability variance collapses to 0 when the considered time is very close to a temporal neighbour.

\begin{figure}[h]
    \centering
    \includegraphics[width=\textwidth]{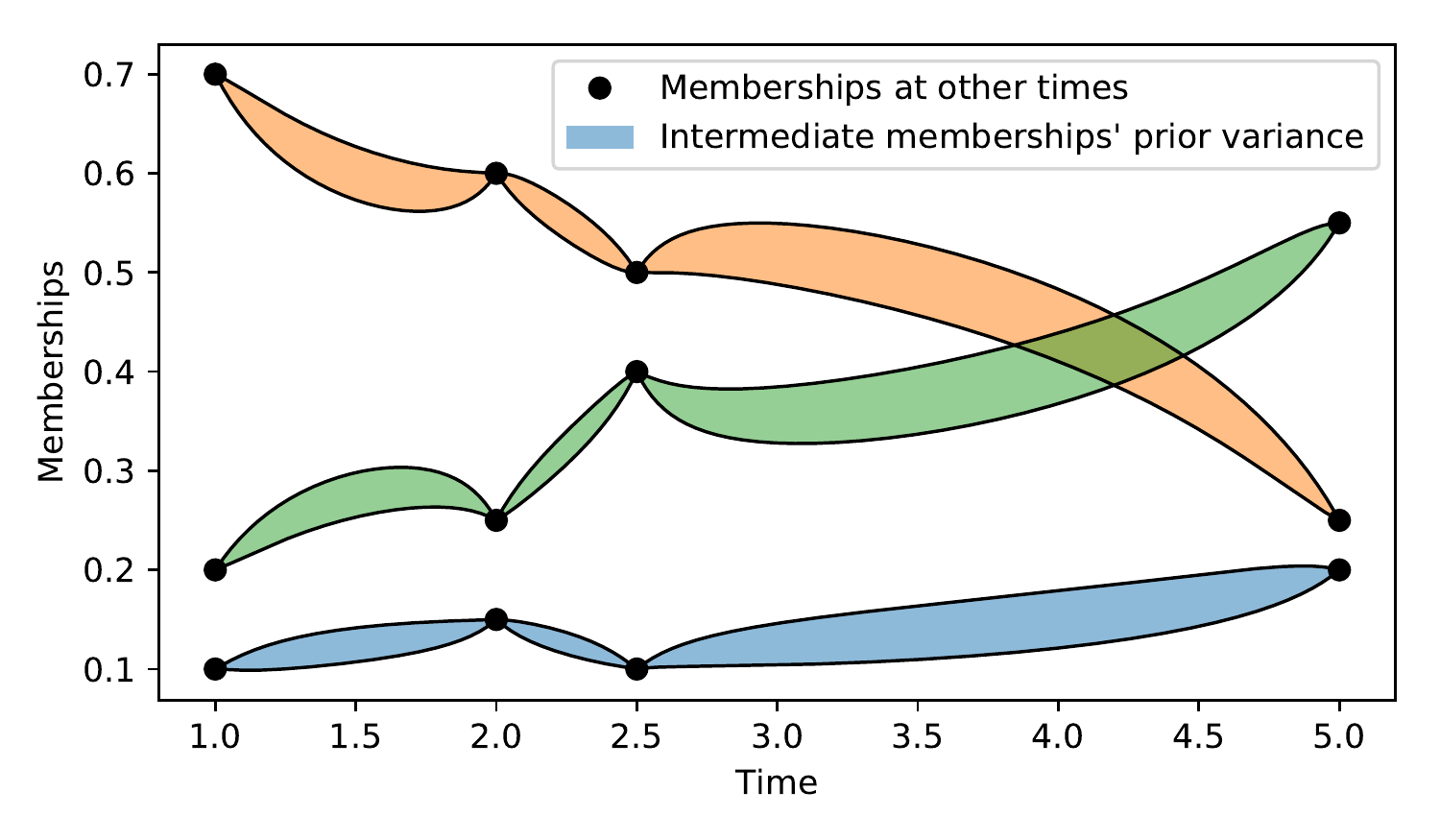}
    \caption[SDSBM - Prior probability's variance according to time]{\textbf{Prior probability's variance on \glspl{membership} at all times according to the temporal neighbourhood} --- Variance of the prior over a \gls{membership} entry (filled curves, we represented 3 such entries as illustration) as a function of time, given some temporal neighbours (black dots). This illustration considers an averaging kernel as $\kappa(t,t') = \frac{1}{\vert t-t' \vert}$. When inferring a parameter $x^{(t)}$ at a time $t$, the variance of its prior probability $P(x^{(t)})$ depends on $t$ relative to the temporal neighbours. Here for instance, the variance is null at $t=2$ because $\kappa(t,t')$ diverges, and so does $\alpha^{(t)}$, hence the variance collapsing to 0.
    }
    \label{fig-illustration-kernel}
\end{figure}

The mode of the prior over a variable is then the average value of its the temporal neighbours weighted by $\kappa(t,t')$, noted $\langle\theta_{i,k}^{(t)}\rangle$. Note that this holds because ${\sum_k \langle\theta_{i,k}^{(t)}\rangle = 1 \ \forall i,t}$. Besides, the prior variance is a decreasing function of $\beta$; when $\beta=0$ the prior is uniform over the simplex, and when $\beta \rightarrow \infty$ the variance goes to $0$, as illustrated Fig.~\ref{fig-dirichlet}. The same reasoning holds for $p_k^{(t)}$, with prior parameters $\alpha_{k,o}^{(t,p)} = 1 + \beta \langle p_{k}^{(t)}(o)\rangle$.

\paragraph{Priors expression}
Finally, we give the final log-priors on $\theta_i^{(t)}$ and $p_k^{(t)}$:
\begin{align}
    \label{eq-priors}
    P(\theta_i^{(t)} \vert \{\theta_{i,k}^{(t')}\}_{t' \neq t}) &\propto \prod_k {\theta_{i,k}^{(t)}}^{\beta \langle\theta_{i,k}^{(t)}\rangle}\\
    P(p_k^{(t)}(o) \vert \{p_{k}^{(t')}(o)\}_{t' \neq t}) &\propto \prod_o {p_{k}^{(t)}(o)}^{\beta \langle p_{k}^{(t)}(o)\rangle} \notag 
\end{align}
We omitted the normalisation factor for clarity --it does not influence the inference procedure. Note that the dependence of each entry of $\theta_i^{(t)}$ and $p_k^{(t)}$ raises additional questions from a statistical perspective. It is unsure whether the specified conditional distributions would allow to retrieve a global, joint probability distribution for all entries of those vectors. Therefore, our approach works as a penalization of the likelihood function, but does not imply the existence of a statistical law on $\theta_i^{(t)}$ and $p_k^{(t)}$.

\subsubsection{Inference}

\paragraph{E step}
We develop an EM inference procedure for maximizing the log-posterior distribution defined in Eq.~\ref{eq-L}. The expectation step computes the expected probability of a latent variable (here a \gls{cluster} $k$) being chosen given each entry of $R^{\circ}$. Since such latent variables do not appear in the priors expressions, the expectation step remains unchanged by the introduction of the Simple Dynamic Priors; in general, prior distributions do not intervene in the computation of the expectation step \citep{Bishop2006}. The E step for such labelled networks has already been derived on many occasions. Therefore, we give the expectation step equation without explicit derivation (full derivation is however provided in Appendix, Section~\ref{SI-SDSBM-derivationEstep}, and in Section~\ref{SIMSBM-Estep} and Section~\ref{IMMSBM-Estep}).

The expectation of the latent variable $k$ given an observation $(i,o,t) \in R^{\circ}$, written $\omega_{i,o}^{(t)}(k)$, is defined as:
\begin{equation}
    \label{eq-omega}
    \omega_{i,o}^{(t)}(k) = \frac{\theta_{i,k}^{(t)} p_k^{(t)}(o)}{\sum_{k'} \theta_{i,k'}^{(t)} p_{k'}^{(t)}(o)}
\end{equation}
Using this expression, we can rewrite the log-likelihood $\log P(R^{\circ} \vert \theta, p)$ as \citep{Bishop2006,Antonia2016AccurateAndScalableRS,Tarres2019TMBM}:
\begin{equation}
    \label{eq-newlogl}
    \log P(R^{\circ} \vert \theta, p) = \sum_{(i,o,t) \in R^{\circ}} \sum_{k \in K}  \omega_{i,o}^{(t)}(k) \log \frac{\theta_{i,k}^{(t)} p_k^{(t)}(o)}{\omega_{i,o}^{(t)}(k)}
\end{equation}

\paragraph{M step}
Taking back the first line of Eq.~\ref{eq-L} and substituting with Eq.~\ref{eq-priors} and Eq.~\ref{eq-newlogl}, we get an unconstrained expression of the posterior distribution. We introduce Lagrange multipliers to account for the constraints of Eq.~\ref{eq-normtheta} ($\phi_i^{(t)}$) and Eq.~\ref{eq-normp} ($\psi_i^{(t)}$), and finally compute the maximization equations with respect to the model's parameters. Starting with the \gls{membership} matrix entries $\theta_{i,k}^{(t)}$:
\begin{align}
    &\frac{\partial \left(\log P(\theta, p \vert R^{\circ}) - \sum_{i',t'} \phi_{i'}^{(t')}(\sum_{k'}\theta_{i',k'}^{(t')}-1)\right)}{\partial \theta_{i,k}^{(t)}} = 0 \notag \\
    &\Leftrightarrow \sum_{o \in \partial(i,t)} \frac{\omega_{i,o}^{(t)}(k)}{\theta_{i,k}^{(t)}} + \frac{\beta\langle\theta_{i,k}^{(t)}\rangle}{\theta_{i,k}^{(t)}} - \phi_{i}^{(t)} = 0 \notag \\
    &\Leftrightarrow \sum_{o \in \partial(i,t)} \omega_{i,o}^{(t)}(k) + \beta\langle\theta_{i,k}^{(t)}\rangle = \phi_{i}^{(t)}\theta_{i,k}^{(t)} \notag \\
    &\Leftrightarrow \sum_{o \in \partial(i,t)} \underbrace{\sum_k \omega_{i,o}^{(t)}(k)}_{=1 \text{ (Eq.~\ref{eq-omega})}} + \beta\underbrace{\sum_k \langle\theta_{i,k}^{(t)}\rangle}_{=1 \text{ (Eq.~\ref{eq-normtheta})}} = \phi_{i}^{(t)} \notag \\
    &\Leftrightarrow \frac{\sum_{o \in \partial(i,t)} \omega_{i,o}^{(t)}(k) + \beta\langle\theta_{i,k}^{(t)}\rangle}{N_{i,t}+\beta} = \theta_{i,k}^{(t)}
\end{align}
where $\partial(i,t)=\{ o \vert (i, \cdot, t) \in R^{\circ} \}$ is the subset of labels associated with both $i$ and $t$, and $N_{i,t} = \vert \partial(i,t) \vert$ is the size of this set. Note that for $\beta = 0$ we recover the M-step of standard static MMSBM models, see \citep{Antonia2016AccurateAndScalableRS,Tarres2019TMBM} and Section~\ref{section-static-interaction}.

The derivation of the M-step for the entries $p_{k}^{(t)}(o)$ is identical and yields (see Appendix, Section~\ref{SI-SDSBM-derivationMstep}, for details):
\begin{align}
    p_{k}^{(t)}(o) = \frac{\sum_{(i,t) \in \partial o} \omega_{i,o}^{(t)}(k) + \beta\langle p_{k}^{(t)}(o)\rangle}{\sum_{(i,o,t) \in R^{\circ}} \omega_{i,o}^{(t)}(k)+\beta}
\end{align}

\subsubsection{Discussion}
\paragraph{Easy to use}
We briefly review some key points of the Simple Dynamic prior.
Its introduction induces minor changes in the existing works on MMSBM for labelled networks. Compared to \citep{Antonia2016AccurateAndScalableRS,Tarres2019TMBM} and Section~\ref{section-static-interaction}, its introduction boils down to simply adding a term $\beta\langle x \rangle$ to the numerator of maximization equations, and the corresponding normalizing term $\beta$ to the denominator. This way, our approach is ready-to-use to make these models, and variants built on them, dynamic --explicit derivations are provided in Appendix, Section~\ref{SI-SDSBM-inclusion-SotA}.

\paragraph{Flexible dynamic modelling}
The prior allows us to consider that some parameters are dynamic and that others are not. For instance, when several \gls{membership} matrices are involved, as in \citep{Antonia2016AccurateAndScalableRS,Tarres2019TMBM}), setting $\beta=0$ for some makes them time-invariant (or universal). The SD prior also allows choosing whether the block-\gls{interaction} tensor $p$ is dynamic. 

In general, $\beta$ does not have to be identical for every \gls{membership} matrix, or even every entry $i$ of each of them. Moreover, it is not mandatory for $\beta$ to be constant over time. A dynamic parameter $\beta(t)$ is especially relevant when epochs are not evenly spaced over time; $\beta(t)$ would typically lower the variance (by increasing) when temporal neighbours are closer (right plot in Fig.~\ref{fig-dirichlet}).

To summarize, $\beta$ allows controlling the time scale over which variables may vary. This allows greater modelling flexibility, allowing to jointly model universal variables ($\beta=0$) and dynamical ones ($\beta \neq 0$).

\paragraph{Tuneable temporal dependence}
Finally, the choice of the averaging kernel function $\kappa(t,t')$ is important. It allows choosing the range over which the inference of a variable should rely on its temporal neighbours. A formulation as the inverse of time difference seems relevant: the weight of a neighbour appearing at a time $\delta t$ later should diverge as $\delta t \rightarrow 0$, so that continuity is ensured. Besides, it allows controlling the smoothness of the curve with respect to time by tuning the weight function as $\kappa(t,t')=\frac{N_{t'}}{\vert t-t' \vert^{a}}$ where $a=1,2,...$ for instance, where $N_{t'}$ is the number of observations in the time slice $t'$.

Overall, the Simple Dynamic prior works by inferring the variables using both microscopic and mesoscopic temporal scales. If a time slice $t$ has very few observations but some of its neighbours have a greater number of them for instance; learning the parameters at $t$ is helped mostly by the population of its closest ($\frac{1}{\vert \Delta t \vert}$) and most populated ($N_{t'}$) epochs, and less influenced by further and less populated epochs. This is what is illustrated in Fig.~\ref{fig-illustration-kernel}.

\subsubsection{Experiments}

\paragraph{Synthetic data}
In this section, we address several situations in which our method (abbreviated SDSBM for Simple Dynamic MMSBM) could be useful.
Experiments are run for $I=100$, $K=3$ and $O=3$, which are standard testing parameters in the literature of dynamic networks inference \citep{Fan2015DynInfMMSBM,Matias2017DynSBM}. We choose to infer a dynamic \gls{membership} matrix $\theta^{(t)}$ and to provide a universal block-\gls{interaction} matrix $p\ \forall t$. Note that the model yields good performances when $p$ also has to be inferred, but due to identifiability and label-switching issues raised in \citep{Matias2017DynSBM}, there is no unbiased way to assess the correctness of the inferred \glspl{membership} values. An experiment where $p$ is inferred jointly to $\theta$ is provided and discussed in Appendix, Section~\ref{SI-SDSBM-bothdynmatrices}. The expression we use for this matrix $p$ is given Eq.~\ref{eq-xpp}. We systematically test two variation patterns for $\theta$: a sinusoidal pattern (Fig.~\ref{fig-illustration}-top) and a broken-line pattern (Fig.~\ref{fig-illustration}-bottom). Each pattern is generated with a different coefficient for each item; the \glspl{membership} still sum to 1 at all times. 

\begin{equation}
    \label{eq-xpp}
    p = \begin{bmatrix}
        1-s & s & 0\\
        0 & 1-s & s\\
        s & 0 & 1-s
    \end{bmatrix}
\end{equation}

To the best of our knowledge, the only attempt to model dynamic parameters in labelled valued and dynamic networks using a MMSBM is \citep{Tarres2019TMBM}. In this work, each epoch is modelled independently from the others. We refer to this baseline as the ``No coupling'' or ``\textbf{NC}'' baseline. For reference, we also compare to a baseline that does not consider the temporal dimension and infers a single universal value for each variable (classical \textbf{MMSBM}) \citep{Antonia2016AccurateAndScalableRS} and the models Section~\ref{section-static-interaction}.

We systematically perform a 5-fold cross-validation. The model is trained on 80\% of the data, $\beta$ is tuned using 10\% as a validation set, and the model is evaluated on the 10\% left. We choose as metrics the \acrshort{AUCROC} and \acrshort{RMSE} on the real values of $\theta$ (black line in Fig.~\ref{fig-illustration}). The procedure is repeated 5 times; the error bars reported in the experimental results represent the standard error over these folds.

\begin{figure}
    \centering
    \includegraphics[width=\columnwidth]{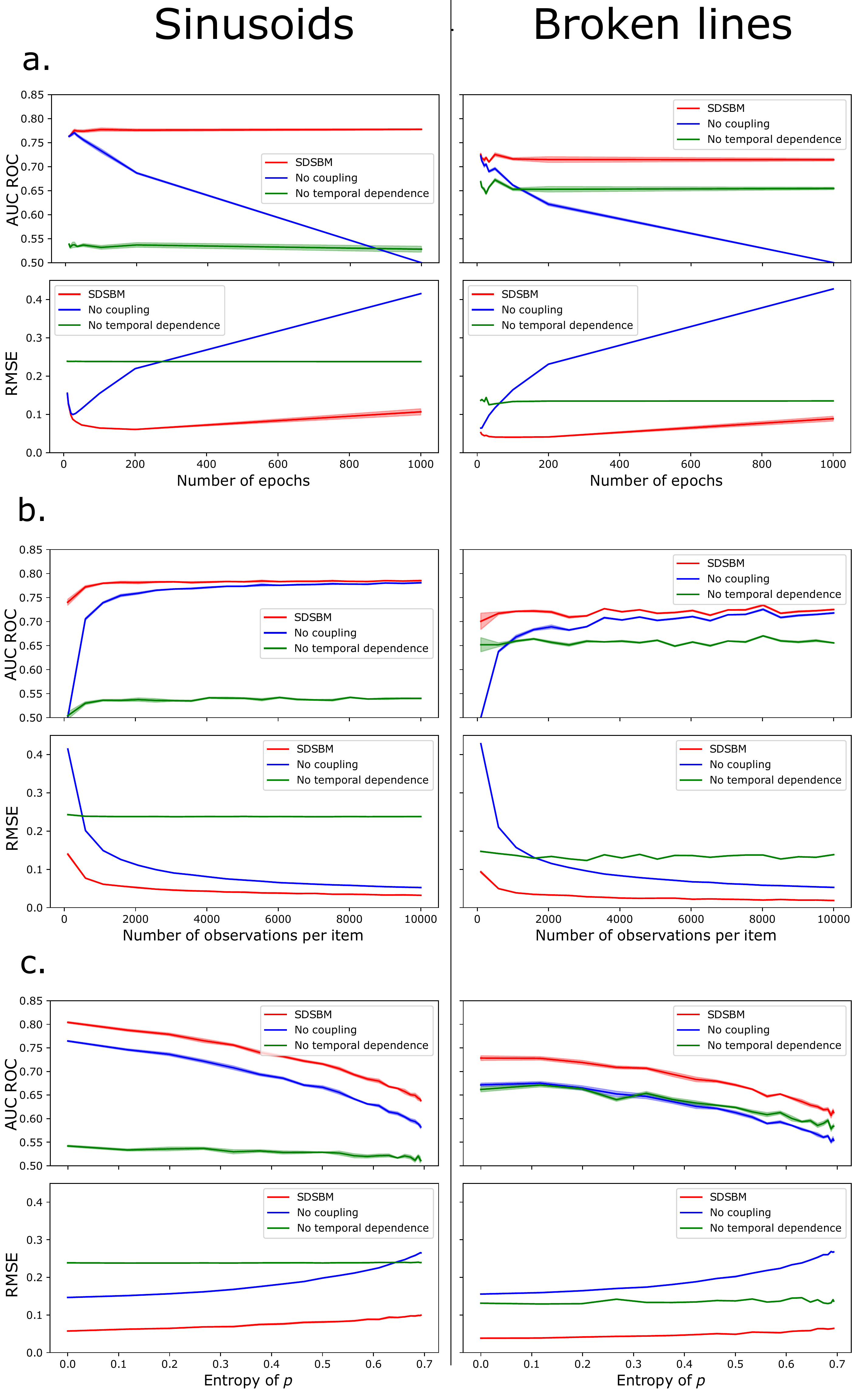}
    \caption[SDSBM - Results on synthetic data]{\textbf{Results on synthetic data} --- \textbf{(a.)} SDSBM retrieves the correct dynamic \glspl{membership} and is little influenced by the data slicing. \textbf{(b.)} SDSBM works well on tiny datasets. \textbf{(c.)} SDSBM retrieves correct dynamic \glspl{membership} in challenging situations.}
    \label{fig-resSynth}
\end{figure}

\paragraph{SDSBM unveils complex temporal patterns}
In Fig.~\ref{fig-resSynth}a., we consider 1,000 observations for each item $i \in I$ and vary the number of epochs from 10 to 1,000. In the expression of $p$, $s$ is set to 0.05. For both the sinusoidal and line-broken \glspl{membership}, the model shows better predictive performances (in terms of AUCROC) than the proposed baselines. Interestingly, the SDSBM performances remain stable as the number of epochs increases unlike the NC baseline, which means it alleviates a bias of the temporal modelling proposed in \citep{Tarres2019TMBM}. The RMSE with respect to the true parameters remains low over the whole range of the tested number of epochs. The RMSE increases as the number of epochs grows because the number of parameters to estimate increases with it; this makes the inference more subject to local variations, which in turn mechanically increases the RMSE. Overall, SDSBM recovers dynamic variations of the \gls{membership} vectors with superior performance; a sample of the inferred dynamic \glspl{membership} is shown in Fig.~\ref{fig-illustration}.

\paragraph{SDSBM works with little data}
A major problem that arises when considering temporal data is the scarcity of observations, because slicing implies reducing the number of observations in each slice. This concern mostly arises in social sciences, where data retrieval cannot be automated and requires long-lasting human labour. Here, we demonstrate that our method works in challenging conditions when data is scarce. In Fig.~\ref{fig-resSynth}b., we vary the number of observations available for each item from 100 to 10.000, distributed over 100 epochs. Thus, in the most challenging situation, there is only one observation per epoch used to determine $I$ dynamic \glspl{membership} over 3 \glspl{cluster}. In the expression of $p$, $s$ is set to 0.05. We see Fig.~\ref{fig-resSynth}b. that for both patterns, the predictive power of SDSBM remains high in such conditions. Moreover, the RMSE on the true dynamic \glspl{membership} in this case is fairly low and decreases rapidly as the number of observations increases. When the number of observations is high, the ``no coupling'' baseline \citep{Tarres2019TMBM} reaches the performances of SDSBM. This is because as the number of observations in each slice goes to infinity, the models need to rely less on temporal neighbours. However, even for 10.000 observations per item (100 observations per epoch), SDSBM yields better results than the proposed baselines. As an illustration, the results presented in Fig.~\ref{fig-illustration} have been inferred using only 5 observations per epoch.

\paragraph{SDSBM handles highly stochastic \gls{interaction} patterns}
Finally, we control the deterministic character of the block-\gls{interaction} matrix $p$ by varying $s$. We express such character as the mean entropy of $p$ $\langle S(p) \rangle$ with respect to its possible outputs: $\langle S(p) \rangle = \frac{1}{K}\sum_{k \in K}\sum_o p_k(o)\log p_k(o)$. The maximum entropy for the proposed expression of $p$ is reached for $s=0.5$. We consider 1,000 observations \gls{spread} over 100 epochs. We show in Fig.~\ref{fig-resSynth}c. that the predictive performance of all three methods drops as the entropy increases. This is expected, as the observations are generated from the true model with a higher variance; each observation becomes less informative about the underlying generative structure as $s$ grows. However, the RMSE on the real parameters inferred using SDSBM remains low even at the maximum entropy, meaning the model recovers the correct \gls{membership} parameters.

\begin{table}
	\caption[SDSBM - Experimental results on real-world datasets]{\textbf{Numerical results on real-world datasets} --- Metrics abbreviations stand for the area under the ROC curve (ROC), the Average Precision (AP), the Normalized Coverage Error (NCE). Metrics for models stand for Simple Dynamic SDM (SDSBM), No Coupling baseline (NC) and the classical static mixed membership SBM (MMSBM). Overall, our approach allows for a higher predictive power. The standard error over the folds is given in standard notation -- $0.123(12) \Leftrightarrow 0.123 \pm 0.012$. The models presented in this chapter are \underline{underlined}.}
	\label{table-res-SDSBM}
	\centering
	\setlength{\lgCase}{3.2cm}
	\noindent\makebox[\textwidth]{
	\begin{tabular}{|l|l|S|S|S|S}
		\cline{3-5}
		\multicolumn{1}{c}{\parbox{\lgCase}} & \multicolumn{1}{c|}{\parbox{\lgCase}} & {\parbox{\lgCase}{ROC}} & {\parbox{\lgCase}{AP}} & {\parbox{\lgCase}{NCE}} \\ 

		\cline{1-5}
		\multirow{3}{*}{\rotatebox[origin=c]{90}{\footnotesize \text{\textbf{Epi}}}}
		& \underline{SDSBM} & \maxf{\num{ 0.9025 +- 0.0011 } }& \maxf{\num{ 0.37 +- 0.0017 } }& \maxf{\num{ 0.1151 +- 0.0011 } }\\
		& NC & \num{ 0.842 +- 0.0022 } & \num{ 0.3435 +- 0.0036 } & \num{ 0.1582 +- 0.0019 } \\
		& MMSBM & \num{ 0.8597 +- 0.0012 } & \num{ 0.2141 +- 0.0016 } & \num{ 0.1451 +- 0.0013 } \\

		\cline{1-5}
		\multirow{3}{*}{\rotatebox[origin=c]{90}{\footnotesize \text{\textbf{Lastfm}}}}
		& \underline{SDSBM} & \maxf{\num{ 0.8942 +- 0.0008 } }& \maxf{\num{ 0.0168 +- 0.0001 } }& \maxf{\num{ 0.1284 +- 0.0011 } }\\
		& NC & \num{ 0.8393 +- 0.0005 } & \num{ 0.0157 +- 0.0002 } & \num{ 0.1785 +- 0.0007 } \\
		& MMSBM & \num{ 0.8647 +- 0.0005 } & \num{ 0.0115 +- 0.0002 } & \num{ 0.1493 +- 0.0004 } \\

		\cline{1-5}
		\multirow{3}{*}{\rotatebox[origin=c]{90}{\footnotesize \text{\textbf{Wiki}}}}
		& \underline{SDSBM} & \maxf{\num{ 0.9759 +- 0.0002 } }& \maxf{\num{ 0.0659 +- 0.0009 } }& \maxf{\num{ 0.0459 +- 0.0003 } }\\
		& NC & \num{ 0.9092 +- 0.0007 } & \num{ 0.0608 +- 0.001 } & \num{ 0.1195 +- 0.0008 } \\
		& MMSBM & \num{ 0.9576 +- 0.0007 } & \num{ 0.0622 +- 0.0004 } & \num{ 0.0565 +- 0.0008 } \\

		\cline{1-5}
		\multirow{3}{*}{\rotatebox[origin=c]{90}{\footnotesize \text{\textbf{Reddit}}}}
		& \underline{SDSBM} & \maxf{\num{ 0.9803 +- 0.0003 } }& \maxf{\num{ 0.4295 +- 0.0054 } }& \maxf{\num{ 0.0312 +- 0.0003 } }\\
		& NC & \num{ 0.8508 +- 0.0005 } & \num{ 0.3598 +- 0.0017 } & \num{ 0.1846 +- 0.0007 } \\
		& MMSBM & \num{ 0.9798 +- 0.0002 } & \maxf{\num{ 0.4269 +- 0.004 } }& \num{ 0.0322 +- 0.0003 } \\
		\cline{1-5}
	\end{tabular}
	}
\end{table}

\paragraph{Real-world data}
Finally, we demonstrate the validity of our approach on real-world data to argue for its usefulness and scalability. SDSBM builds on previous works on labelled MMSBM and shares the same linear complexity $\mathcal{O}(\vert R^{\circ}\vert)$ with $\vert R^{\circ}\vert$ the size of the dataset \citep{Antonia2016AccurateAndScalableRS}. For our experiments, we consider the recent and documented datasets from \citep{Kumar2019jodie}, namely the \textbf{Reddit} dataset (10.000 users, 984 labels, $\sim$670k observations), the \textbf{LastFm} dataset (980 users, 1000 labels, $\sim$1.3M observations) and the Wikipedia (\textbf{Wiki}) dataset (8227 users, 1000 labels, $\sim$157k observations). The Reddit and Wikipedia datasets contain 1 month of data; we slice them in 1 day long temporal intervals. The LastFm dataset spans over approximately 5 years; we slice it into periods of 3 days each. In addition, we build an additional dataset (\textbf{Epi}) about historical epigraphy data \citep{ClaussSlabyDataset}. The dataset is made of 117.000 Latin inscriptions comprising one or several of 18 social statuses (slave, soldier, senator, etc.) and its location as one of 62 possible regions, along with an estimated dating spanning from 100BC to 400AD. The goal is to guess the region where a status has been found, with respect to time. The goal is to recover statuses \gls{diffusion} in territories newly conquered by the Roman Empire. We slice this dataset in epochs of one year each. 

Evaluation is again conducted using a 5-fold cross-validation with 80\% of training data, 10\% of validation data and 10\% of testing data for each fold. For each pair $(i, o_{true})$ in the test set, we query the probability for every output $o$ given $i$ and build the confusion matrix by comparing them to $o_{true}$.
In Table~\ref{table-res-SDSBM}, we present the results of our method compared to the proposed baselines for various metrics: Area under the ROC curve (\textbf{\acrshort{AUCROC}}), Average Precision (\textbf{\acrshort{AP}}) and Normalized Coverage Error (\textbf{\acrshort{NCE}}). The first two metrics evaluate how well models assign probabilities to observations, and the latter evaluates the order in which possible outputs are ranked. Overall, we see that our method exhibits a greater predictive power, except for the Reddit dataset where the static SBM performs as well as SDSBM. We explain this by the lack of significant temporal variation over the considered interval. This could be expected, since the dataset comprises roughly 80\% of repeated actions \citep{Kumar2019jodie}, meaning that users do not significantly explore new communities over a month. This result shows that SDSBM also works well in the static case. On the other datasets, SDSBM performs better often by a large margin, especially for the AUCROC, meaning that SDSBM is efficient at distinguishing classes from each other. We recall that the model used here is deliberately simplistic for demonstration purposes; low metrics do not mean the Simple Dynamic prior does not work, but instead that it should be coupled to a more complex model.

\begin{figure*}
    \centering
    \includegraphics[width=\textwidth]{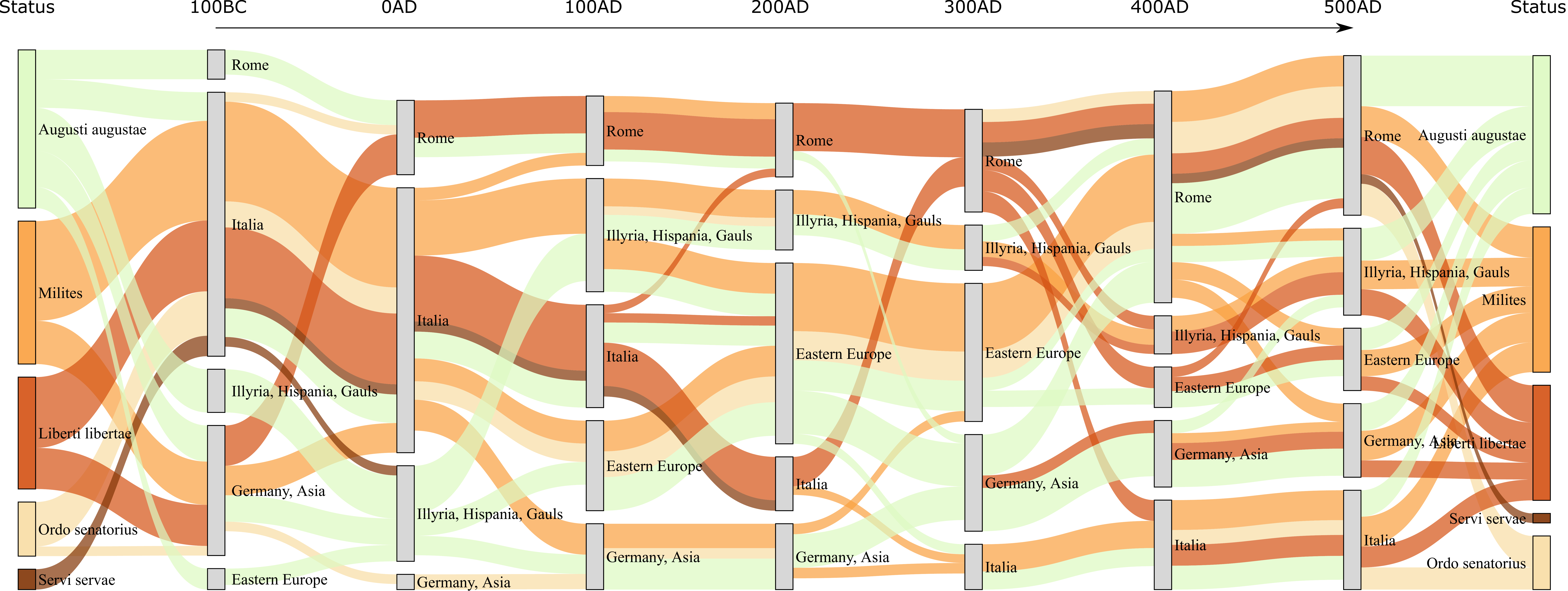}
    \caption[SDSBM - Status geographic distribution (Europe, 100BC - 500AD)]{\textbf{Geographic evolution of status distribution from Latin graves (100BC - 500AD)} --- We applied the SDSBM to the Epigraphy dataset. We recall that our goal is to predict a roman region (e.g., Illyria, Hispania, etc.) given a status (e.g., Slave, Senator, etc.) and a year. We plot the temporal evolution of statuses \gls{membership} to the five manually labelled \glspl{cluster} (in grey). For clarity, we removed small \gls{membership} transfers from the data, which explains why the total \gls{cluster}'s population may vary from one time to another. This plot allows us to visualize some global historical trends about the evolution of the Roman Empire (e.g., 3rd-century crisis, the \gls{spread} of military presence in Europe, Italy’s demilitarization, etc.).}
    \label{fig-epigraphy}
\end{figure*}

As an illustration of what SDSBM has to offer, we plot in Fig.~\ref{fig-epigraphy} a possible visualization of the \gls{membership}’s evolution over time for the epigraphy dataset. On the left and the right, we show the items that are considered in the visualization. The time goes from left (100BC) to right (500AD), and the flows represent the \gls{membership} transfers between epochs. The grey bars represent the \glspl{cluster}. We manually annotated them by looking at their composition --explicit \glspl{cluster} composition is given in Appendix, Section~\ref{SI-SDSBM-clusComp}. From this figure, we can recover several historical facts: military presence in Rome was scarce for most of the times considered; Italy concentrates less military presence as time goes (due to its \gls{spread} over the now extended empire), until the 3rd-century crisis that led to its re-militarization; most of the slaves that have been accorded an inscription are located in Italia throughout time; the religious functions (Augusti) are evenly \gls{spread} on the territory at all times; the libertii (freed slaves) inscriptions are essentially present in Rome and Italy, etc. Obviously, dedicated works are needed to support these illustrative claims, and we believe SDSBM can provide such extended comprehension of the processes at stake.

\subsubsection{Conclusion}
We introduced a simple way to model time in dynamic valued and labelled networks by assuming a dynamic Mixed-\Gls{membership} SBM. Our method defines the Simple Dynamic prior, ready to plug into any iteration of \acrshort{SIMSBM} --see Section~\ref{SIMSBM-model}. Time is considered under the single assumption that a network's ties do not vary abruptly over time. 

We assessed the performance of the proposed method by defining the SDSBM and evaluating it in several controlled situations on synthetic datasets. In particular, we show that our prior shows stable performances with respect to the dataset slicing, and that it works well under challenging conditions (small amounts of data or high entropy blocks \gls{interaction} matrix). We also evaluated SDSBM on large scale real-world datasets and showed how accounting for time increases yields better results than the two proposed baselines. Finally, we illustrate an application interest on a dataset of Latin inscriptions that indirectly narrates the evolution of the Roman Empire.

In the discussion section, we argued that our temporal prior offers great modelling flexibility: uneven slicing of observations over time, heterogeneous dynamic time scales for items (or \glspl{cluster}), time-dependent blocks-\gls{interaction} matrix, informativeness of the prior, and temporal neighbours’ dependence with respect to the averaging function. Future works exploring these directions on real-world data may help retrieve meaningful \glspl{cluster} for useful applications. On a further note, we believe that a key interest of our approach is the amount of data needed to get satisfactory predictive performances. As mentioned above, this point is fundamental to several social sciences, and we believe our approach could ease the incorporation of automated learning methods in these fields.

\section{Conclusions}
\label{SBMs-conclusion}
\paragraph{Global SIMSBM framework}
In this section, we first developed a global framework, SIMSBM, that generalizes several models from the literature as particular cases, such as MMSBM, Bi-MMSBM, IMMSBM and T-MBM. 

This results in a highly flexible model that can be applied to a broad range of problems. In particular, we cited throughout the text several experimental studies conducted in medicine, social behaviour and recommendation using special cases of our model.

Our framework answers the two challenges raised in Section~\ref{SBM-SotA}: it extends MMSBMs to any number of \gls{entity} \glspl{type} that can have higher-order \glspl{interaction}. 

\paragraph{Case study with SIMSBM(2)}
We then proposed to study \glspl{interaction} in information \gls{spread} by considering a special case of the SIMSBM: SIMSBM(2), or IMMSBM. 
We demonstrate that SIMSBM(2) allows us to investigate the detail of \glspl{interaction} strength in several datasets.

Our conclusions on \glspl{interaction} in information \gls{spread} show that their effect is not trivial and that taking them into account increases predictive performance. However, these \glspl{interaction} are sparse: only 5\% of significantly \gls{interacting} pairs of \glspl{cluster}. In most cases, non-\gls{interacting} models seem to provide an accurate enough description of output predictions -- typically a spreading action.

\paragraph{Modelling dynamic \glspl{membership} of \gls{interacting} \glspl{entity}}
The previous conclusions have been made by supposing the underlying \gls{interaction} mechanisms to remain stable over one month. While this assumption holds on some datasets, most of the time there is a need to model their temporal evolution. In this perspective, we introduced a simple temporal prior, ready to plug into any of the SIMSBM iterations. The time is considered under the single assumption that the network's ties do not vary abruptly over time. 

The performance of the proposed method is evaluated on several real-world datasets and shows that taking the time into account improves the results on each of them over the equivalent static model. On a further note, a key interest of this approach is the small amount of data needed to get satisfactory performances. This point is fundamental to several social sciences and extends beyond the scope of \glspl{interaction} modelling.

\paragraph{\Glspl{interaction} are sparse}
The overall conclusion of this section is the following: \textbf{\glspl{interaction} are sparse}. Significant \glspl{interaction} take place only between a limited fraction of \gls{cluster} pairs, and between an even smaller fraction of \gls{entity} pairs. This underlines the necessity of considering \glspl{cluster} of \glspl{entity} to efficiently model such sparse \glspl{interaction}.

\paragraph{Time range of \glspl{interaction}?}
However, all the models discussed consider static \glspl{interaction}. The dynamic version of SIMSBM proposed in Section~\ref{SDSBM} allows us to consider static \glspl{interaction} whose expression evolves with time, but the \glspl{interaction} themselves are not dynamic. To illustrate, a reasonable assumption is that the strength of \glspl{interaction} is not constant over time; a tweet does not have the same influence on a user if she saw it ten minutes or ten days earlier. The SDSBM approach accounts for the long-term changes in the evolution of the \gls{interacting} mechanisms, but not for their immediate temporal evolution. 
A Twitter user might be influenced by tweets about sports at some point, and about politics at some other point, but being always influenced in the same way by either of them; the \gls{membership} might change but not the \gls{interaction} dynamics of each tweet. Modelling the temporal evolution of \gls{interaction} strength using MMSBMs might be possible. However, other methods that consider time as a continuous variable instead of slicing it into episodes seem more relevant to the task. In the next section, we propose to investigate how \glspl{entity}’ \gls{interaction} strength evolves with time.

\chapter{Temporal \gls{diffusion} networks -- \Glspl{interaction} are brief}
\label{Chapter-InterRate} %

\begin{chapabstract}
We saw in the previous chapter that despite being sparse, \glspl{interaction} between \glspl{piece of information} (the \textit{\glspl{entity}}) globally play a substantial role in individuals' actions: the adoption of a product, the \gls{spread} of news, a strategy choice, etc. However, the sparsity of significant \glspl{interaction} could be due to temporal considerations; \glspl{interaction} may fade with time, making them harder to spot for static models. 

\underline{Section~\ref{InterRate-intro}}, we discuss and justify the role of time in \glspl{interaction} modelling.

\underline{Section~\ref{InterRate-sota}}, we show that this aspect of the underlying \gls{interaction} mechanisms remains unexplored in the literature.

\underline{Section~\ref{InterRate-model}}, we introduce an efficient method to infer both the \glspl{entity}’ \gls{interaction} network and its evolution according to the temporal distance separating \gls{interacting} \glspl{entity}. We develop a convex model using multi-kernel inference, named InterRate. Here, time is considered as a continuous variable, unlike what we proposed in Section~\ref{SDSBM}. The temporal evolution of \gls{interaction} intensity is what we call the \textit{\gls{interaction profile}}. 

\underline{Section~\ref{InterRate-XP}}, we consider a timestamped sequence of exposures to \glspl{entity} (e.g., URL, ads, situations) and the actions a user exerts on them (e.g., share, click, \gls{decision}). We study how users exhibit different behaviours according to \textit{combinations} of \glspl{entity} they have been exposed to in the past. We show that the joint effect of two exposures on a user is more than the disjoint sum of their individual effect --there is an \gls{interaction}. InterRate allows for non-parametric convex optimization and can be solved in parallel. 

\underline{Section~\ref{InterRate-discussion}}, we show our method recovers state-of-the-art conclusions on \gls{interaction} processes on three real-world datasets. It outperforms the proposed baselines in the inference of the underlying data generation mechanisms. Finally, we show that \glspl{interaction profile} can be visualized intuitively, making the interpretation of the model easier.

The overall conclusion of this section is that \textbf{\glspl{interaction} are brief}. In real-world \gls{diffusion} processes, the \gls{interaction} between two \glspl{entity} is significant only when the two \glspl{entity} are close to each other in time. For instance, a tweet does not have the same influence on the last tweet a user saw if the first appeared ten minutes or ten days earlier in her news feed. Typically, the intensity of an \gls{interaction} decreases exponentially with the time separating the \gls{interacting} \glspl{entity}. Our conclusion emphasizes the necessity of modelling the temporal aspect of \glspl{interaction} in social networks.

\vfill
\noindent
\textit{Published works: \citep{Poux2021InterRate}}

\end{chapabstract}
\pagebreak

\section{Introduction}
\label{InterRate-intro}
\subsection{Temporal evolution of \glspl{interaction}}
In Chapter~\ref{Chapter-SBMs}, we showed that \glspl{interaction} play a significant role in information \gls{spread}. However, the models introduced there are all static: an \gls{interaction} between several \glspl{entity} has a constant intensity over time. Many everyday examples contradict this assumption. The writing of a scientific article is more influenced by last year’s publications than by last century's ones; a user on social media is more likely to answer a recent post than a years-old one; a Spotify user is more likely to listen to a band if she listened to it five minutes ago than if she listened to it five years ago. In general, \glspl{interaction} between \glspl{entity} last for a given time; eventually \glspl{entity} fade to a non-\gls{interacting} ``ground-state'' as time goes forward. In Chapter~\ref{Chapter-SBMs}, we called this ground-state \textit{\gls{virality}}, which is the probability of an \gls{outcome} in the absence of \glspl{interaction}.

The histograms presented in Fig.\ref{fig:FigDistrib} illustrate this assumption: the probability of an action on an \gls{entity} (like, share, comment, etc.) varies according to the temporal distance separating any two given \glspl{entity}. We refer to such figures as \textit{\glspl{interaction profile}}. They represent an action's probability evolution given the time separating the \gls{interacting} \glspl{entity}.

The study of this quantity is a novel perspective: \glspl{interaction} between \glspl{piece of information} have been little explored in the literature, and no previous work unveils trends in the information \gls{interaction} mechanisms.

\begin{figure}[h]
    \centering
    \includegraphics[width=1.\columnwidth]{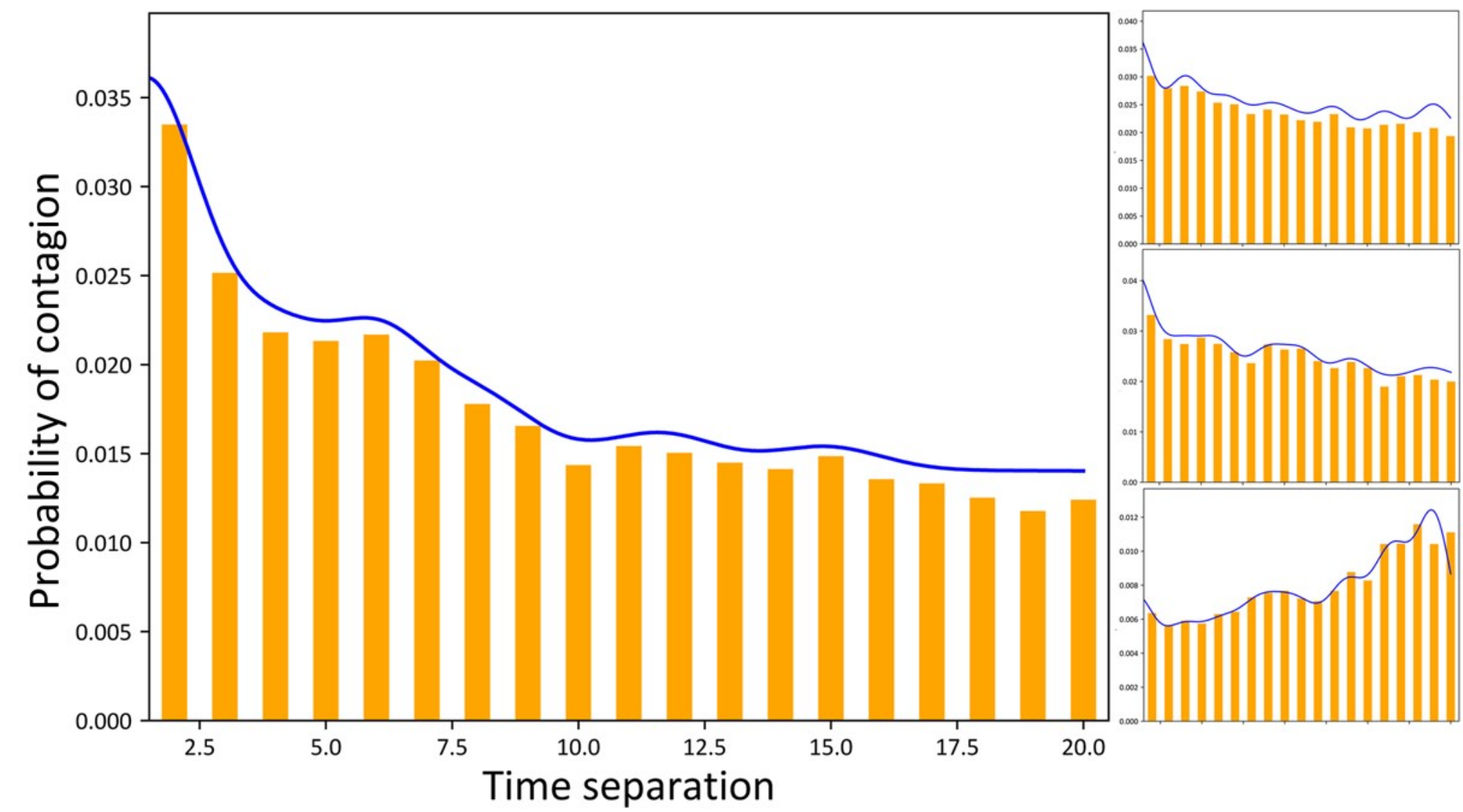}
    \caption[InterRate - \Glspl{interaction profile} between \glspl{entity} pairs]{\textbf{\Glspl{interaction profile} between pairs of \glspl{entity}} --- Examples of \glspl{interaction profile} on Twitter; here is shown the effect of URL shortening services migre.me (left), bit.ly (right-top), tinyurl (right-middle) and t.co (right-bottom) on the probability of tweeting a t.co URL and its evolution over time. This \gls{interaction profile} shows, for instance, that there is an increased probability of retweeting for a t.co URL when it appears shortly after a migre.me one (interaction). This increase fades when the time separation grows (no more \gls{interaction}). In blue, the \gls{interaction profile} inferred by our model.}
    \label{fig:FigDistrib}
\end{figure}

The study of \glspl{interaction} between \glspl{entity} has several applications in real-world systems. We can mention the fields of recommender systems (the probability of adoption is influenced by what a user saw shortly before it), news propagation and control (when to expose users to an \gls{entity} to maximize its spreading probability) \citep{Vosoughi2018SpreadNewsOnlineScience}, advertising (same reasons as before) \citep{Cao2019AdsDataset}, choice behaviour (what influences a choice and how) \citep{CoboLopez2018SocialDilemma}. 

\subsection{Proposed approach}
In this chapter, we propose to unveil the temporal mechanisms at stake within those \gls{interacting} processes; we infer their \glspl{interaction profile}. Imagine, for instance, that an internet user is exposed to a tweet at time $t_1$ and to another at time $t_2>t_1$. We suppose that the exposure to the first one influences the user's sensitivity (likeliness of a retweet) to the second one a time $t_2-t_1$ later. Modelling this process involves quantifying the influence a tweet exerts on the other and how this influence varies with the time separating the exposures. The representation of this probability evolution is what we call the \textit{\gls{interaction profile}} --illustrated Fig.\ref{fig:FigDistrib}). It represents the influence the exposure to an \gls{entity} exerts on an \gls{outcome} (click, buy, retweet, etc.) for another exposure to an \gls{entity} a given time later.

To perform this task, we introduce an efficient method to infer both the \glspl{entity}’ \gls{interaction} network and its evolution according to the temporal distance separating the \gls{interacting} \glspl{entity}. In the proposed InterRate model, nodes are \glspl{entity}, and edges between them represent the intensity of their \gls{interaction}; this intensity is a continuous-time function that depends on the time separating two exposures to the \glspl{entity} (or nodes). This intensity function is the \gls{interaction profile}.

\subsection{Workflow}
First, we review the state of the art in temporal \gls{interaction} between \glspl{entity} and underline the open challenges they rise in Section~\ref{InterRate-sota}.
Then, we answer them with InterRate in Section~\ref{InterRate-model}, a model for inferring all the \glspl{interaction profile} between every node pair in a given set. This is performed in a continuous-time setup using multi-kernel inference methods \citep{Du2012KernelCascade}. 
We show in Section~\ref{InterRate-XP} that the inference of the parameters boils down to a convex optimization problem for specific kernel families. Moreover, the problem can be subdivided into as many subproblems as \glspl{entity}, which can be solved in parallel. The convexity of the problem guarantees convergence to the likelihood's global optimum for each subproblem and, therefore, to the problem's optimal likelihood. 
In Section~\ref{InterRate-Res}, we use InterRate to investigate the role of \glspl{interaction profile} on synthetic data and in various corpora from different fields of research: advertisement (the exposure to an ad influences the adoption of other ads \citep{Cao2019AdsDataset}), social dilemmas (the previous actions of one influences an other's actions \citep{CoboLopez2018SocialDilemma}) and information \gls{spread} on Twitter (the last tweets read influence what a user retweets \citep{Myers2012CoC}).
Finally, in Section~\ref{InterRate-discussion}, we provide analysis leads and show that our method recovers state-of-the-art results on \gls{interaction} processes on each of the three considered datasets.

\subsection{Contributions}
The main contributions developed in this chapter are the following:
\begin{itemize}
    \item We introduce the \gls{interaction profile}. It represents the evolution of \gls{interaction} intensities as the time separating \gls{interacting} \glspl{entity} increases. The \gls{interaction profile} is a powerful tool to understand how \glspl{interaction} take place in a given corpus (see Fig.\ref{fig:FigAnalInter}) and it has not been developed in the literature up to now. Its introduction is the main contribution of the present section.
    
    \item We design a convex non-parametric algorithm that can be solved in parallel, baptized InterRate. InterRate automatically infers the \gls{interaction profile} for every node pair in a network. The collection of \glspl{interaction profile} can be interpreted as a temporal \gls{interaction} network.
    
    \item We show that InterRate yields better results than non-\gls{interacting} or non-temporal baseline models on several real-world datasets. Furthermore, our model can recover several conclusions about the datasets from state-of-the-art works.
    
    \item We discuss the temporal aspect of \glspl{entity} \glspl{interaction}. Specifically, we show that \glspl{interaction} are brief in real-world \gls{diffusion} processes. We also recover the \glspl{interaction} sparsity that has been observed in Chapter~\ref{Chapter-SBMs}.
\end{itemize}

\section{State of the art on temporal \gls{interaction} network inference}
\label{InterRate-sota}
\subsection{Temporal \glspl{interaction} in general}
Previous efforts in investigating the role of \glspl{interaction} in information \gls{diffusion} have shown their importance in the underlying spreading processes. Several works study the \gls{interaction} of information with users' attention \citep{Weng2012CompetitionMemes}, closely linked to information overload concepts \citep{GomezRodriguez2014InformationOverload}, but not the \gls{interaction} between the \glspl{piece of information} themselves. On the other hand, whereas most of the modelling of spreading processes is based on either no competition \citep{Senanayake2016InfluenzaSpread,Poux2020InfluentialSpreaders} or perfect competition \citep{Prakash2012WinnerTakesAll} assumption, intermediate competitions lead to a better description of their \gls{spread} \citep{Beutel2012InteractingViruses} --with the example of Firefox and Chrome web browsers, whose respective popularity are correlated but not perfectly as in \citep{Prakash2012WinnerTakesAll}. 

\subsection{Modelling \glspl{interaction}}
A significant effort has been put in elaborating complex processes to \textit{simulate} \gls{interaction} on real-world networks \citep{Prakash2012WinnerTakesAll,Zhu2020CompetitionInformationSocialNet}. However, fewer works have been developed to tackle \gls{interaction} in information \gls{spread} from a machine learning point of view. The correlated \gls{cascade} mode \citep{Zarezade2017CorrelatedCascade} infers an \gls{interacting} spreading process's latent \gls{diffusion} network. In this work, the \gls{interaction} is modelled by a hyper-parameter $\beta$. It tunes the intensity of \glspl{interaction} according to an exponentially decaying kernel. In their conclusion, the authors formulate the open problem of learning several kernels and the \gls{interaction} intensity parameter $\beta$, which we address in the present chapter.

To our knowledge, the attempt the closest to our task to model the \gls{interaction} intensity parameter $\beta$ is Clash of the contagions \citep{Myers2012CoC}. It predicts retweets on Twitter based on tweets seen by a user. This model estimates the probability of retweeting a \gls{piece of information}, given the last tweets a user has been exposed to, according to their relative position in the Twitter feed. The method suffers various flaws (scalability, non-convexity). It also defines \glspl{interaction} based on an arguable hypothesis made on the prior probability of a retweet (in the absence of \glspl{interaction}) that makes its conclusions about \glspl{interaction} sloppy (see Section~\ref{SBM-SotA-lim-contrib}). It is worth noting that in \citep{Myers2012CoC}, the authors outline the problem of the inference of the \gls{interaction profile} but do so without searching for continuous trends such as the one shown in Fig.\ref{fig:FigDistrib}. 

In Chapter~\ref{Chapter-SBMs}, we addressed the various flaws observed in \citep{Myers2012CoC} and suggested a more general approach to the estimation of the \gls{interaction} intensity parameters. However, we neglected the temporal aspect of \glspl{interaction}. To take back the Twitter case study, it implies that in the case of a retweet at time $t$, a tweet appearing at $t_1 \ll t$ in the news feed has the same influence on the retweet as a tweet that appeared at $t_2 \approx t$; the \gls{interaction profile} is constant over time.

\subsection{Temporal network inference}
For several years, temporal network inference has been a subject of interest. Significant advances have been made using survival theory modelling applied to partial observations of \gls{independent} \glspl{cascade} of contagions \citep{GomezRodriguez2011NetRate,GomezRodriguez2013InfoPath}. 
In this context, an \gls{infected} node tries to contaminate every other node at a rate that is tuned by $\beta$. 
While this work is not directly linked to ours, it has been a strong influence on the \gls{interaction profile} inference problem we develop here; the problems are different, but the methodology they introduce greatly helped building InterRate (development and convexity of the problem, analogy between \gls{interaction profile} and hazard rate). Moreover, advances in network inference based on the same works propose a multi-kernel network inference method that we adapted to the problem we tackle here \citep{Du2012KernelCascade}. Inspired by these works, we develop a flexible approach that allows for the inference of the best \gls{interaction profile} by using several candidate kernels.

\section{InterRate -- \Gls{interaction} dynamics}
\label{InterRate-model}
\textit{This work has been published, see \citep{Poux2021InterRate}}
\subsection{Problem definition}
We illustrate the process to model in Fig.~\ref{fig:FigProc}. It runs as follows: a user is first exposed to a \gls{piece of information} at time $t_0$. The user then chooses whether to act on it at time $t_0+t_{s}$ (an act can be a retweet, a buy, a booking, etc.); $t_s$ can be interpreted as the ``reaction time'' of the user to the exposure, assumed constant. The user is then exposed to the next \gls{piece of information} a time $\delta t$ later, at $t_1=t_0+\delta t$ and decides whether to act on it a time $t_s$ later, at $t_1+t_{s}$, and so on. Here, $\delta t$ is the time separating two consecutive exposures, and $t_{s}$ is the reaction time, separating the exposure from the possible contagion. In the remaining of this chapter, we refer to the user's action on an exposure (tweet appearing in the feed, exposure to an ad, etc.) as a contagion (retweet or the tweet, click on an ad, etc.).

\begin{figure*} %
    \centering
    \includegraphics[width=1.\textwidth]{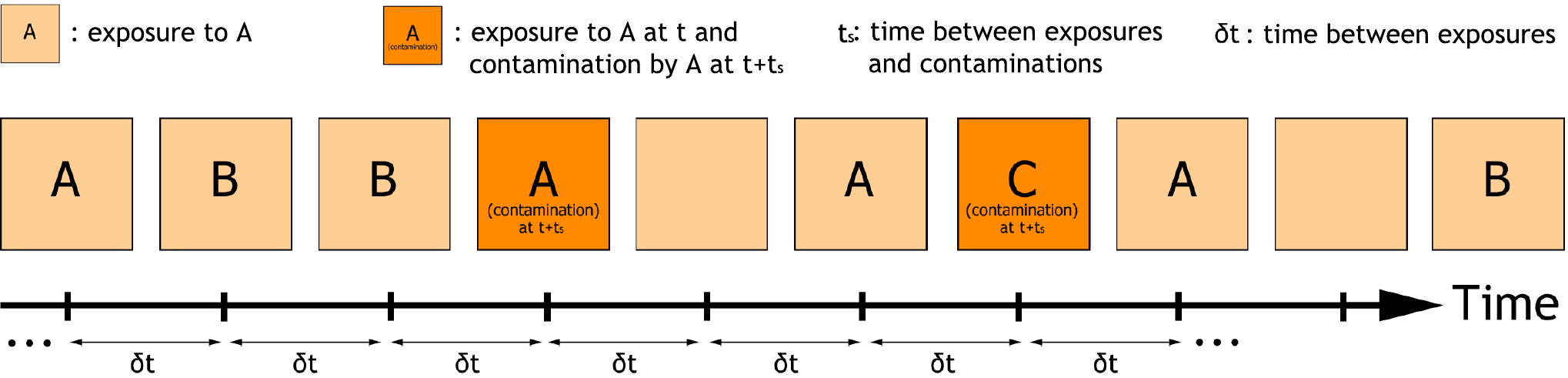}
    \caption[InterRate - Illustration of the \gls{interacting} process]{\textbf{Illustration of the \gls{interacting} process} --- Light orange squares represent the exposures, dark orange squares represent the exposures that are followed by contagions and empty squares represent the exposures to the information we do not consider in the datasets (they only play a role in the distance between exposures when we consider the order of appearance as a time feature). A contagion occurs at a time $t_s$ after the corresponding exposure. Each new exposure arrives at a time $\delta$t after the previous one. Contagion takes place with a probability conditioned by all previous exposures. In the example, the contagion by A at time $t+t_s$ depends on the effect of the exposure to A at times $t$ and $t-3\delta t$, and to B at times $t-\delta t$ and $t-2\delta t$.}
    \label{fig:FigProc}
\end{figure*}

This choice of modelling comes with several hypotheses. 
\textbf{First}, the \glspl{piece of information} a user is exposed to appear independently from each other. It is the main difference between our work and survival analysis literature: the pseudo-survival of an \gls{entity} is conditioned by the random arrival of \glspl{piece of information}. Therefore, users' actions cannot be modelled as a survival process. This assumption holds in our experiments on real-world datasets, where users have no influence on what information they are exposed to.
\textbf{Second} hypothesis, the user is contaminated solely based on the previous exposures in the feed \citep{Myers2012CoC,Zarezade2017CorrelatedCascade}.
\textbf{Third}, the reaction time separating the exposure to a \gls{piece of information} from its possible contagion, $t_s$, is constant (i.e., the time between a read and a retweet in the case of Twitter). Importantly, this hypothesis is a deliberate simplification of the model for clarity purposes; relaxing this hypothesis is straightforward by extending the kernel family, which preserves convexity and time complexity. Note that this simplification does not always hold, as shown in recent works concluding that response time can have complex time-dependent mechanisms \citep{Yu2017SurvivalSocialBehaviour}.

\subsection{Likelihood}
We now define the likelihood of the model whose process is described in Fig.~\ref{fig:FigProc}. Let $t_i^{(x)}$ be the exposure to $x$ at time $t_i$, and $t_i^{(x)} + t_s$ the time of its possible contagion.
Consider now the instantaneous probability of contagion (\textit{hazard function}) ${H(t_i^{(x)}+t_s \vert t_j^{(y)},\beta_{xy})}$, that is the probability that a user exposed to the \gls{piece of information} $x$ at time $t_i$ is contaminated by $x$ at $t_i+t_s$ given an exposure to $y$ at time $t_j \leq t_i$.
The matrix of parameters $\beta_{ij}$ is what the model infers. $\beta_{ij}$ is used to characterize the \gls{interaction profile} between \glspl{entity}.
We define the set of exposures preceding the exposure to $x$ at time $t_i$ (or history of $t_i^{(x)}$) as $\mathcal{H}_i^{(x)} := \{ t_j^{(y)} \leq t_i^{(x)} \}_{j,y}$. 
Let $\mathcal{D}$ be the whole dataset such as $\mathcal{D} := \{ ( \mathcal{H}_i^{(x)}, t_i^{(x)}, c_{t_i}^{(x)} ) \}_{i,x}$. Here, c is a binary variable that account for the contagion ($c_{t_i}^{(x)}=1$) or non-contagion ($c_{t_i}^{(x)}=0$) of $x$ at time $t_i+t_s$. The likelihood for one exposure in the sequence given $t_j^{(y)}$ is:

{\small
    \begin{equation*}
        \begin{split}
            &L(\beta_{xy} \vert \mathcal{D}, t_s) = P(\mathcal{D} \vert \beta_{xy}, t_s) =\\
            &\underbrace{H(t_i^{(x)}+t_s \vert t_j^{(y)},\beta_{xy})^{c_{t_i}^{(x)}}  }_{\text{contagion at $t_i^{(x)}+t_s$ due to $t_j^{(y)}$}}\cdot
            \underbrace{(1 - H(t_i^{(x)}+t_s \vert t_j^{(y)},\beta_{xy}))^{(1-c_{t_i}^{(x)})}}_{\text{Survival at $t_i^{(x)}+t_s$ due to $t_j^{(y)}$}}
        \end{split}
    \end{equation*}
}
The likelihood of a sequence (as defined in Fig.\ref{fig:FigProc}) is then the product of the previous expression over all the exposures that happened before the contagion event $t_i^{(x)} + t_s$ e.g. for all $t_j^{(y)} \in \mathcal{H}_i^{(x)}$.
Finally, the likelihood of the whole dataset $\mathcal{D}$ is the product of $L(\beta_{x} \vert \mathcal{D}, t_s)$ over all the observed exposures $t_i^{(x)}$. Taking the logarithm of the resulting likelihood, we get the final log-likelihood to maximize:
\begin{equation}
    \label{Eq:likelihood}
    \begin{split}
        \ell (\beta \vert \mathcal{D},t_s) =\ \ \ \ \ \ \ \ \ \ \ & \\
        \sum_{\mathcal{D}} \ \sum_{t_j^{(y)} \in \mathcal{H}_i^{(x)}}&
        c_{t_i}^{(x)} \log \left( H(t_i^{(x)}+t_s \vert t_j^{(y)},\beta_{xy}) \right)
        \\
        +\ (1-c_{t_j}^{(y)})&\log \left( 1-H(t_i^{(x)}+t_s \vert t_j^{(y)},\beta_{xy}) \right)
    \end{split}
\end{equation}
\myequations{\ \ \ \ InterRate - Likelihood}

\subsection{Proof of convexity}
The convexity of a problem guarantees to retrieve its optimal solution and allows using dedicated fast optimization algorithms.
\begin{prop}
The inference problem $\min_{\beta} -\ell (\beta \vert \mathcal{D},t_s) \  \forall \beta \geq 0$, is convex in all of the entries of $\beta$ for any hazard function that obeys the following conditions: 
\begin{equation}
    \label{Eq:ConvexCond}
    \begin{cases}
        &H'^2 \geq H''H  \\
        &H'^2 \geq -H''(1-H)\\
        &H \in \left]0;1\right[
    \end{cases}
\end{equation}
where $'$ and $''$ denote the first and second derivative with respect to $\beta$, and H is the shorthand notation for $H(t_i^{(x)}+t_s \vert t_j^{(y)},\beta_{xy}) \ \forall i,j,x,y$.
\end{prop}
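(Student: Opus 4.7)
My plan is to exploit two structural features of the negative log-likelihood in Eq.~\ref{Eq:likelihood}: it is a finite sum (so convexity can be shown term-by-term) and each summand is a function of a single pair parameter $\beta_{xy}$ through the scalar hazard $H$. Since $-\ell$ is separable across distinct entries of the matrix $\beta$ (different $(x,y)$ pairs appear in different terms), joint convexity will follow from scalar convexity in each $\beta_{xy}$. Thus the whole proof reduces to showing that, for every observation triple, the function $\beta \mapsto -c\log H(\beta) - (1-c)\log(1-H(\beta))$ is convex on its domain.

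Next I would treat the two logarithmic pieces independently, since $c \in \{0,1\}$ so only one of them is active per term and a sum of convex functions is convex. Writing $g_1(\beta) = -\log H(\beta)$ and $g_2(\beta) = -\log(1-H(\beta))$, I would apply the chain rule to compute $g_1' = -H'/H$ and $g_1'' = \bigl((H')^2 - H''H\bigr)/H^2$, and similarly $g_2' = H'/(1-H)$ and $g_2'' = \bigl((H')^2 + H''(1-H)\bigr)/(1-H)^2$. Because $H \in (0,1)$ by the third hypothesis, both denominators are strictly positive, so the sign of $g_1''$ and $g_2''$ is determined by the numerators, which are non-negative precisely under the two inequalities of (\ref{Eq:ConvexCond}).

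Having $g_1'' \geq 0$ and $g_2'' \geq 0$ gives convexity of each summand in $\beta_{xy}$. Summing over all the $(x,y)$-indexed pieces, I would conclude that $-\ell$ is convex in each $\beta_{xy}$; and because distinct $\beta_{xy}$ never interact within a single term, coordinate-wise convexity on this separable decomposition is equivalent to joint convexity of $-\ell$ on the product domain $\{\beta \geq 0\}$. I would also briefly note that the domain condition $\beta \geq 0$ is convex, so combined with convexity of the objective the minimization problem itself is convex.

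The main obstacle I anticipate is the vector/multi-kernel case: if $\beta_{xy}$ is not scalar but a vector of kernel weights, the one-dimensional inequalities in (\ref{Eq:ConvexCond}) must be lifted to a Hessian PSD statement. The natural way around this is to assume $H$ depends on $\beta_{xy}$ through a linear combination $\beta_{xy}^\top z$ of basis kernels, i.e.\ $H(\beta_{xy}) = \psi(\beta_{xy}^\top z)$ for some scalar $\psi$. Then $\nabla^2 g_1 = \bigl((\psi')^2 - \psi''\psi\bigr)/\psi^2 \cdot zz^\top$ and similarly for $g_2$, and the scalar conditions on $\psi$ (as a function of its single argument) are exactly (\ref{Eq:ConvexCond}), reducing the Hessian inequality to a non-negative scalar times the rank-one PSD matrix $zz^\top$. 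This final observation would wrap up the proposition for the practical multi-kernel hazard families actually used by InterRate.
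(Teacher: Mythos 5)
Your proposal is correct and follows essentially the same route as the paper: differentiate $-\log H$ and $-\log(1-H)$ twice, observe that the numerators $H'^2 - H''H$ and $H'^2 + H''(1-H)$ are non-negative exactly under the stated conditions, and conclude by summing convex terms. Your explicit treatment of the vector-valued $\beta_{xy}$ via $H = \psi(\beta_{xy}^\top z)$ and the rank-one Hessian $zz^\top$ is a welcome elaboration of what the paper compresses into "the composition rules of convexity," but it is not a different argument.
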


\begin{proof}
The negative log-likelihood as defined in Eq.\ref{Eq:likelihood} is a summation of $-\log H$ and $-\log(1-H)$; therefore $H \in \left]0;1\right[$. The second derivative of these expressions according to any entry $\beta_{mn}$ (noted $''$) reads:
\begin{equation}
    \label{Eq:DemoConvex}
    \begin{cases}
    &\left( -\log H \right)'' = \left( \frac{-H'}{H} \right)' = \frac{H'^2-H''H}{H^2} \\
    &\left( -\log(1-H) \right)'' = \left( \frac{H'}{1-H} \right)' = \frac{H'^2 + H''(1-H)}{(1-H)^2}
    \end{cases}
\end{equation}
The convexity according to a single variable holds when the second derivative is positive, which leads to Eq.\ref{Eq:ConvexCond}. The convexity of the problem then follows from the composition rules of convexity.
\end{proof}

Several functions obey the conditions of Eq.\ref{Eq:ConvexCond}, such as the exponential ($e^{-\beta t}$), Rayleigh ($e^{-\frac{\beta}{2} t^2}$), power-law ($e^{-\beta \log t}$) functions, and any log-linear combination of those \citep{Du2012KernelCascade}. These functions are standard in survival theory literature \citep{GomezRodriguez2013SurvivalAnalysis}. 

The final convex problem can then be written $\min_{\beta \geq 0} -\ell (\beta \vert \mathcal{D},t_s)$. An interesting feature of the proposed method is that the problem can be subdivided into N convex subproblems that can be solved independently (one for each \gls{piece of information}). To solve the subproblem of the \gls{piece of information} $x$, that is to find the vector $\beta_x$, we need to consider only the subset of $\mathcal{D}$ where $x$ appears. Explicitly, each subproblem consists in maximizing Eq.\ref{Eq:likelihood} over the set of observations $\mathcal{D}^{(x)} := \{ ( \mathcal{H}_i^{(x)}, t_i^{(x)}, c_{t_i}^{(x)} ) \}_i$.

\section{Experiments}
\label{InterRate-XP}
\subsection{Experimental setup}
\subsubsection{Kernel choice}
\paragraph{Gaussian RBF kernel family (IR-RBF)}
Based on \citep{Du2012KernelCascade}, we consider a log-linear combination of Gaussian radial basis function (\acrshort{RBF}) kernels as hazard function. We also consider the time-independent kernel needed to infer the base probability of contagion discussed in the section ``Background noise in the data'' below. The resulting hazard function is then:

\begin{equation*}
\label{Eq:H}
\log H(t_i^{(x)}+t_s \vert t_j^{(y)},\beta_{ij}) = -\beta_{ij}^{(bg)} - \sum_{s=0}^S \frac{\beta^{(s)}_{ij}}{2} (t_i+t_s-t_j-s)^2
\end{equation*}

The parameters $\beta^{(s)}$ of Rayleigh kernels are the amplitude of a Gaussian distribution centered on time $s$. The parameter S represents the maximum time shift we consider. In our setup, we set S=20.
We think it is reasonable to assume that an exposition does not significantly affect a possible contagion 20 steps later. The parameter $\beta_{ij}^{(bg)}$ corresponds to the time-independent kernel --base probability of contagion by i, or \gls{virality}. The formulation allows the model to infer complex distributions from a reduced set of parameters whose interpretation is straightforward. 

\paragraph{Exponentially decaying kernel (IR-EXP)}
We also consider an exponentially decaying kernel.
We consider the following form for the hazard function and refer to this modelling as IR-EXP:%

\begin{equation*}
    \label{Eq:HHawkes}
    \log H(t_i^{(x)}+t_s \vert t_j^{(y)},\beta_{xy}) = -\beta_{ij}^{(bg)} -\beta_{ij} (t_i+t_s-t_j)
\end{equation*}
where $\beta_{ij}^{(bg)}$ once again accounts for the background noise in the data discussed further in this section.

\subsubsection{Parameters learning}
Datasets are made of sequences of exposures and contagions, as shown in Fig.\ref{fig:FigProc}. 
To assess the robustness of the proposed model, we apply a 5-fold cross-validation method. After shuffling the dataset, we use 80\% of the sequences as a training set and the 20\% left as a test set. We repeat this slicing five times, taking care that an interval cannot be part of the test set more than once. The optimization is made in parallel for each \gls{piece of information} via the convex optimization module for Python CVXPY.

We also set the time separating two exposures $\delta t$ as constant. It means that we consider only the order of arrival of exposures instead of their absolute arrival time. The hypothesis that the order of exposures matters more than the absolute exposure times has already been used with success in the literature \citep{Myers2012CoC}. Besides, in some situations, the exact exposure time cannot be collected, while the exposures' order is known.
For instance, in a Twitter corpus, we only know in what order a user reads her feed, unlike the exact time she read each of the posts. However, from its definition, our model works the same with non-integer and non-constant $\delta t$ in datasets where absolute time matters more than the order of appearance.

\subsubsection{Background noise in the data}
\begin{figure}
    \centering
    \includegraphics[width=0.8\textwidth]{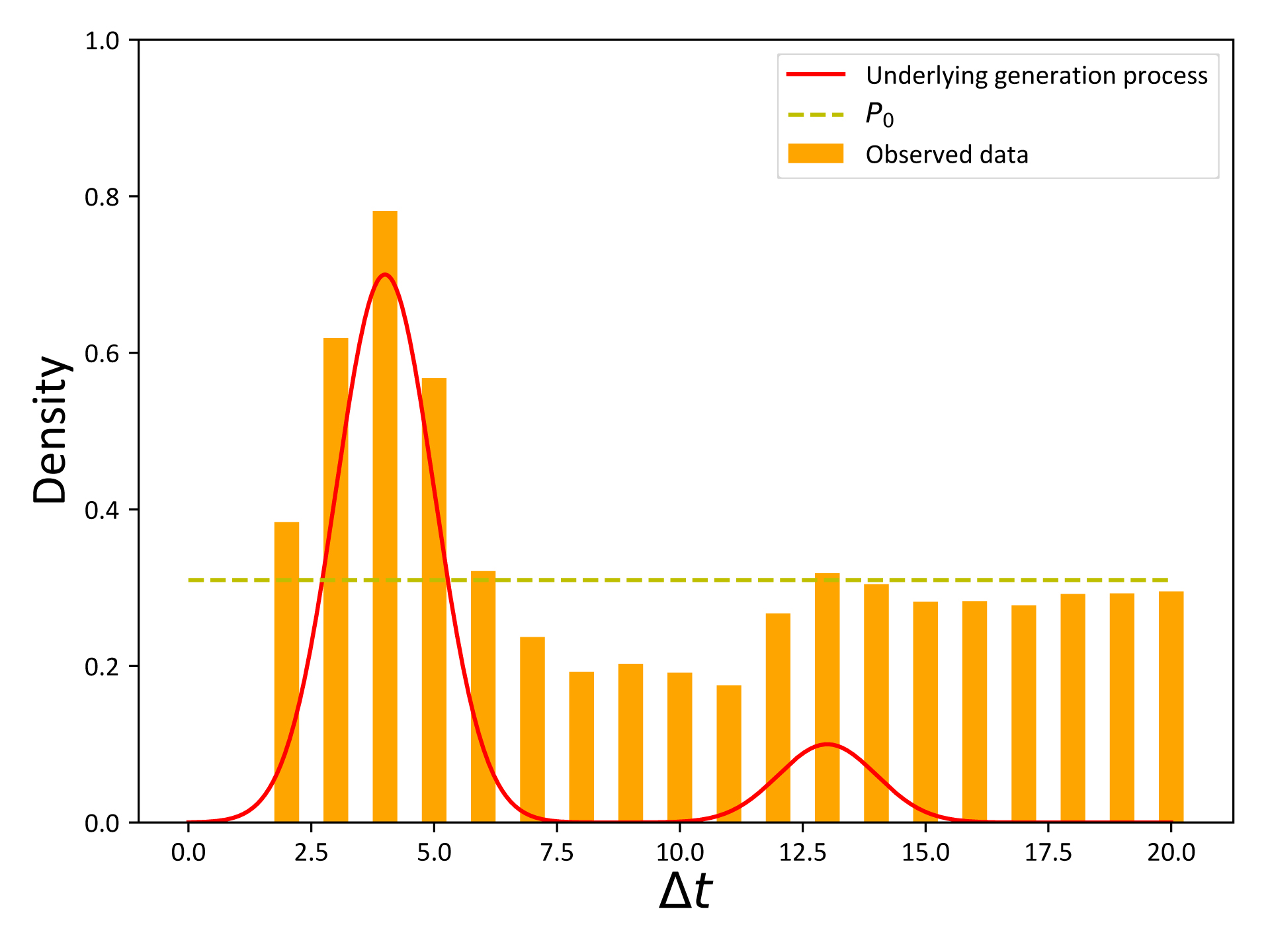}
    \caption[InterRate - Noise in the data]{\textbf{Underlying generation process vs observed data} --- The red curve represents the underlying probability of contagion by C given an exposure observed $\Delta t$ steps before C. The orange bars represent the observed probability of such events. We see that there is a noise $P_0(C)$ in the observed data. The underlying generation process can then only be observed in the dataset when its effect is larger than some threshold $P_0(C)$.}
    \label{fig:FigTrueVsObs}
\end{figure}
Because the dataset is built looking at all exposure-contagion correlations in a sequence, there is inherent noise in the resulting data. To illustrate this, we look at the illustrated example Fig.\ref{fig:FigProc} and consider the exposure to C leading to a contagion happening at time $t$. We assume that in the underlying \gls{interaction} process, the contagion by C at time $t+t_s$ only took place because C appeared at time $t$. However, when building the dataset, the contagion by C is also attributed to A appearing at times $t-\delta t$, $t-3\delta t$ and $t-6\delta t$, and to B appearing at times $t-4\delta t$ and $t-5\delta t$. It induces noise in the data. In general, for any contagion in the dataset, several observations (pair exposure-contagion) come from the random presence of \glspl{entity} unrelated to this contagion. 

We now illustrate how this problem introduces noise in the data.
In Fig.\ref{fig:FigTrueVsObs}, we see that the actual underlying data generation process (probability of a contagion by C given an exposure present $\Delta t$ step earlier) does not exactly fit the collected resulting data: the data gathering process induces a constant noise whose value is noted $P_0(C)$ --that is the average probability of contagion by C. Thus, the \gls{interaction} effect can only be observed when its associated probability of contagion is larger than $P_0(C)$. Consequently, the performance improvement of a model that accounts for \glspl{interaction} may seem small compared with a baseline that only infers $P_0(C)$. That is what we observe in the experimental section. However, in this context, a small improvement in performance shows an extended comprehension of the underlying \gls{interacting} processes at stake (see Fig.\ref{fig:FigTrueVsObs}, where the red line explains the data better than a constant baseline). Our method efficiently infers $P_0(C)$ \textit{via} a time-independent kernel function $\beta_{i,j}^{P_0(i)}$. Inferring $P_0(i)$ is in line with the discussion of Section~\ref{SBM-SotA-lim-contrib} on the necessity to infer the \gls{virality} along with \gls{interaction} terms.

\subsubsection{Evaluation criteria}
The main difficulty in evaluating these models is that \glspl{interaction} might occur between a small number of \glspl{entity} only. It is the case here, where many pairs of \glspl{entity} have little to no \gls{interaction} (see the Discussion section). This makes it difficult to evaluate how good a model is at capturing them. To this end, our principal metric is the residual sum of squares (\textbf{\acrshort{RSS}}). The RSS is the sum of the squared difference between the observed and the expected frequency of an \gls{outcome}. This metric is particularly relevant in our case, where \glspl{interaction} may occur between a small number of \glspl{entity}: any deviation from the observed frequency of contagion is accounted for, which is what we aim at predicting here. 
We also consider the Jensen-Shannon (\textbf{\acrshort{JS}}) divergence; the JS divergence is a symmetric version of the Kullback–Leibler divergence, which makes it usable as a metric \citep{Nielsen2020JSDiv}.

We finally consider the best-case F1-score (\textbf{BCF1}) of the models, that is, the F1-score of the best scenario of evaluation. It is not the standard F1 metric (that poorly distinguishes the models since few \glspl{interaction} occur), although its computation is similar. Explicitly, it generalizes F1-score for comparing probabilities instead of comparing classifications; the closer to 1, the closer the inferred and observed probabilities. It is derived from the best-case confusion matrix, whose building process is as follows: we consider the set of every information that appeared before information i at time $t_i$ in the interval, that we denote $\mathcal{H}_i$. We then compute the contagion probability of i at time $t_i+t_s$ to every exposure event $t_j^{(y)} \in \mathcal{H}_i$. Confronting this probability with the observed frequency f of contagions of i at time $t_i+t_s$ given $t_j^{(y)}$ among N observations, we can build the best-case confusion matrix. 
In the best-case scenario, if out of N observations the observed frequency is f and the predicted frequency is p, the number of True Positives is $N \times \min \{ p, f \}$, the number of False Positives is $N \times \max \{ p-f, 0 \}$, the number of True Negatives is $N \times \min \{ 1-p, 1-f \}$, the number of False Positives is $N \times \max \{ f-p, 0 \}$.

Finally, when synthetic data is considered, we also compute the mean squared error of the $\beta$ matrix inferred according to the $\beta$ matrix used to generate the observations, that we note \textbf{\acrshort{MSE}} $\beta$.

We purposely ignore evaluation in prediction because, as we show later, \glspl{interaction} influence quickly fades over time: probabilities of contagion at large times are mainly governed by the background noise discussed in previous sections. Therefore, it would be irrelevant to evaluate our approach's predictive power on the whole range of times where it does not bring any improvement over a naive baseline (see Fig.\ref{fig:FigDistrib}). A way to alleviate this problem would be to make predictions only when \gls{interaction} effects are above/below a certain threshold (at short times, for instance). However, such an evaluation process would be debatable. Here, we choose to focus on the descriptive aspect of InterRate.

\subsubsection{Baselines}
\paragraph{Naive baseline}
For a given \gls{piece of information} i, the contagion probability is defined as the number of times a user acts on it divided by the number of its occurrences.
\paragraph{Clash of the contagions}
We use the work presented in \citep{Myers2012CoC} as a baseline. In this work, the authors model the probability of a retweet given the presence of a tweet in a user's feed. This model does not look for trends in the way \glspl{interaction} take place (it does not infer an \gls{interaction profile}), considers discrete time steps (while our model works in a continuous-time framework), and is optimized via a non-convex SGD algorithm (which does not guarantee convergence towards the optimal model). More details on implementation are provided in Appendix, Section~\ref{InterRate-implemCoC}.

\paragraph{IMMSBM}
The Interactive Mixed-\Gls{membership} Stochastic Block Model (\acrshort{IMMSBM}) is a model that takes \glspl{interaction} between \glspl{piece of information} into account to compute the probability of a (non-)contagion --see Section~\ref{IMMSBM-model}. Note that this baseline does not take the position of the \gls{interacting} \glspl{piece of information} into account (time-independent) and assumes that \glspl{interaction} are symmetric (the effect of A on B is the same as B on A).

\paragraph{ICIR}
The \Gls{independent} \Gls{cascade} InterRate (ICIR) is a reduction of our main IR-RBF model to the case where \glspl{interaction} are not considered. We consider the same dataset, enforcing the constraint that off-diagonal terms of $\beta$ are null. The (non-)contagion of a \gls{piece of information} i is then determined solely by the previous exposures to i itself.

\subsection{Results}
\label{InterRate-Res}
\subsubsection{Synthetic data}

\setlength{\lgCase}{1.8cm}
\begin{table}
\centering
\caption[InterRate - Experimental results on synthetic data]{\textbf{Experimental results on synthetic data} --- Our model outperforms all of the baselines in almost every dataset for every evaluation metric. The standard deviations of the 5 folds cross-validation are negligible. The models presented in this chapter are \underline{underlined}. \label{tabMetricsSynth}}

\noindent\makebox[\textwidth]{
\begin{tabular}{|p{0.4\lgCase}|p{1.5\lgCase}|S[round-precision = 2,round-mode = places,table-text-alignment=right]|S[round-precision = 4,round-mode = places]|S[round-precision = 4]|S[round-precision = 4,round-mode = places]|}
\cline{3-6}
    \multicolumn{1}{c}{} & \multicolumn{1}{c|}{} & {\parbox{\lgCase} {RSS}} & {\parbox{\lgCase} {JS div.}} &  {\parbox{\lgCase} {BCF1}} & {\parbox{\lgCase} { MSE\,$\beta$}} \\
   
 \cline{1-6}
 
    {\multirow{4}{*}{\rotatebox[origin=c]{90}{\textbf{Synth-20}}}} &  \underline{IR-RBF} &  \maxf{\num{18.415145}} &  \maxf{\num{0.0022842}} &  \maxf{\num{0.918832}} & \maxf{\num{0.00050327}} \\
    
     &  \underline{ICIR} &  \num{139.5926} &  \num{0.0099801} & \num{0.8270477} & \num{0.0158811} \\
     
     \cdashline{2-6}
     
     &  {Naive} & \num{145.5132} &  \num{0.0103785} & \num{0.822139} &  \\
     
     &  {CoC} &  \num{123.0583} &  \num{0.0093838} &  \num{0.8220157} &  \\
     
     &  {IMMSBM} &  \num{222.055495} &  \num{0.0172875} & \num{0.7265413} &  \\
     
 \cline{1-6}
 
    {\multirow{4}{*}{\rotatebox[origin=c]{90}{\textbf{Synth-5}}}} & \underline{IR-RBF} &  \maxf{\num{0.1169267}} &  \maxf{\num{0.0002174}} & \num{0.9742133} &   \maxf{\num{0.0052977}} \\
    
     &  \underline{ICIR} &  \num{8.2660744} &  \num{0.0081174} & \num{0.8498637} & \num{0.0192061} \\
    
     \cdashline{2-6}
     
     &  {Naive} & \num{10.0264254} &  \num{0.0099556} & \num{0.8214128} & \\
     
     &  {CoC} & \maxf{\num{0.1154297}} &  \maxf{\num{0.0001974}} & \maxf{\num{0.9762872}} & \\
     
     &  {IMMSBM} & \num{11.6936415} &  \num{0.0136223} & \num{0.7692580} & \\
     
 \cline{1-6}
\end{tabular}
}
\end{table}

\paragraph{Data generation}
We generate synthetic data according to the process described in Fig.\ref{fig:FigProc} for a given $\beta$ matrix using the RBF kernel family. First, we generate a random matrix $\beta$, whose entries are between 0 and 1. A \gls{piece of information} is then drawn with uniform probability and can result in a contagion according to $\beta$, the RBF kernel family and its history. We simulate the \gls{outcome} by drawing a random number and finally increment the clock. The process then keeps on by randomly drawing a new exposure and adding it to the sequence. We set the maximum length of intervals to 50 steps and generate datasets of 20,000 sequences. 

\paragraph{Numerical results}
We present in Tab.~\ref{tabMetricsSynth} the results of the various models with generated \glspl{interaction} between 20 (Synth-20) and 5 (Synth-5) \glspl{entity}. The \glspl{interaction} are generated using the RBF kernel, hence the fact we are not evaluating the IR-EXP model --its use would be irrelevant.
The InterRate model outperforms the proposed baselines for every metric considered. It is worth noting that performances of non-\gls{interacting} and/or non-temporal baselines are good on the JS divergence and F1-score metrics due to the constant background noise $P_0$. For cases where \glspl{interaction} do not play a significant role, IMMSBM and Naive models perform well by fitting only the background noise. By contrast, the RSS metric distinguishes very well the models that are better at modelling \glspl{interaction}.

Note that while the baseline \citep{Myers2012CoC} yields good results when few \glspl{interaction} are simulated (Synth-5), it performs as bad as the naive baseline when this number increases (Synth-20). This is due to the non-convexity of the proposed model, which struggles to reach a global maximum of the likelihood even after 100 runs (see Appendix, Section~\ref{InterRate-implemCoC} for implementation details).

\subsubsection{Real data}
We consider 3 real-world datasets. For each dataset, we select a subset of \glspl{entity} that are likely to interact with each other. For instance, it has been shown that the \gls{interaction} between the various URL shortening services on Twitter is non-trivial \citep{Zarezade2017CorrelatedCascade}.

We provide details on the way datasets have been built from raw data. For each of the real-world datasets, we choose to consider only the order of the various \glspl{entity}' apparition instead of their absolute appearance times. It implies setting the time separating two successive exposures as constant, that we note $\delta t$. This choice is supported by state-of-the-art works \citep{Myers2012CoC}, and we observed in our experiments that it is more relevant than considering absolute times. Besides, we do not consider the first 10 \glspl{piece of information} of any sequence to avoid boundary effects (the first 5 steps for the PD dataset): the history of exposures is incomplete in this case and could lead to biased results.
For each dataset \glspl{entity} list, the number before the \gls{entity} name is the key used in Fig.~\ref{fig:FigAnalInter}. The \glspl{entity} subsets have been chosen by computing the co-occurrence matrix of all the \glspl{entity} and then selecting the ones that are part of a \gls{cluster} using a K-means algorithm. The datasets are:

\paragraph{Datasets}
\label{InterRate-datasets}
\begin{itemize}
    \item \textbf{Twitter} dataset \citep{Hodas2014DataSetTwitter}: a collection of all the tweets containing URLs that have been posted on Twitter during October 2010, with the associated followers’ network. A tweet read by a user in her feed is an exposition, and its possible retweet is a contagion. We consider only the URLs associated with the following URL shortening websites, the same as in \citep{Zarezade2017CorrelatedCascade}: \{0: migre.me, 1: bit.ly, 2: tinyurl, 3: t.co\}. The final dataset is made of 104,349 sequences of average length 53.5 steps (1 step = $t_s$), for 1,276,670,965 observed \glspl{interaction}.

    \item Prisoner's dilemma dataset (\textbf{PD}): contains ordered sequences of repeated Prisoner's dilemma games between two players. From the dataset introduced in \citep{Nay2016PrisonerDilemaDataset}, we consider the sub-dataset noted BR-risk 0 (first entry of Tab.2 in the reference); we choose this subset to have \glspl{decision} made in a homogeneous context, where players struggle in a dilemma that is hard to solve (which depends on the combination of the parameters T, R, S and P discussed further). Within each round, the two players can defect or cooperate. Each duel is made of 10 rounds. If both cooperate, the reward R is high; if both defect, the reward P is low; if one player cooperates while the other defects, this one gets a penalty S, while the other gets a reward T. To make the game a Prisoner dilemma, the variables have to obey T$>$R$>$P$>$S.
    We refer to the combination of players' actions ("the user cooperated, and the opponent defected at time t") as exposures and to the \textit{defect} actions of the player in the following round as a contagion. We defined the action of cooperating as a non-contagion. We therefore have 4 possible situations (\{0: Player cooperated, and opponent defected, 1: Both players defected, 2: Both players cooperated, 3: Player defected and opponent cooperated\}) and 2 possible \glspl{outcome} (Player cooperates or defects). The final dataset is made of 2,337 sequences of average length of 10.0 steps, for 189,297 observed \glspl{interaction}.
    
    \item Taobao dataset (\textbf{Ads}): contains all ads exposures for 1,140,000 randomly sampled users from the website of Taobao for 8 days (5/6/2017-5/13/2017) \citep{Cao2019AdsDataset}. Taobao is one of the largest e-commerce websites and is owned by Alibaba. Each exposure is associated with the corresponding timestamp and user's action (click on the ad or not). A click is considered a contagion. The subset of ads we consider is: \{0: 4520, 1: 4280, 2: 1665, 3: 4282\}. The resulting dataset is made of 87,500 sequences of average length of 23.9 steps, for 240,932,401 observed \glspl{interaction}.
\end{itemize}

\setlength{\lgCase}{2.3cm}
\begin{table}
\centering
\caption[InterRate - Experimental results on real-world data]{\textbf{Experimental results on real-world data} --- Our model outperforms all of the baselines in almost every dataset for every evaluation metric. The standard deviations of the 5 folds cross-validation are negligible. The models presented in this chapter are \underline{underlined}. \label{tabMetricsRW}}

\noindent\makebox[\textwidth]{
\begin{tabular}{|p{0.4\lgCase}|p{1.5\lgCase}|S[round-precision = 3,round-mode = places, table-text-alignment=right]|S[round-precision = 5,round-mode = places]|S[round-precision = 4]|}
\cline{3-5}
   \multicolumn{1}{c}{} & \multicolumn{1}{c|}{} & {\parbox{\lgCase}{RSS}} & {\parbox{\lgCase}{JS div.}} & {\parbox{\lgCase}{BCF1}} \\
 \cline{1-5}
 
    {\multirow{4}{*}{\rotatebox[origin=c]{90}{\textbf{Twitter\,\,\,\,}}}} &  \underline{IR-RBF} &  \maxf{\num{0.0014676}} &  \num{0.0000582} & \num{0.983202} \\
     
     &  \underline{IR-EXP} &  \maxf{\num{0.0011359}} &  \maxf{\num{0.0000488}} & \maxf{\num{0.986201}} \\
     
     &  \underline{ICIR} &  \num{0.0137100} &  \num{0.0006293} & \num{0.961401} \\
     
     \cdashline{2-5}
     
     &  {Naive} &  \num{0.0160866} &  \num{0.0007252} & \num{0.9379499} \\
     
     &  {CoC} &  \num{0.0016765} &  \num{0.0000672} & \num{0.9572230} \\
     
     &  {IMMSBM} &  \num{0.0147305} &  \num{0.0006829} & \num{0.9542923} \\
     
 \cline{1-5}
 
     {\multirow{4}{*}{\rotatebox[origin=c]{90}{\textbf{PD\,\,\,\,}}}} &  \underline{IR-RBF} &  \maxf{\num{1.12679}} &  \maxf{\num{0.007583}} & \maxf{\num{0.978903}} \\
     
     &  \underline{IR-EXP} &  \num{1.552556} &  \num{0.0086686} & \num{0.966056} \\
     
     &  \underline{ICIR} &  \num{3.53585} &  \num{0.018225} & \num{0.93812} \\
     
     \cdashline{2-5}
     
     &  {Naive} &  \num{3.6527} &  \num{0.0191466} & \num{0.94545} \\
     
     &  {CoC} &  \num{1.2408573} &  \num{0.0080876} & \num{0.9736453} \\
     
     &  {IMMSBM} &  \num{20.3773} &  \num{0.0870104} & \num{0.767153} \\
     
 \cline{1-5}

      {\multirow{4}{*}{\rotatebox[origin=c]{90}{\textbf{Ads\,\,\,\,\,\,}}}} &  \underline{IR-RBF} & \num{0.004338} &  \num{0.0000432} & \num{0.98143} \\
     
     &  \underline{IR-EXP} &  \maxf{\num{0.003016}} &  \maxf{\num{0.0000297}} & \maxf{\num{0.98524}} \\
     
     &  \underline{ICIR} &  \num{0.098337} &  \num{0.0008481} & \num{0.96588} \\
     
     \cdashline{2-5}
     
     &  {Naive} &   \num{0.1453111} &  \num{0.0012629} & \num{0.91258} \\
     
     &  {CoC} &  \num{0.0045146} &  \num{0.0000451} & \num{0.9741153} \\
     
     &  {IMMSBM} & \num{0.015465} &  \num{0.0001531} & \num{0.95427} \\
     
 \cline{1-5}
\end{tabular}
}
\end{table}

\paragraph{Numerical results}
The results on real-world datasets are presented in Tab.\ref{tabMetricsRW}. We see that the IMMSBM baseline performs poorly on the PD dataset: either considering the time plays a consequent role in the probability of contagion, or \glspl{interaction} are not symmetric. Indeed, as we saw in the previous chapter, a core hypothesis of the IMMSBM is that the effect of exposition A on B is the same as B on A, whichever is the time separation between them. In a prisoner's dilemma game setting, for instance, we expect that a player does not react in the same way to defection followed by cooperation as to cooperation followed by defection, a situation for which the IMMSBM does not account. When there are few \glspl{entity}, the CoC baseline performs as good as IR, but it fails when this number increases; this is mainly due to the non-convexity of the problem that does not guarantee convergence towards the optimal solution.
Overall, the InterRate models yield the best results on every dataset.

\section{Discussion}
\label{InterRate-discussion}

\subsection{Exponential \glspl{interaction profile}}
In Fig.\ref{fig:FigAnalInter}, we represent the \gls{interaction} intensity over time for every pair of information considered in every corpus fitted with the RBF kernel model. The intensity of the \glspl{interaction} is the inferred probability of contagion minus the base contagion probability in any context: $P_{ij}(t)-P_0(i)$. We recall that $P_0(i)$ is inferred along with the other parameters; it is in line with the discussion of Section~\ref{SBM-SotA-lim-contrib} on the necessity to infer the \gls{virality} along with \gls{interaction} terms.
Therefore, we can determine the characteristic range of \glspl{interaction}, investigate recurrent patterns in \glspl{interaction}, whether the \gls{interaction} effect is positive or negative, etc. 

Overall, we understand why the EXP kernel performs as good as the RBF on the Twitter and Ads datasets: \glspl{interaction} tend to have an exponentially decaying influence over time. However, this is not the case in the PD dataset: the effect of a given \gls{interaction} is very dependent on its position in the history (pike on influence at $\Delta t=3$, shift from positive to negative influence, etc.). 

\begin{figure}[t!]
    \centering
    \includegraphics[width=1.\columnwidth]{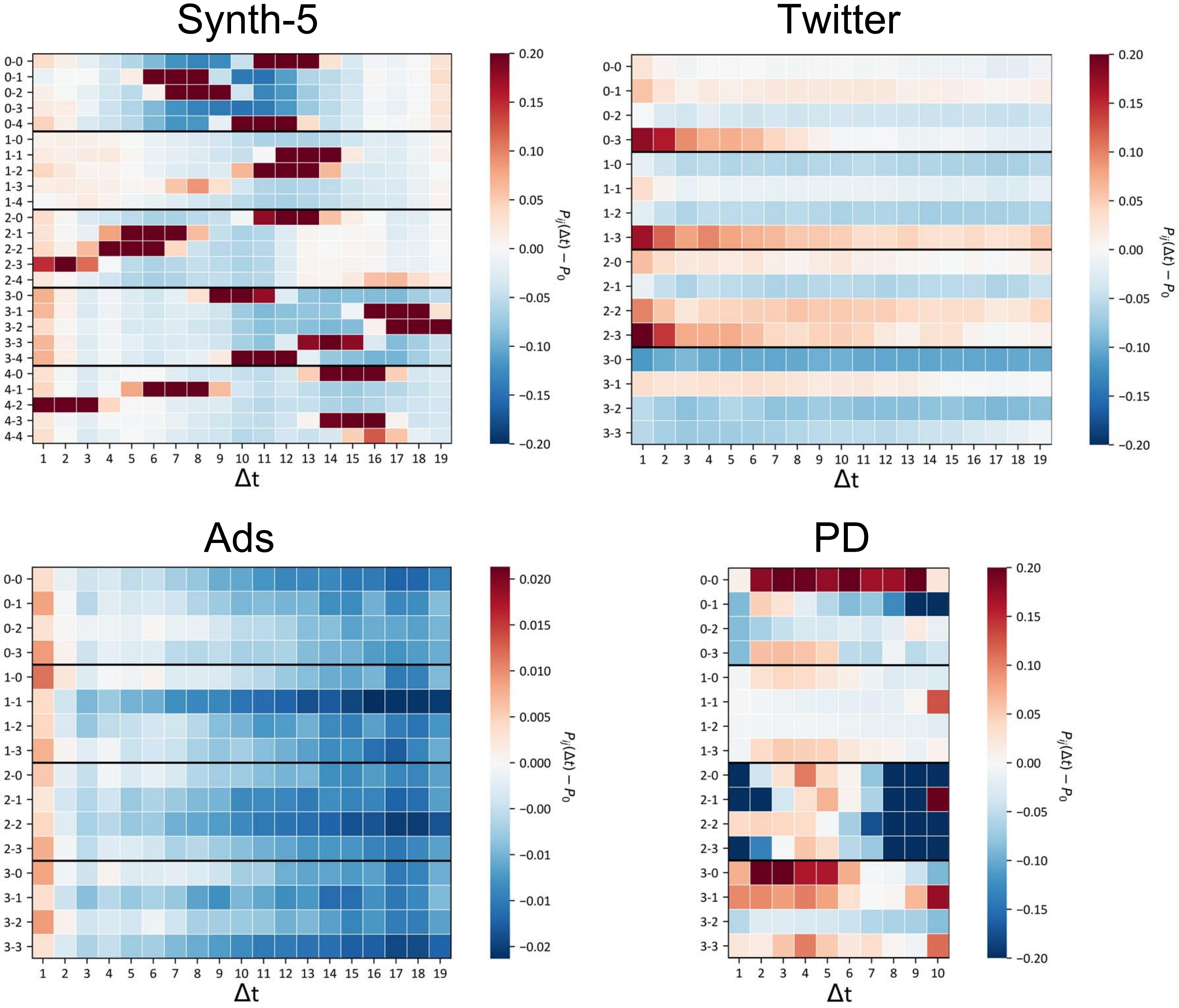}
    \caption[InterRate - Real-world \glspl{interaction profile}]{\textbf{Visualization of the \glspl{interaction profile}} --- Intensity of the \glspl{interaction} between every pair of \glspl{entity} according to their time separation (one line is one pair's \gls{interaction profile}, similar to Fig.\ref{fig:FigDistrib} seen "from the top"). A positive intensity means that the \gls{interaction} helps the contagion, while a negative intensity means it blocks it. Note that we represented discontinuous times for visualization proposes, but the inferred kernel is continuous, as in Fig.~\ref{fig:FigDistrib}. The key linking numbers on the y-axis to names for each dataset is provided in the main text, Section~\ref{InterRate-datasets}.}
    \label{fig:FigAnalInter}
\end{figure}

\subsection{Recovering state of the art conclusions}
In the Twitter dataset, the most substantial positive \glspl{interaction} occur before $\Delta$t=3. This finding agrees with previous works, which stated that the most informative \glspl{interaction} within the Twitter URL dataset occur within the 3 time steps before the possible retweet \citep{Myers2012CoC}. We also find that the vast majority of \glspl{interaction} are weak, matching with previous study's findings see Chapter~\ref{Chapter-SBMs} and \citep{Myers2012CoC}. However, it seems that tweets still exert influence even a long time after being seen, but with lesser intensity.

In the Prisoner's Dilemma dataset, players' behaviours are heavily influenced by the previous situations they have been exposed to. For instance, in the situation where both players cooperated in the previous round (pairs 2-x, 3$^{rd}$ section in Fig.\ref{fig:FigAnalInter}-PD). The probability that the player defects is then significantly increased if both players cooperated or if one betrayed the other exactly two rounds before but decreased if it has been two rounds that players both cooperate. 

Finally, we find that the \glspl{interaction} play a lesser role in the clicks on ads. We observe a slightly increased probability of clicking on every ad after direct exposure to another one. We also observe a globally decreasing probability of click when two exposures are distant in time, which agrees with previous work's findings \citep{Cao2019AdsDataset}. Finally, the \gls{interaction profile} is very similar for every pair of ads; we interpret this as a similarity in users' ads perception.

We showed that for each of the considered corpus, considering the \gls{interaction profile} provides an extended comprehension of choice adoption mechanisms and retrieves several state-of-the-art conclusions. The proposed graphical visualization also provides an intuitive view of how the \gls{interaction} occurs between \glspl{entity} and the associated trends, hence supporting its relevance as a new tool for researchers in a broad meaning.

\section{Conclusions}
\paragraph{Modelling the temporal aspect of \glspl{interaction}}
We showed in previous works that \glspl{interaction} between \glspl{entity} play a substantial role in individuals' actions. However, the temporal aspect of \glspl{interaction} has been little explored in the literature, despite their importance in modelling real-world processes \citep{Myers2012CoC,Zarezade2017CorrelatedCascade}. In this chapter, we filled the gap and introduced an efficient convex model that investigates the temporal aspect of \glspl{interaction}.

Unlike previous models, our method accounts for both the \gls{interaction} effects and their influence over time (the \gls{interaction profile}). We showed InterRate yields better results on synthetic and real-world datasets. Therefore, taking the temporal aspect of \glspl{interaction} provides a finer comprehension of the processes at stake.

However, the method introduced in this section presents a major flaw. As it has been stated in the previous chapter, most \glspl{entity} do not interact with each other --\glspl{interaction} are rare. Therefore, in this chapter, we restricted to the study of small subsets of nodes which we knew were strongly \gls{interacting} together. Our method shows that time is of importance in the modelling of \glspl{interaction} in spreading processes, but it cannot generalize these \gls{interaction} patterns to a class of \glspl{entity}; it lacks clustering. It makes the results less interpretable, as the model does not look for a trend in individual \glspl{entity}' behaviour, in which we are interested.

\paragraph{Recovering conclusions on temporal \glspl{interaction}}
We showed that InterRate manages to recover several state-of-the-art conclusions that have been made on the same datasets. On Twitter, most significant \glspl{interaction} happen at small times, and the majority of \glspl{interaction} are weak. On the advertisement corpus, our model highlighted the effects of marketing fatigue, which is a short increase in the probability of the click on an ad immediately after a first exposition, and a decreasing probability when expositions get more distant. On the prisoner's dilemma dataset, we showed our model recovers clues to a deterministic circular reasoning between players.

\paragraph{\Glspl{interaction} are short}
The overall conclusion of this section is that \textbf{\glspl{interaction} in spreading processes are brief}. Typically, the intensity of an \gls{interaction} on Twitter decreases exponentially with the time separating the \gls{interacting} \glspl{entity}.

\paragraph{Towards proper \glspl{interaction} modelling}
All the models proposed in both Chapter~\ref{Chapter-SBMs} and Chapter~\ref{Chapter-InterRate} make strong assumptions about the type of \glspl{entity} considered. Typically, an \gls{entity} has been identified by a link on Twitter, independently from its \textit{\gls{content}}. However, two different links may be about the same topic and thus be regarded identically by other \glspl{entity} they interact with. It seems reasonable to assume that \glspl{entity} carrying identical semantic meanings are regarded in the same way by other \glspl{entity} they interact with. 

As such, a more refined model should comprise three crucial elements to \gls{interaction} modelling. \textbf{\Glspl{entity} should be given a more complete definition that includes their semantic meaning} rather than only their identifier. These \textbf{\glspl{entity} must be clustered together} according to their dynamics to be able to recover significant \glspl{interaction}. \textbf{\Glspl{interaction} should be dynamic}, meaning their value should depend on the time separating the \gls{interacting} \glspl{entity}. In the next chapter, we derive such refined model that answers all these challenges.

\chapter{Dirichlet-Hawkes Processes - Modelling rare and brief \glspl{interaction}} %

\label{Chapter-DHPs} %

\begin{chapabstract}

The conclusions drawn from Chapter~\ref{Chapter-SBMs} and Chapter~\ref{Chapter-InterRate} are that \glspl{interaction} are sparse (we need \glspl{cluster} to model them) and that \glspl{interaction} are brief (we need to consider time). Besides, we discussed the fact that \gls{interacting} \glspl{entity} may have a short lifespan, and that their semantic \gls{content} must be taken into account.
In this last chapter, we present the steps paving our way to a single model that answers all of these challenges. 

\underline{Section~\ref{DHP-Introduction}}, we frame our approach as an in-depth modification of an existing Bayesian prior, the Dirichlet-Hawkes Process. 

\underline{Section~\ref{DHP-sota}}, we detail how to get this model by merging Dirichlet Processes with Hawkes processes, and highlight its main limits. 

\underline{Section~\ref{PDP}}, as an indirect way to overcome these limits, we explore alternative forms of Dirichlet Processes and end up with an expression that alleviates a major feature of the Dirichlet Processes, their ``\gls{rich-get-richer}'' property. We show that such a hypothesis is not always a relevant modelling choice. We propose the Powered Dirichlet Process as a way to directly control the ``\gls{rich-get-richer}'' assumption.

\underline{Section~\ref{PDHP}}, we then incorporate the newly proposed Powered Dirichlet Process into the standard Dirichlet-Hawkes process to create the Powered Dirichlet-Hawkes Process. We show our formulation yields significantly better results than state-of-the-art models when temporal information or textual \gls{content} is weakly informative and alleviates the hypothesis that textual \gls{content} and temporal dynamics are always perfectly correlated. 
Our approach eventually allows us to correctly model self-\glspl{interaction} within a \gls{cluster}.

\underline{Section~\ref{MPDHP}}, we extend our approach to the multivariate case, which allows us to explore not only self-\glspl{interaction}, but \glspl{interaction} between all \glspl{cluster}. We present the challenges that arise from such an extension and how we overcome them. 

\underline{Section~\ref{MPDHP-Reddit}}, finally, we perform large-scale experiments on a real-world Reddit dataset, made of all of the headlines published on news subreddits throughout the whole year 2019. We confirm the findings of previous chapters and conclude that \glspl{interaction} play a minor role in this particular dataset.

\vfill
\noindent
\textit{
Published works: 
\begin{itemize}[noitemsep,topsep=0pt]
    \item \citep{Poux2021PDHP} in Section~\ref{PDHP}
    \item \citep{Poux2022PropInterNewsReddit} in Section~\ref{MPDHP-Reddit}
\end{itemize}
}

\end{chapabstract}
\pagebreak

\section{Introduction}
\label{DHP-Introduction}
\subsection{How to properly model \glspl{interaction}}
\label{DHP-introduction-recapneedsmodel}
In the two previous chapters, we underlined several challenges that arise when modelling \glspl{interaction} between \glspl{entity} in information \gls{spread}. In Chapter~\ref{Chapter-SBMs}, we showed that \glspl{interaction} between pairs of \glspl{entity} are sparse --few \glspl{entity} interact significantly. In Chapter~\ref{Chapter-InterRate}, we showed that pair-\gls{interaction} between \glspl{entity} are brief --\glspl{entity} have to be close in time for significant \glspl{interaction} to happen.

These conclusions invite us to rethink the definition of what we consider an \gls{entity}. In previous chapters, an \gls{entity} is identified using a unique ID. It can be a link, a word, or a song, but it take multiple forms. This is a strong assumption since several of such \glspl{entity} can carry an identical semantic meaning. Let \href{https://www.youtube.com/watch?v=dQw4w9WgXcQ}{link A}\footnote{\url{https://www.youtube.com/watch?v=dQw4w9WgXcQ}} and \href{https://www.youtube.com/watch?v=oHg5SJYRHA0}{link B}\footnote{\url{https://www.youtube.com/watch?v=oHg5SJYRHA0}} point to the same \gls{content}. The previous definition of \glspl{entity} would consider them as distinct despite their semantic meaning being strictly identical. This could become problematic in contexts where such \glspl{entity} can express the same thing in several manners. Typically on Twitter or Reddit, opinions on recent news are expressed in various forms but may express the same thing. Another advantage of a more complete definition is the case of repeated information. Imagine a Twitter account that provides a daily weather forecast; each of the reports would be considered as a different \gls{entity} whereas users are expected to react in similar ways to this or that forecast (sunny, rainy, cloudy, etc.), but not to every one of them.

Therefore, we need to account for \glspl{entity}' \textit{\gls{content}}. This is even more crucial in situations where \glspl{piece of information} appear and disappear at a high rate, such as on online media platforms. The half-life of a Tweet is around 18 minutes, around 30 minutes for a Facebook post, 19h on Instagram, 6 days on Youtube, etc., which are all short in some perspective.
Short lifespans provide less information on which to learn the \gls{interacting} processes. However, aggregating these \glspl{piece of information} together using both their \gls{content} and dynamics would provide enough data to study \glspl{interaction}.

To summarize, we need to develop a model that creates \glspl{cluster} of \glspl{entity} based on both their \gls{content} and temporal \glspl{interaction}.

\subsection{Objective}
Our final goal is to model the temporal \gls{interaction} between \glspl{entity} in real-world large-scale datasets. Nowadays, online information is generated at an unprecedented rate. At the time of writing, every minute, 500,000 comments are posted on Facebook, 400 hours of videos are uploaded on Youtube, and 500,000 tweets are published on Twitter. 

As we stated in Chapter~\ref{Chapter-SBMs}, we need to \gls{cluster} this mass of \glspl{entity} together to make sense of this mass of information. As we saw in the introduction, Section~\ref{DHP-introduction-recapneedsmodel}, this clustering must take \glspl{entity}’ \gls{content} into account --typically their textual \gls{content}.
Many clustering algorithms are based on text similarity, that is, how similar the words of two published documents are \citep{Blei2003LDA,Bahdanau2015NeuralMT,Rathore2018}. 

However, in Chapter~\ref{Chapter-InterRate}, we saw that these \glspl{cluster} must also include a temporal dimension to reflect the \gls{interacting} processes at stake. Another variable to account for is thus the time of publication \citep{Blei2006DynamicTopicModel,Du2012KernelCascade}.

\subsection{Proposed approach}
Many clustering models that claim to model dynamic \glspl{cluster} do not in fact explicitly account for time. At a given time $t$, they sample a subset of recent observations according to a temporal sampling function and then learn a static model for this time-step using the selected data only \citep{Blei2006DynamicTopicModel,Amr2008RCRP, Yin2018ShortTextDHP}. However, sampling observations over time implies defining a sampling function that might not correctly model the temporal dynamics at stake. In general, time is not used by the model but instead only reduces the data provided to static models. It has been argued that such modelling is not fit to account for the arrival of documents in continuous-time settings \citep{Du2015DHP}. 

In \citep{Du2015DHP}, the authors combine techniques of standard textual clustering with temporal point processes. The idea is to infer the time-sampling functions for data selection jointly with the clustering model with which they are associated. They derive the Dirichlet-Hawkes process (\acrshort{DHP}) prior for clustering document streams by using jointly textual and temporal information in the \gls{cluster} inference. In this model, \glspl{cluster} are self-stimulated. It means that each \gls{entity} they comprise gets associated with a temporal function representing the probability of a new \gls{entity} appearing at all times. This can be interpreted as the diagonal of the \gls{interaction} matrix in Fig.~\ref{figGpesInter}, but in its temporal version --with each case being associated with an \gls{interaction profile}, see Section~\ref{InterRate-discussion}.

This model \citep{Du2015DHP} seems to fit our task very well. However, we cannot use it as such, as it suffers from some limitations and assumptions. For instance, it has been argued that this method cannot handle limit cases where text is less informative --e.g., short texts, overlapping vocabularies \citep{Yin2018ShortTextDHP}.

\subsection{Workflow}
In this chapter, we will first detail the Dirichlet-Hawkes Process (DHP) prior introduced in \citep{Du2015DHP} and point out its limits (Section~\ref{DHP-sota}). In particular, we will show that their inclusion of the temporal dimension results from an arbitrary choice, and that \textit{tuning} the influence of time allows us to recover better results on challenging datasets. 

To answer this problem, we will first question the Dirichlet Process (\acrshort{DP}) on which DHP is built (Section~\ref{PDP}). We demonstrate that alternative, more flexible variations of DP are possible and that they allow for better modelling performances; we call this alternative process the Powered Dirichlet Process (\acrshort{PDP}). 

We then develop the Powered Dirichlet-Hawkes Process (\acrshort{PDHP}) as the reformulation of the DHP model in terms of PDP (Section~\ref{PDHP}). We show this novel, more flexible formulation yields significantly better results than the original DHP and alleviate several hypotheses the original model made. 

Finally, in Section~\ref{MPDHP}, we extend the PDHP to the multivariate case (\acrshort{MPDHP}): we allow \glspl{cluster} to have temporal \glspl{interaction} with each other, and not only with themselves. This is equivalent to a temporal version of every matrix presented in Page~\pageref{figGpesInter}-Fig.~\ref{figGpesInter} --and not only their diagonal as for DHP. 

The final form of the proposed approach is able to model \glspl{interaction} that are:
\begin{itemize}
    \item sparse --by clustering \glspl{entity} together (Chapter~\ref{Chapter-SBMs})
    \item temporal --by associating each \gls{cluster} an \gls{interaction profile}, (Chapter~\ref{Chapter-InterRate})
    \item \gls{content} sensitive --by considering documents instead of \glspl{entity} identifiers
\end{itemize}

We finally conduct a large-scale study of the multivariate temporal \glspl{interaction} between \glspl{cluster} of documents (\glspl{entity} with a \gls{content}) in Section~\ref{MPDHP-Reddit}. We use 12 months of Reddit data specific to news subreddits, and conclude that the impact of \glspl{interaction} is small in this specific dataset. Reassuringly, we also confirm the findings of previous chapters on the sparsity and persistence of \glspl{interaction} in social media.

\section{State of the art and limits}
\label{DHP-sota}

\subsection{A brief overview of temporal clustering of textual documents}
The use of temporal dimension in document clustering has been studied on many occasions; a notable spike of interest happened in 2006. Many authors tackled the problem of inferring time-dependent \glspl{cluster} from models based on LDA \citep{Blei2006DynamicTopicModel,Wang2006TopicsOverTime,Iwata2009}. However, most of these models are parametric, meaning the number of \glspl{cluster} is fixed at the beginning of the algorithm. Depending on the considered time range and the dataset, the number of \glspl{cluster} needs to be fine-tuned with several \gls{independent} runs, making them hardly usable for many real-world applications. In all three references cited, the authors mention that a non-parametric version of the model might be derivable.

In 2008, A. Ahmed \textit{et al.} proposed the Recurrent Chinese Restaurant Process (RCRP) as an answer to this problem \citep{Amr2008RCRP}. Instead of considering a fixed-size dataset, this model can handle a stream of documents arriving in chronological order, and the number of \glspl{cluster} is automatically updated. In this model, time is split into episodes to capture the temporal aspect of \gls{cluster} formation; it considers an integer count of publications within a given time window. A later version of the model from 2010, the Distance-Dependent Chinese Restaurant Process (DD-CRP), tries to alleviate this approximation by replacing fixed-time episodes with a continuous-time sampling function \citep{Blei2010DDCRP}. However, the model still considers integer counts with only their distribution over time changing. The temporal dimension is not explicitly modelled, but instead used as a filter for the data fed to the model. Such slicing (even based on continuous sampling functions) can induce strong bias in the temporal modelling. One of these biases is illustrated in Fig.~\ref{fig:slicingbias}, where observations in the same time slice can be further in time than observations in different ones. Thus, the model is not designed to consider every temporal information in a continuous-time setting.

\begin{figure}
    \centering
    \includegraphics[width=\textwidth]{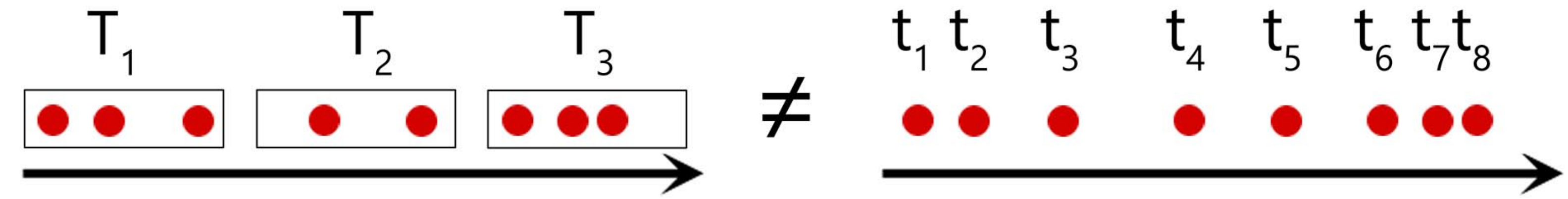}
    \caption[SotA - Time slicing bias]{\textbf{Slicing the data to consider time can introduce bias} -- Here, events in a same time slice can be further in time than observations in different ones. Explicitly modelling time using temporal processes helps getting rid of this bias.}
    \label{fig:slicingbias}
\end{figure}

In 2015, N. Du \textit{et al.} answered this problem by combining the Dirichlet process with the Hawkes process, used to model the appearance of events in a continuous-time setting. The key idea is to replace the counts of a Dirichlet process with the intensity function of the Hawkes process. The resulting Dirichlet-Hawkes process (\acrshort{DHP}) is then used as a prior for clustering documents appearing in a continuous-time stream. The inference is realized with a Sequential Monte-Carlo (\acrshort{SMC}) algorithm. Following DHP, two articles have been published extending the idea: the Hierarchical Dirichlet-Hawkes process \citep{Valera2017HDHP} and the Indian Buffet Hawkes process \citep{Tan2018IBHP}. Another work proposed an \acrshort{EM} algorithm for the inference \citep{Xu2017EMDHP} (it uses a heuristic method to update the number of \glspl{cluster} and cannot handle a stream of documents).

\subsection{Dirichlet-Hawkes Process}
\subsubsection{Dirichlet Process}
\label{sota-DHP-DP}
The Dirichlet Process is typically used in clustering models. It naturally yields a partition over a possibly infinite number of \glspl{cluster}. It is used as a prior on \glspl{entity}' \gls{membership} --such as the ones introduced in Section~\ref{SBM-SotA}.

A well-known metaphor for the Dirichlet process is referred to as ``Chinese restaurant''. The corresponding process is named ``Chinese Restaurant Process'' (\acrshort{CRP}). It can be illustrated as follows: when a $n^{th}$ client arrives in a Chinese restaurant, she will sit at one of the $K$ already occupied tables with a probability proportional to the number of persons already sitting at this table. She can also go to a new table in the restaurant and be the first client to sit there with a probability inversely proportional to the total number of clients already sitting at other tables.
It can be written formally as:
\begin{equation}
\label{eq-CRP-DHP}
    CRP (C_i = c \vert \alpha, C_1, C_2, ..., C_{i-1}) = 
    \begin{cases}
    \frac{N_c}{\alpha + N} \text{ if c = 1, 2, ..., K}\\
    \frac{\alpha}{\alpha + N} \text{ if c = K+1}
    \end{cases}
\end{equation}
\myequations{\ \ \ \ Dirichlet - Dirichlet process}
where $c$ is the \gls{cluster} chosen by the $n^{th}$ customer, $N_k$ is the population of \gls{cluster} $k$, $K$ is the number of already occupied tables and $\alpha\in \mathcal{R}^+$ the concentration parameter.
When the number of clients goes to infinity, this process is equivalent to a draw from a Dirichlet distribution over an infinite number of \glspl{cluster} with a uniform concentration parameter $\alpha$. The form of Eq.\ref{eq-CRP-DHP} is helpful to understand the underlying dynamics of the process and the contribution of seminal works we will detail now. It can be shown that the expected number of \glspl{cluster} after $N$ observations evolves as $\log N$ \citep{Arratia1992}.

The two best-known variations of the regular Dirichlet process that address the ``\gls{rich-get-richer}'' property control are the seminal Pitman-Yor process \citep{PitmanYor1997} and the Uniform process \citep{Wallach2010UnifP}. Each of them can be expressed in a similar form as Eq.\ref{eq-CRP-DHP}, and will be detailed in Section~\ref{PDP}.

\subsubsection{Hawkes Process}
\label{sota-DHP-Hawkes}
\begin{figure}
    \centering
    \includegraphics[width=\textwidth]{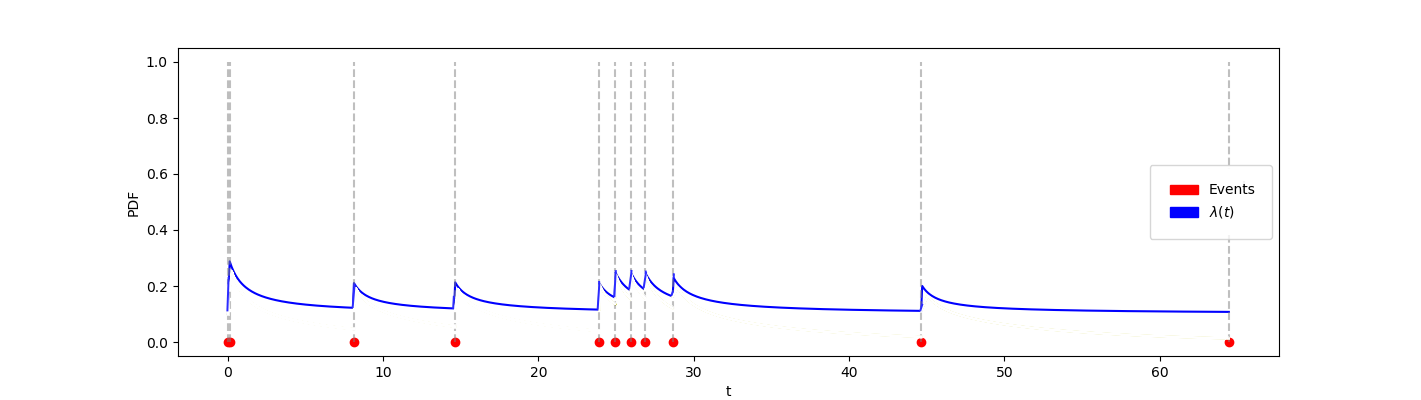}
    \caption[Hawkes - A realization of a Hawkes process]{A realization of a Hawkes process}
    \label{PDHP-Hawkes}
\end{figure}
Hawkes processes are typically used to model sequences of events in continuous time, under the assumption that the probability that new events happen is conditioned by the realization of earlier events. Typically, it can be used to model retweet \glspl{cascade}, as the more retweets there are, the most likely other retweets are to appear \citep{Chen2018HawkesTwitter}.

A Hawkes process is defined as a self-stimulating temporal point process. 
Point processes are fully characterized by an intensity function $\lambda(t)$, which is related to the probability $P(t_{events} \in [t;t+\Delta t])$ of an event happening between $t$ and $t+\Delta t$ by ${\lambda(t) = \lim_{\Delta t \rightarrow 0} \frac{P(t_{events} \in [t;t+\Delta t])}{\Delta t}}$. In the case of Hawkes processes, $\lambda(t)$ is defined conditionally on all the events that happened at times lower than $t$, that it the history up to $t$, noted $\mathcal{H}_{<t}$. We provide a synthetic realization of a Hawkes process as an illustration in Fig.~\ref{PDHP-Hawkes}.
In our setup, we define one Hawkes process for each \gls{cluster}, \gls{independent} from the others. The intensity of the Hawkes process associated with \gls{cluster} $c$ is defined as:
\begin{equation}
\label{eq-DHP-HawkesClus}
    \lambda_c(t \vert \mathcal{H}_{<t, c}) = \sum_{\mathcal{H}_{<t, c}} \vec{\alpha_c}^T \cdot \vec{\kappa}(t_{i,c})
\end{equation}
\myequations{\ \ \ \ Hawkes - Hawkes process}
where $t_{i,c}$ is the time of the $i^{th}$ observed in \gls{cluster} $c$, $\mathcal{H}_{<t, c} = \{t_{i,c} \vert t_{i,c}<t\}_{i=1,2,...}$ is the history of events in \gls{cluster} $c$ up to $t$, $\vec{\alpha_c}$ is a vector of coefficients, $\vec{\kappa}(t)$ is a vector of kernel functions with the same dimension as $\vec{\alpha}$ and $\cdot$ represents the dot product. The kernel functions are defined at the beginning of the algorithm and are not modified afterwards. We will later infer the weights vector $\vec{\alpha}$ to determine which entries of the kernel vector are the most relevant for a given situation. This technique has become standard in Hawkes processes modelling and used in several occasions \citep{Du2012KernelCascade,Yu2017}. 

The likelihood of a combination of $C$ \gls{independent} Hawkes processes can be written:
\begin{equation}
    \begin{split}
        \mathcal{L}(&\vec{\lambda} \vert \mathcal{H}_{<T, c}) = \prod_{c \in C} \mathcal{L}_c(\lambda_c \vert \mathcal{H}_{<T, c})\\
        &= \prod_c e^{-\int_0^T \lambda_c(t)dt}\prod_{t_{i,c}} \lambda_c(t_i \vert \mathcal{H}_{<t_{i,c}, c})\\
        &= e^{-\sum_c \int_0^T \lambda_c(t \vert \mathcal{H}_{<t, c})dt}\prod_{t_{i,c'}, c'=c} \lambda_c(t_{i,c'} \vert \mathcal{H}_{<t_{i,c'}, c})
    \end{split}
\end{equation}
where $T$ is the upper time of the considered observation window, going from $0$ to $T$. From now on, the dependence of Hawkes intensity functions on the history will be implicit for clarity of presentation; we define $\lambda(t) := \lambda(t \vert \mathcal{H}_{<t})$. 

\subsubsection{Dirichlet-Hawkes Process -- Expression}
As the merging of Dirichlet processes and Hawkes processes, Dirichlet-Hawkes processes aim to model self-stimulating \glspl{cluster} in continuous time. 
In their definition of the DHP, the authors of \citep{Du2015DHP} substitute the counts $N_k$ of the DP (Eq.~\ref{eq-CRP-DHP}) with the inferred Hawkes intensities (Eq.~\ref{eq-DHP-HawkesClus}), resulting in the following form for the Dirichlet-Hawkes prior:
\begin{equation}
    \label{eq-DHP}
    P(C_i = c\vert t_i, \lambda_0, \mathcal{H}_{<t_i,c}) = 
    \begin{cases}
    \frac{\lambda_c(t_i)}{\lambda_0 + \sum_{c'} \lambda_{c'}(t_i)} \text{ if c$\leq$C}\\
    \frac{\lambda_0}{\lambda_0 + \sum_{c'} \lambda_{c'}(t_i)} \text{ if c=C+1}
    \end{cases}
\end{equation}
\myequations{\ \ \ \ DHP - Dirichlet-Hawkes process}
where $t_i$ is the arrival time of document $i$. The expression in Eq.~\ref{eq-DHP} is used as a prior that accounts for the publication dynamics in Bayesian modelling. It is used as an \textit{a priori} for a model that explicitly consider the textual \gls{content} of a document.

In Eq.~\ref{eq-DHP}, the authors consider a time-independent intensity function $\lambda(t) = \lambda_0$. This process is used as the Dirichlet-Hawkes equivalent of the concentration parameter $\alpha_0$ in a Dirichlet process (see Eq.~\ref{eq-CRP-PDHP}). It corresponds to a background Poisson process which determines the rate at which new Hawkes processes are launched --that is, at which new dynamic \glspl{cluster} are created.

Given the existence of the underlying Poisson process of parameter $\lambda_0$, the temporal likelihood of the processes associated with all \glspl{cluster} can be written:
\begin{equation}
    \label{eq-likHawkes}
    \begin{split}
        \mathcal{L}(&\vec{\lambda} \vert \mathcal{H}_{<T, c}) = \mathcal{L}(\lambda_0) \prod_c \mathcal{L}_c(\lambda_c)\\
        &= e^{-\int_0^T \lambda_0 dt} \prod_c e^{-\int_0^T \lambda_c(t)dt}\prod_{t_{i,c}} \lambda_c(t_i)\\
        &= e^{-\lambda_0 T - \sum_c \int_0^T \lambda_c(t)dt}\prod_{t_{i,c'}, c'=c} \lambda_c(t_{i,c'})
    \end{split}
\end{equation}
Note that $\mathcal{L}(\lambda_0) = e^{-\int_0^T \lambda_0 dt}$ because no event is ever assigned to the Poisson process; the product over the history of events runs over 0 events, which equals 1 by convention. We recall that the Hawkes intensity dependence on the history of events is implicit; $\lambda(t) := \lambda(t \vert \mathcal{H}_{<t})$. 

\subsubsection{Textual modelling}
\label{sota-DHP-text}
We choose to model the textual \gls{content} of documents as the result of a Dirichlet-Multinomial distribution. This model is purposely simple to ease the understanding, but can easily be replaced by a more complex one. A more complete textual modelling is out of the scope of this presentation, which focuses on the definition of the DHP prior. A document will be associated with a given \gls{cluster} according to word count in every \gls{cluster} and words count in the document only. The generative process is as follows:
\begin{equation}
    \theta_i \sim Dir(\theta_0) \ \ \ \ ;\ \ \ \ \omega_{v,i} \sim Mult(\theta_i)
\end{equation}
where $\theta_i$ is the \gls{cluster} of document $i$, and $\omega_{v,i}$ is the $v^{th}$ word of document $i$. Let $\mathcal{L}_{txt}(\vec{C}_{<i, c} \vert N_{<i,c}, \theta_0)$ be the marginal joint distribution of every document's \gls{cluster} allocation up to the $i^{th}$ one. The likelihood of the $i^{th}$ document belonging to \gls{cluster} $c$ can then be expressed as:
\begin{equation}
\label{eq-likModelLg}
    \begin{split}
        \mathcal{L}(C_i=c \vert N_{<i,c}, n_i, \theta_0) &= P(n_i \vert C_i=c, N_{<i,c}, \theta_0)\\
        &= \frac{\mathcal{L}_{txt}(\vec{C}_{<i, c} \vert N_{<i,c}, \theta_0)}{\mathcal{L}_{txt}(\vec{C}_{<i-1, c} \vert N_{<i,c}, \theta_0)} \\
        &= \frac{\frac{\cancel{\Gamma(\theta_0)}}{\Gamma(N_c+n_i+\theta_0)} \prod_v \frac{\Gamma(N_{c,v} + n_{i,v} + \theta_{0,v})}{\cancel{\Gamma(\theta_{0,v})}}}
        {\frac{\cancel{\Gamma(\theta_0)}}{\Gamma(N_c+\theta_0)} \prod_v \frac{\Gamma(N_{c,v} + \theta_{0,v})}{\cancel{\Gamma(\theta_{0,v})}}}\\
        &= \frac{\Gamma(N_c+\theta_0)}{\Gamma(N_c+n_i+\theta_0)} \prod_v \frac{\Gamma(N_{c,v} + n_{i,v} + \theta_{0,v})}{\Gamma(N_{c,v}+\theta_0)}
    \end{split}
\end{equation}
\myequations{\ \ \ \ DHP - Dirichlet-Multinomial language model}
where $N_c$ is the total number of words in \gls{cluster} $c$ from observations previous to $i$, $n_i$ is the total number of words in document $i$, $N_{c,v}$ the count of word $v$ in \gls{cluster} $c$, $n_{i,v}$ the count of word $v$ in document $i$ and $\theta_0 = \sum_v \theta_{0,v}$.

\subsection{Limits}
\label{DHP-sota-limits}
A common feature of all the models we mentioned in this section is that they use a non-parametric Dirichlet process (\acrshort{DP}) prior (describe in Eq.~\ref{eq-CRP-DHP}) or variations built on it, such as DHP (Eq.~\ref{eq-DHP}) and HDHP. Yet, on several occasions, it has been pointed out that there are no specific reasons to use this process in particular and that alternative forms might work better depending on the dataset. In \citep{Welling2006AlterDP}, the author relaxes several conditions associated with DP and shows that alternative priors are an equally valid choice in Bayesian modelling. In \citep{Wallach2010UnifP}, the authors derive the Uniform process (\acrshort{UP}) and show that it performs better on a document clustering task. In Section~\ref{PDP}, we generalize UP and DP within a more general framework: the Powered Dirichlet process (\acrshort{PDP}). We show it performs better than DP on several datasets. As we show in Section~\ref{PDP} and Section~\ref{PDHP}, considering alternative definitions of the DP significantly leads to significantly different results.

As for DHP itself, it does not work well when the textual information within documents conveys little information, that is when the text is short \citep{Yin2018ShortTextDHP} or when vocabularies overlap significantly. To answer this problem, the authors develop an approach based on Dirichlet process mixtures, which is not designed for continuous-time document streams -- the temporal aspect comes from a sampling function as in \citep{Amr2008RCRP,Blei2010DDCRP}. 
There are other limiting cases for DHP, for instance when temporal information conveys little information (few observations, overlapping temporal intensities) or when documents within textual \glspl{cluster} do not follow the same temporal dynamics. To overcome those limitations, we develop the Powered Dirichlet-Hawkes process in the next section.

In addition, we uncover in Section~\ref{PDHP} new limiting cases in which DHP fails, typically when publication times convey little information (overlapping Hawkes intensities, few observations). We also show there are cases where different documents generated from the same textual \gls{cluster} do not follow the same temporal dynamics (they are associated with a different $\lambda (t)$), which the DHP is not designed to handle. For instance, an article published by a popular newspaper is unlikely to have the same influence on subsequent similar articles (temporal dynamics) as the same article published by a less popular newspaper. Textual \gls{content} is not perfectly correlated to publication times.

\section{Powered Dirichlet Process -- Alleviate the ``\gls{rich-get-richer}'' assumption}
\label{PDP}

\subsection{Introduction}
The limits of DHP raised in Section~\ref{DHP-sota-limits} can be overcome by redefining the Dirichlet Process (\acrshort{DP}) it is built on. From a broader perspective, most existing works based on Dirichlet Processes can be revisited by considering alternative forms for Dirichlet-like priors. These priors have been used extensively in Bayesian clustering over the last years. A non-exhaustive list of application includes medicine, \citep{Guimera2013DrugdrugSBM}, natural language processing \citep{Blei2003LDA,Yin2014}, genetics \citep{Qin2003,Jensen2008,McDowell2018}, recommender systems \citep{Airoldi2008MMSBM,Antonia2016AccurateAndScalableRS}, sociology \citep{Guimera2012HumanPrefSBM,CoboLopez2018SocialDilemma}, etc. 

The key idea of Bayesian clustering is to simulate a corpus of \gls{independent} observations by drawing them from a set of latent variables (\glspl{cluster}). Those \glspl{cluster} are each associated with a probability distribution on the observations, whose parameters are drawn from a prior distribution, as we will formulate mathematically later. Now, an often desirable property of Bayesian models is to make them nonparametric. In our case, it means that both the number of \glspl{cluster} and their associated distributions are inferred. A very popular prior on \glspl{cluster} distributions that allows this is the Dirichlet process. It includes a chance for a new \gls{cluster} to be created in the prior probability of a distribution (often when an observation is not likely to be explained by existing \glspl{cluster}). Otherwise, the observation is associated \textit{a priori} with an existing \gls{cluster} with a probability proportional to that \gls{cluster}'s population. Note that the model the prior is associated with might use this prior information, but might as well ignore it by design.

However, the Dirichlet process (and the related Pitman-Yor process) prior comes with a strong hypothesis on the way observations are allocated to various \glspl{cluster}: the \textit{\gls{rich-get-richer}} property \citep{Ferguson1973}. A new observation \textit{a priori} belongs to a \gls{cluster} with a probability proportional to the number of observations already present in the \gls{cluster} (see Eq.\ref{eq-CRP-DHP}); large \glspl{cluster} have a greater chance to get associated with new observations. This modelling implies a strong assumption on the way data is generated. It has already been pointed out \citep{Welling2006AlterDP} that there is a need for more flexible priors.

\subsection{Motivation}
\label{motivation}

The need for alternative priors is particularly relevant in the case of imbalanced data and scale-dependent clustering. A \gls{cluster} made of fewer \glspl{entity} might go unnoticed due to a \gls{rich-get-richer} prior. As an example of the imbalance problem, consider a case where data is processed sequentially --which is often the case when it comes to the Dirichlet process prior. The first observation from a new \gls{cluster} would then have a much larger \textit{a priori} probability to belong to a populated but irrelevant \gls{cluster}, than to open a new one (this probability decreases as $\frac{1}{N_{obs}}$). This typically happens when sampling topics from news streams \citep{Wallach2010UnifP,Xu2021ClusteringShortTexts}. In the case of scale-dependent clustering, a similar problem arises. Consider clustering people pinpointed on a map. Tiny \glspl{cluster} (at the scale of cities, for instance) might go unnoticed at larger scales (countries, for instance). To spot city \glspl{cluster} on a world map, the ``\gls{rich-get-richer}'' assumption becomes irrelevant and a ``rich-get-no-richer'' prior would be preferred \citep{Wallach2010UnifP}; the optimal solution might as well be in-between these two priors, as in Fig.\ref{fig-Antoni}. We design a method to bridge the variety of possible priors between the Dirichlet process and the Uniform process in a continuous fashion. By generalizing existing works, our method shows there exist Dirichlet-based priors that exhibit a yet unexplored class of behaviours, such as ``rich-get-less-richer'', ``rich-get-more-richer'' and ``poor-get-richer''.

Little effort has been put into exploring alternative forms of priors for nonparametric Bayesian modelling. In this section, we address this problem by deriving a more general form of the Dirichlet process that explicitly controls the importance of the ``\gls{rich-get-richer}'' assumption. Explicitly, we derive the Powered Chinese Restaurant Process (PCRP) that generalizes state-of-the-art works such as \acrshort{UP} \citep{Wallach2010UnifP} and \acrshort{DP}. We show that controlling the ``\gls{rich-get-richer}'' prior of simple models yields better results on synthetic and real-world datasets.

This work is motivated by the need to control the importance of the ``\gls{rich-get-richer}'' assumption in Dirichlet process (\acrshort{DP}) priors. Developing such a more permissive prior would impact many works based on the vanilla Dirichlet Process. In our case, we are particularly interested in the implication of DHP for the reasons discussed in Section~\ref{DHP-sota-limits}.

The ``\gls{rich-get-richer}'' property of the DP may not always be the most suitable prior for modelling a given dataset. The usual motivation for using a DP prior is that a new observation has a probability of being assigned to any \gls{cluster} proportional to its population (or intensity function, as in \citep{Du2015DHP}) in the absence of external information (such as inter-points distance in case of spatial clustering, for instance). However, this assumption might be flawed in several cases. 

Typically, most state-of-the-art works rely on tuning a parameter $\alpha$ (see Eq.\ref{eq-CRP-DHP}) to get the ``right'' number of \glspl{cluster} (this parameter shifts the distribution of the number of \glspl{cluster} as $\mathbb{E}(K \vert N) \propto \alpha \log N$ with $K$ the number of \glspl{cluster} and $N$ the number of observations). However, we argue this is a bad practice. Imagine sampling topics in a news stream: there is no specific reason for topics to appear at a rate $\alpha \log N$ as in the regular DP. The logarithmic dependence on $N$ cannot be tuned using $\alpha$. Such prior is then unfit to describe the data correctly as the number of observations grows; worst, it can lead the model it is coupled with on the wrong track.
Moreover, when considering observations streams \citep{Wallach2010UnifP,Xu2021ClusteringShortTexts}, there is usually no specific \textit{a priori} reason for a new observation to belong to a \gls{cluster} with a probability depending linearly on the \gls{cluster}'s size, as in the regular Dirichlet and Pitman-Yor processes (and variants). The order of appearance then plays too important of a role. Typically, newer observations might be associated with an existing large \gls{cluster}, despite being radically different from the points it comprises, due to a too influential ``\gls{rich-get-richer}'' assumption. %
To alleviate those assumptions, we develop a more general form of the DP process allowing a natural control of the ``\gls{rich-get-richer}'' property.

\subsection{Background}
\label{SotA-PDP}

\subsubsection{Previous works}

\paragraph{Dirichlet process}
The standard Dirichlet process has already been detailed in Section~\ref{sota-DHP-DP}. For completeness, we recall its expression as a Chinese Restaurant Process:
\begin{equation}
    CRP (C_i = c \vert \alpha, C_1, C_2, ..., C_{i-1}) = 
    \begin{cases}
    \frac{N_c}{\alpha + N} \text{ if c = 1, 2, ..., K}\\
    \frac{\alpha}{\alpha + N} \text{ if c = K+1}
    \end{cases}
\end{equation}

\paragraph{Uniform process}
A first process that breaks the ``\gls{rich-get-richer}'' property is the Uniform process. It has been used on some occasions \citep{Qin2003,Jensen2008} without focus on the prior itself. It has later been formalized and studied in comparison with the regular Dirichlet and Pitman-Yor processes \citep{Wallach2010UnifP}. It can be written as follows:
\begin{equation}
\label{eq-UP-PDP}
    UP (C_i = c \vert \alpha, C_1, C_2, ..., C_{i-1}) = 
    \begin{cases}
    \frac{1}{\alpha + K} \text{ if c = 1, 2, ..., K}\\
    \frac{\alpha}{\alpha + K} \text{ if c = K+1}
    \end{cases}
\end{equation}
This formulation completely gets rid of the ``\gls{rich-get-richer}'' property. The probability of a new client joining an occupied table is a uniform distribution over the number of occupied tables; it does not depend on the tables' population. In \citep{Wallach2010UnifP}, it has been shown that the expected number of tables evolves with $N$ as $\sqrt{N}$. Removing the ``\gls{rich-get-richer}'' property leads to a flat prior. As we show later, our formulation allows to retrieve such flat priors and thus generalizes the Uniform Process.

\paragraph{Pitman-Yor process}
Following the Chinese Restaurant process metaphor, the Pitman-Yor process \citep{PitmanYor1997,Ishwaran2003} proposed to incorporate a \textit{discount} parameter when a client opens a new table. 
Mathematically, the process can be formulated as:
\begin{equation}
\label{eq-PY}
    PY (C_i = c \vert \alpha, \beta, C_1, C_2, ..., C_{i-1}) = 
    \begin{cases}
    \frac{N_c - \beta}{\alpha + N} \text{ if c = 1, 2, ..., K}\\
    \frac{\alpha + \beta K}{\alpha + N} \text{ if c = K+1}
    \end{cases}
\end{equation}

The introduction of the parameter $\beta>0$ increases the probability of creating new \glspl{cluster}. A table with a small number of customers has significantly fewer chances to gain new ones, while the probability of opening a new table increases significantly. It can be shown that the number of tables evolves with the number of clients $N$ as $N^{\beta}$ \citep{Sudderth2009,Wallach2010UnifP,Goldwater2011}. However, this process does not control the arguable ``\gls{rich-get-richer}'' hypothesis \citep{Welling2006AlterDP}, since the relation to the population of a table remains linear; it only shifts this dependence of a value $\beta$. It makes so by creating \glspl{cluster} based on the number of existing \glspl{cluster} and the total number of observations, but not according to the population of already existing \glspl{cluster}. Those play the same role in the Pitman-Yor process as in the DP. The Pitman-Yor process thus comes with two limitations. Firstly, since $\beta>0$, it cannot modify the process to generate fewer \glspl{cluster}. Secondly, the discount parameter does not modify the linear dependence on previous observations for \gls{cluster} allocations --- rich still get richer; the prior is as peaky on large \glspl{cluster} as before. The present section offers to address those two limitations.

\paragraph{Other extensions}
Another similar prior, the Power-law Indian Buffet Process, has been proposed so that a realization would yield a number of \glspl{cluster} obeying a power-law as the number of observations increases \citep{Teh2009IBDwithPL}. This formulation can be seen as a generalization of the Pitman-Yor process; it adds an additional parameter that sums with $N$ in the denominator of Eq.~\ref{eq-PY}. However, the posterior probability for a new customer to belong to a \gls{cluster} depends linearly on each \gls{cluster}'s size, and the ``\gls{rich-get-richer}'' hypothesis is preserved.

Finally, the Generalized Gamma Process proposed a similar discount idea to increase the probability of opening new \glspl{cluster} in \citep{Lijoi2017GeneralizedGamma}. The proposed prior (\citep{Lijoi2017GeneralizedGamma}-Eq.4) modifies a \gls{cluster}'s probability to get chosen by subtracting a constant term from each \gls{cluster}'s population. Thus, the ``\gls{rich-get-richer}'' property is not alleviated in their approach either, since the dependence on the \gls{cluster}'s population is still linear. As for the \acrshort{PY} process, this formulation only allows to increase the number of \glspl{cluster} and does not alleviate the ``\gls{rich-get-richer}'' hypothesis.

\subsubsection{Contributions}
In the next section, we derive the Powered Dirichlet Process (\acrshort{PDP}) that allows controlling the ``\gls{rich-get-richer}'' property. The process is also referred to as the Powered Chinese Restaurant Process (PCRP) due to its formulation being close to the vanilla metaphor. This new process generalises state-of-the-art works. Such generalization allows to define unexplored classes of \textit{a priori} hypotheses: poor-get-richer, rich-get-no-richer (Uniform process), rich-get-less-richer, \gls{rich-get-richer} (DP), and rich-get-more-richer. In doing so, we define the Powered Dirichlet-Multinomial distribution. We detail some key properties of the Powered Dirichlet Process (convergence, expected number of \glspl{cluster}). Finally, we show that controlling the ``\gls{rich-get-richer}'' prior of simple models yields better results on synthetic and real-world datasets.

\subsection{The model}

\subsubsection{The Dirichlet-Multinomial distribution}
As explained earlier, the Dirichlet distribution yields a collection of positive variables whose sum equals 1. The Multinomial distribution yields a sum of counts in each of the $K$ ``boxes'' (here \glspl{cluster}) following $N$ draws from an identical distribution over those boxes. 
We recall the definition of the Dirichlet distribution and of the Multinomial distribution:
\begin{equation}
\label{eq-Mult}
    Dir(\vec{p} \vert \vec{\alpha}) = \frac{\prod_k p_k^{\alpha_k - 1}}{B(\vec{\alpha})} \ \ \ \, \ \ \ \ Mult(\vec{N} \vert N, \vec{p}) = \frac{\Gamma(\sum_k N_k + 1)}{\prod_k \Gamma(N_k + 1)} \prod_k p_k^{N_k} 
\end{equation}
with $\vec{N} = (N_1, N_2, ..., N_K)$ where $N_k$ is the integer number of draws assigned to \gls{cluster} $k$, $N = \sum_k N_k$ the total number of draws, $\Gamma(x)=(x-1)!$ is the gamma function, and ${B(\vec{x}) = \prod_k \Gamma(x_k)/\Gamma(\sum_k x_k)}$ is the beta function.

The Dirichlet-Multinomial distribution merges both distributions: the probabilities for each ``box'' in the Multinomial distribution are sampled once from a Dirichlet distribution. As we will show later, the Dirichlet process can be derived from this distribution. The Dirichlet-Multinomial distribution is defined as follows:
\begin{equation}
\label{eq-DirMult}
\begin{split}
    p(\vec{N} \vert \vec{\alpha}, n) &= \int_{\vec{p}} p(\vec{N}\vert \vec{p}, n)p(\vec{p} \vert \vec{\alpha}) d\vec{p}\\
    &= \frac{(n!)\Gamma(\sum_k \alpha_k)}{\Gamma(n+\sum_k \alpha_k)} \prod_{k=1}^{K}\frac{\Gamma(N_k + \alpha_k)}{(N_k!)\Gamma(\alpha_k)}\\
    \text{where }&\Vec{p} \sim Dir(\vec{p} \vert \vec{\alpha}) \, ; \, \Vec{N} \sim Mult(\vec{N} \vert n, \Vec{p})
\end{split}
\end{equation}
In Eq.\ref{eq-DirMult}, we sample $n$ values over a space of $K$ distinct \glspl{cluster} each with probability $\vec{p}=(p_1, p_2, ..., p_K)$, using a Dirichlet prior with parameter $\vec{\alpha} = (\alpha_1, \alpha_2, ..., \alpha_K)$. To derive the Dirichlet process equation, we must compute a new observation's conditional distribution to belong to any \gls{cluster} given the allocation of all the previous random variables when $K \rightarrow \infty$. 

\subsubsection{Powered conditional Dirichlet prior}
In the derivation of the standard Dirichlet-Multinomial posterior predictive, we consider a single draw from the Multinomial distribution (i.e., a categorical distribution) with a Dirichlet prior on the parameter $\vec{p}$. Usually, this prior is linearly dependent on previous draws from the distribution. We propose to modify this assumption by using a Dirichlet prior that depends non-linearly on the history of draws as:
\begin{equation}
\label{eq-DirMultPrior}
    Dir_r(\vec{p} \vert \vec{\alpha}, \vec{N}) = \frac{1}{B(\vec{\alpha} + \vec{N}^r)} \prod_k p_k^{\alpha_k + N_k^r - 1}
\end{equation}
In Eq.\ref{eq-DirMultPrior}, the vector $\vec{N}^r$ shifts the parameter $\vec{\alpha}$ according to the count of draws allocated to each \gls{cluster} $k$ up to the n$^{th}$ draw. The parameter $r \in \mathbb{R}$ controls the intensity of this shift for each entry of $\vec{N}$.

We demonstrate that the Powered Dirichlet distribution is a conjugate prior of the Multinomial distribution, by writing Eq.\ref{eq-DirMultPrior} as:
\begin{equation}
\begin{split}
\label{eq-DirMultPrior2}
    Dir_r(&\vec{p} \vert \vec{\alpha}, \vec{N}) = \frac{1}{B(\vec{\alpha} + \vec{N}^r)} \prod_k p_k^{\alpha_k - 1} \prod_k p_k^{N_k^r}\\
    &\stackrel{\text{Eqs.\ref{eq-Mult}}}{=} \frac{B(\vec{\alpha}) \prod_k N_k^r!}{B(\vec{\alpha} + \vec{N}^r) (\sum_k N_k^r)!} Dir(\vec{p} \vert \vec{\alpha}) Mult(\vec{N}^r \vert \sum_k N_k^r, \vec{p})\\
    &\propto Dir(\vec{p} \vert \vec{\alpha}) Mult(\vec{N}^r \vert \sum_k N_k^r, \vec{p})
\end{split}
\end{equation}
where the prior on vector $\vec{N}$ is a regular Multinomial distribution of parameter $N=\sum_k N_k^r$. Note that for certain values of $r$, the vector $\vec{N}^r$ might not be made of integer values; the resulting Multinomial prior on $\vec{N}^r$ must then be expressed in terms of $\Gamma$ functions (see Eq.\ref{eq-Mult}) to be valid for $\vec{N}^r \in \mathbb{R}^{\vert \vec{N} \vert}$. 
Distributions of non-integer counts are not new in the literature \citep{Khurshid2005ConfidenceIntervNegBinDist,McCarthy2012DifferentialEA,Ghitza2013} and are essentially allowed by the generalized definition of the factorial function in terms of the gamma function. 
When $r=1$, we recover the standard Dirichlet-Multinomial prior on $\vec{p}$ for the $n^{th}$ draw; the history of draws $\vec{N}$ can be expressed as the result of $N$ \gls{independent} draws of equal probability $\vec{p}$. When $r \neq 1$, the prior on $\vec{N}$ is sampled from a Multinomial distribution in which the number of samples drawn depends on $r$ as $\sum_k N_k^r$. For instance, let $\vec{N} = (1, 2)$ and $r=2$: the resulting powered conditional Dirichlet prior would then be sampled from a Multinomial distribution $Mult(\vec{N}=(1,4) \vert N=5, \vec{p}=(p, 1-p))$. 

\subsubsection{Posterior predictive}
We now derive the posterior distribution for the $n^{th}$ draw to belong to a \gls{cluster} $c$ given all previous draws. We assume that $\vec{C_-}$ represents all previous realizations up to $n-1$, that is, the \gls{cluster} to which each previous draw has been associated. For simplicity of notation, we define the population of a \gls{cluster} $k$ at time $n-1$ as $N_k = \vert \{C_i \vert i=k\}_{i=1, 2, ..., n-1} \vert$. We are now looking at the probability distribution of its $n^{th}$ draw to belong to $c$. It is expressed as the probability of a draw from the categorical distribution given all previous observations (because there is only one new draw, it is the same as a Multinomial distribution with parameter $N=1$) combined with the powered Dirichlet prior defined Eq.\ref{eq-DirMultPrior}. Then:
\begin{equation}
    \label{eq-DirCatr}
    \begin{split}
        DirCat_r(C_n = c \vert \vec{\alpha}, \vec{C_-}) &= \int_{\vec{p}} Cat(C_n = c \vert \vec{p}) \underbrace{Dir_r(\vec{p} \vert \vec{\alpha}, \vec{N})}_{\textbf{Eq.\ref{eq-DirMultPrior}}} \\
        =& \int_{\vec{p}} \frac{1}{B(\vec{\alpha} + \vec{N}^r)} \prod_k p_k^{c_k + \alpha_k + N_k^r - 1}\\
        &= \frac{B(\vec{c} + \vec{\alpha} + \vec{N}^r)}{B(\vec{\alpha} + \vec{N}^r)}
    \end{split}
\end{equation}
where $\vec{c}$ is a vector of the same length as $\vec{\alpha}$ whose $c^{th}$ entry equals to 1, and 0 anywhere else. Alternative demonstrations of this result are possible \citep{Wilks1992,Sethuraman1994}.

\subsubsection{Powered Chinese Restaurant process}
We finally derive an expression for the Powered Chinese Restaurant process from Eq.\ref{eq-DirCatr}. We recall that $N_k = \vert \{C_{-i} \vert i=k\}_{i=1, 2, ..., n-1} \vert$. Taking back the conditional probability for the $n^{th}$ observation to belong to \gls{cluster} $c$ (Eq.\ref{eq-DirCatr}), we have:
\begin{equation}
\label{eq-derivPCRP}
\begin{split}
    p(C_n = &c \vert \vec{C_-}, \vec{\alpha}) = DirCat_r(C_n=c \vert \vec{C_-}, \vec{\alpha})\\
    =& B(\vec{c}+\vec{N^r}+\vec{\alpha}) / B(\vec{N^r}+\vec{\alpha})\\
    =& \Gamma(N_c^r + \alpha_c + 1) \frac{\prod_{k \neq c} \Gamma(N_k^r + \alpha_k)}{\Gamma(1 + \sum_k N_k^r + \alpha_k)} \frac{\Gamma(\sum_k N_k^r + \alpha_k)}{\prod_{k} \Gamma(N_k^r + \alpha_k)}\\
    =& \frac{(N_c^r + \alpha_c)}{\sum_k N_k^r + \alpha_k} \frac{\prod_{k} \Gamma(N_k^r + \alpha_k)}{\Gamma(\sum_k N_k^r + \alpha_k)} \frac{\Gamma(\sum_k N_k^r + \alpha_k)}{\prod_{k} \Gamma(N_k^r + \alpha_k)}\\
    =& \frac{N_c^r + \alpha_c}{\sum_k N_k^r + \alpha_k}
\end{split}
\end{equation}

Every \gls{cluster} with $N_c=0$ (empty \glspl{cluster}) has an identical probability of getting chosen. Besides, the result is identical if either of them gets chosen. Therefore, we can express the probability of choosing any empty \glspl{cluster} as a function of $\alpha=\sum_k \alpha_k$.

Finally, taking the limit $K \rightarrow \infty$ and defining the limit value $\lim_{K \rightarrow \infty} \sum_k^K \alpha_k = \alpha$, we find the Powered Chinese Restaurant Process ($PCRP$):
\begin{equation}
\label{eq-PowCRP}
    PCRP (C_i = c \vert \alpha, C_1, C_2, ..., C_{i-1}) = 
    \begin{cases}
    \frac{N_c^r}{\alpha + \sum_k^K N_k^r} \text{ if c = 1, 2, ..., K}\\
    \frac{\alpha}{\alpha + \sum_k^K N_k^r} \text{ if c = K+1}
    \end{cases}
\end{equation}
\myequations{\ \ \ \ PDP - Powered Dirichlet Process}

The formal derivation of the Powered Chinese Restaurant process in Eq.\ref{eq-PowCRP} and the demonstration of its link to the conditional Dirichlet prior on $\vec{p}$ are the first main contribution of this section. Besides, this demonstration uncovers the link between the prior in Eq.\ref{eq-DirMultPrior} and an exotic formulation of the Multinomial distribution, which has never been considered before.
Most importantly, it highlights that it originates from sampling every new observation from \gls{independent} weighted Multinomial distributions of different parameters $N_r = \sum_k x_k^r$. 
As stated in the introduction, special cases of the process have already been used in some occasions \citep{Qin2003,Jensen2008,Wallach2010UnifP} but never demonstrated. Furthermore, this formulation generalizes the Uniform process when $r\rightarrow 0$ \citep{Wallach2010UnifP}, the Dirichlet process when $r \rightarrow 1$ and the Pitman-Yor process when $r_k(N_k) = (log(1-\beta/N_k) + log(N_k))/log(N_k)$ and $\alpha(K) = \alpha + \beta K$ (see Eq.\ref{eq-PY}, we recall that $e^{\log x} = x$). The present expression explicitly allows for controlling the importance of the ``\gls{rich-get-richer}'' property as well as recovering state-of-the-art processes. 

\begin{figure}
    \centering
    \includegraphics[width = 0.7\textwidth]{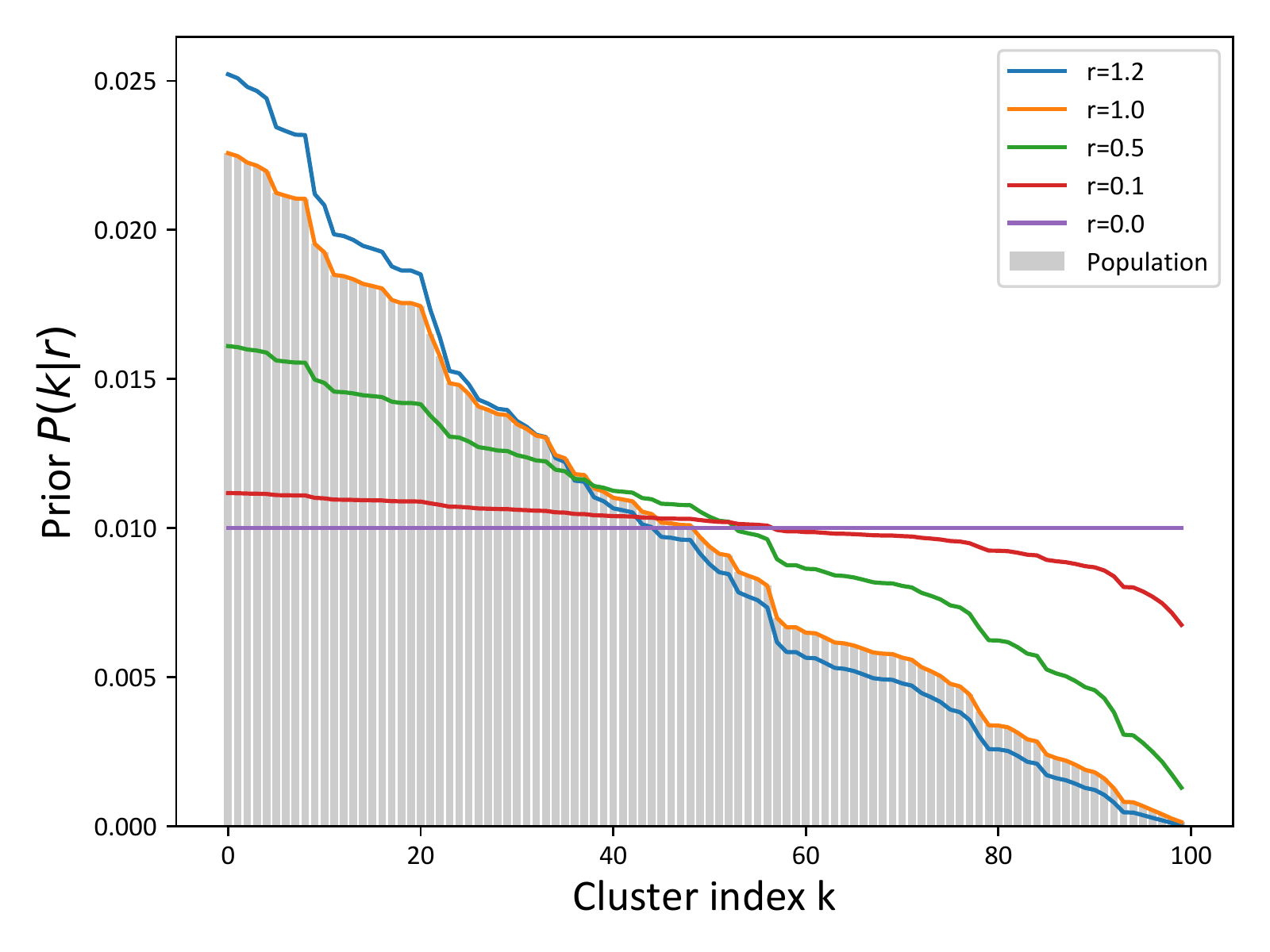}
    \caption[PDP - Effect of $r$ on the PDP]{Effect of $r$ on the Powered Chinese Restaurant process prior probability. The grey bars represent the population in each \gls{cluster} at a given time. The solid lines represent the prior probability of the next observation to belong to each of these. For $r=1$, we recover the Dirichlet Process prior.}
    \label{fig-prior}
\end{figure}
We illustrate the change in prior probability for an existing \gls{cluster} to get chosen induced by the Powered Chinese Restaurant process in Fig.\ref{fig-prior} -- we do not plot the prior probability for a new \gls{cluster} to be created. This figure plots the population of \glspl{cluster} (grey bars) and their associated prior probability of getting chosen. When $r>1$, the most populated \glspl{cluster} are associated with a more significant prior probability than in the standard CRP, whereas the less populated ones have even fewer chances to get chosen; rich-get-more-richer, the prior on population is peakier on large \glspl{cluster}. On the other hand, when $r<1$, most populated \glspl{cluster} have fewer chances to get chosen than in CRP, whereas less populated ones have an increased chance of getting chosen; rich-get-less-richer, the prior on population is flatter across \glspl{cluster} of different sizes. In the limit case $r=0$, the \glspl{cluster}' population does not play any role anymore; rich get-no-richer, the prior is flat over all \glspl{cluster}. Note that if we wanted to represent the Pitman-Yor process prior in this figure, it would correspond to the plot for $r=1$ vertically shifted of $-\beta$ (such as defined Eq.\ref{eq-PY}) leading to an increased probability of creating a new \gls{cluster} of $\beta K$ (not represented in the plot) \citep{PitmanYor1997}. Varying the parameter $\alpha$ of $\Delta \alpha$ plays a similar role as $\beta$ in this situation. It would uniformly shift the prior probability for each existing \gls{cluster} to get chosen by $\frac{\Delta \alpha}{K}$ and increase the probability of creating a new one by $\Delta \alpha$. For Both Pitman-Yor and Dirichlet processes, the linear dependence of each \gls{cluster}'s population does not change.

In Fig.~\ref{fig-prior}, we understand that the Powered Chinese Restaurant process allows for defining priors from \glspl{cluster} population that are not possible when tuning the Chinese Restaurant or Pitman-Yor processes. Introducing non-linearity in the dependence on previous observations allows giving more or less importance to the ``\gls{rich-get-richer}'' property.

\subsection{Properties of the Powered Chinese Restaurant process}
We will now investigate some key properties of the Powered Chinese Restaurant process. We recall that $N_k$ is the population of the \gls{cluster} $k$, and $N = \sum_k N_k$.

\subsubsection{Convergence}
\begin{prop}
\label{th-statio}
For $N \rightarrow \infty$, the Powered Chinese Restaurant process converges towards a stationary distribution. When $r<1$, it converges towards a uniform distribution over all the possible \glspl{cluster}, and when $r>1$, it converges towards a Dirac distribution on a single \gls{cluster}.
\end{prop}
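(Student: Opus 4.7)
The strategy is to reduce the long-run behaviour of the PCRP to the stability analysis of a deterministic ODE on the simplex, and then lift the conclusion back to the random process via stochastic approximation. I would first focus on the proportions $f_k(N) := N_k(N)/N$ of the $K$ currently populated clusters (conditioning on a prescribed set of clusters ever being opened; the creation probability $\alpha/(\alpha + \sum_j N_j^r)$ is small enough that this event can be handled separately). From Eq.~\ref{eq-PowCRP} one obtains the Robbins--Monro-type recursion
\begin{equation*}
f_k(N+1) - f_k(N) \;=\; \frac{1}{N+1}\bigl(\phi_k(f(N)) - f_k(N)\bigr) \;+\; \xi_{k,N+1},
\end{equation*}
where $\phi_k(f) := f_k^r / \sum_j f_j^r$ and $\xi_{k,N+1}$ is a martingale increment of conditional variance $O(1/(N+1)^2)$. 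The classical ODE method (Kushner--Yin, Bena\"\i m) then guarantees that $f(N)$ almost surely tracks the mean-field ODE $\dot f_k = \phi_k(f) - f_k$ and converges to the set of its stable equilibria.

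Next I would classify the fixed points and their stability directly. On the support $S = \{k : f_k > 0\}$, the equation $\phi_k(f) = f_k$ forces $f_k^{r-1}$ to be constant across $k \in S$, so every equilibrium is uniform on its support. A convenient way to read off stability is to study any pairwise ratio $\rho_{kj}(t) := f_k(t)/f_j(t)$, which satisfies
\begin{equation*}
\dot \rho_{kj} \;=\; \frac{f_k}{f_j\,\sum_m f_m^r}\bigl(f_k^{r-1} - f_j^{r-1}\bigr).
\end{equation*}
For $r<1$ the map $x \mapsto x^{r-1}$ is strictly decreasing, so $\dot\rho_{kj}$ has the sign of $f_j - f_k$ and every pairwise ratio is driven monotonically towards $1$: the uniform distribution on the full support is the unique attractor. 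For $r>1$, $x \mapsto x^{r-1}$ is strictly increasing, the same display shows that any size difference is amplified, every uniform-on-a-subset equilibrium is linearly unstable, and the only remaining stable equilibria are the Dirac masses on a single coordinate. An equivalent global argument uses $V(f) := \sum_k f_k^r$ as a Lyapunov function: by concavity of $x \mapsto x^r$ for $r<1$, $V$ is driven up to its maximum $K^{1-r}$ attained at the uniform point; by convexity for $r>1$, $V$ is driven up to its maximum $1$ attained at the Dirac masses.

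The delicate step, and the main obstacle, is to promote convergence to the stable set into almost-sure convergence to a single point in the $r>1$ regime. The ODE method only tells us that the limit points lie on the stable set, which in this case consists of $K$ distinct Dirac masses, and a priori one must still exclude both stabilisation at the unstable uniform equilibrium and recurrent oscillation between the different Dirac masses. Both issues are classical in the theory of reinforced urn processes: Pemantle-type non-convergence results for stochastic approximations with non-degenerate noise rule out almost-sure trapping at a linearly unstable equilibrium, and a martingale argument exploiting convexity of $x \mapsto x^r$ --- $\sum_k f_k(N)^r$ is an asymptotic submartingale bounded by $1$, hence almost surely convergent by Doob's theorem --- combined with the fixed-point analysis above forces the limit onto a single coordinate, whose identity is a random variable determined by the early fluctuations. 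The $r<1$ regime is conceptually simpler, because the stable set is the singleton $\{(1/K,\dots,1/K)\}$ and no selection mechanism is required. Finally, the vanishing-creation-probability clause that was set aside at the beginning should be revisited to make precise the sense in which ``all possible clusters'' is to be interpreted: for $r>1$ the expected number of ever-opened clusters is finite, whereas for $r<1$ it grows unboundedly and the convergence statement has to be read as uniformity of the empirical distribution on the currently active support.
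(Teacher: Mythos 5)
Your argument is correct, and it reaches the paper's conclusion by a genuinely different route. The paper's proof works directly with the gap between the two selection probabilities, $\Delta p(N)=p_1(N)-p_2(N)$ for $K=2$ clusters, Taylor-expands $(N_i+1)^r$, and reads off the sign of the expected one-step change of that gap, which reduces to the sign of $(N_1^{r-1}-N_2^{r-1})/(N_1^r-N_2^r)$; the stationary distributions are then identified as the fixed points of this expected drift. You instead recast the occupancy proportions as a Robbins--Monro scheme, extract the mean-field ODE $\dot f_k=\phi_k(f)-f_k$, classify its equilibria and their stability via the pairwise ratios $\rho_{kj}$ and the Lyapunov function $V(f)=\sum_k f_k^r$ (whose monotonicity along trajectories follows from Cauchy--Schwarz), and then lift the deterministic picture back to the random process with stochastic-approximation and martingale tools. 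The elementary core is the same in both treatments --- the monotonicity of $x\mapsto x^{r-1}$ on either side of $r=1$ --- but what your route buys is precisely what the paper's drift computation leaves open: for $r>1$ an expected-drift sign does not by itself exclude almost-sure trapping at the unstable uniform equilibrium or recurrent wandering between the $K$ Dirac masses, and your appeal to Pemantle-type non-convergence results plus the bounded asymptotic submartingale $\sum_k f_k(N)^r$ closes exactly that gap (for fixed $K$). The cost is heavier machinery, and one residual technical wrinkle you could acknowledge is that for $r<1$ the vector field $\phi$ is not Lipschitz at the boundary of the simplex, though the inward-pointing drift makes this harmless. Your closing remark on cluster creation is also more careful than the paper's: for $r<1$ the creation probability decays like $N^{-(1+r^2)/2}$ and is not summable, so the support genuinely keeps growing and the limit statement must indeed be read as asymptotic uniformity on the active support, a point the paper's two-cluster reduction silently sidesteps.
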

\begin{proof}
We consider a simple situation where only 2 \glspl{cluster} are involved. The generalization to the case where $K$ \glspl{cluster} are involved is straightforward. When the \glspl{cluster}' population is large enough, we make the following Taylor approximation:
\begin{equation}
\label{eq-taylorapprox}
    \begin{split}
        (N_i+1)^r &= N_i^r(1 + \frac{1}{N_i})^r = N_i^r + r N_i^{r-1} + \mathcal{O}(N^{r-2})
    \end{split}
\end{equation}

Since the population of a \gls{cluster} $N_i$ is a non-decreasing function of $N$, we assume that first order Taylor approximation holds when $N \rightarrow \infty$. Given \glspl{cluster} population at the $N^{th}$ observation, we perform a stability analysis of the gap between probabilities $\Delta p(N) = p_1(N)-p_2(N)$. We recall that the probability for \gls{cluster} $i$ to get chosen is $p_i(N) = N_i^r / (\sum_k N_k^r)$ and that either of the \glspl{cluster} is chosen with this probability at the next step (at step $N+1$, $\Delta p(N+1) = p_1(N+1)-p_2(N)$ with probability $p_1(N)$ and $\Delta p(N+1) = p_1(N)-p_2(N+1)$ with probability $p_2(N)$). Explicitly the variation of the gap between probabilities when $N$ grows is written as:
\begin{equation}
\label{eq-varGap}
    \begin{split}
        &\frac{p_1(N) (p_1(N+1) - p_2(N)) +  p_2(N) (p_1(N) - p_2(N+1)) - \Delta p(N)}{\Delta p(N)}\\
        &\stackrel{\text{Eq.\ref{eq-taylorapprox}}}{\approx} \frac{1}{p_1(N) - p_2(N)} \times \left( p_1(N) \frac{N_1^r - N_2^r + r N_1^{r-1}}{N_1^r + N_2^r + r N_1^{r-1}}  + p_2(N) \frac{N_1^r - N_2^r - r N_2^{r-1}}{N_1^r + N_2^r + r N_2^{r-1}} \right)\\
        &\ \ =\frac{2 r N_1^r N_2^r}{(N_1^r+N_2^r+rN_1^{r-1})(N_1^r+N_2^r+rN_2^{r-1})}\left( \frac{N_1^{r-1} - N_2^{r-1}}{N_1^r-N_2^r} \right)
    \end{split}
\end{equation}

We see in Eq.\ref{eq-varGap} that the sign of the variation of the gap between probabilities depend only on the term $\frac{N_1^{r-1} - N_2^{r-1}}{N_1^r-N_2^r}$. We can therefore perform a stability analysis of the Powered Chinese Restaurant process using only this expression.

When $0<r<1$, the sign becomes negative because the following relation holds: ${N_1^{r-1} - N_2^{r-1} < 0 \Leftrightarrow N_1^{r} - N_2^{r} > 0 \ \forall N_1, N_2}$; that makes right hand side of Eq.\ref{eq-varGap} negative. Therefore, adding a new observation statistically reduces the gap between the probabilities of the two \glspl{cluster}. We could forecast this prediction from Eq.\ref{eq-taylorapprox} by seeing that adding a new observation to a large \gls{cluster} increases its probability to get chosen lesser than for a small \gls{cluster} -- rich-get-less-richer. Moreover, we see from Eq.\ref{eq-taylorapprox} that a crowded \gls{cluster} (such as $N_1^r \gg N_2^r$) see its probability evolve as $N^{r-1}$. Asymptotically, the only fixed point of Eq.\ref{eq-varGap} when $N \rightarrow \infty$ is $N_1 \rightarrow N_2$, which implies a uniform distribution. 

On the contrary, when $r>1$ we have the following relation: $N_1^{r-1} - N_2^{r-1} > 0 \Leftrightarrow N_1^{r} - N_2^{r} > 0 \ \forall N_1, N_2$; ; that makes right hand side of Eq.\ref{eq-varGap} positive. Adding a new observation statistically increases the gap between probabilities. From Eq.\ref{eq-taylorapprox}, we see that adding an observation to a large \gls{cluster} increases its probability with its population -- rich-get-more-richer. In this case, Eq.\ref{eq-varGap} has $K+1$ fixed points, with $K$ the number of \glspl{cluster}. The uniform distribution is an unstable fixed point, while $K$ Dirac distributions (each on one \gls{cluster}) are stable fixed points of the system. It means the gap converges to $1$, that is a probability of 1 for one \gls{cluster} and a probability of 0 for the others. 

When $r=1$, the right-hand side of Eq.\ref{eq-varGap} is null. It means the gap remains statistically constant $\forall N_i$, which is a classical result for the regular Dirichlet process. This convergence has already been studied on many occasions \citep{Ferguson1973,Arratia1992}. 

We note that as $r \rightarrow 0$, Eq.\ref{eq-varGap} is not defined anymore. That is because the probability for a \gls{cluster} to be chosen does not depend on its population anymore. In this case, $p_1(N) - p_2(N) \propto N_1^0 - N_2^0 = 0$: the probability for any \gls{cluster} to be chosen is equal, hence the Uniform process -- ``rich-get-no-richer''.

\end{proof}

\subsubsection{Expected number of tables}

\begin{prop}
\label{th-slowVarp}
When $N$ is large, $\sum_k N_k^r$ varies with $N$ as $N^{\frac{r^2+1}{2}}$ when $r<1$, and with $N^{r}$ when $r\geq 1$.
\end{prop}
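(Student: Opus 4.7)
The plan is to split the statement into the two regimes $r \geq 1$ and $r < 1$, and to use Proposition~\ref{th-statio} (the stationarity result just proved) as the key structural input in each case. The regime $r \geq 1$ should be essentially immediate. For $r = 1$ the identity $S_1(N) := \sum_k N_k = N$ holds by construction. For $r > 1$, Proposition~\ref{th-statio} asserts that the process concentrates on a single cluster, so asymptotically one cluster $N_1$ absorbs a fraction of the mass tending to $1$ while all other $N_k$ are subdominant. Writing $S_r(N) = N_1^r + \sum_{k \geq 2} N_k^r$ and using that $x \mapsto x^r$ is convex for $r \geq 1$, the first term dominates and yields $S_r(N) \sim N^r$, which is the claim.

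For $r<1$ the work is more delicate and my plan is to set up a mean-field self-consistent scaling. First, I would write the one-step drift
\[
\mathbb{E}[S_r(N{+}1) - S_r(N) \mid \mathcal{F}_N] \;=\; \frac{\alpha + \sum_k N_k^r\big[(N_k{+}1)^r - N_k^r\big]}{\alpha + S_r(N)},
\]
and expand $(N_k{+}1)^r - N_k^r = r N_k^{r-1} + O(N_k^{r-2})$ for the dominant clusters, turning the numerator into $\alpha + r\,S_{2r-1}(N)$ up to lower order. By Proposition~\ref{th-statio}, the size distribution flattens, which justifies approximating the clusters by their typical size $\bar N(N) \sim N/K(N)$, so $S_r(N) \approx K(N)^{1-r} N^r$. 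The number of clusters obeys $dK/dN \approx \alpha/(\alpha + S_r(N))$, because a fresh cluster opens with exactly this probability at each step. Plugging a power-law ansatz $S_r(N) \sim N^{f(r)}$ and $K(N) \sim N^{1-f(r)}$ into these two relations gives a self-consistent equation for $f(r)$. Matching it against the boundary conditions $f(0)=1/2$ (Uniform Process, for which $K \sim \sqrt{N}$ and $S_0 = K$) and $f(1)=1$, together with the differential condition at $r=1$ obtained by perturbing $S_{1-\epsilon}(N) \approx N - \epsilon \sum_k N_k \ln N_k$ and identifying the logarithmic correction, should single out the interpolation $f(r) = (r^2+1)/2$.

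The main obstacle I anticipate is the mean-field step: Proposition~\ref{th-statio} only guarantees that the stationary distribution is uniform, which is a limit statement and does not by itself control age-dependent heterogeneity. Clusters born early have had more time to grow than clusters born late, so the sum $S_{2r-1}$ cannot simply be replaced by $K \cdot \bar N^{2r-1}$ without justification. A cleaner route — which I would try as a backup — is to integrate the birth-time distribution explicitly: a cluster created at time $t$ grows deterministically (in expectation) according to $dN_c/dt' = N_c^r/S_r(t')$, giving $N_c(N;t)^{1-r} \propto N^{1-f} - t^{1-f}$; then $S_r(N)$ becomes an integral over $t$ weighted by the birth rate $\alpha/S_r(t)$, and imposing self-consistency on the exponent closes the system. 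Reconciling the exponent obtained this way with the interpolation $(r^2+1)/2$ (versus competing expressions such as $1/(2-r)$ that a naive mean-field would yield) is where most of the care will be needed, and controlling the stochastic fluctuations around the deterministic growth curves via a martingale or concentration argument is what would turn the heuristic into a rigorous proof.
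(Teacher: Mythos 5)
Your treatment of the regime $r\geq 1$ is fine and matches the paper: Proposition~\ref{th-statio} gives concentration on a single cluster, so $\sum_k N_k^r$ is dominated by $N_1^r \sim N^r$. The problem is the regime $r<1$, where your proposal does not actually reach the stated exponent. The paper's argument there is much blunter than what you attempt: it writes $\sum_k N_k^r \approx N^r\sum_k p_k^r$ with $p_k = N_k/N$ treated as slowly varying, uses the flattening from Proposition~\ref{th-statio} to get $\sum_k p_k^r \approx K^{1-r}$, and then \emph{assumes} the scaling $K\sim N^{(1-r)/2}$ (imported by analogy with the Uniform-process analysis of Wallach et al.\ and checked only numerically), which yields $N^{r+(1-r)^2/2}=N^{(1+r^2)/2}$. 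That ansatz for $K(N)$ is the entire content of the $r<1$ case, and it is exactly the ingredient your proposal is missing.

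Your self-consistent closure cannot supply it: combining $S_r(N)\approx K^{1-r}N^r$ with $dK/dN \approx \alpha/(\alpha+S_r(N))$ and the power-law ansatz $S_r\sim N^{f}$, $K\sim N^{1-f}$ forces $f=(1-f)(1-r)+r$, i.e.\ $f=1/(2-r)$, which agrees with $(1+r^2)/2$ only at $r=0$ and $r=1$ and disagrees everywhere in between (e.g.\ $2/3$ versus $5/8$ at $r=1/2$). You acknowledge this competing exponent yourself, but matching boundary conditions at $r=0$ and $r=1$ cannot "single out" $(r^2+1)/2$ over $1/(2-r)$, since both interpolations satisfy them; there is no concrete mechanism in your proposal that selects the claimed exponent. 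So as written the $r<1$ case is a genuine gap, not a longer route to the same place. (Your self-consistency check is not worthless — it exposes a real tension, since the paper's Proposition~\ref{th-expK} subsequently derives $K\sim N^{(1-r^2)/2}$, which is not the $N^{(1-r)/2}$ assumed in this proof — but flagging that tension is a critique of the proposition's derivation, not a proof of it. To match the paper you must take $K\sim N^{(1-r)/2}$ as an input, as the authors do.)
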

\begin{proof}
Taking back Eq.\ref{eq-PowCRP}, we are interested in the variation of $p_i = \frac{N_i^r}{\sum_k N_k^r}$ according to $N$ when $N_i^r$ is large:
\begin{equation}
\label{eq-slowVarp}
\begin{split}
p_i(N+1)-p_i(N) \approx \begin{cases}
\frac{rN_i^{r-1} + \mathcal{O}(N^{r-2})}{\sum_k N_k^r} &\text{ if $N_i$ grows}\\
0 &\text{ otherwise}
\end{cases}
\end{split}
\end{equation}

We see in Eq.\ref{eq-slowVarp} that for $r<1$, the larger $N_i$ the slower the variation of $p_i$. It means that for large $N_i^r$, we can write $N_i \propto N p_i$, with $p_i$ a constant of $N$. Since $N_i$ is either way a non-decreasing function of $N$, we reformulate the constraint $N_i^r$ large in $N^r$ large.

For $r>1$, the probability $p_i$ varies greatly with $N$ and quickly converges to 1 for large $N$ (see Proposition \ref{th-statio}), and so $N_i \approx N$ for \gls{cluster} $i$ and $N_{j \neq i} \ll N_i \ \forall j$.

Since the sum $\sum_k N_k^r$ essentially varies according to large $N_k$, we can approximate $\sum_k N_k^r \approx N^r \sum_k p_k^r$ for large $N^r$.

Besides, we showed in Proposition \ref{th-statio} that for large $N$ the process converges towards a uniform distribution for $r<1$ and towards a Dirac distribution when $r>1$. Therefore, we can express $\sum_k^K p_k^r$ as:
\begin{equation}
    \sum_k^K p_k^r \stackrel{N \gg 1}{\approx}
    \begin{cases}
    K^{1-r} &\text{for $r<1$}\\
    1 &\text{for $r \geq 1$}
    \end{cases}
\end{equation}

Based on the demonstration of Eq.4 in \citep{Wallach2010UnifP}, we suppose that $K$ evolves with $N$ as $N^{\frac{1-r}{2}}$ when $r<1$. We verify that this assumption holds in the Experiment section.

Therefore, we can write:
\begin{equation}
    \sum_k N_k^r \approx N^r \sum_k^K p_k^r \approx \begin{cases}
    N^{r} \left( N^{\frac{1-r}{2}} \right)^{1-r} = N^{\frac{1+r^2}{2}} &\text{for $r<1$}\\
    N^{r} &\text{for $r \geq 1$}\\
    \end{cases}
\end{equation}

\end{proof}

\begin{prop}
\label{th-expK}
The expected number of tables of the Powered Chinese Restaurant process evolves with $N \gg 1$ as $H_{\frac{r^2+1}{2}}(N)$ for $r<1$ and as $H_{r}(N)$ when $r \geq 1$, where $H_m(n)$ is the generalized harmonic number.
\end{prop}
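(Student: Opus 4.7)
The plan is to express the expected number of tables as the sum, over all $N$ observations, of the prior probability of opening a new table at each step, and then to substitute the asymptotic rate from Proposition~\ref{th-slowVarp} so the sum reduces to a generalized harmonic number.

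First, I would note that the indicator of ``a new table is opened at step $i$'' is a Bernoulli variable whose success probability, conditionally on the history, is exactly the second branch of Eq.~\ref{eq-PowCRP}. By linearity of expectation, and writing $S_i := \sum_k N_k^r$ after $i$ observations, we have
\begin{equation}
\mathbb{E}[K(N)] \;=\; \sum_{i=1}^{N} \mathbb{E}\!\left[\frac{\alpha}{\alpha + S_{i-1}}\right].
\end{equation}
So the question reduces to controlling the growth of $S_{i-1}$ in $i$. This is exactly what Proposition~\ref{th-slowVarp} provides: $S_i \sim i^{(r^2+1)/2}$ when $r<1$ and $S_i \sim i^{r}$ when $r\geq 1$. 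Writing $m(r)$ for this exponent, I would substitute inside the expectation. For $i$ large enough that $S_{i-1} \gg \alpha$, the summand is asymptotically $\alpha/S_{i-1} \asymp i^{-m(r)}$, so that
\begin{equation}
\mathbb{E}[K(N)] \;\asymp\; \sum_{i=1}^{N} \frac{1}{i^{m(r)}} \;=\; H_{m(r)}(N),
\end{equation}
which is the claimed generalized harmonic number. For the finitely many small $i$ where the $\alpha$ in the denominator is not yet negligible, the contribution is a bounded constant that does not affect the asymptotic order (and can be absorbed into the constant in front of $H_{m(r)}$).

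The main obstacle I anticipate is making the substitution $S_{i-1}\asymp i^{m(r)}$ rigorous inside the expectation, since Proposition~\ref{th-slowVarp} is itself stated as an asymptotic equivalence rather than an almost-sure or $L^1$ bound. I would handle this either by passing the Taylor-type argument used in the proof of Proposition~\ref{th-slowVarp} through to a deterministic concentration statement on $S_i/i^{m(r)}$, or by bounding $\mathbb{E}[\alpha/(\alpha+S_{i-1})]$ above and below by quantities of the form $c\, i^{-m(r)}$ and invoking the fact that both bounds sum to $\Theta(H_{m(r)}(N))$. A second, milder, subtlety is the boundary case $r=1$: here $m(1)=1$ and we recover $H_1(N)=\log N$, matching the classical Dirichlet process result of Section~\ref{SotA-PDP} and thus serving as a consistency check for the derivation.
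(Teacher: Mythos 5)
Your proposal is correct and follows essentially the same route as the paper: decompose $\mathbb{E}[K(N)]$ by linearity of expectation into the sum of new-table probabilities $\alpha/(\alpha+\sum_k N_k^r)$, substitute the asymptotic growth of $\sum_k N_k^r$ from Proposition~\ref{th-slowVarp}, and identify the resulting sum as a generalized harmonic number. Your additional remarks on making the substitution rigorous inside the expectation and the $r=1$ consistency check are refinements of points the paper passes over, not a different argument.
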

\begin{proof}
In general, the expected number of \glspl{cluster} at the $N^{th}$ step can be written as:
\begin{equation}
\label{eq-expK}
    \mathbb{E}(K \vert N, r) = \sum_1^N \frac{\alpha}{\sum_k N_k^r + \alpha} \stackrel{N^r \gg 1}{\propto} \sum_1^N \frac{1}{\sum_k N_k^r}
\end{equation}

We showed in Proposition \ref{th-slowVarp} that we can rewrite $\sum_k N_k^r \propto N^{\frac{r^2 + 1}{2}}$ when $r<1$ and $\sum_k N_k^r \propto N^r$ when $r\geq 1$.
Injecting this result in Eq.\ref{eq-expK} for $r$, we get:
\begin{equation}
    \label{eq-expK-fin}
    \mathbb{E}(K \vert N, r) \stackrel{N^r \gg 1}{\propto}
    \begin{cases} \sum_1^N\frac{1}{N^{\frac{r^2 + 1}{2}}} = H_{\frac{r^2+1}{2}}(N)\\
    \sum_1^N\frac{1}{N^r} = H_{r}(N)
    \end{cases}
\end{equation}

\end{proof}

For $r=1$, $\mathbb{E}(K \vert N, r=1) \propto H_1(N) \approx \gamma + \log(N)$ where $\gamma$ is the Euler–Mascheroni constant, which is a classical result for the regular Dirichlet process. 

When $r>1$ and $N \rightarrow \infty$, the term $H_{\frac{r^2+1}{2}}(N)$ converges towards a finite value and the sum $\sum_k p_k^r$ goes to 1 (see Proposition \ref{th-statio}). By definition $\mathbb{E}(K \vert N, r>1) \stackrel{N \rightarrow \infty}{\propto} \zeta(\frac{r^2+1}{2})$, where $\zeta$ is the Riemann Zeta function.

When $r < 1$, we can approximate the harmonic number in a continuous setting. We rewrite Eq.\ref{eq-expK-fin} as:
\begin{equation}
    \begin{split}
        \mathbb{E}(K \vert N, r) \stackrel{N^r \gg 1}{\propto} \sum_{n=1}^N\frac{1}{n^{\frac{r^2 + 1}{2}}}
        &\stackrel{N^r \gg 1}{\approx} \int_1^N n^{-\frac{r^2 + 1}{2}} dn\\
        &\ \ \ \ \ \ = \frac{2}{1 - r^2}(N^{\frac{1-r^2}{2}}-1) 
    \end{split}
\end{equation}
It can be shown that $\frac{N^{1-x}-1}{1-x} = H_x(N) + \mathcal{O}(\frac{1}{N^x})$. Therefore, the Powered Chinese Restaurant process exhibits a power-law behaviour similar to the Pitman-Yor process Eq.\ref{eq-PY} for $r = \sqrt{1 - 2\beta}$ for $0<r<1$. For values of $r>1 \Leftrightarrow \beta<0$, the equivalent Pitman-Yor process is not defined unlike the Powered Chinese Restaurant process. Note that there is \textit{a priori} no reason for $r$ to be constrained in the domain of real number. Complex analysis of the process might be an interesting lead for future works.

\begin{figure}[t!]
    \centering
    \includegraphics[width = \textwidth]{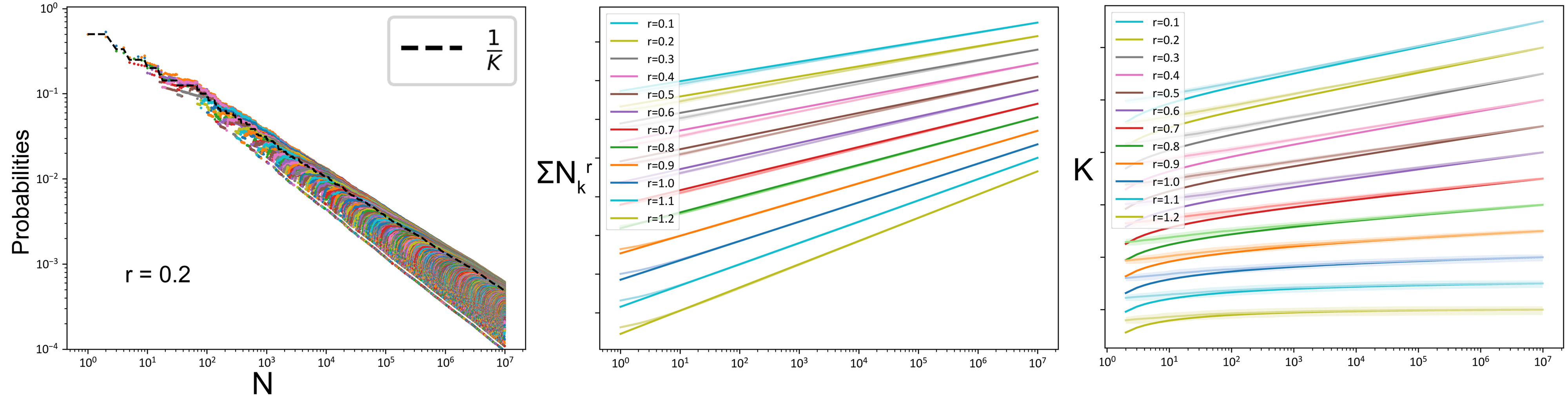}
    \caption[PDP - Numerical validation of theorems]{Numerical validation of Propositions \ref{th-statio} (left), \ref{th-slowVarp} (middle), \ref{th-expK} (right). In the first plot, $K$ is the number of non-empty \glspl{cluster}. In the second and third plots, the theoretical results are the solid lines, and the associated numerical results are the transparent lines of same colour. Except for small $N$, the difference between theory and experiments is almost indistinguishable.}
    \label{fig-validTh}
\end{figure}

\subsection{Experiments}
\label{Experiments}

\subsubsection{Numerical validation of propositions}
First of all, we present numerical confirmations of propositions stated above (Propositions \ref{th-statio}, \ref{th-slowVarp}, \ref{th-expK}) by simulating 100 \gls{independent} Powered Chinese Restaurant processes with parameter $\alpha=1$ for various values of $r$. Note that in all the theorems we verify here, $\alpha$ acts as a scaling factor without modifying the shape of the results discussed here. We present the results of numerical simulations in Fig.\ref{fig-validTh}.

In the \textbf{left} part, we plot the evolution of the probability for each \gls{cluster} to be chosen as $N$ grows for $r=0.2$ for one run. We see that the probabilities do not remain constant but instead diminish as the number of \glspl{cluster} grows. The figure suggests they all converge to a common value (a uniform probability) as shown in Proposition \ref{th-statio}. The black line shows the probability of a uniform distribution. We chose not to show the results for $r>1$; in this case, one probability goes to 1 as the other fades to 0 as $N$ grows, as expected.

In the \textbf{middle} part of the figure, we plot the expression for $\sum_k N_k^r$ derived in Proposition \ref{th-slowVarp} (solid lines) versus the value of the sum from experimental results (transparent lines), averaged over 100 runs. Note that plots are in a log-log scale and that curves have been shifted vertically for visualization purposes. As assumed in Proposition \ref{th-slowVarp}, the approximation holds for all values of $r$.

Finally, in the \textbf{right} picture, we plot the evolution of the number of \glspl{cluster} $K$ versus $N$ according to Proposition \ref{th-expK} (solid lines) and experiments (transparent lines). The error bars correspond to the standard deviation over the 100 runs. We see that the expression derived in Proposition \ref{th-expK} accounts well for the evolution of the number of \glspl{cluster}. Note that plots are in a log-log scale and that curves have been shifted vertically for visualization purposes. We must point out that there is a constant shift from experiments to the theory that does not appear on the plot (because of the rescaling). This shift comes from the approximation of large $N^r$ which is not valid at the beginning of the process. However, it does not play any role in the evolution of $K$ as $N$ grows large enough.

\subsubsection{Use case: infinite Gaussian mixture model}
We now illustrate the usefulness of a prior that alleviates the ``\gls{rich-get-richer}'' property on several synthetic datasets and on a real-world application. We choose to consider as an illustration its use as a prior in the infinite Gaussian mixture model.
We choose this application to ease visual understanding of the implications of the PCRP, but the argument holds for other models using DP priors as well (text modelling, gene expression clustering, etc.). 

\begin{figure*}
    \centering
    \includegraphics[width = \textwidth]{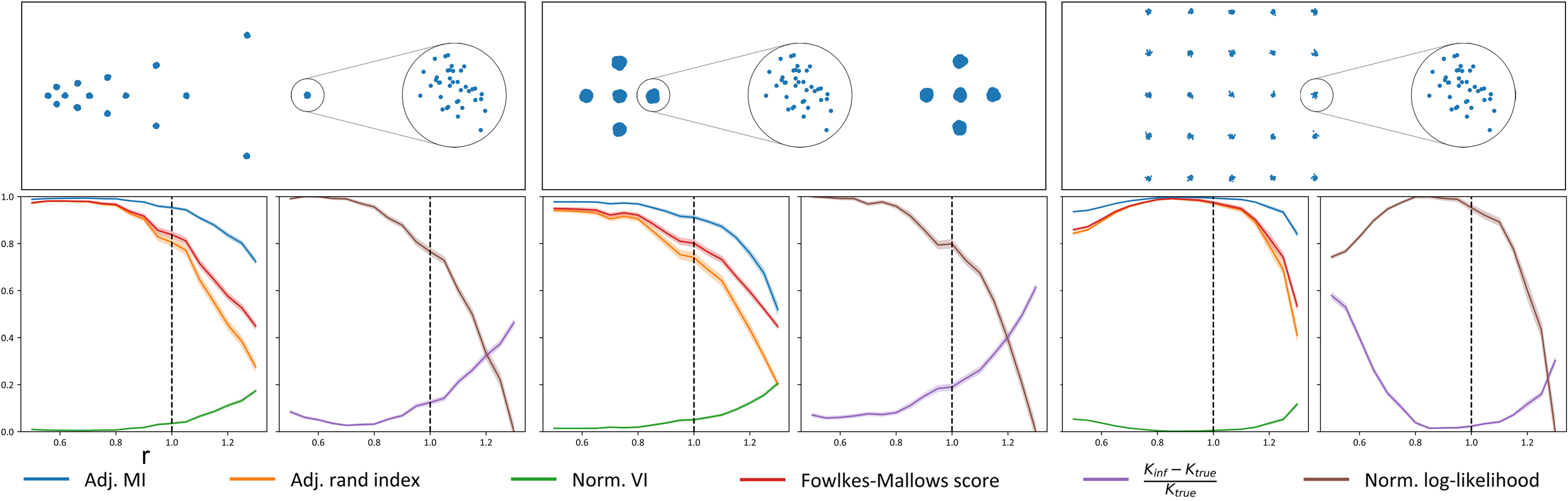}
    \caption[PDP - Experimental results on synthetic data]{Application on synthetic data. (Top) Original datasets used for the experiments (Density, Diamond, and Grid). (Bottom) Results for various values of $r$; the x and y axes are all the same. The dashed line indicates the regular DP prior as $r=1$. The error corresponds to the standard error of the mean over all runs.}
    \label{fig-Results}
\end{figure*}

We consider a classical infinite Gaussian mixture model coupled with a Powered Dirichlet process prior. We fit the data using a standard collapsed Gibbs sampling algorithm for IGMM \citep{Rasmussen1999,Wallach2010UnifP,Yin2014}, with a Normal Inverse Wishart prior on the Gaussians' parameters. The input data is shuffled at each iteration to reduce the ordering bias from the dataset. Note that we cannot completely get rid of the bias because the Powered Dirichlet Process is not exchangeable for all $r$. The problem has been addressed on numerous occasions (Uniform process \citep{Wallach2010UnifP}, distance-dependent CRP \citep{Blei2010DDCRP,Ghosh2014}, spectral CRP \citep{Socher2011}) and shown to induce negligible variations of results in the case of Gibbs sampling. We stop the sampler once the likelihood of the model reaches stability
; we repeat this procedure 100 times for each value of $r$. 
Finally, the parameter $\alpha$ is set to 1 in all experiments (see Section \ref{motivation}). 

Note that we choose not to compare to other types of clustering algorithms. In this section, we demonstrate the power of alternative forms of the Dirichlet process. The argument on a simple model (here a regular DP combined with IGMM) extends to other priors built on Dirichlet processes (Hierarchical and Nested Dirichlet processes). Besides, comparison of DP-based priors to other clustering methods (\acrshort{KNN}, DBScan, Spectral clustering, etc.) has already been done numerous times and is out of this section's scope.

\paragraph{Synthetic data}
Synthetic datasets are represented in Fig.\ref{fig-Results}-top and comprise $N$=1000 observations each. They have been generated by sampling from 2D Gaussian distributions whose relative parameters are explicit from Fig.\ref{fig-Results}. 

\begin{table}[t]
    \caption[PDP - Numerical results on synthetic datasets]{Numerical results of the Powered Dirichlet process, Uniform process and Dirichlet process priors coupled to a standard Infinite Gaussian Mixture Model, for the 3 synthetic datasets plotted Fig.~\ref{fig-Results} and 2 real-world datasets (100 runs each). We see that using PDP as prior makes the model outperform the baselines consistently on every metric. The standard error is given in shorthand form -- $0.123(12) \Leftrightarrow 0.123 \pm 0.012$. \label{tabMetrics}}
\centering
\setlength{\lgCase}{2.2cm}
\begin{tabular}{ |p{0.08\lgCase}|p{1.29\lgCase}|p{0.93\lgCase}|p{0.93\lgCase}|p{0.93\lgCase}|p{0.93\lgCase}|p{0\lgCase}}
\cline{3-6}
  \multicolumn{1}{c}{\rotatebox[origin=c]{90}{}} & \multicolumn{1}{c|}{\rotatebox[origin=c]{90}{}} & \centering Adj.MI & \centering Adj.RI & \centering Norm.VI & \centering $\frac{K_{inf}-K_{true}}{K_{true}}$ & \\
 \cline{1-6}
    \centering\multirow{4}{*}{\centering\rotatebox[origin=c]{90}{\ \ \ \ \ \footnotesize \textbf{Density}}} & \centering PDP (r=0.60) & \centering \textbf{0.992(1)} & \centering \textbf{0.980(2)} & \centering \textbf{0.006(1)} & \centering \textbf{0.045(5)} & \\
     & \centering DP & \centering 0.951(4) & \centering 0.797(17) & \centering 0.037(3) & \centering 0.128(10) & \\
     & \centering UP & \centering 0.939(2) & \centering 0.854(4) & \centering 0.050(1) & \centering 0.548(1) & \\

 \cline{1-6}

     \centering\multirow{4}{*}{\centering\rotatebox[origin=c]{90}{\ \ \ \ \ \footnotesize \textbf{Diamond}}} & \centering PDP (r=0.50) & \centering \textbf{0.982(2)} & \centering \textbf{0.956(5)} & \centering \textbf{0.011(1)} & \centering \textbf{0.063(7)} & \\
     & \centering DP & \centering 0.909(7) & \centering 0.731(19) & \centering 0.053(4) & \centering 0.202(12) & \\
     & \centering UP & \centering 0.927(2) & \centering 0.844(6) & \centering 0.051(2) & \centering 0.544(2) & \\

 \cline{1-6}
 
     \centering\multirow{4}{*}{\centering\rotatebox[origin=c]{90}{\ \ \ \ \ \footnotesize \textbf{Grid}}} & \centering PDP (r=0.85) & \centering \textbf{0.997(1)} & \centering \textbf{0.990(2)} & \centering \textbf{0.003(1)} & \centering \textbf{0.014(2)} & \\
     & \centering DP & \centering 0.995(1) & \centering 0.977(4) & \centering \textbf{0.004(1)} & \centering \textbf{0.018(3)} & \\
     
     & \centering UP & \centering 0.811(1) & \centering 0.517(3) & \centering 0.154(1) & \centering 2.120(1) & \\

 \cline{1-6}
 
      \centering\multirow{4}{*}{\centering\rotatebox[origin=c]{90}{\ \ \ \ \ \footnotesize \textbf{Iris}}} & \centering PDP (r=0.90) & \centering \textbf{0.868(4)} & \centering \textbf{0.866(7)} & \centering \textbf{0.057(2)} & \centering \textbf{0.000(0)} & \\
      
     & \centering DP & \centering 0.843(6) & \centering 0.820(12) & \centering 0.065(2) & \centering 0.030(10) & \\
     
     & \centering UP & \centering 0.544(2) & \centering 0.295(3) & \centering 0.303(2) & \centering 2.777(32) & \\

 \cline{1-6}
 
       \centering\multirow{4}{*}{\centering\rotatebox[origin=c]{90}{\ \ \ \ \ \footnotesize \textbf{Wines}}} & \centering PDP (r=0.10) & \centering \textbf{0.712(15)} & \centering \textbf{0.637(20)} & \centering \textbf{0.102(5)} & \centering \textbf{0.157(17)} & \\
      
     & \centering DP & \centering 0.589(19) & \centering 0.461(16) & \centering 0.128(4) & \centering 0.327(13) & \\
     
     & \centering UP & \centering \textbf{0.713(17)} & \centering \textbf{0.657(21)} & \centering \textbf{0.103(5)} & \centering \textbf{0.147(17)} & \\

 \cline{1-6}
 
       \centering\multirow{4}{*}{\centering\rotatebox[origin=c]{90}{\ \ \ \ \ \footnotesize \textbf{Cancer}}} & \centering PDP (r=0.10) & \centering \textbf{0.254(17)} & \centering \textbf{0.278(21)} & \centering \textbf{0.118(1)} & \centering \textbf{0.000(0)} & \\
      
     & \centering DP & \centering 0.085(16) & \centering 0.094(19) & \centering 0.108(2) & \centering \textbf{0.000(0)} & \\
     
     & \centering UP & \centering \textbf{0.271(17)} & \centering \textbf{0.300(21)} & \centering \textbf{0.118(1)} & \centering \textbf{0.000(0)} & \\

 \cline{1-6}
 
     \centering\multirow{4}{*}{\centering\rotatebox[origin=c]{90}{\ \ \ \ \ \footnotesize \textbf{20-NG}}} & \centering PDP (r=0.80) & \centering \textbf{0.421(4)} & \centering \textbf{0.119(3)} & \centering \textbf{0.477(3)} & \centering - & \\
      
     & \centering DP & \centering 0.404(4) & \centering 0.105(4) & \centering 0.491(3) & \centering - & \\
     
     & \centering UP & \centering      
     
0.000(4) & \centering 0.000(0) & \centering 0.830(3) & \centering - & \\

 \cline{1-6}

\end{tabular}

\end{table}

We present the results on synthetic data in Fig.\ref{fig-Results}-bottom and in Table\ref{tabMetrics}. We consider standard metrics in clustering evaluation with a non-fixed number of \glspl{cluster}: mutual information score and rand index both adjusted for chance (\textbf{Adj.MI} and \textbf{Adj.RI}), normalized variation of information (\textbf{Norm.VI}, lower is better), Fowlkes-Mallow score, marginal likelihood (normalized for visualization) and absolute relative variation of the inferred number of \glspl{cluster} according to the number used to generate the data ($\mathbf{\frac{K_{inf}-K_{true}}{K_{true}}}$, lower is better). Note that we purposely chose stereotypical cases to illustrate the argument better. The Density dataset on the \textbf{left} of Fig.\ref{fig-Results} is informative about the change induced by $r$. Here, \glspl{cluster} are distributed at various scales in the dataset; we see that the lower the value of $r$, the better the results. Indeed, when $r$ is small, the model can distinguish \glspl{cluster} in the dense area better, whereas when $r$ is closer to 1, the \glspl{cluster} in the dense area are put together in a larger \gls{cluster}. The same happens with the Diamond dataset in the \textbf{middle} of Fig.\ref{fig-Results}, where \glspl{cluster} are distributed according to two different scales. Finally, on the Grid dataset on the \textbf{right} part of Fig.\ref{fig-Results}, we see an optimum $r$ exists to distinguish the \glspl{cluster} distributed on a grid; it makes sense since only one scale in \glspl{cluster} distribution is involved in this dataset. In Table.~\ref{tabMetrics}, we explicitly report the values of the PDP optimal $r$ and compare them to the values yielded by the same model using either a Dirichlet Process (\textbf{\acrshort{DP}}) prior or a Uniform Process (\textbf{\acrshort{UP}}) prior.

\begin{figure*}
    \centering
    \includegraphics[width = \textwidth]{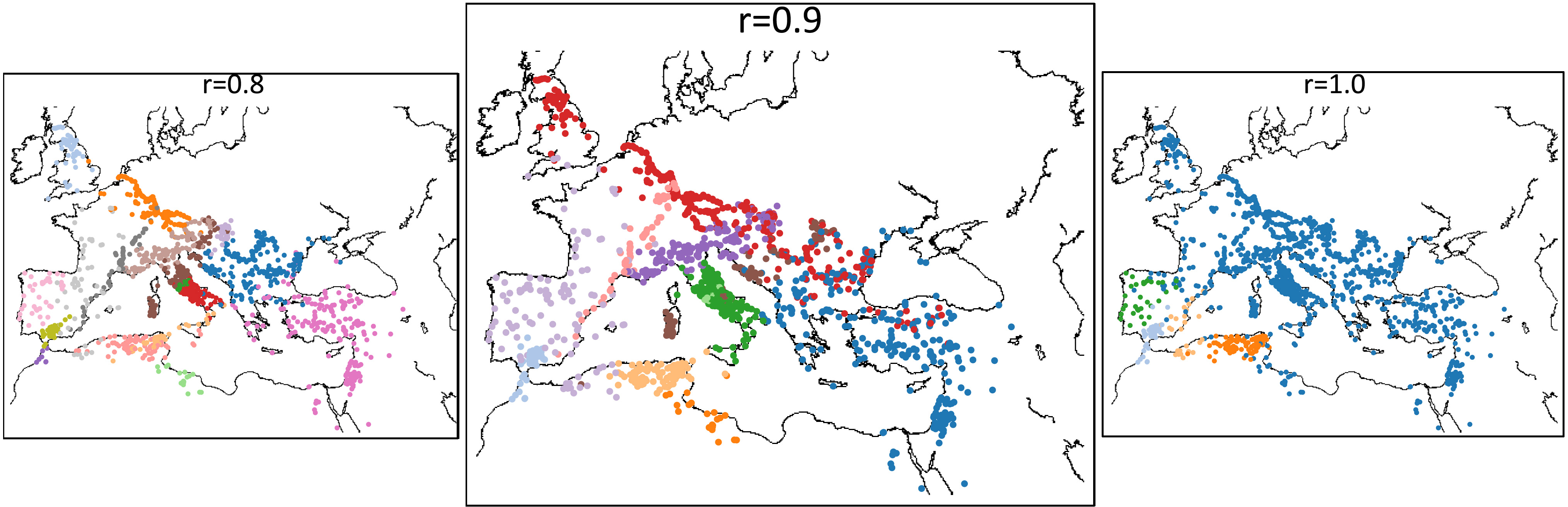}
    \caption[PDP - Application to spatial clustering on geolocated Latin inscriptions]{Application to spatial clustering on geolocated data for $r=0.8$ (left), $r=1$ (right) and $r=0.9$ (middle). We see that the model using a Powered Dirichlet process prior for $r=0.9$ and $r=0.8$ describes the data better than the same model using a Dirichlet process prior ($r=1$).}
    \label{fig-Antoni}
\end{figure*}

\paragraph{Real-world data}
In Table~\ref{tabMetrics}, we report the results for the 20Newsgroup (20-NG) real-world dataset, which is a collection of 18,000 user posts published on Usenet, organized in 20 Newsgroup (which are our target thematic \glspl{cluster}). As a model, we consider a modified version of LDA \citep{Blei2003LDA} that uses a PDP prior instead of DP in the words sampling step. Note that because the number of \glspl{cluster} must be provided to LDA, we do not compute $\mathbf{\frac{K_{inf}-K_{true}}{K_{true}}}$. 
We also run an additional experiment on three additional well-known datasets: Iris (4 attributes, 3 classes), Wines (13 attributes, 3 classes) and Cancer (30 attributes, 2 classes). We see PDP yields improved performances on every dataset.

We now illustrate the interest of using an alternate form of prior for the Infinite Gaussian Mixture model on real-world data. We consider a dataset of 4.300 roman sepulchral inscriptions comprising the substring ``Antoni'' that have been dated between 150AC and 200AC and assigned with map coordinates. The dates correspond to the reign of Antoninus Pius over the Roman empire. The dataset is available on Clauss-Slaby repository \citep{ClaussSlabyDataset}. It was common to give children or slaves the name of the emperor; the dataset gives a global idea of the main areas of the roman empire at that time \citep{Hanson2017}. The task is to discover spatial \glspl{cluster} of individuals named after the emperor. We expect to find geographical \glspl{cluster} around: Italy, Egypt, Gauls, Judea, and all along the \textit{limes} (borders of the roman empire, which concentrate lots of sepulchral inscriptions for war-related reasons) \citep{Hanson2016}. We present the results in Fig.\ref{fig-Antoni}.

We see that when $r=1$, the classical DP prior is not fit for describing this dataset, as it misses most of the \glspl{cluster}. On the other hand, when $r=0.9$, the infinite Gaussian mixture model retrieves the expected \glspl{cluster}. It also makes some \glspl{cluster} that were not expected, such as the north Italian \gls{cluster} or the long \gls{cluster} going through Spain and France that corresponds to the layout of the Roman roads (via Augusta and via Agrippa; it was common to bury the dead on roads edges). Finally, when $r=0.8$, we get even more detail: some of the main \glspl{cluster} are broken into smaller ones (Italy breaks into Rome, North Italy, and South Italy; Britain becomes an \gls{independent} \gls{cluster}, etc.). In this case, changing $r$ controls the level of details of the clustering. Note that we do not compute metrics for this experiment in the absence of ``ground-truth'' clustering; there is no such thing as the right clustering in this case. Applied to spatial data, the PDP prior allows to control the clustering granularity. We see how different results can be according to the extent the model relies on the ``\gls{rich-get-richer}'' prior and how its control is needed to make modelling relevant to a given situation.

\subsection{Conclusion}
In this section, we discussed the necessity of controlling the ``\gls{rich-get-richer}'' property that arises from the classical Chinese Restaurant Process usual formulation. We showed cases where this modelling hypothesis must be alleviated or strengthened to describe data more accurately. 
To this end, we derive the Powered Chinese Restaurant Process from a powered version of the Dirichlet-Multinomial distribution. This formulation allows reducing the expected number of \glspl{cluster}, which is not possible in the standard Pitman-Yor processes, while generalizing the standard Dirichlet process and the Uniform process. 

The main feature of this formulation is that it allows for direct control of the ``\gls{rich-get-richer}'' priors' importance. We derive elementary results on convergence and the expected number of \glspl{cluster} of the new process. 

Finally, we show that it yields better results on synthetic data when coupled to a standard Gaussian mixture model and illustrates a possible use case with real-world data. For future works, it might be interesting to investigate cases where $r$ takes non-positive values (which might lead to a ``poor-get-richer'' kind of process), and to develop a procedure to infer it automatically for specific problems (by minimizing a dispersion criterion for instance).

The regular Chinese Restaurant process has been used for decades as a prior in many real-world applications. However, alternate forms for this prior have been little explored. We are convinced that controlling the impact of the ``\gls{rich-get-richer}'' hypothesis will bring significant changes in many state-of-the-art models. 

We now propose to illustrate this claim and to further answer our task at hand. To this end, we use PDP to redefine the Dirichlet-Hawkes Process in terms of the Powered Dirichlet-Hawkes Process (\acrshort{PDHP}) and investigate the new possibilities brought by this reformulation.

\section{Powered Dirichlet-Hawkes Process -- Modelling self \gls{interacting} \glspl{cluster}}
\label{PDHP}
\textit{This work has been published, see \citep{Poux2021PDHP}}

\subsection{Introduction}
\subsubsection{PDHP as an answer to DHP's limits}
In this section, we develop the Powered Dirichlet-Hawkes process (\acrshort{PDHP}) as an answer to the limits of DHP raised in Section~\ref{DHP-sota}. It makes use of the redefinition of Dirichlet Processes detailed in Section~\ref{PDP}. 

We highlighted earlier that DHP has several limits. It fails to model data correctly when textual information is scarce (e.g., short texts such as tweets, or when topics’ vocabulary overlap) \citep{Yin2018ShortTextDHP} and when temporal information is scarce (overlapping dynamics, few observations). In addition, the usual assumption is that publication dynamics and textual \gls{content} are perfectly correlated, which cannot be true in real-world applications. For instance, two identical textual \glspl{content} will trigger different publication dynamics depending on who published them, or when they have been published.

\subsubsection{Contributions}

We overcome all these limitations by developing the Powered Dirichlet-Hawkes Process (\acrshort{PDHP}), which yields better results than DHP on every dataset considered (up to +0.3 \acrshort{NMI}). In particular, when textual or temporal information is scarce, PDHP allows to control the importance given to one or the other and thus retrieves better \glspl{cluster}. PDHP is the DHP equivalent of controlling the importance of the ``\gls{rich-get-richer}'' hypothesis in \acrshort{DP} using \acrshort{PDP}, only using temporal information as a prior. PDHP also allows to distinguish \textit{textual \glspl{cluster}} from \textit{temporal \glspl{cluster}} (documents that follow the same dynamic independently from their \gls{content}). 

Our contributions are listed below:
\begin{itemize}
    \item We highlight and explain the limitations of the DHP prior: it does not handle weakly informative temporal and textual information and it is not designed to consider different dynamics between text and time.
    \item We derive the Powered Dirichlet-Hawkes process (\acrshort{PDHP}) as a new prior in Bayesian non-parametric for the temporal clustering of a stream of textual documents, which is a generalization of the Dirichlet-Hawkes process (DHP) and of the Uniform process (\acrshort{UP}).
    \item We show how the PDHP prior performs better than DHP and UP priors through a thorough evaluation and comparison on several synthetic datasets and real-world datasets from Reddit.
    \item We show that PDHP prior allows choosing whether \glspl{cluster} are based more on textual or temporal information, or a mixture of both. We can favour their generation more or less according to documents' textual \gls{content} or their publication dynamics.
    \item We perform a detailed analysis of PDHP's results on real-world datasets from Reddit. We illustrate our approach with examples of real-world topics uncovered by our method. We systematically analyse the influence of the hyperparameter $r$ introduced in PDHP. Our method allows us to recover more or less bursty events from data streams depending on $r$. %
\end{itemize}

\subsection{Model and algorithm}
\subsubsection{Dirichlet prior and alternatives}
We briefly recall the definition of a Dirichlet prior (see Section~\ref{PDP} for an extensive discussion on DP). A Dirichlet prior for clustering implements the assumption that the more a \gls{cluster} is populated, the more chances a new observation belongs to it (``\gls{rich-get-richer}'' property).
Besides, there is still a chance that a new observation gets assigned to a newly created \gls{cluster}. It is often expressed using a metaphor, the Chinese Restaurant Process (\acrshort{CRP}), and it goes as follows: if an $i^{th}$ client arrives in a Chinese restaurant, they will sit at one of the $K$ already occupied tables with a probability proportional to the number of persons already sat at this table. They can also sit alone at a new table $K+1$ with a probability inversely proportional to the total number of clients in the restaurant. When their choice is made, the next client arrives, and the process is repeated.
Let $c$ be the \gls{cluster} chosen by the $i^{th}$ customer, $\vec{C^-}$ the table assignment of previous customers up to $i-1$, $N_c$ the population of table $c$, $C$ the number of already occupied tables and $\alpha_0 \in \mathbb{R}^+$ the concentration parameter. The process can be written formally as:
\begin{equation}
\label{eq-CRP-PDHP}
    \text{CRP} (C_i = c \vert \vec{C^-}, \alpha_0) = 
    \begin{cases}
    \frac{N_c}{\alpha_0 + N} \text{ if c = 1, 2, ..., C}\\
    \frac{\alpha_0}{\alpha_0 + N} \text{ if c = C+1}
    \end{cases}
\end{equation}

The Uniform process \citep{Wallach2010UnifP} has been proposed as an alternative to the DP prior. In this context, a new customer entering the restaurant has an identical chance to sit at either of the occupied tables, and a chance to sit at an empty table inversely proportional to the number of occupied tables. Formally:
\begin{equation}
\label{eq-UP-PDHP}
    \text{U-CRP} (C_i = c \vert \vec{C^-}, \alpha_0) = 
    \begin{cases}
    \frac{1}{\alpha_0 + C} \text{ if c = 1, 2, ..., C}\\
    \frac{\alpha_0}{\alpha_0 + C} \text{ if c = C+1}
    \end{cases}
\end{equation}

Finally, the Powered Dirichlet process that we detailed Section~\ref{PDP} generalizes the two processes above, stating that the probability for a new client to sit at a new table depends arbitrarily on the number of customers already sat at this table:
\begin{equation}
\label{eq-PCRP}
    \text{P-CRP} (C_i = c \vert r, \vec{C^-}, \alpha_0) = 
    \begin{cases}
    \frac{N_c^r}{\alpha_0 + \sum_{c'} N_{c'}^r} \text{ if c = 1, 2, ..., C}\\
    \frac{\alpha_0}{\alpha_0 + \sum_{c'} N_{c'}^r} \text{ if c = C+1}
    \end{cases}
\end{equation}
where $r \in \mathbb{R}^+$ is a hyper-parameter. Varying $r$ allows to give more or less importance to the ``\gls{rich-get-richer}'' hypothesis of DP. Note that $P-CRP (r=0, \vec{C^-}, \alpha_0) = U-CRP(\vec{C^-}, \alpha_0)$ and that $P-CRP(r=1, \vec{C^-}, \alpha_0) = CRP(\vec{C^-}, \alpha_0)$. We will use this more general form in the rest of this section and make $r$ vary to compare those priors in the experimental section.

\subsubsection{Hawkes processes}
The Hawkes process has already been introduced and discussed in Section~\ref{sota-DHP-Hawkes}. We recall that Hawkes processes are defined as self-stimulating temporal point processes. They are fully characterized by their intensity function $\lambda (t)$, which is related to the probability $P(t_{events} \in [t;t+\Delta t])$ of an event happening between $t$ and $t + \Delta t$ by $\lambda(t) = \lim_{\Delta t \rightarrow 0} \frac{P(t_{events} \in [t;t+\Delta t])}{\Delta t}$. 

Same as in \citep{Du2015DHP}, the intensity of the Hawkes process associated with \gls{cluster} $c$ is defined as:
\begin{equation}
\label{eq-PDHP-HawkesClus}
    \lambda_c(t \vert \mathcal{H}_{<t, c}) = \sum_{\mathcal{H}_{<t, c}} \vec{\alpha_c}^T \cdot \vec{\kappa}(t_{i,c})
\end{equation}
We recall that $\vec{\alpha_c}$ is a vector of weights and $\vec{\kappa}(t)$ is a vector of user-defined kernel functions with the same dimension as $\vec{\alpha}$. The kernel functions are defined at the beginning of the algorithm and are not modified afterwards. We will want to infer the weights vector $\vec{\alpha}$ to determine which entries of the kernel vector are the most relevant for a given situation.

\subsubsection{Powered Dirichlet-Hawkes process}
In \citep{Du2015DHP}, the authors create DHP by substituting the counts $N_k$ in the DP with the intensities of the Hawkes processes associated with each \gls{cluster}. Instead, we simply substitute the counts in PDP with the intensities of the Hawkes processes associated with each \gls{cluster}, forming the Powered Dirichlet-Hawkes Process (\acrshort{PDHP}):
\begin{equation}
    \label{eq-PDHP-prior}
    P(C_i = c\vert t_i, r, \lambda_0, \mathcal{H}_{<t_i,c}) = 
    \begin{cases}
    \frac{\lambda_c^r(t_i)}{\lambda_0 + \sum_{c'} \lambda_{c'}^r(t_i)} \text{ if c$\leq$C}\\
    \frac{\lambda_0}{\lambda_0 + \sum_{c'} \lambda_{c'}^r(t_i)} \text{ if c=C+1}
    \end{cases}
\end{equation}
\myequations{\ \ \ \ PDHP - Powered Dirichlet-Hawkes Process}
where $t_i$ is the arrival time of document $i$. We reformulate the Dirichlet-Hawkes process to allow nonlinear dependence ($r$) on the non-integer counts ($\vec{\lambda}$). We illustrate how $r$ influences the prior probability of \gls{cluster} selection in Fig.~\ref{fig-illustr-PDHP}.

\begin{figure}
    \centering
    \includegraphics[width=0.8\textwidth]{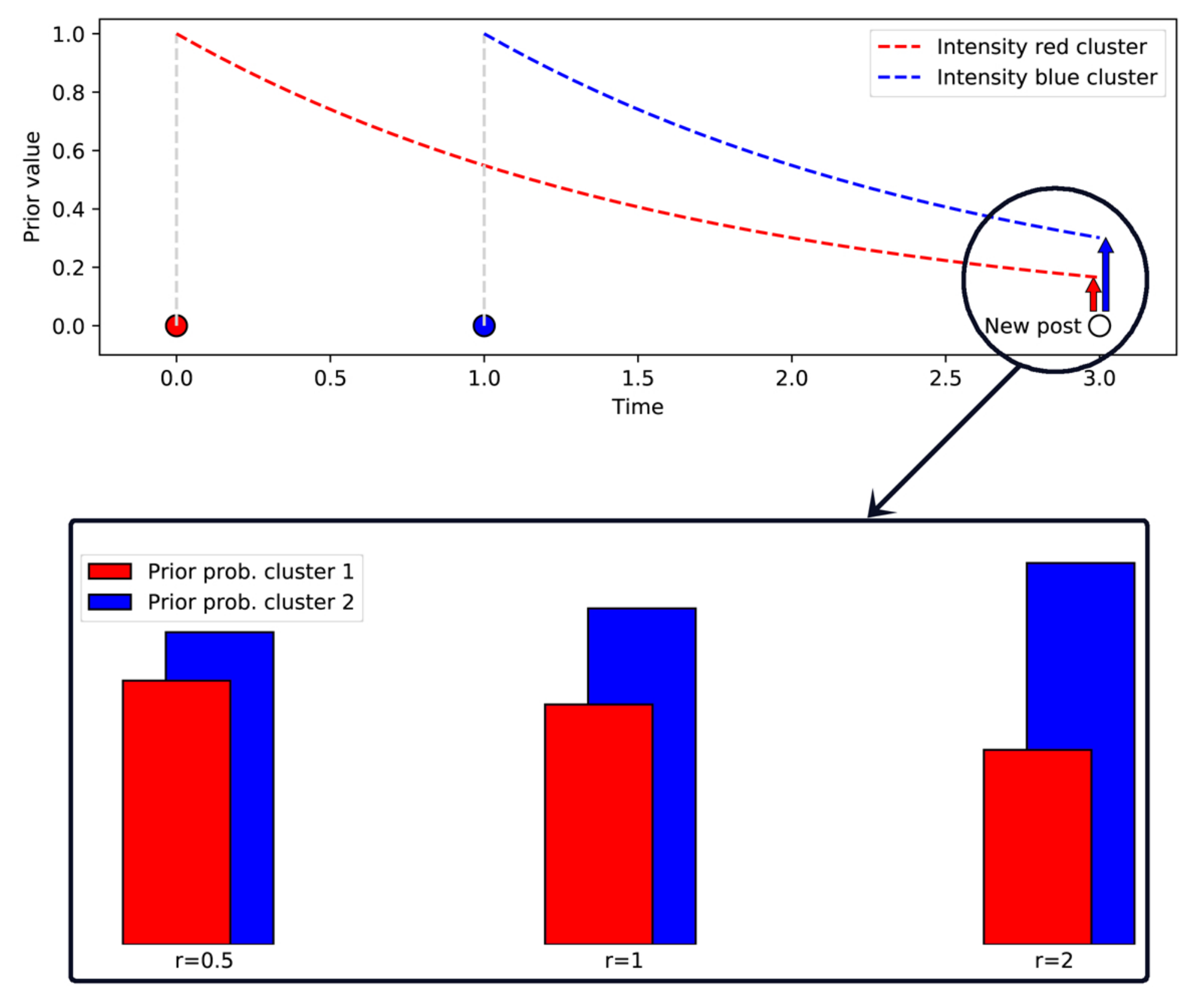}
    \caption[PDHP - Illustration of the PDHP]{\textbf{Temporal prior on a new observation \gls{cluster}} --- \textbf{(Top)} Prior probability for two \glspl{cluster} to get chosen at all times according to DHP (dotted lines) \textbf{(Bottom)} Same prior probability for each \gls{cluster} as a function of $r$ in the PDHP at a given time.}
    \label{fig-illustr-PDHP}
\end{figure}

\subsubsection{Textual modelling}
To compare to DHP on solid ground, we consider the same textual model the authors used in the original paper, the Dirichlet-Multinomial model. This model is detailed in Section~\ref{sota-DHP-text} and it considers the textual \gls{content} as a bag of words.

The likelihood of the $i^{th}$ document belonging to \gls{cluster} $c$ reads:
\begin{equation}
    \begin{split}
        \label{eq-PDHP-text}
        \mathcal{L}(C_i=c \vert N_{<i,c}, n_i, \theta_0) &= P(n_i \vert C_i=c, N_{<i,c}, \theta_0)\\ 
        &=\frac{\Gamma(N_c+\theta_0)}{\Gamma(N_c+n_i+\theta_0)} \prod_v \frac{\Gamma(N_{c,v} + n_{i,v} + \theta_{0,v})}{\Gamma(N_{c,v}+\theta_0)}
    \end{split}
\end{equation}
where $N_c$ is the total number of words in \gls{cluster} $c$ from observations previous to $i$, $n_i$ is the total number of words in document $i$, $N_{c,v}$ the count of word $v$ in \gls{cluster} $c$, $n_{i,v}$ the count of word $v$ in document $i$ and $\theta_0 = \sum_v \theta_{0,v}$.

\subsubsection{Posterior distribution}
The resulting posterior distribution of the $i^{th}$ document over \glspl{cluster} is calculated using Bayes theorem. It is proportional to the product of the textual likelihood (Eq.\ref{eq-PDHP-text}) and the temporal Powered Dirichlet-Hawkes prior (Eq.\ref{eq-PDHP-prior}):
\begin{equation}
\label{eq-likModelTot}
\begin{split}
    &P(C_i = c \vert r, n_i, t_i, N_c, \mathcal{H}_{<t, c})\\
    \propto &\underbrace{P(n_i \vert C_i=c, N_{<i,c}, \theta_0)}_{\text{Textual likelihood}} \underbrace{P(C_i = c\vert t_i, r, \lambda_0, \mathcal{H}_{<t_i,c})}_{\text{Temporal prior}} \\
    = &\frac{\Gamma(N_c+\theta_0)}{\Gamma(N_c+n_i+\theta_0)} \prod_v \frac{\Gamma(N_{c,v} + n_{i,v} + \theta_{0,v})}{\Gamma(N_{c,v}+\theta_0)}\\
    &\times \begin{cases}
    \frac{\lambda_c^r(t_i)}{\lambda_0 + \sum_{c'} \lambda_{c'}^r(t_i)} \text{ if c = 1, ..., C}\\
    \frac{\lambda_0}{\lambda_0 + \sum_{c'} \lambda_{c'}^r(t_i)} \text{ if c = C+1}
    \end{cases}
\end{split}
\end{equation}
We recall that $\lambda_c(t)$ is defined Eq.~\ref{eq-PDHP-HawkesClus}. The textual likelihood to open a new \gls{cluster} $C+1$ is computed by setting $N_{C+1,v}=0$ -- because it is empty before being opened.

\subsubsection{Algorithm for parameters inference}
\label{PDHP-SMC}
We use a similar algorithm to the one in \citep{Du2015DHP}. Briefly, the algorithm is a sequential Monte-Carlo (\acrshort{SMC}) that takes one document at a time in their order of arrival. The algorithm starts with a number $N_{part}$ of particles whose weights are $\omega_{p} = \frac{1}{N_{part}}$, each of which will keep track of a hypothesis on documents \glspl{cluster}. After a few iterations, particles that contained unlikely allocation hypotheses are discarded and replaced by more likely ones. The likeliness of a hypothesis is encoded in the weights of each particle $\omega_p$. Such genetic algorithms are favoured for optimizing DHP-based models. These models are not convex, and genetic algorithms allow to tackle the problem sequentially.

\begin{figure}
    \centering
    \includegraphics[width=\textwidth]{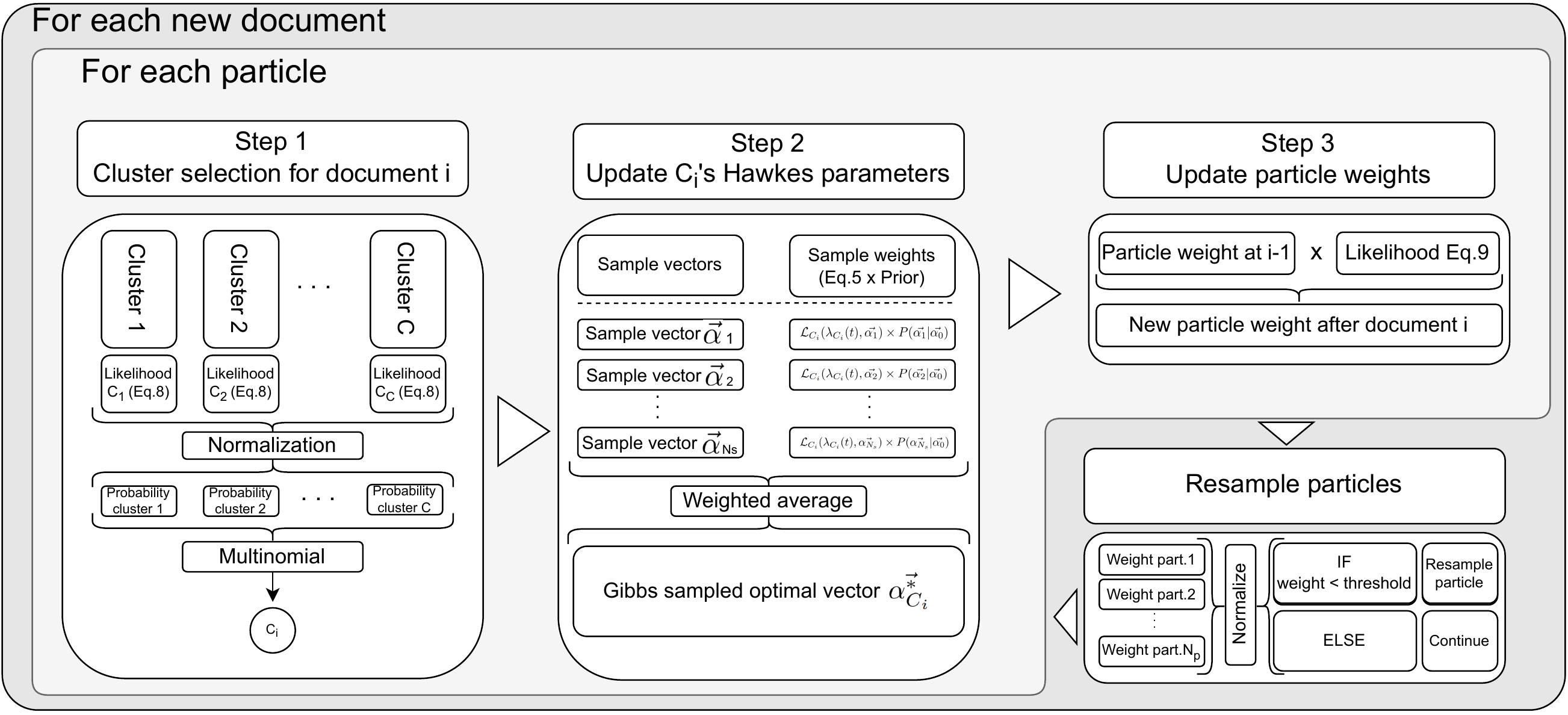}
    \caption[PDHP - Sequential Monte Carlo algorithm]{\textbf{Schematic workflow of the SMC algorithm} --- For each new observation from a stream of document, we run steps 1 (sample document's \gls{cluster}), 2 (update sampled \gls{cluster}'s internal dynamics) and 3 (update particle hypothesis' likeliness) for each particle, and then discard particles containing the less likely hypothesis on \gls{cluster} allocation.}
    \label{fig-SchemaSMC}
\end{figure}

For each particle, when a new document arrives, (1) the \gls{cluster} of the document is sampled according to a Categorical distribution over all \glspl{cluster}, whose weights are determined by Eq.~\ref{eq-likModelTot}. After the \gls{cluster} of the new document has been sampled, (2) the kernel weights $\vec{\alpha}$ from Eq.~\ref{eq-PDHP-HawkesClus} are updated using Eq.~\ref{eq-likHawkes}. For efficiency purpose, we sample $\vec{\alpha}$ from a set of $N_s$ pre-computed $\vec{\alpha}$ vectors. We finally (3) update the weights $\omega_p$ of each particle according to the posterior Eq.~\ref{eq-likModelTot} such as $\omega_p^{(n+1)} = \omega_p^{(n)} \times \text{Eq.~\ref{eq-likModelTot}}$. If the weight of a particle falls below a value $\omega_{thres}$, the particle is discarded and replaced by another existing one with sufficient weight. The full process is illustrated in Fig.~\ref{fig-SchemaSMC}. By updating incrementally the likelihood associated with each of the pre-computed $\vec{\alpha}$ sample vectors, the algorithm processes each new observation in constant time $\mathcal{O}(1)$.

The task of updating kernel coefficients (2) is the same as in any Hawkes process, and the task of updating particle weights and resampling them (3) is common to any SMC algorithm. The change induced by the PDHP compared to the DHP happens in step (1). First of all, we note that for $r=1$ the PDHP prior is identical to the DHP prior. From Section~\ref{PDP}, lowering the value of $r$ reduces the ``\gls{rich-get-richer}'' aspect of the PDP (``rich-get-less-richer''), whereas increasing it leads to a ``rich-get-more-richer'' effect. These metaphors can be translated as follows in our temporal context: for lower values of $r$, the relative difference between \gls{cluster}'s temporal intensities plays a less significant role in \gls{cluster} selection, whereas higher values of $r$ tend to exacerbate these differences and make the temporal aspect of the greatest consequence on the choice of a \gls{cluster}. In other words, tuning the value of $r$ allows giving more or less importance to the temporal aspect of the clustering. This is illustrated in Fig.~\ref{fig-ex-varr}. 

In Fig.~\ref{fig-illustr-PDHP}, we plot the situation when a new observation gets assigned a \gls{cluster}. The associated Hawkes intensities are the base to compute the prior probability for either \gls{cluster}. This quantity is then modulated using $r$ to give more or less importance to intensity differences between \glspl{cluster}. In Fig.~\ref{fig-ex-varr}, we plot the probability for various \glspl{cluster} to be chosen (which is directly proportional to the posterior distribution, see Eq.~\ref{eq-likModelTot}) according to $r$ when their textual likelihood and Hawkes process intensity are known. Note that for $r=0$, the probability for any \gls{cluster} to get chosen is linearly proportional to its textual likelihood (Dirichlet-Uniform process), whereas when $r$ increases, the probability of getting chosen gets closer to a selection based on the temporal aspect only.

\begin{figure}
    \centering
    \includegraphics[width=0.7\textwidth]{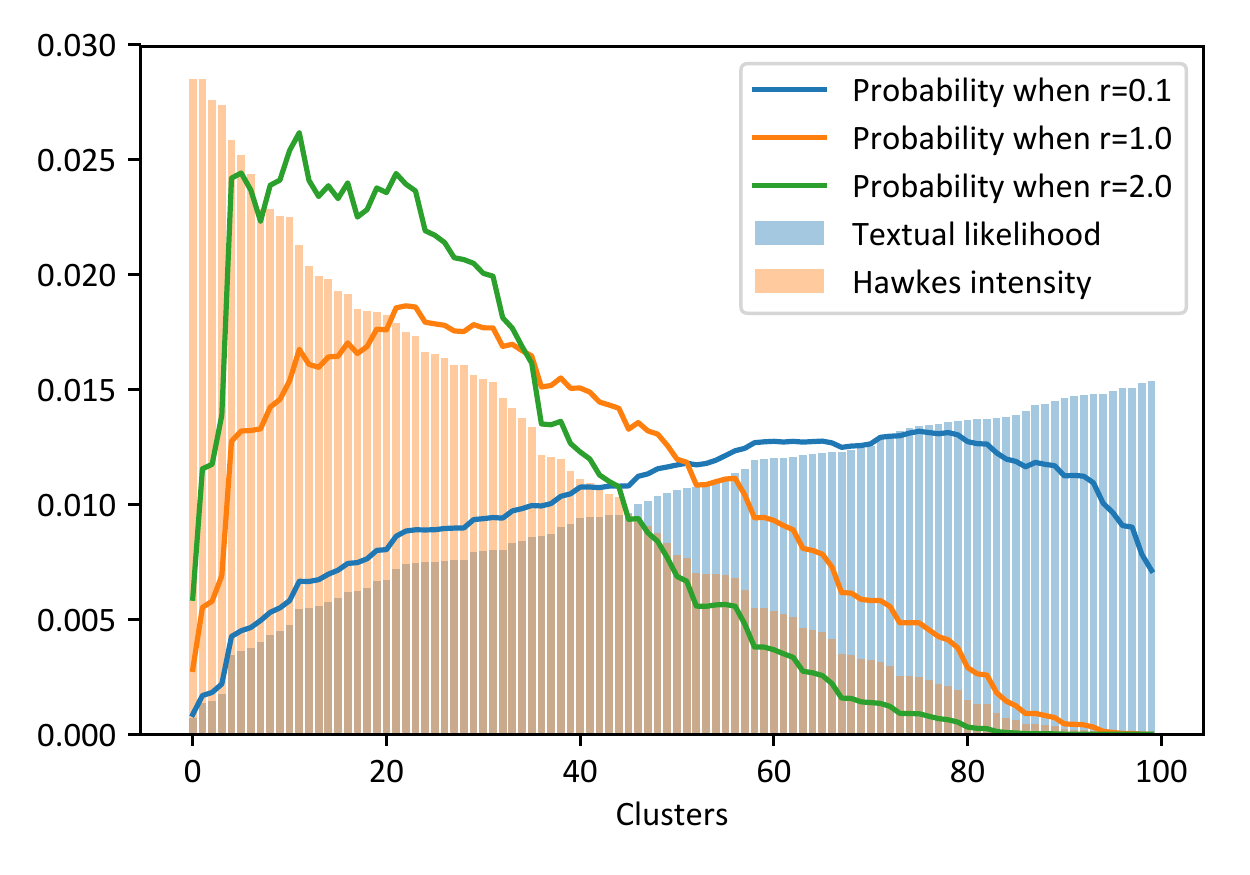}
    \caption[PDHP - Effect of $r$ on \gls{cluster} selection probabilities]{\textbf{Effect of $r$ on \gls{cluster} selection probabilities} --- The probability for each \gls{cluster} to get chosen (solid lines) for several values of $r$ and fixed individual textual likelihood (blue bars) and Hawkes intensity (orange bars).}
    \label{fig-ex-varr}
\end{figure}

This makes the main interest of the PDHP model. Tuning the parameter $r$ allows choosing whether inferred \glspl{cluster} are based on textual or temporal considerations. It generalizes several state-of-the-art works, which are special cases of the PDHP for different values of $r$. The \acrshort{DHP} \citep{Du2015DHP} is equivalent to \acrshort{PDHP} for $r=1$; the \acrshort{UP} \citep{Wallach2010UnifP} is equivalent to \acrshort{PDHP} when $r=0$.
In the following sections, we show how fine-tuning $r$ systematically yields significantly better results than setting it to $r=0$ or $r=1$ (up to a gain of 0.3 on our experiments' normalized mutual information metric). We also show how varying it allows to recover one kind of clustering or the other (textual or temporal) with high accuracy and see how it affects clustering results on several real-world datasets.

\subsection{Experiments}
\subsubsection{Synthetic data}
\paragraph{Synthetic data generation}
We simulate a case where only two \glspl{cluster} are considered. Each \gls{cluster} has its own vocabulary distribution over 1,000 words and its own kernel weights $\vec{\alpha}$, with Gaussian Hawkes kernel functions $\vec{\kappa (t)}$ of parameters $(\mu, \sigma)$=(3, 0.5), (7, 0.5) and (11, 0.5) (see Eq.~\ref{eq-PDHP-HawkesClus}). Finally, we set $\lambda_0=0.05$.
We first simulate one \gls{independent} Hawkes process per \gls{cluster} using the Tick Python library \citep{Bacry2017Tick}. The processes are stopped at time $t=1500$, which makes a rough average of 7,000 events per run. Then we associate each simulated observation with a sample of 20 words drawn from the corresponding \gls{cluster}'s word distribution. The inference has been performed using an 8 core processor (i7-7700HQ) with 8GB of RAM on a laptop, which underlines how scalable the algorithm is. As stated before, the algorithm processes each new document in constant time $\mathcal{O}(1)$, which ranged from 0.05s on synthetic data to maximum 1s on real-world data. Note that this number is directly proportional to the number of active inferred \glspl{cluster}, and thus depends strongly on the dataset.

\begin{figure}
    \centering
    \includegraphics[width=\columnwidth]{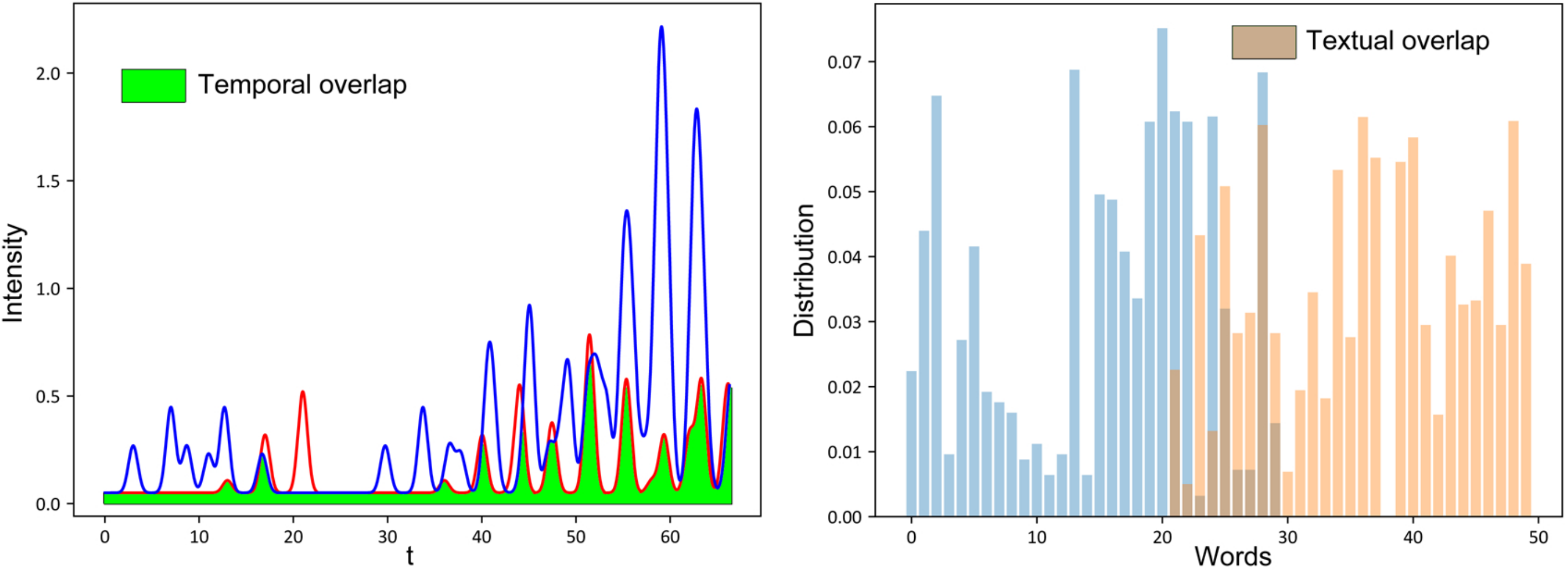}
    \caption[PDHP - Overlaps]{\textbf{Overlaps} --- (Left) Temporal overlap is defined as the ratio between the area common to two Hawkes intensities and the total area under the intensity functions. (Right) Textual overlap is defined as the proportion of vocabulary that is common to two \glspl{cluster}, weighted by the probability of words within their respective \gls{cluster}.}
    \label{fig-illustration-overlaps}
\end{figure}

We generate ten such datasets for every considered value of vocabulary overlap and Hawkes intensities overlap, which leave us with 200 datasets (5 values of textual overlap x 4 values of temporal overlap x 10 datasets). Overlap is defined as the shared area of two distributions, normalized by the total area under the distributions. For instance, if the vocabulary of one \gls{cluster} ranges from words "1" to "100" with uniform distribution, and the vocabulary of another \gls{cluster} from words "50" to "150" with uniform distribution, the overlap equals 50\%. We define the overlap of Hawkes process intensity in the same way. If the triggering Hawkes kernel of one \gls{cluster} is a Gaussian function with $(\mu, \sigma)=(3, 1)$ and one associated observation at $t=0$, and the triggering kernel of the other is also a Gaussian function but with $(\mu, \sigma)=(5, 1)$ also with an associated observation at $t=0$, the overlap equals 32\% (see Fig.~\ref{fig-illustration-overlaps}). When computing the Hawkes intensity overlap, every observation within a \gls{cluster} and its associated timestamp are considered. The definition of overlaps is illustrated in Fig.~\ref{fig-illustration-overlaps}. To enforce a given vocabulary overlap (Fig.~\ref{fig-illustration-overlaps}-right), we shift the word distributions of the \glspl{cluster} from which events' vocabulary is sampled. To enforce a given Hawkes intensities overlap (Fig.~\ref{fig-illustration-overlaps}-left), we shift the event times of every event in one of the \glspl{cluster} until we get the correct overlap ($\pm 5\%$).

Note that we consider ten different datasets instead of considering ten runs per dataset for two reasons. Firstly, the generation of Hawkes processes is highly stochastic, so a model might perform significantly better on a single dataset only by chance. Secondly, given the way the SMC algorithm works, the standard deviation between runs is small: at each iteration, $N_{part}$ clustering hypotheses are tested, which is equivalent to running $N_{part}$ times a single clustering algorithm. We heuristically set $N_{part}=8$, as we observe no significant improvement using more particles.

The other parameters we use for clustering synthetic data are: $\alpha_0=0.1$, ${\theta_0=1}$, $\vec{\kappa(t)} = [\mathcal{G}(t; 3, 0.5), \mathcal{G}(t; 7, 0.5), \mathcal{G}(t; 11, 0.5)]$ with $\mathcal{G}(t; \mu, \sigma)$ the Gaussian function, $N_{samples}=2.000$ and $\omega_{thres}=\frac{1}{2N_{part}}$.

\begin{figure}
    \centering
    \includegraphics[width=\columnwidth]{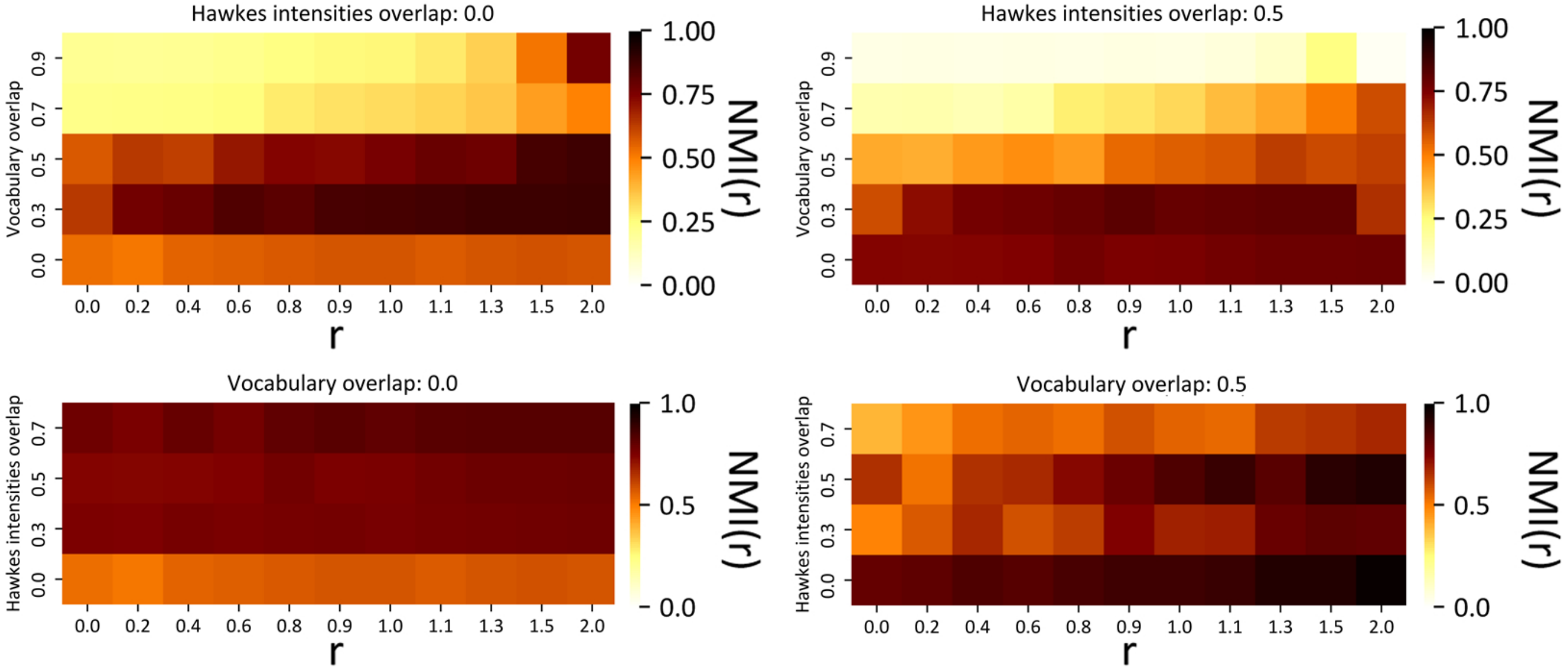}
    \caption[PDHP - PDHP yields good NMI values on synthetic data]{\textbf{PDHP yields good NMI values} --- Normalized mutual information (\acrshort{NMI}) for various values of $r$, intensities overlap and vocabulary overlap, for one dataset per combination. The results for $r=0$ are the output of the Uniform process, the results for $r=1$ are the output of the DHP \citep{Du2015DHP}, and the other values of $r$ correspond to other particular cases of PDHP. The darker the better. Overall, PDHP yields good NMI values (the maximum being 1).}
    \label{fig-res-NMI}
\end{figure}

We are interested in varying both vocabulary and intensities overlap to exhibit the limits of DHP and how PDHP overcomes them. Note that in the synthetic data experiments in \citep{Du2015DHP} (Figs.3a and 3b), the intensities overlap is almost null, which makes the task easier for the Hawkes part of the algorithm. The primary metric we use throughout the experimental section is the normalized mutual information (\acrshort{NMI}). During the experiments, we also considered the Adjusted normalized rand index and the V-measure, which are adapted to evaluate clustering results when the number of inferred \glspl{cluster} is different from the true number of \glspl{cluster}. The observed trends in results from these other metrics are identical to the ones observed for NMI. Therefore, we choose to report only the results of the latter for clarity. 

The NMI metric is standard when evaluating non-parametric clustering models. It compares two \gls{cluster} partitions (i.e., the inferred and the ground truth ones); it is bounded between 0 (each true \gls{cluster} is represented to the same extent in each of the inferred ones) and 1 (each inferred partition comprises 100\% of one true \gls{cluster}).

\paragraph{PDHP yields better results as vocabulary overlap increases}
We report our results when the intensities overlap is null, with varying $r$ and the vocabulary overlap in Fig.~\ref{fig-res-overlaps}a. Because we consider ten different datasets for each set of overlap parameters, it makes no sense to report the absolute average NMI since it can vary greatly from one dataset to the other. Instead, we plot the relative NMI difference between PDHP and DHP ($r=1$), which we expect to be less dependent on the datasets we consider. However, to give an idea of the typical performance for some parameters, we also provide raw results for one run in Fig.~\ref{fig-res-NMI}.

\begin{figure}
    \centering
    \includegraphics[width=\columnwidth]{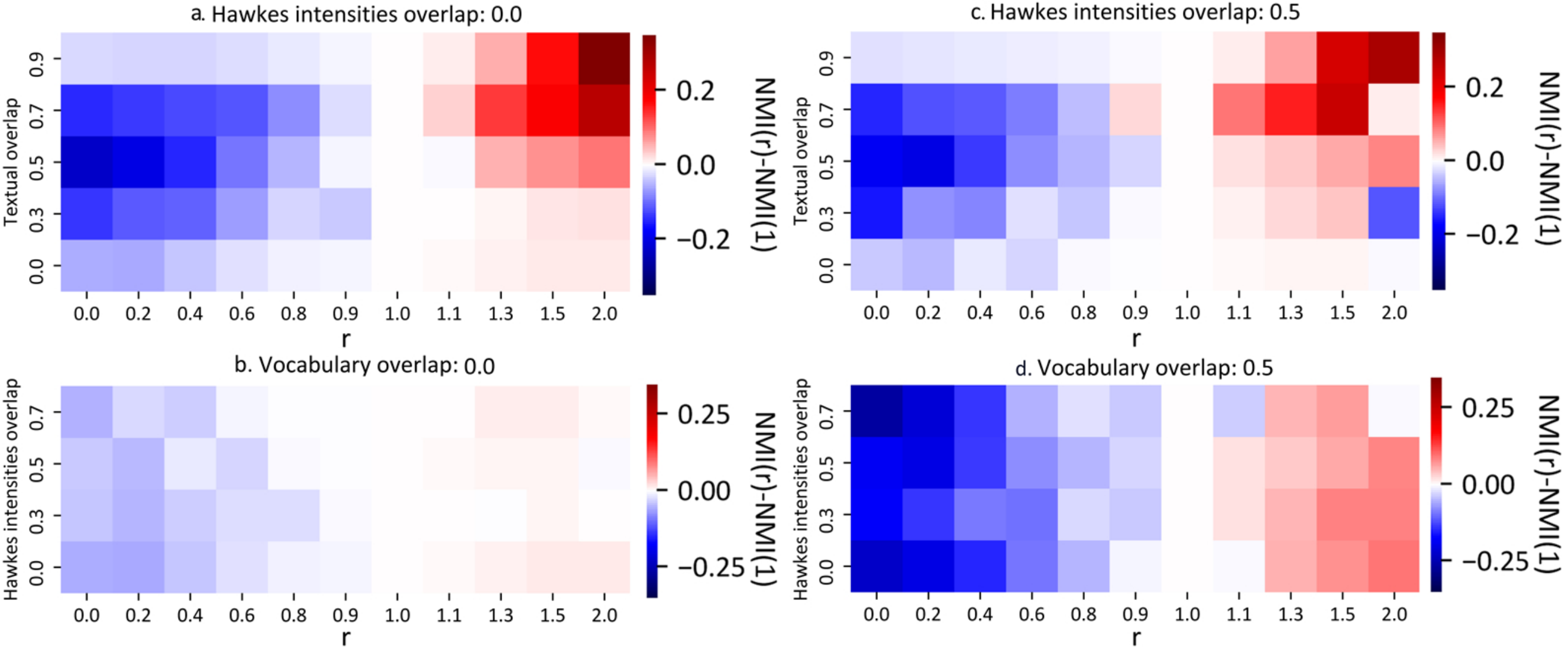}
    \caption[PDHP - Experimental comparison to DHP]{\textbf{PDHP performs better than DHP} --- Difference between the normalized mutual information (\acrshort{NMI}) of PDHP and DHP model \citep{Du2015DHP} for various values of $r$, intensities overlap and vocabulary overlap, averaged over all the datasets. Red means PDHP performed better, blue means PDHP performed less well. Because PDHP($r=1$)=DHP, the column $r=1$ show no difference. PDHP allows to increase results on NMI by as much as 0.3 over DHP.}
    \label{fig-res-overlaps}
\end{figure}

There is a clear correlation between efficiency, vocabulary overlap and $r$, with a gain on NMI up to $+30\%$ of its maximal value over DHP. As stated at the end of the "Model" section, this result was expected: the more vocabulary overlap grows, the less textual \gls{content} carries valuable information to \gls{cluster} the documents. This observation supports the concerns raised in \citep{Yin2018ShortTextDHP} about the efficiency of DHP for clustering short text documents. However, Hawkes intensities overlap being null, the arrival time of events carries highly valuable information when textual \gls{content} does not allow to distinguish \glspl{cluster} well. Therefore, PDHP provides a way to tackle the problem raised in \citep{Yin2018ShortTextDHP} without the need to sample observations.

Conversely, when vocabulary overlap is null, the textual \gls{content} provides enough information to distinguish \glspl{cluster} correctly. The temporal dimension only allows refining the results with no significant improvement for all values of $r$.

Finally, we can see how the Dirichlet-Uniform process (DUP, $r=0$) consistently yields worse performances under these settings. Once again this is expected, since in this synthetic experiment intensities overlap carries valuable information about events clustering; DUP only considers textual information and therefore misses valuable clues.

\paragraph{PDHP yields similar results for null vocabulary overlap}
We report comparable results in Fig.~\ref{fig-res-overlaps}b. Here, we consider a null vocabulary overlap for various values of $r$ and of Hawkes intensities overlap. The situation is now the opposite: the textual \gls{content} always carries valuable information about \glspl{cluster}, whereas the temporal aspect does not. We observe the same trend as in Fig.~\ref{fig-res-overlaps}a --note that the colour scale is the same. Varying the value of $r$ does not significantly change the performance of clustering, meaning the textual \gls{content} always carries enough information. This plot shows that PDHP can handle greater intensities overlap without collapsing into unrealistic clustering. Since in most real-case applications, many \glspl{cluster} with various dynamics may coexist simultaneously, it is comforting that the PDHP can also handle this case.

\paragraph{PDHP yields better results in more realistic situations}
We finally report the results for intermediate values of intensities and vocabulary overlaps in Fig.~\ref{fig-res-overlaps}c,d. In real-world applications, it seldom happens that topics’ vocabularies do not overlap at all. For instance, a quick analysis of \textit{The Gutenberg Webster's Unabridged Dictionary by Project Gutenberg} shows that 22\% of English words are associated with more than one definition. A more detailed analysis would need to consider the usage frequency of words to get correct statistics. Still, this number provides an estimate of the effective vocabulary overlap in real-world situations.

In Fig.~\ref{fig-res-overlaps}c, we present the results for a fixed intensities overlap of 0.5 versus various values of $r$ and vocabulary overlaps, and in Fig.~\ref{fig-res-overlaps}d for a fixed vocabulary overlap of 0.5 versus various values of $r$ and intensities overlaps. Once again, we see that, on average, using PDHP can increase the NMI over DHP up to +20\% of the maximum possible value.

\paragraph{PDHP finds textual or temporal \glspl{cluster} depending on r}
\begin{figure}
    \centering
    \includegraphics[width=0.8\columnwidth]{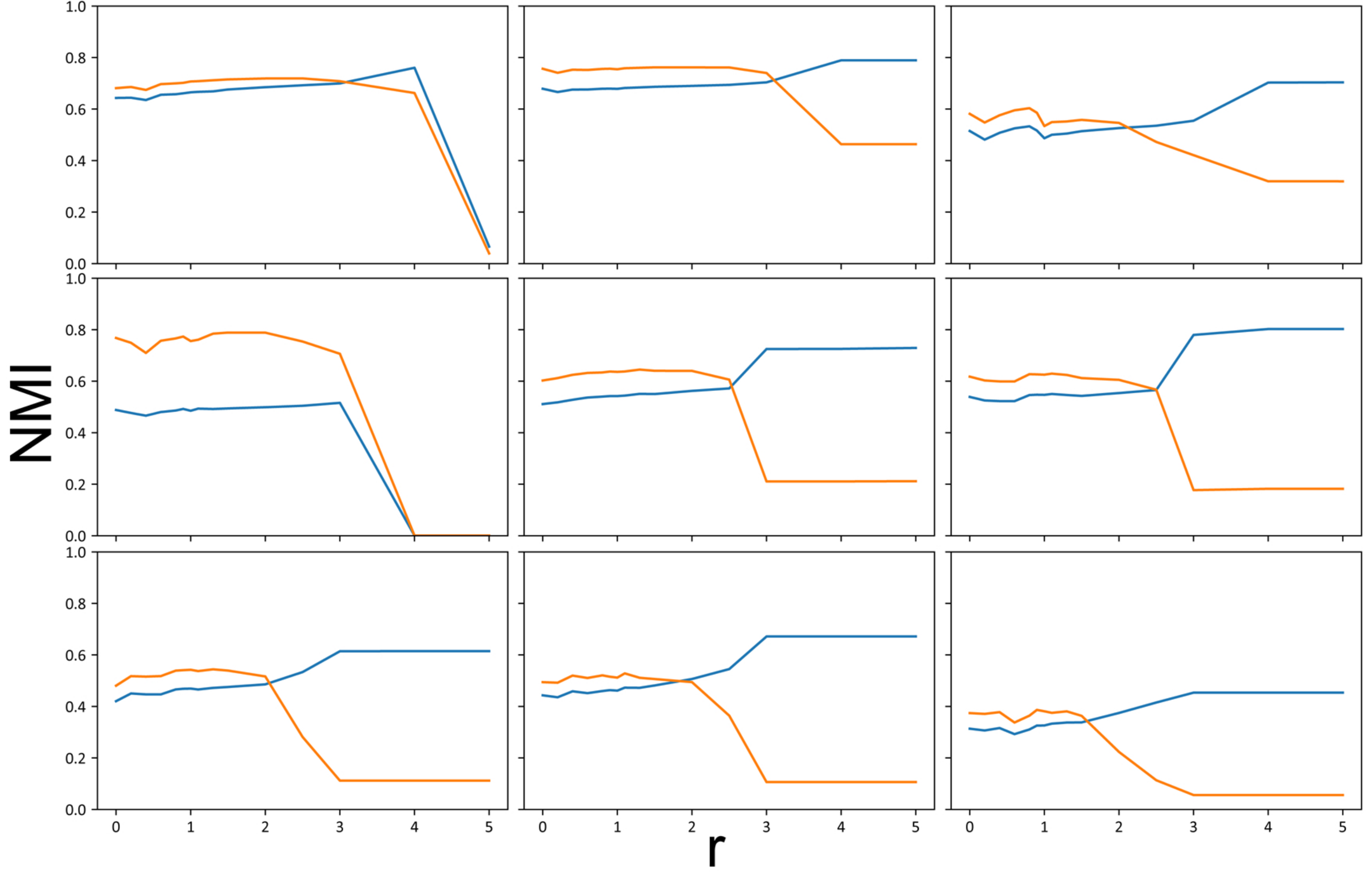}
    \caption[PDHP - Results when \gls{content} and dynamics are decorrelated]{\textbf{Textual (orange) and temporal (blue) NMI vs r when textual and temporal \glspl{cluster} are decorrelated} --- From top-left to bottom-right, there are 10\%, 20\%, 30\%, 40\%, 50\%, 60\%, 70\%, 80\% and 90\% of generated events that have been randomly re-assigned a textual \gls{cluster}. The orange curves are the textual NMI vs $r$, which evaluate how well events whose vocabulary has been sampled from the same distribution are clustered together; the blue curves are the temporal NMI vs $r$, which evaluate how well events following the same temporal dynamic are correctly clustered together. The values presented are for one dataset. We see that varying $r$ allows retrieving the right temporal ($r$ large) or textual \glspl{cluster} ($r$ small).}
    \label{fig-res-perc_rand1run}
\end{figure}
We now slightly modify our experimental setup. Instead of considering that textual \glspl{cluster} and Hawkes intensities are perfectly correlated, we consider a decorrelated case. A document whose vocabulary is drawn from \gls{cluster} $C_1$ can now follow the same temporal dynamics as \gls{cluster} $C_2$. If we imagine a dataset of news articles published online, it is clear why this might happen frequently. If popular newspapers such as New York Times or Reuters publish an article on a topic $A$ at a time t, it is likely to trigger snowball publications of related articles from less popular journals. ``Popularity'' is chosen as an indicator in this example, but it may be any other external parameter (centrality in news networks, support of publications, etc.). In this case, the article's textual \gls{content} allows to uncover a ``story of publication'', that is, how the article has been \gls{spread}, when publication spikes are, etc. However, the temporal information would help understand the dynamics of publications \gls{interaction}: which reduced set of articles triggered the publication of subsequent ones. 

In \citep{Du2015DHP}, it is assumed that every document within \glspl{cluster} follows a unique dynamic. We relax this hypothesis in our datasets as follows. For null textual and temporal overlaps, after a dataset has been generated, we resample the textual \glspl{cluster} of a fraction of randomly selected events, as well as the words associated with the event. In doing so, we decorrelate temporal and textual \glspl{cluster}. Therefore, an event is now described by two \gls{cluster} indicators: its temporal \gls{cluster} (which Hawkes intensity made the event appear where it is) and a textual \gls{cluster} (which vocabulary has been used to sample the words the event contains).

For completeness, we also show the results for various decorrelations for one run in Fig.~\ref{fig-res-perc_rand1run}. To better understand the tendency of NMIs with respect to $r$, we plot the average difference between the NMI of textual clustering and the NMI temporal clustering over all the datasets. Explicitly: $\Delta NMI = NMI_{text}-NMI_{temp}$. The results are reported in Fig.~\ref{fig-res-decorr}.

\begin{figure}
    \centering
    \includegraphics[width=\columnwidth]{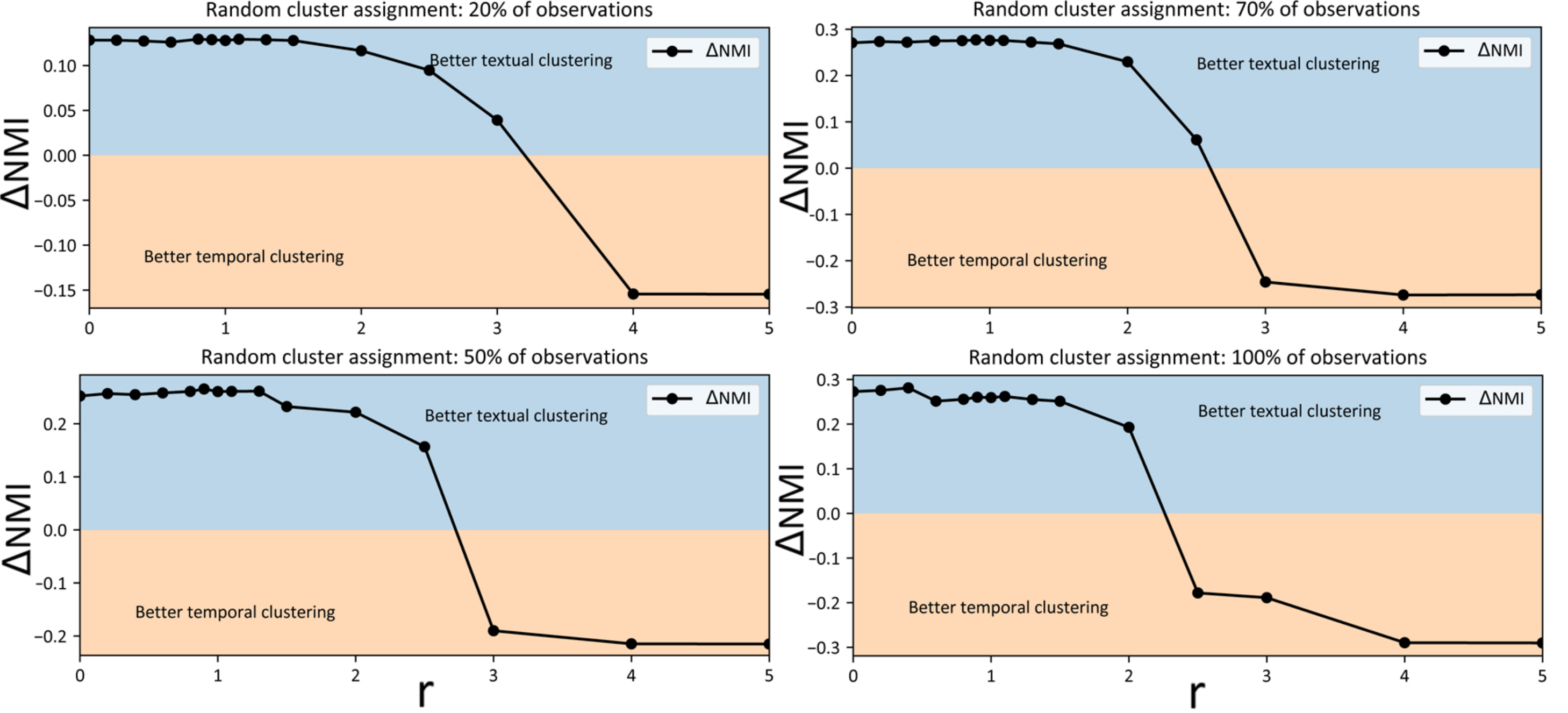}
    \caption[PDHP - Varying $r$ to choose between textual or temporal clustering]{\textbf{Varying $r$ allows to choose between textual or temporal clustering} --- The black line plots the difference between the NMI of textual and temporal clustering. For small $r$, textual clustering is far better than temporal clustering, and for large $r$, the situation is reversed. This is because $r$ determines the importance given to the temporal dimension and therefore allows choosing between retrieving temporal or textual \glspl{cluster}.}
    \label{fig-res-decorr}
\end{figure}

As supposed at the end of the ``Model'' section, varying $r$ allows retrieving one clustering or the other. Note that the value $r$ of transition from text to time clustering depends directly on the dataset considered: number of words sampled, vocabulary size, overlaps, etc.

\paragraph{PDHP efficiently infers the temporal dynamics of each \gls{cluster}}
Finally, we show that PDHP correctly infers kernels' parameters in every situation where events are correctly assigned to their temporal \gls{cluster}. The results are reported in Fig.~\ref{fig-err-kernels}. We looked at the mean absolute error (\acrshort{MAE}) and the mean Jensen-Shannon divergence (MJS) between the vector $\vec{\alpha}$ used to generate the dataset and the inferred one. We note in Fig.~\ref{fig-err-kernels} that when the textual overlap is small, the inferred kernel is close to the real one and $r$ has a minor impact on the result. This is because the inferred kernels mostly depend on the correctness of inferred \glspl{cluster}: when observations are allocated to the right \glspl{cluster}, the Hawkes process inference considers relevant information when inferring these \glspl{cluster}' dynamics. However, when observations are misallocated, the Hawkes process infers dynamics also based on times that are not supposed to contribute to this \gls{cluster}'s dynamic. When the clustering task is simple and yields good results (that is, when the textual overlap is small, see Fig.~\ref{fig-res-NMI}), the PDHP infers correct temporal dynamics ($\sim 5\%$ MAE); this shows our method correctly accounts for \glspl{cluster} dynamics given the available information.

When vocabulary overlap is large, the value of $r$ significantly influences the kernel inference performances. However, when $r$ is chosen so that \glspl{cluster} are correctly inferred, the kernel inference retrieves well the expected kernels ($\sim 5\%$ MAE). Finally, the temporal kernel inference is expected to be less precise when temporal overlap increases, which is what happens in Fig.~\ref{fig-err-kernels}-bottom-right. In this case, the model does not retrieve well the synthetic kernels even for the optimal $r$. Besides, the error bars get wider as a consequence of the \gls{cluster} allocation being more challenging. Overall, provided the right \glspl{cluster}, we conclude that our method correctly retrieves the inferred temporal kernels.

\begin{figure}
    \centering
    \includegraphics[width=\columnwidth]{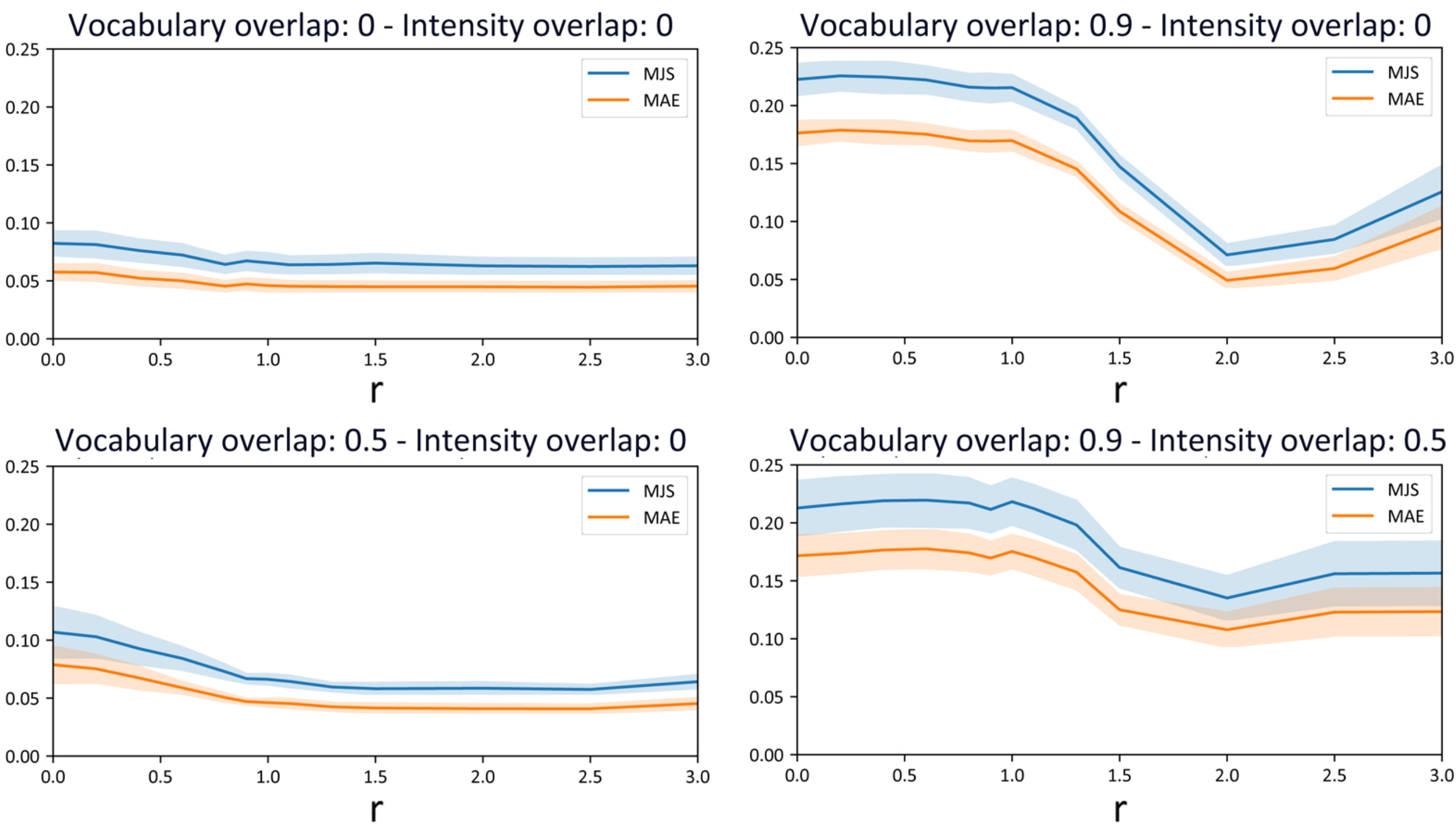}
    \caption[PDHP - Varying $r$ allows to better capture the dynamics at stake]{\textbf{Varying $r$ allows to better capture the dynamics at stake} --- We plot the mean average error and the mean Jensen-Shannon divergence of the inferred kernel function $\alpha$ with respect to the kernel used to generate the data, for various values of temporal and textual overlaps. The standard error bars are computed over 10 \gls{independent} runs. The higher the temporal overlap, the larger the error bars; the larger the textual overlap the more influence has $r$.}
    \label{fig-err-kernels}
\end{figure}

\subsubsection{Real-world application on Reddit}
We use the PDHP prior to model real streams of textual documents. We consider three Reddit datasets \citep{Baumgartner2020PushshiftRedditDataset} about different topics. The \textbf{News dataset} is made of 73.000 titles extracted from the subreddits inthenews, neutralnews, news, nottheonion, offbeat, open\_news, qualitynews, truenews and worldnews, from April 2019. We chose this month because of the wide variety of events that happened then (for instance, Sri Lanka Easter bombings, Julian Assange arrest, first direct picture of a black hole, Notre-Dame cathedral fire). We also consider 15.000 post titles of the subreddit TodayILearned (\textbf{TIL dataset}) and 13.000 post titles of the subreddit AskScience (\textbf{AskScience dataset}) on January 2019. We extracted the nouns, verbs, adjectives, and symbols from the textual data. We run the experiments using the following parameters: $\alpha_0=0.5$, $\theta_0=0.01$, $N_{samples}=2.000$, $N_{part}=8$ and $\omega_{thres}=\frac{1}{2N_{part}}$. The kernel vector $\vec{\kappa}$ is chosen as in \citep{Du2015DHP}. It is made of Gaussian functions, with means located at 0.5, 1, 4, 8, 12, 24, 48, 72, 96, 120, 144 and 168 hours. The variance of each is set to 1, 1, 3, 8, 12, 12, 24, 24, 24, 24, 24 and 24 hours. The shape of the kernel is chosen so that it can account for a dynamic that can occur at different timescales. The algorithm will then infer the weights $\vec{\alpha}$ associated with each entry of the kernel vector $\vec{\kappa}$ for each \gls{cluster}.

\begin{figure}
    \centering
    \includegraphics[width=\columnwidth]{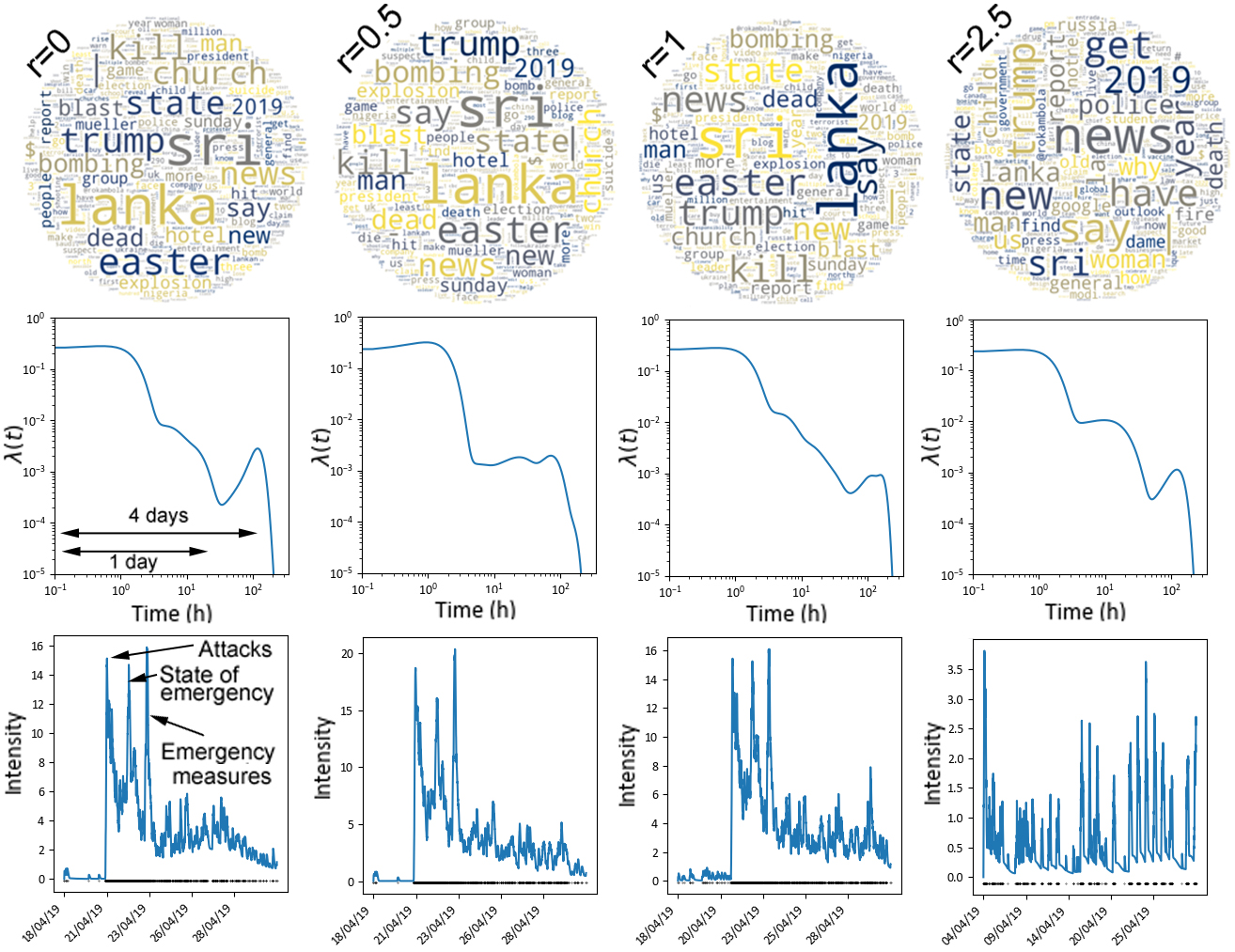}
    \caption[PDHP - Results for the \gls{cluster} about Sri Lanka 2019 bombings]{Wordclouds, triggering kernels and intensities for \glspl{cluster} the most closely related to Sri Lanka 2019 bombings for various values of $r$. The points at the bottom of the intensity plots are individual publication events. Note that triggering kernels are plotted on a log-log scale for visualisation purposes because most of the intensity is focused on small times: dynamics are bursty.}
    \label{fig-res-Srilanka}
\end{figure}
\paragraph{PDHP recovers meaningful stories}
As an illustrative example, we consider the inferred \glspl{cluster} the most closely related to the Sri Lanka Easter bombings of April 2019 in Fig.~\ref{fig-res-Srilanka}. The main bursts in the news related to this event happened on the 21$^{st}$, 22$^{nd}$ and 23$^{rd}$ of January, and respectively correspond to the bombings themselves, the declaration of the state of emergency, and finally their application on the 23$^{rd}$. We plot the temporal kernels associated with this event on a log-log scale because most of the intensity is focused on small times: dynamics of information \gls{spread} are bursty \citep{Karsai2018BurstyHumanDynamics}. We see that inferred dynamics change with $r$ as well as the \gls{cluster}'s vocabulary, which is expected since \glspl{cluster} do not contain the same documents. For $r=0$, the Uniform process infers \glspl{cluster} based on textual information only; the triggering kernel is inferred afterwards. For $r=2.5$ on the contrary, \glspl{cluster} are formed based on the triggering kernel, and textual information follows; we see from the right-plot that this \gls{cluster} captures publications exhibiting a daily intensity cycle; this is visible both in the intensity plot (the bump around $2.10^2$h which is not present on other temporal kernels) and in the real-time axis where publications seem to be packed around specific points in time roughly corresponding to a daily cycle. Given the intensity spikes on 21$^{st}$, 22$^{nd}$, and 23$^{rd}$, it is not surprising that articles about the Sri Lanka bombings are also part of this \gls{cluster}. Note that the more $r$ increases, the more intense the triggering kernel is around 24h. We see from Fig.~\ref{fig-res-Srilanka} that DHP is a specific case of our modelling, and that tweaking the $r$ parameter allows us to retrieve completely different results.

\paragraph{PDHP favours temporal or textual clustering depending on r}
We report the values of log-likelihoods for every dataset and various values of $r$ in Fig.~\ref{fig-likReddit}. The textual likelihood is defined in Eq.\ref{eq-likModelLg}, and the likelihood of a Hawkes process is defined in Eq.\ref{eq-likHawkes}. Note that $r$ does not appear in either Eq.\ref{eq-likModelLg} or Eq.\ref{eq-likHawkes}; the plot in Fig.~\ref{fig-likReddit} thus only reflects the relevancy of the proposed textual modelling or temporal modelling independently from PDHP. Those likelihoods evaluate how well the textual or temporal aspect of the dataset is modelled with no consideration of the model being used. As expected from the synthetic experiments, varying $r$ makes the model more sensitive to either textual or temporal data --note the similarity to Fig.~\ref{fig-res-perc_rand1run}. A low $r$ favours the textual information clustering and is thus better at modelling documents' textual \gls{content}, whereas a high $r$ favours temporal information which makes PDHP better at capturing the publication dynamics.

\begin{figure}
    \centering
    \includegraphics[width=\columnwidth]{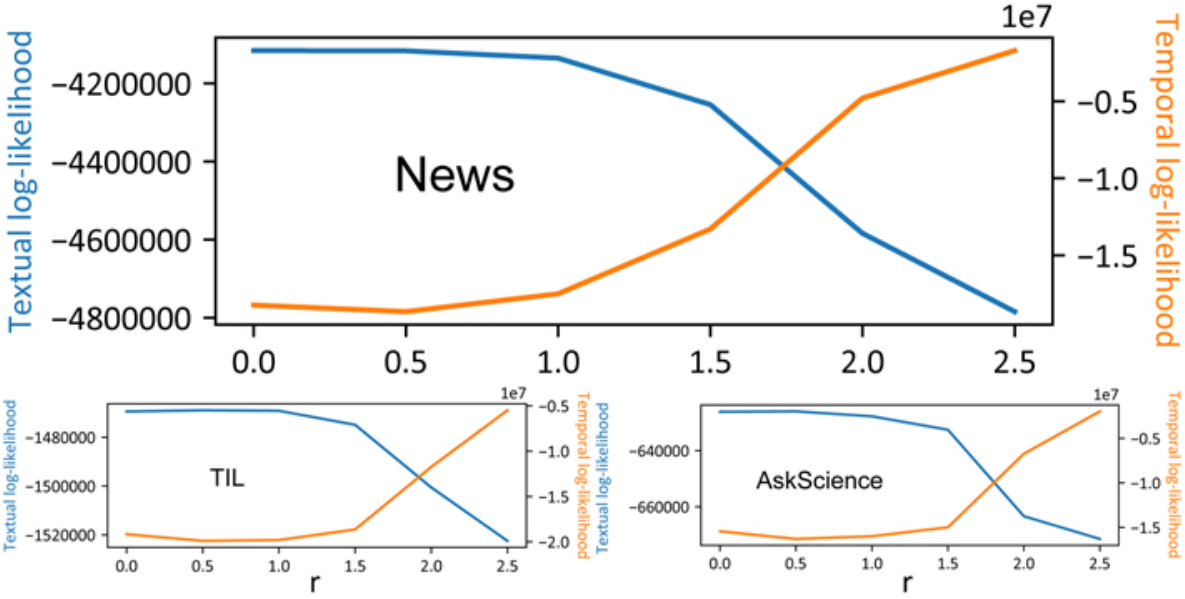}
    \caption[PDHP - Favouring text or time based \glspl{cluster} on real world datasets]{\textbf{$r$ allows to favour text-based of time-based clustering on real world datasets} --- Textual likelihood and Hawkes process likelihood for various values of $r$. The lower $r$ the higher textual likelihood is, and the higher $r$ the higher Hawkes process likelihood is.}
    \label{fig-likReddit}
\end{figure}

\paragraph{PDHP infers sharper textual \glspl{cluster} for low r}
We evaluate how meaningful textual \glspl{cluster} are using an entropy measure. We assume that a \gls{cluster} is meaningful when it contains a reduced set of words; a \gls{cluster} talking about one topic only is more likely to have a smaller vocabulary than a \gls{cluster} about two or more topics.
A way to measure this is to see how \gls{spread} the vocabulary of a \gls{cluster} is using Shannon entropy. Let $N_{c,v}$ be the count of word $v$ in \gls{cluster} $c$. The normalized Shannon entropy of a \gls{cluster} $c$ is defined as:
\begin{equation}
    \label{eq-Entropy}
    S(\vec{N_c}) = \frac{1}{-\log (V)}\sum_v^V \log (\frac{N_{c,v}}{\sum_v' N_{c,v'}})\frac{N_{c,v}}{\sum_v' N_{c,v'}}
\end{equation}
An entropy of 0 means the vocabulary of the \gls{cluster} is concentrated on a single word among the V possible words in the vocabulary; an entropy of 1 means that every of the $V$ words is present to the same extent with probability $\frac{1}{V}$. In Fig.~\ref{fig-entropy-txt}, we plot the mean entropy for various values of $r$ for all the datasets, along with the standard error over the \glspl{cluster}. The results show that on average vocabulary is more concentrated within \glspl{cluster} for low values of $r$. The inflection point of the curves corresponds to what has been previously observed with likelihoods in Fig.~\ref{fig-likReddit} and Fig.~\ref{fig-res-decorr}. On the contrary, higher values of $r$ lead to \glspl{cluster} that comprise a less concentrated vocabulary. This is expected because as $r$ increases, the textual information is no longer the most relevant data for \gls{cluster} formation.

\begin{figure}
    \centering
    \includegraphics[width=0.6\columnwidth]{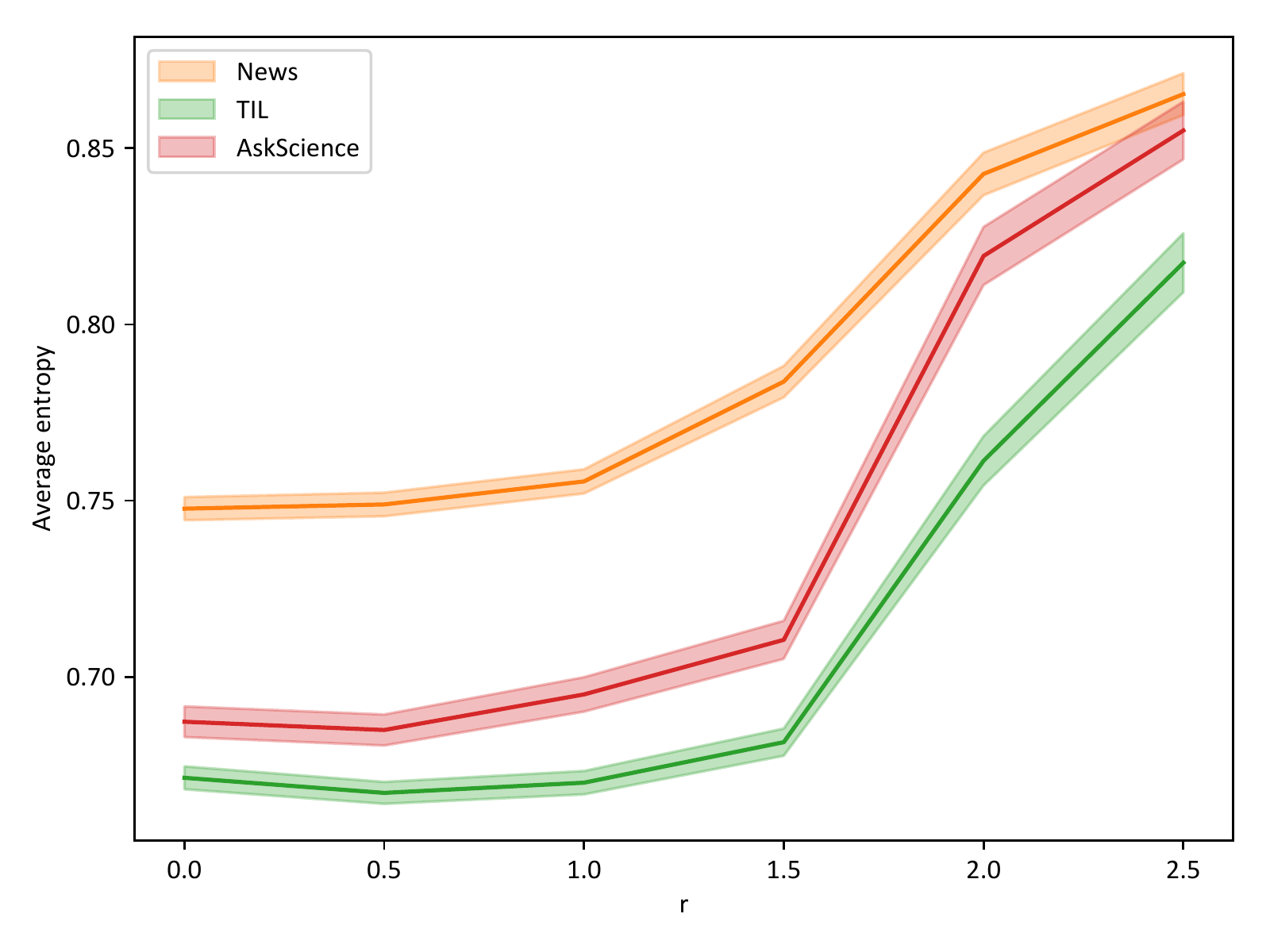}
    \caption[PDHP - Textual \glspl{cluster} are more informative for low values of $r$]{\textbf{Textual \glspl{cluster} are more informative for low values of $r$} --- Weighted average entropy of words distribution for every dataset. Weights corresponds to the number of words within \glspl{cluster}. The error bar represents the standard error over all the \glspl{cluster}.}
    \label{fig-entropy-txt}
\end{figure}

\begin{figure}
    \centering
    \noindent\makebox[\textwidth]{
    \includegraphics[width=1.2\columnwidth]{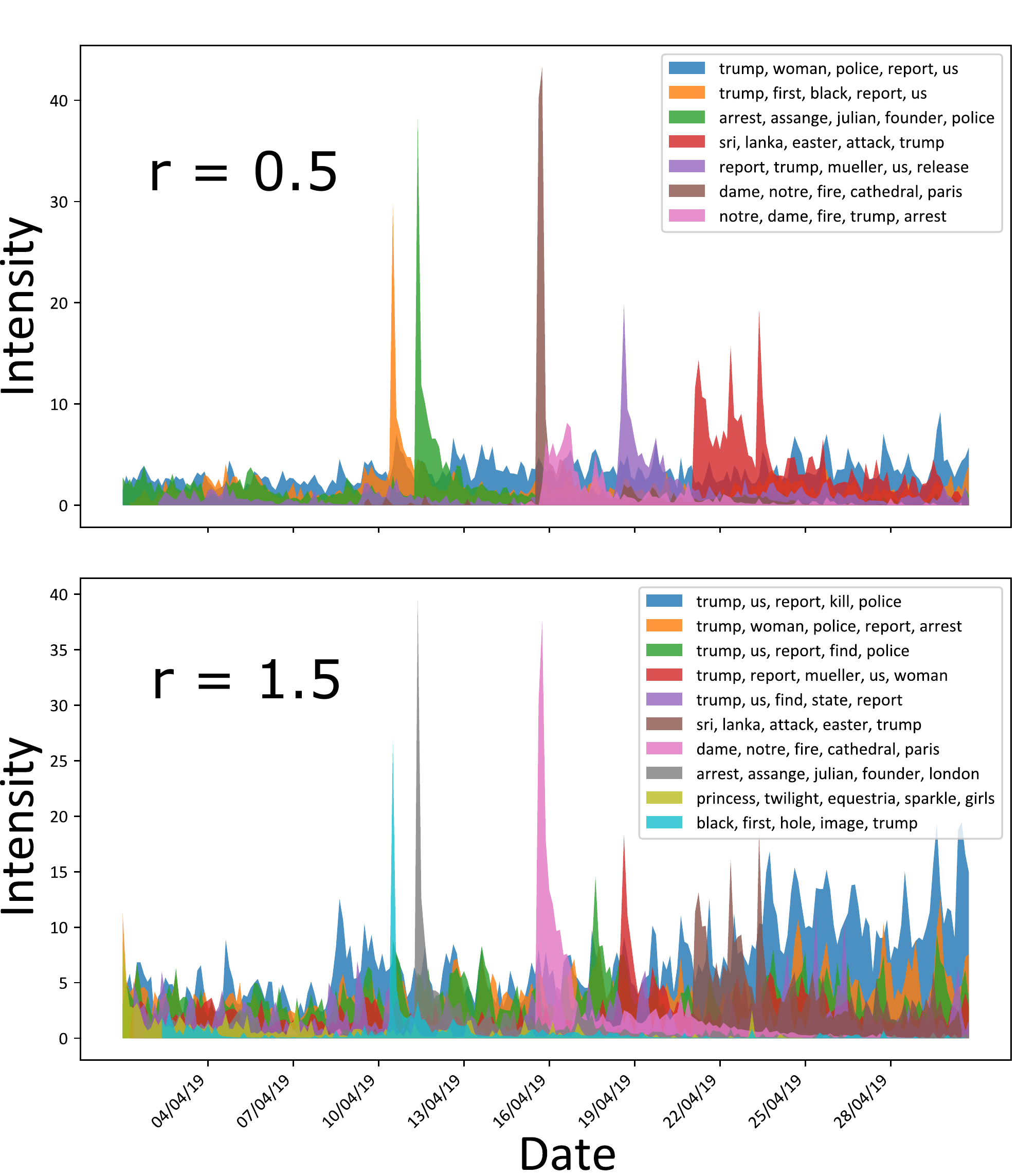}
    }
    \caption[PDHP - PDHP allows for modelling bursty events]{\textbf{PDHP allows for modelling bursty events} --- PDHP's intensity for the News dataset for two values of $r$. A lower $r$ finds globally relevant \glspl{cluster}, whereas a higher $r$ allows to recover shorter bursty events.}
    \label{fig-bursty}
\end{figure}

\paragraph{PDHP controls the burstiness}
In Fig.~\ref{fig-bursty}, we plot the intensity function associated with the News dataset on the real-time axis for several \glspl{cluster} for $r=0.5$ and $r=1.5$. Note that not each of the $\sim$300 inferred \glspl{cluster} are represented, but instead we consider only the ones whose intensity went above 10 at least once, the rest being considered as noise. First of all, both values of $r$ allow to recover the major events of April 2019 (in order of appearance): the first direct picture of a black hole (10/04), the arrest of Julian Assange (11/04), the fire of Notre-Dame de Paris Cathedral (15/04), the release of the Mueller report on Donald Trump (18/04) and the Sri Lanka Easter bombings (21/04). The top 5 words of every \gls{cluster} are reported in the legend. 

When $r$ increases, PDHP retrieves new \glspl{cluster} associated with shorter bursty events. For instance, the \gls{cluster} associated with the release of a new episode of Equestria Girls that went unnoticed for $r=0.5$ has been detected with $r=1.5$. This happens because the episode has not been discussed over a long period, and associated articles have a vocabulary significantly overlapping with other \glspl{cluster}' one. A model relying mostly on textual information might not detect specific words (twilight, Equestria, sparkle, etc.). If detected and a new \gls{cluster} is created, it might then fail to associate subsequent events to this new \gls{cluster} if temporal information plays a lesser role. 
On the other side, a model favouring temporal information is much more likely to associate subsequent events to a new \gls{cluster} despite textual information fitting well older and more populated \glspl{cluster}. 

This can be seen in Fig.~\ref{fig-res-Srilanka}, where the intensity of a kernel peaks at short times.
This results in encouraging the burstiness. When $r$ is large, a given event is likely to be associated with subsequent ones even when the associated vocabularies are only vaguely similar. On the other side, when $r$ is small, older events with closer vocabularies have more chances of getting associated with it despite their intensity not peaking at the new event time.

\paragraph{Recovering publication cycles}
The limit case of encouraging events burstiness is the deterministic allocation of documents to a \gls{cluster} based only on their relative positions on the time axis. This is achieved when $r$ is large. In this case, textual information does not matter and only regularities in the time distributions are detected. We illustrate such a case on the News dataset in Fig.~\ref{fig-deterministicnews}.

In Fig.~\ref{fig-deterministicnews}, we plot on the left the intensity associated with the events for each \gls{cluster} on the real-time axis for $r=2$. We see that the two most populated \glspl{cluster} follow precise dynamics. We added on the right side of the plot the temporal kernel corresponding to each of these \glspl{cluster}. On the right plot, we retrieve the cause of the daily and weekly cycles observed for the largest \gls{cluster} on the left plot. The second most populated \gls{cluster} follows similar dynamics, except that it seems to be shifted by half a day on the real-time axis; the peaks are in phase opposition with the largest \gls{cluster}. It is worth noting that the Notre-Dame fire \gls{cluster} is still detected; this is due to its vocabulary being different enough from the existing \gls{cluster}'s ones to trigger its own \gls{cluster}, and the associated number of documents being consequent in a short time window. Interestingly, immediately after this \gls{cluster} emerged, the dynamics on the real-time axis also follow a decaying circadian circle over three days.

\begin{figure}
    \centering
    \includegraphics[width=\columnwidth]{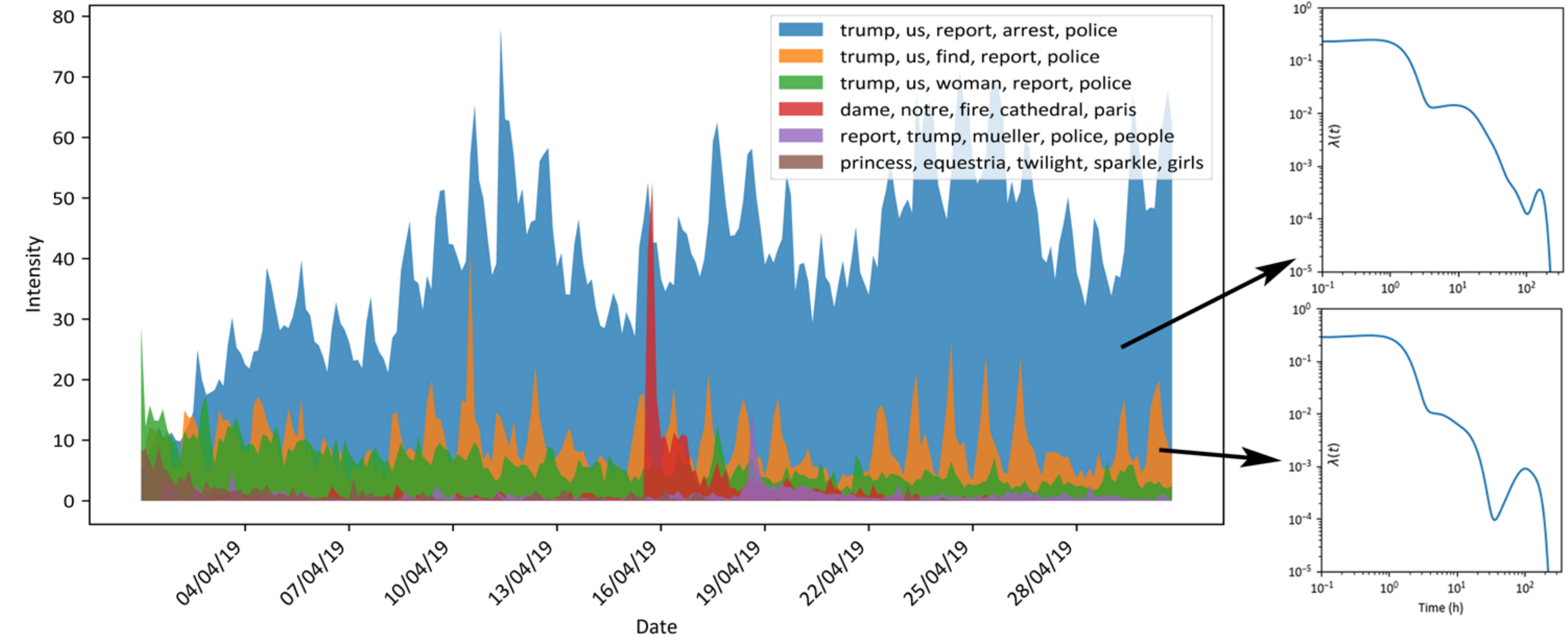}
    \caption[PDHP - Limit case of encouraging bursty events clustering]{\textbf{Limit case of encouraging bursty events clustering} --- We plot PDHP's intensity for the News dataset over the observation period for a large value of $r$. In this case, textual information plays a marginal role, and \glspl{cluster} are inferred based on the events publication dates only.}
    \label{fig-deterministicnews}
\end{figure}

\paragraph{Heuristics}
\textbf{Choosing $r$} --- We saw that in all the previous experiments, the optimal $r$ got determined from a grid-search-like strategy. We did not come up with a way to automatically infer the optimal $r$ without trying several values. 

However, we provide some heuristics regarding the tuning of $r$. 
\begin{itemize}
    \item As $r$ increases, we usually get a smaller number of inferred \glspl{cluster}. This is because considering the temporal dimension adds consistency to the language model; the temporal intensity prior for a new observation is likely to be non-null, which increases the probability of \textit{not} opening a new \gls{cluster} with respect to a model that does not consider time.
    \item As $r$ goes to infinity, we only infer one large \gls{cluster} that comprises all the observations. This is because even the slightest difference in the prior intensities leads to a deterministic \gls{cluster} choice.
\end{itemize}

Our leads to automatically determine $r$ involve computing an ad-hoc objective metric to optimize jointly with the likelihood. Given there is no gold standard for clustering in real-world processes, the choice of $r$, and therefore the choice of such metric, is left to the user. As we showed in Fig.~\ref{fig-res-decorr} and later in Fig.~\ref{fig-likReddit}, the choice of $r$ simply tunes the information on which \glspl{cluster} are based. The clustering objective is to be defined for each situation,  which is made possible by manual tuning of $r$. Such an objective could consist in minimizing the \glspl{cluster}' word distribution entropy, the standard deviation of the effective triggering kernel, or the average distance between events within a \gls{cluster}. A possible procedure for such optimization would involve a multi-arms bandit to deal with this trade-off.
 
\textbf{Number of \glspl{cluster}} --- In previous experiments, we compared our clustering results to the ground truth using the \acrshort{NMI} score. We chose this metric because it allows us to compare a different number of \glspl{cluster} together. Indeed, it is seldom the case that PDHP infers exactly the right number of \glspl{cluster}.

Typically, in our synthetic experiments with 2 ground-truth \glspl{cluster}, the number of \glspl{cluster} could differ significantly at the beginning of the algorithm (up to 10 \glspl{cluster} at once for small values of $r$). However, as the number of observations increases, smaller \glspl{cluster} are discarded as the algorithm converges toward the 2 correct \glspl{cluster}.

In real-world data, the number of \glspl{cluster} can grow very high --even more for small values of $r$. However, the number of observations each of these \glspl{cluster} comprise seems to follow a power-law distribution. Many of the \glspl{cluster} contain very few observations (5 documents or less); they are leftovers from the process as it converges towards more robust statistics. This is why in Fig.~\ref{fig-res-Srilanka} and Fig.~\ref{fig-bursty}, we restrict ourselves to the study of the largest \glspl{cluster} only.

\subsection{Conclusion}
In this section, we built the Powered Dirichlet-Hawkes process as a generalization of the Dirichlet-Hawkes process and Uniform process. We demonstrated how it improves performance on various datasets. When textual information conveys little information, or when temporal information conveys little information, and when both do, our model can correctly retrieve the original \glspl{cluster} used in the generation process with high accuracy.

A central consideration in document clustering is that there are no ``right'' \glspl{cluster}. For instance, we illustrate how textual \gls{content} and temporal dynamics can be decorrelated in real-life applications. PDHP is flexible enough to allow to choose the weight they wish to give to temporal or textual information depending on the situation; when textual and temporal \glspl{cluster} are decorrelated, the model allows to choose which of those to infer.

Many future extensions are possible for PDHP. For instance, it would be interesting to develop its hierarchical version (PHDHP) as it has already been done with HDHP for DHP \citep{Valera2017HDHP}. 
Given several recent works have been based on the regular Dirichlet-Hawkes process, it would be insightful to study how their results vary when using the Powered Hawkes-Dirichlet process instead. A study of the influence of the language model used along with PDHP would also be interesting since the text model we used here was simple on purpose (our focus being on the PDHP prior and not on the model it gets associated with). Typically, we would expect online Bayesian models that account for mutations of textual \glspl{cluster} over time (e.g., shifts in vocabulary, mutating words, etc.) to bring a significant improvement in modelling real-world systems \citep{Blei2006DynamicTopicModel,AlSumait2008OnlineLDA,Bassiou2014OnlinePLSA,Yin2014}

Regarding \gls{interaction} modelling, we are now close to our objective. With PDHP, we can:
\begin{itemize}    
    \item Consider \glspl{entity}' \gls{content}. An \gls{entity} is no more described as a unique identifier, but instead by its semantic \gls{content}. Two \glspl{entity} that convey the same information are now considered as such and clustered together as a more global \gls{entity} (i.e., a topic here).
    \item Model sparse \glspl{interaction}. \Glspl{entity} are now clustered together into temporal \glspl{cluster}. It makes it feasible to spot \gls{interaction} terms between sets of \glspl{entity}. The lifespan of \glspl{entity} is no more a problem since \glspl{cluster} can comprise \glspl{entity} spanning over extended periods, which also increases the data available for each \gls{cluster}.
    \item Model dynamic \glspl{interaction}. Each \gls{cluster} is associated with its own intensity function, which determines its effect on ulterior observations. Eventually, \glspl{entity}’ influence fades away as time goes by.
\end{itemize}
In addition, we can tune the importance given to the temporal and textual dimensions when modelling \glspl{interaction}. By varying $r$, a \gls{cluster} can self-stimulate essentially according to its \glspl{entity} publication times, or essentially according to their \gls{content}.

A major shortcoming of the proposed method is that \glspl{interaction} can only take place within a \gls{cluster}; they are self-\glspl{interaction}. As stated in the introduction, this can be interpreted as the diagonal of the \gls{interaction} matrix in Fig.~\ref{figGpesInter} in its temporal version. Ultimately, we are also interested in studying how different \glspl{cluster} of \glspl{entity} influence each other. Therefore, we propose to extend both \citep{Du2015DHP} and the PDHP (Section~\ref{PDHP}) to the multivariate case in the next section.

\section{Multivariate Powered Dirichlet-Hawkes Process -- Final model}
\label{MPDHP}

\subsection{Introduction}

\subsubsection{Multivariate extension of PDHP}
In this section, we extend the (univariate) Powered Dirichlet-Hawkes process introduced in Section~\ref{PDHP} to the multivariate case as the Multivariate Powered Dirichlet-Hawkes Process (\acrshort{MPDHP}). The various publications in a document's stream will now be able to influence each other. We detail and overcome several technical challenges that arise from considering \gls{interacting} topics, and we conduct a systematic evaluation of MPDHP on a range of synthetic datasets. As a first step, evaluation is conducted on synthetic datasets only, so that we can discuss the performances and limitations of MPDHP in a completely controlled environment. By the end of this section, we want to determine whether it is possible to use MPDHP in a real-world setting

In previous works \citep{Blei2006DynamicTopicModel,GomezRodriguez2013SurvivalAnalysis,Du2015DHP,Valera2017HDHP}, the understanding of large flows of data boils down to summarizing them into a composition of \gls{independent} groups --\glspl{cluster}. However, as discussed throughout this manuscript, the processes at stake are more complex than that, given these \glspl{cluster} are not \gls{independent} of each other; they interact.

Such \gls{interaction} is illustrated in Fig.~\ref{fig:illustration-MPDHP}. This figure has to be considered in regard to Fig.~\ref{fig-illustr-PDHP}, where each group of documents was restricted to self-\glspl{interaction} only; a given topic is assumed to only trigger observations from the same topic. In Fig.~\ref{fig:illustration-MPDHP}, it would mean that the red \gls{cluster} can only trigger observations from the red \gls{cluster}, as in Fig.~\ref{fig-illustr-PDHP}. Here instead, we consider that the red \gls{cluster} also influences the probability of triggering an observation from the blue \gls{cluster}, and conversely.

\subsubsection{Workflow}
In this section, we extend the models discussed in Section~\ref{DHP-sota} and Section~\ref{PDHP} to account for \gls{cluster} \gls{interaction} mechanisms. 
\textbf{Firstly}, we detail the challenges that arise when developing the Multivariate Powered Dirichlet-Hawkes Process (\acrshort{MPDHP}). We show that alleviating them makes it possible to achieve a linear time complexity $\mathcal{O}(N)$ (as in the original \citep{Du2015DHP} and Section~\ref{PDHP}) along with getting good clustering results. In doing so, we also relax the near-critical Hawkes process hypothesis made in \citep{Du2015DHP,Valera2017HDHP}, and correct a major flow regarding the kernel choice in previous works. 
\textbf{Secondly}, we conduct a systematic evaluation of the MPDHP in a variety of synthetic situations. Our goal is to clearly identify the limits of MPDHP regarding textual and temporal overlaps, computation time, the amount of available data, the number of co-existing \glspl{cluster}, etc. This section is intended as a technical report. By the end of it, a potential user should know in which situations MPDHP can be useful, and in which ones alternative modelling choices are required.
.

\begin{figure}
    \centering
    \includegraphics[width=\columnwidth]{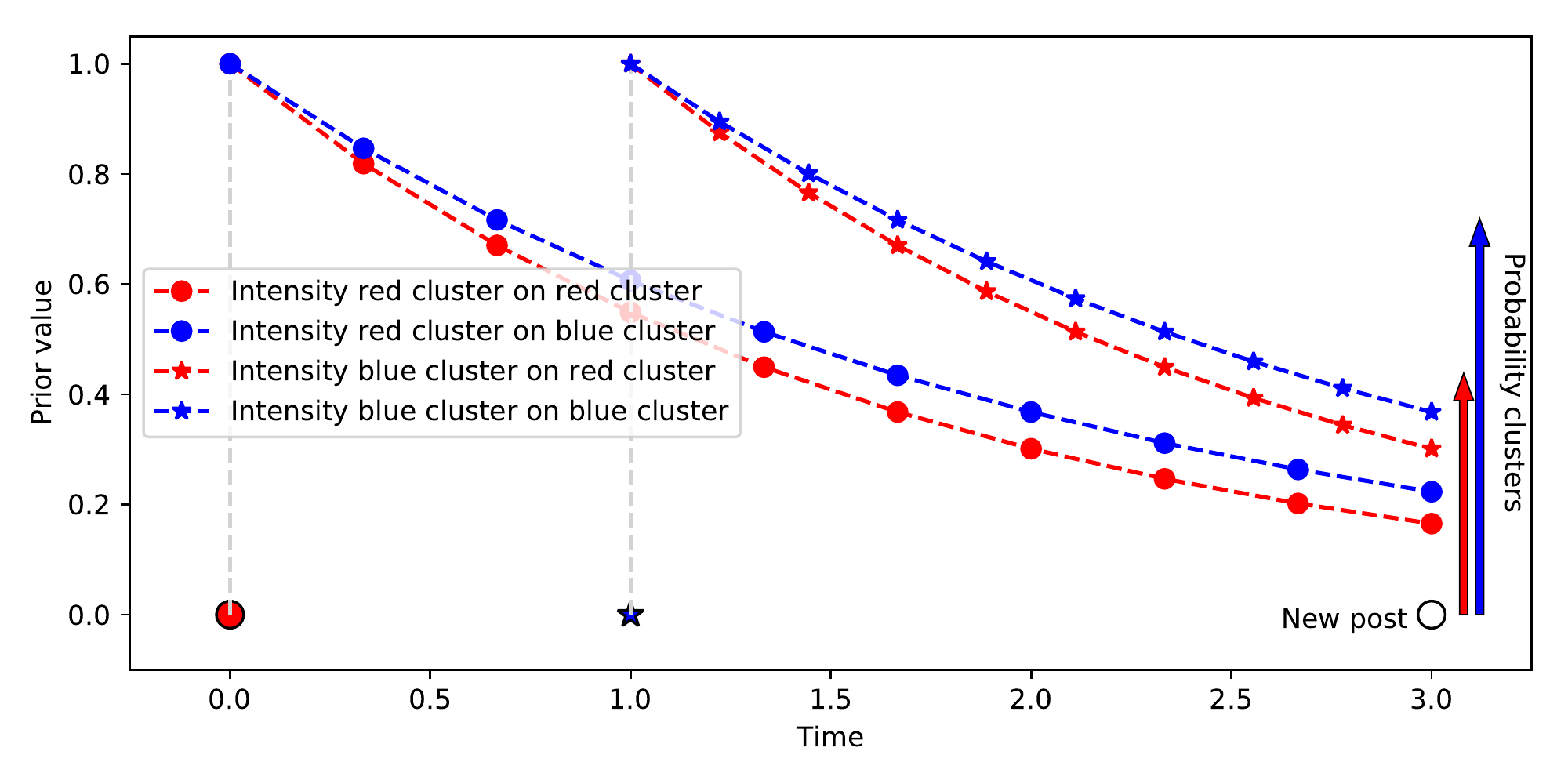}
    \caption[MPDHP - Illustration of the MPDHP]{\textbf{Illustration of the Multivariate Powered Dirichlet-Hawkes Process prior} --- A new event appears at time $t=3$ from a \gls{cluster} which is yet to be determined. The \textit{a priori} probability that this event belongs to a given \gls{cluster} $c_{red}$ depends on the sum of the red dotted intensity functions at time $t=3$. Similarly, the \textit{a priori} probability that this event belongs to a \gls{cluster} $c_{blue}$ depends on the sum of the blue dotted lines at time $t=3$. In previous existing models, this prior probability depends on each \gls{cluster} self-stimulation only.}
    \label{fig:illustration-MPDHP}
\end{figure}

\subsection{The Multivariate Powered Dirichlet-Hawkes process}
\subsubsection{Multivariate Hawkes process}
As discussed earlier, a Hawkes process is a temporal point process where the appearance of new events is conditional on the realization of previous events. It is fully characterized by an intensity function, noted $\lambda (t)$ that depends on the history events generated \textit{by this process} up to $t$, $\mathcal{H}_{<t}$. We recall that the term $\mathcal{H}_{<t}$ is implicit anytime the intensity function $\lambda$ is mentioned. A multivariate Hawkes process is an extension of the Hawkes process, where the intensity function $\lambda (t)$ depends on the history events generated \textit{by other Hawkes processes}. It means that in the definition of $\lambda (t)$, we cannot only consider the events that happened in the same \gls{cluster}, as in Eq.~\ref{eq-DHP-HawkesClus} and Eq.~\ref{eq-PDHP-HawkesClus}.

Similarly to Section~\ref{PDHP}, we associate one single Hawkes process to each \gls{cluster}. However, in (P)DHP, each of them is associated with a \textbf{univariate} Hawkes process, which depends only on the history of events comprised in this \gls{cluster}. In our case, instead, we associate each \gls{cluster} to a \textbf{multivariate} Hawkes process that depends on all the observations previous to the time being. 
Let $t_i^c$ be the time of realization of the $i^{th}$ event belonging to \gls{cluster} $c$. We write the intensity function for \gls{cluster} $c$ at all times as:
\begin{equation}
    \label{eq-multiHawkes}
    \lambda_c (t) = \sum_{t_i^{c'} < t} \vec{\alpha}^T_{c,c'} \cdot \vec{\kappa}(t-t_i^{c'})
\end{equation}
In Eq.~\ref{eq-multiHawkes}, $\vec{\alpha}_{c,c'}$ is a vector of $L$ parameters to infer, and $\vec{\kappa}(t-t_i^{c'})$ is a vector of $L$ temporal kernel functions depending only on the time difference between two events. As we will see later, $\alpha_{c,c',l}$ is readily interpreted as the influence of $c'$ on $c$ according to the $l^{th}$ entry of the temporal kernel. Once again, we consider a Gaussian \acrshort{RBF} kernel, which allows us to model a range of different intensity functions:
\begin{equation}
\label{eq-kernel}
    \kappa_l(\Delta t) = \frac{1}{\sqrt{2 \pi \sigma_l^2}}e^{-\frac{(\Delta t-\mu_l)^2}{2 \sigma_l^2}}\ \ \ \forall l\in L
\end{equation}

The log-likelihood of a multivariate Hawkes process for all observations up to a time $T$ is identical to the univariate case:
\begin{equation}
    \label{eq-lik-multiHawkes}
    \log \mathcal{L}(\alpha \vert \mathcal{H}) = \sum_c \int_{0}^T \lambda_c (t) dt + \sum_{t_i^{c}} \lambda_c (t_i^{c})
\end{equation}

\subsubsection{Multivariate Powered Dirichlet-Hawkes Process}
The Multivariate Powered Dirichlet-Hawkes Process (\acrshort{MPDHP}) arises from the merging of the Powered Dirichlet Process (Section~\ref{PDP}) and the Multivariate Hawkes Process (MHP), described in the previous paragraph. As in \citep{Du2015DHP,Valera2017HDHP} and Section~\ref{PDHP}, the counts in PDP are substituted with the intensity functions of a temporal point-process, here MHP. The \textit{a priori} probability that a new event is associated with a given \gls{cluster} no longer depends on the population of this \gls{cluster}, but on its temporal intensity at the time the new observation appears. This is illustrated in Fig.~\ref{fig:illustration-MPDHP}, where two events from two different \glspl{cluster} $c_{red}$ and $c_{blue}$ have already happened at times $t_0=0$ and $t_1=1$. A new event appears at time $t=3$. The \textit{a priori} probability that this event belongs to the \gls{cluster} $c_{red}$ depends on the sum of the intensity functions of observations at $t_0$ and $t_1$ on \gls{cluster} $c_{red}$ at time $t=3$ --sum of the red dotted lines. Similarly, the \textit{a priori} probability that this event belongs to the \gls{cluster} $c_{blue}$ depends on the sum of the blue dotted lines at time $t=3$.

Let $t_i$ be the time at which the $i^{th}$ event appears. The resulting expression reads:
\begin{equation}
    \label{eq-MPDHP-prior}
    P(C_i = c\vert t_i, r, \lambda_0, \mathcal{H}) = 
    \begin{cases}
    \frac{\lambda_c^r(t_i)}{\lambda_0 + \sum_{c'} \lambda_{c'}^r(t_i)} \text{ if c$\leq$K}\\
    \frac{\lambda_0}{\lambda_0 + \sum_{c'} \lambda_{c'}^r(t_i)} \text{ if c=K+1}
    \end{cases}
\end{equation}
\myequations{\ \ \ \ MPDHP - Multivariate Powered Dirichlet-Hawkes Process}
In Eq.~\ref{eq-MPDHP-prior}, $\lambda_c$ in defined as in Eq.~\ref{eq-multiHawkes}, and the parameter $\lambda_0$ is the equivalent of the concentration parameter described in Eq.~\ref{eq-PowCRP}. Taking back the illustration in Fig.~\ref{fig:illustration-MPDHP}, this parameter corresponds to a time-independent intensity function. It has a chance to get chosen typically when the other intensity functions are below it (meaning they do not manage to explain the dynamic aspect of a new event). In this case, a new topic is opened, and gets associated with its own intensity function.

\subsubsection{Language model}
Similarly to what has been done in the previous sections and in \citep{Du2015DHP}, the MPDHP must be associated with a Bayesian model given it is a prior on sequential data. Since we study applications on textual data, we choose to side the MPDHP prior with the same Dirichlet-Multinomial language model as in previous sections. We recall the likelihood of the $i^{th}$ document belonging to \gls{cluster} $c$ reads (see Section.~\ref{sota-DHP-text}):
\begin{equation}
    \begin{split}
        \label{eq-MPDHP-text}
        \mathcal{L}(C_i=c \vert N_{<i,c}, n_i, \theta_0) &= P(n_i \vert C_i=c, N_{<i,c}, \theta_0)\\ 
        &=\frac{\Gamma(N_c+\theta_0)}{\Gamma(N_c+n_i+\theta_0)} \prod_v \frac{\Gamma(N_{c,v} + n_{i,v} + \theta_{0,v})}{\Gamma(N_{c,v}+\theta_0)}
    \end{split}
\end{equation}
where $N_c$ is the total number of words in \gls{cluster} $c$ from observations previous to $i$, $n_i$ is the total number of words in document $i$, $N_{c,v}$ the count of word $v$ in \gls{cluster} $c$, $n_{i,v}$ the count of word $v$ in document $i$ and $\theta_0 = \sum_v \theta_{0,v}$.

\subsection{Implementation}
\subsubsection{Base algorithm}
The Sequential Monte Carlo (\acrshort{SMC}) algorithm used for the optimization has already been described in Section~\ref{PDHP-SMC} and in Fig.~\ref{fig-SchemaSMC}. We briefly review it in this section, before discussing the challenges that arise when using it in the multivariate case.

\paragraph{SMC algorithm}
The goal of the SMC algorithm is to jointly infer textual documents' \glspl{cluster} and the dynamics associated with them. It runs as follows. First, the algorithm computes each \gls{cluster}'s posterior probability for a new observation by multiplying the temporal prior on \gls{cluster} allocation (see Eq.~\ref{eq-MPDHP-prior}, illustrated Fig.~\ref{fig:illustration-MPDHP}) with the textual likelihood (see Eq.~\ref{eq-MPDHP-text}). It results in an array of $K+1$ probabilities, where $K$ is the number of non-empty \glspl{cluster}. A \gls{cluster} label is then sampled from this probabilities vector. 
If the empty $(K+1)^{th}$ \gls{cluster} is chosen, the new observation is added to this \gls{cluster}, and its dynamics are randomly initialized (i.e., a $(K+1)^{th}$ row and a $(K+1)^{th}$ column are added to the parameters matrix $\alpha$). If a non-empty \gls{cluster} is chosen, its dynamics are updated by maximizing the new likelihood Eq.~\ref{eq-lik-multiHawkes}. The process then goes on to the next observation.

This routine is repeated $N_{part}$ times in parallel. Each parallel run is referred to as a \textit{particle}. Each particle keeps track of a series of \gls{cluster} allocation hypotheses. After an observation has been processed, we compute the particles likelihood given their respective \gls{cluster} allocations hypotheses. Particles that have a likelihood relative to the other particles' one below a given threshold $\omega_{thres}$ are discarded and replaced by a more plausible existing particle.

\paragraph{Sampling the temporal kernel}
The parameters $\alpha$ are inferred using a sampling procedure. A number $N_{sample}$ of pre-computed vectors are drawn from a Dirichlet distribution with probability $P(\alpha \vert \alpha_0)$, with $\alpha_0$ a concentration parameter. As the SMC algorithm runs, within each existing \gls{cluster}, each of these candidate vectors is associated with a likelihood computed from Eq.~\ref{eq-lik-multiHawkes}, noted $P(\mathcal{H} \vert \alpha)$, where $\mathcal{H}$ represents the data. The sampling procedure returns the average of each of the $N_{sample}$ pre-computed $\alpha$, weighted by the posterior distribution associated with them ${P(\alpha \vert \mathcal{H}) \propto P(\mathcal{H} \vert \alpha)P(\alpha \vert \alpha_0)}$. The so-returned matrix is guaranteed to be a good statistical approximation of the optimal matrix, provided the number of sample matrices $N_{sample}$ is large enough.

\paragraph{Limits}
This algorithm described here works well for the univariate case but fails for the multivariate case. In particular, updating the multivariate intensity function of each \gls{cluster} requires knowing the number of already existing \glspl{cluster}, which vary over time. Therefore, we cannot pre-compute the sample matrices in advance --they must be updated as the algorithm runs to account for the right number of non-empty \glspl{cluster}. Moreover, the number of parameters to estimate evolves linearly with the number of active \glspl{cluster} $K$, instead of remaining constant as in DHP and variants \citep{Du2015DHP,Valera2017HDHP}. Because the number of parameters is not constant anymore, their candidate values cannot be sampled from a Dirichlet distribution anymore. In the following, we review these challenges and propose our solutions to overcome them. Eventually, we manage to preserve a constant time complexity for each observation.

\subsubsection{Optimization challenges}

\paragraph{Updating the triggering kernels}
In the univariate case \citep{Du2015DHP,Valera2017HDHP} and Section~\ref{PDHP}, the coefficients $\alpha_{c} \in \mathbb{R}^L$ are sampled from a collection of existing sample vectors computed at the beginning of the algorithm (where $L$ is the size of the kernel vector). However, we must now infer a matrix instead. We recall that matrix $\alpha_c$ represents the weights given to the temporal kernel vector of every \gls{cluster} influence on $c$ --see Eq.~\ref{eq-multiHawkes}. The likelihood Eq.~\ref{eq-lik-multiHawkes} can be updated incrementally for each sample matrix. A given \gls{cluster} $c$ has a likelihood value associated with each of those $N_{sample}$ sample matrices, which represents how fit one sample matrix is to explain one \gls{cluster}'s dynamics. The final value of the parameters matrix is then computed by sampling, simply averaging the sample matrices weighted by their likelihood for a given \gls{cluster} times the prior probability of these vectors being drawn in the first place.

However, such sampling was possible in the univariate case, where each sample matrix was in fact a vector of fixed length. In our case, because Hawkes processes are multivariate, each entry $\alpha_{c} \in \mathbb{R}^{K \times L}$ is now a matrix. Moreover, the number of existing \glspl{cluster} $K$ increases over time and can grow large. Each time a new \gls{cluster} is added to the computation, a row is appended to the $\alpha_c$ matrix --it accounts for the influence of this new \gls{cluster} regarding $c$.

Our solution is that some events can be discarded from the computation so that some old \glspl{cluster} can also be discarded. \Glspl{cluster} whose last observation has exceeded a certain age has a near-zero chance to get sampled once again. It means these \glspl{cluster}' contribution to the likelihood Eq.~\ref{eq-lik-multiHawkes} is fixed. Therefore, they do not have a role in the computation of the parameters matrix $\alpha_c$. The row corresponding to each of these \glspl{cluster} in the parameters matrix can then be discarded in every sample matrix. Put differently, the last sampled value for their influence on $c$ will remain unchanged for the remaining of the algorithm. The dimension of $\alpha_c$ only depends on the number of \textit{active} \glspl{cluster}, whose intensity function has not faded to zero. For a given dataset, the number of active \glspl{cluster} typically fluctuates around a constant value, making one iteration running in constant time $\mathcal{O}(1)$.

\paragraph{A beta prior on parameters}
Another problem inherent to the proposed multivariate modelling is the prior assumption on sample vectors. In \citep{Du2015DHP,Valera2017HDHP}, each sample vector is sampled from a Dirichlet distribution. This choice is to infer Hawkes processes that are nearly unstable: the spectral radius of their temporal kernel function $\lambda_{c}(t)$ is close to 1. However in our case, such an assumption is not possible because the size of each sample matrix can vary as the number of active \glspl{cluster} evolve. Drawing one Dirichlet vector of size $L$ for each entry $\alpha_{c,c'}$ would force the spectral radius of $\alpha_c$ to equal $K$, which transcribes a highly unstable Hawkes process. Our solution is to consider the parameters as completely \gls{independent} of each other. Each entry of the matrix $\alpha$ is drawn from an \gls{independent} $\beta$ distribution of parameter $\beta_0$. In this way, we make no assumption on the spectral radius of the Hawkes process, and sample rows/columns corresponding to new \glspl{cluster} can be generated one after the other.

\paragraph{On the temporal concentration parameter $\lambda_0$}
\label{implem-discussionlambda0}
While not specifically related to the implementation of the multivariate case, we discuss in this paragraph an important consideration when designing DHP-based models. In most recently published works on the topic \citep{Du2015DHP,Valera2017HDHP} and in the experiments conducted in Section~\ref{PDHP}, inference on real-world processes is done using an \acrshort{RBF} temporal kernel. It means that time is paved with Gaussian functions centered at various points in time; the parameter $\alpha$ in DHP-based models accounts for the weights given to each of these Gaussian functions.

In these works, the kernel is chosen so that it accounts for different time scales by centering Gaussian functions on unevenly spaced points in time. The standard deviation of each of these entries varies to account for larger time ranges. However, all values of a Gaussian function are small when their standard deviation is large, for normalization reasons --the maximum value of a Gaussian function whose standard deviation is $\sigma$ is $\frac{1}{\sqrt{2 \pi \sigma^2}}$. 

In the SMC algorithm, this RBF kernel is evaluated at a single point in time and confronted with the temporal concentration parameter $\lambda_0$ (see Eq.\ref{eq-MPDHP-prior}) to determine whether to open a new \gls{cluster}. In \citep{Du2015DHP}, such values are compared to $\lambda_0$ constant in time. It means that, mechanically, these methods cannot detect observations triggered by such Gaussian functions as their value is systematically lower than $\lambda_0$ --typically at long time ranges in \citep{Du2015DHP,Valera2017HDHP}, which can be seen from these articles' kernel plots that fade as time goes. We explicitly illustrate how this leads to a bias in the modelling in Fig.~\ref{fig:illustr-pb-lamb0}.

\begin{figure}
    \centering
    \includegraphics[width=\textwidth]{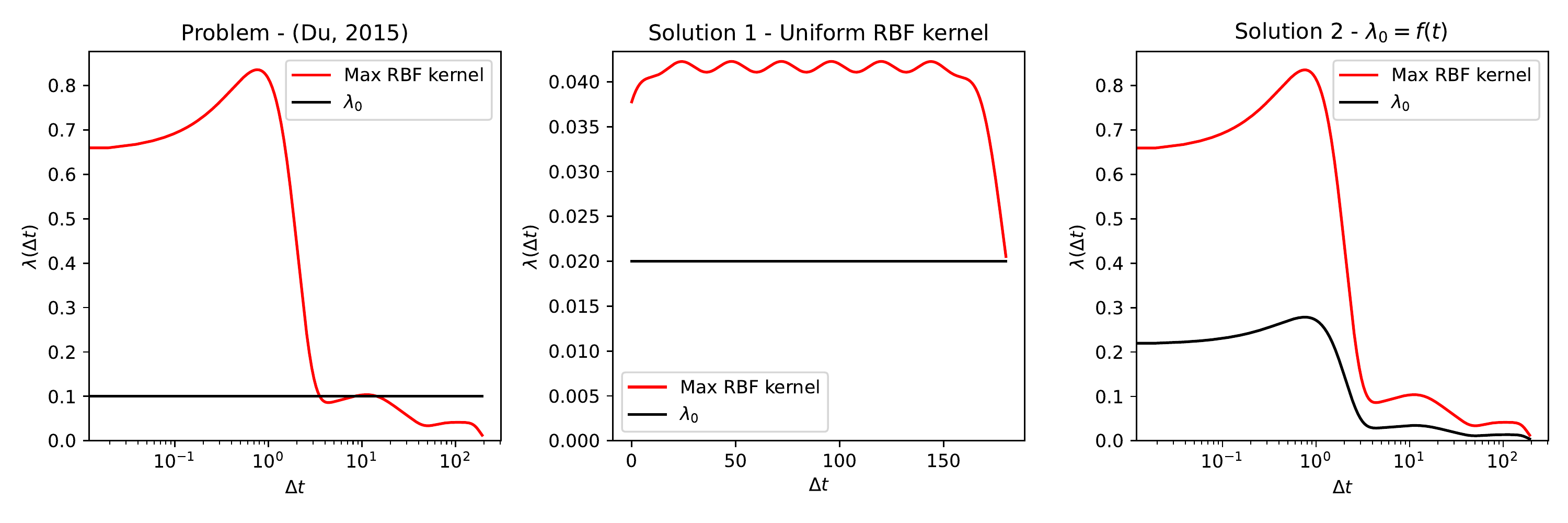}
    \caption[MPDHP - Choosing the concentration parameter]{\textbf{Choosing the right temporal concentration parameter $\lambda_0$} --- The choice of the temporal concentration parameter $\lambda_0$ can lead to bias. \textbf{(Left)} The problem with its choice in \citep{Du2015DHP} is that events happening at large time ranges are likely to go undetected, as the Hawkes intensity at these ranges cannot be larger than $\lambda_0$. \textbf{(Middle)} A first solution consists in paving the space with evenly spaced Gaussian functions that all share the same standard deviation. \textbf{(Right)} A second solution is to make $\lambda_0$ a function of time so that its ratio with the temporal kernel remains constant.}
    \label{fig:illustr-pb-lamb0}
\end{figure}

Consider for instance the \acrshort{RBF} kernel used in \citep{Du2015DHP,Valera2017HDHP}, plotted in Fig.~\ref{fig:illustr-pb-lamb0}, with the Gaussian means equal to 0.5, 1, 8, 12, 24, 48, 72, 96, 120,
144 and 168 hours, and the corresponding deviations equal to 1, 1, 8, 12, 12, 24, 24, 24, 24, 24, and 24 hours. The authors used $\lambda_0=0.01$. For the last entry of their RBF kernel, the maximum value of the Gaussian function $\mathcal{G}(\mu=168;\sigma=24)$ is about $3.10^{-4}$, which is much smaller than $\lambda_0=1.10^{-2}$. It means that even for a \gls{cluster} whose intensity function only acts at long-ranges, the chances of spotting events triggered by such \glspl{cluster} are about 3\%. This makes the models presented in \citep{Du2015DHP,Valera2017HDHP} unfit to spot long-range \glspl{interaction}.

There are two ways to overcome this problem (Fig.~\ref{fig:illustr-pb-lamb0}-Left) so that $\lambda_0$ can be consistently confronted to the \glspl{cluster}' temporal kernels (see Eq.~\ref{eq-multiHawkes}):
\begin{itemize}
    \item Fig.~\ref{fig:illustr-pb-lamb0}-Middle -- To consider an RBF kernel whose Gaussian functions all share the same deviation, while keeping $\lambda_0$ constant. We choose this solution in the follow-up experimental section.
    \item Fig.~\ref{fig:illustr-pb-lamb0}-Right -- To consider a $\lambda_0$ that can vary in time according to the maximum value of the RBF kernel at different time points --which depends on their standard deviation.
\end{itemize}

\subsection{Experiments}
\subsubsection{Setup}
We now design a series of experiments to explore the possible use domain for the Multivariate Powered Dirichlet-Hawkes Process.
We list the parameters we consider in our experiments. When a parameter does not explicitly vary, it takes a default value given between parentheses. These parameters are: the textual overlap (0) and the temporal overlap (0) discussed further in the text,
the temporal concentration parameter $\lambda_0$ (0.01), the strength of temporal dependence $r$ (1), the number of synthetically generated \glspl{cluster} $K$ (2), the number of words associated with each document $n_{words}$ (20), the number of particles $N_{part}$ (10) and the number of sample matrices used for sampling $\alpha$, noted $N_{sample}$ (2,000). For the detail of these parameters, please refer to Eq.~\ref{eq-MPDHP-prior}. 

Note that the overlap $o(f_1,...,f_N)$ between $N$ functions is defined as the sum over each function $f_i$ of its intersecting area with the largest of the $N-1$ other functions, divided by the sum of each function's total area. This value is bounded between 0 (perfectly separated functions) and 1 (identical functions). Mathematically:
\begin{equation}
    \label{eq:overlap}
    o(f_1,...,f_N) = \frac{1}{\sum_i \int_{\mathbb{R}} f_i(x) dx} \sum_i \int_{\mathbb{R}} min(f_i(x), max(\{f_j(x)\}_{j \neq i})) dx
\end{equation}

For each combination of parameters considered, we generate 10 different datasets. In all datasets, we consider a fixed size vocabulary $V=1000$ for each \gls{cluster}. All datasets are made of 5,000 observations. Observations for each \gls{cluster} $c$ are generated using an \acrshort{RBF} temporal kernel $\vec{\kappa} (t)$ weighted by a parameter matrix $\alpha_c$. Explicitly, we set $\vec{\kappa} (t) = \left[ \mathcal{G}(3;0.5) ; \mathcal{G}(7;0.5) ; \mathcal{G}(11;0.5) \right]$ where $\mathcal{G}(\mu;\sigma)$ is a Gaussian function of mean $\mu$ and deviation $\sigma$ --following the discussion raised in Section~\ref{implem-discussionlambda0}.
We note $L=3$ the number of entries of $\vec{\kappa}$. The inferred entries of $\alpha$ determine the amplitude (or weight) of each Gaussian kernel function.

The generation process is as follows. First, we draw a random matrix $\alpha \in \mathbb{R}^{K \times L}$ and normalize it so that its spectral radius equals 1 --near unstable Hawkes process. We repeat this process until we obtain the wanted temporal overlap.\footnote{Note that the overlap as defined here is different from the one used in Section.~\ref{PDHP}. In the latter, we considered the overlap of the intensity function plot on the real-time axis. Here instead we consider the overlap between the kernel intensity functions. This is because the temporal overlap as defined in Section~\ref{PDHP} is always close to 1 in the multivariate case, because different \glspl{cluster}' events strongly interact with each other.}
Then, we simulate the multivariate Hawkes process using the triggering kernels $\vec{\alpha} \cdot \vec{\kappa}(t)$, where $\vec{\kappa} (t)$ is the RBF kernel as defined earlier. Given the Hawkes process is multivariate, each event is associated with its class it has been generated from among $K$ possible classes. For each so generated event, we draw $n_{words}$ words from a vocabulary of size $V$. The vocabularies are drawn from a multinomial distribution and shifted over this distribution so that they overlap to a given extent (see Eq.\ref{eq:overlap}).

\subsubsection{Baselines}
We compare our approach to 3 closely related baselines.
\begin{itemize}
    \item \textbf{Dirichlet-Hawkes process (\acrshort{DHP})} \citep{Du2015DHP} -- In this model, \glspl{cluster} can only self-replicate. It means that the intensity function of a \gls{cluster} $c$ Eq.~\ref{eq-multiHawkes} only considers past events that happened in the same \gls{cluster} $c$.
    \item \textbf{Dirichlet process (\acrshort{DP})} -- This prior is standard in clustering problems. It corresponds to a special case of Eq.~\ref{eq-PowCRP} where $r=1$. It assumes that the prior probability for an observation to belong to a \gls{cluster} depends linearly on the population of this \gls{cluster}.
    \item \textbf{Uniform process (\acrshort{UP})} \citep{Wallach2010UnifP} -- This prior corresponds to a special case of Eq.~\ref{eq-PowCRP} where $r=0$. It assumes that the prior probability for an observation to belong to a \gls{cluster} does not depend on any information about this \gls{cluster} (neither population nor dynamics).
\end{itemize}

As in previous experiments, we evaluate our results in terms of normalized mutual information (\acrshort{NMI}) score. We recall that this metric is standard when evaluating non-parametric clustering models. It compares two \gls{cluster} partitions (i.e., the inferred and the ground truth ones); it is bounded between 0 (each true \gls{cluster} is represented to the same extent in each of the inferred ones) and 1 (each inferred partition comprises 100\% of one true \gls{cluster}).

\subsubsection{Results}
\begin{figure}[h]
    \centering
    \includegraphics[width=\textwidth]{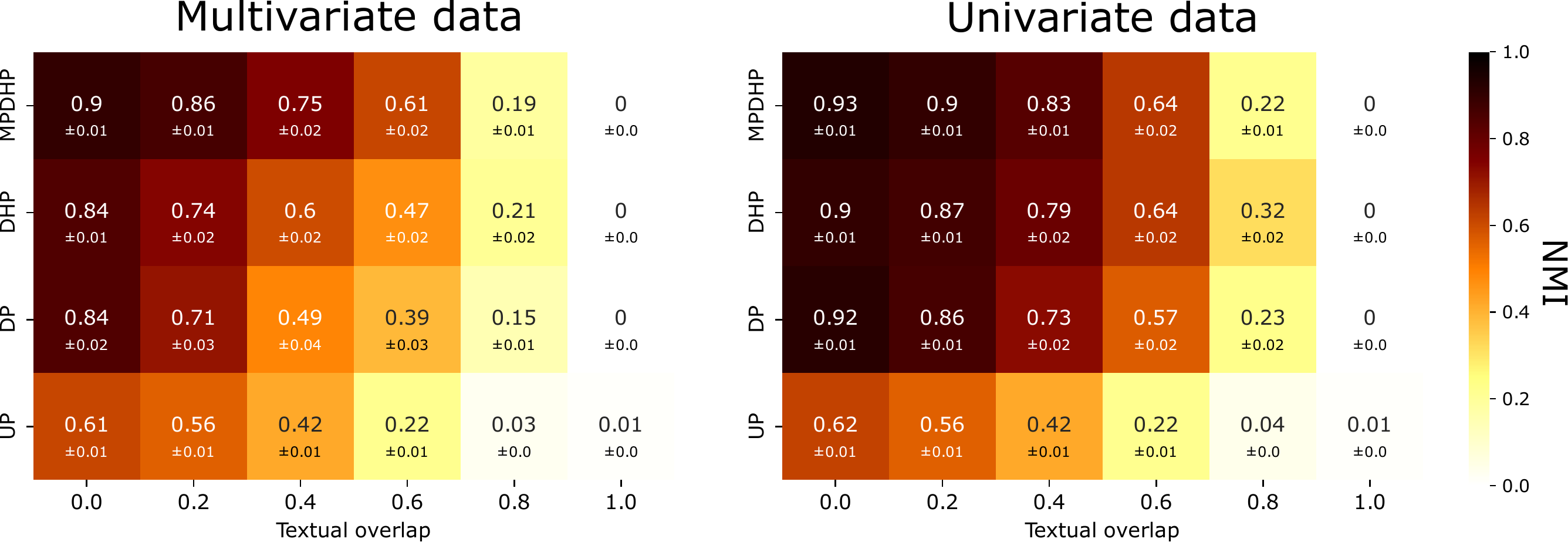}
    \caption[MPDHP - Numerical results on synthetic datasets]{\textbf{Experimental results on synthetic data} --- MPDHP consistently outperforms other baselines by considering both textual information and temporal information.}
    \label{fig:mainres}
\end{figure}
\paragraph{MPDHP outperforms state-of-the-art} %
In Fig.~\ref{fig:mainres}, we plot our results for datasets that has been generated using a Multivariate Hawkes process (\glspl{cluster} have an influence on each other) and a Univariate Hawkes process (\glspl{cluster} can only influence themselves). We compare MPDHP to the proposed baselines for various values of textual overlap. We draw the following conclusions:
\begin{itemize}
    \item \textbf{MPDHP} systematically outperforms the proposed baselines on multivariate data --when \glspl{cluster} interact with each other. Considering that \glspl{cluster} interact with each other improves our description of the datasets. 
    \item \textbf{MPDHP} performs at least as good as PDHP on univariate data --when \glspl{cluster} can only self-stimulate. The complexity induced by MPDHP does not make it unfit for simpler tasks.
    \item \textbf{PDP} performs well for small textual overlaps, but rapidly fails when the textual overlap increases. This is expected since only the textual information is considered by the PDP. It highlights the importance of also considering the temporal information.
    \item \textbf{PDHP} performs better than MPDHP when the textual overlap is large (textual overlap of 0.8). This is due to the increased complexity of MPDHP over PDHP. In challenging situations such as this one, a simpler model takes fewer efforts to overcome initialization mistakes, as there are fewer parameters to put back on the right track.
    \item \textbf{All models} fail when the textual overlap is complete; \glspl{cluster} cannot be inferred from temporal information only.
\end{itemize}

\paragraph{Uninformative textual \gls{content} and entangled dynamics} %
In Fig.~\ref{fig:XP1}, we plot the results of MPDHP for different values of textual and temporal overlap. Textual overlap is defined as in Eq.~\ref{eq:overlap}. According to Eq.~\ref{eq-multiHawkes}, the influence kernel of \gls{cluster} $c'$ on \gls{cluster} $c$ can be written $\vec{\alpha}_{c,c'} \cdot \vec{\kappa} (t)$. For each \gls{cluster} $c$, we generate values of $\alpha_{c,c'}$ so that the overlap between all the functions in the set $\{ \vec{\alpha}_{c,c'} \cdot \vec{\kappa} (t) \}_{c'}$ equals a given value. The idea is to evaluate whether MPDHP is robust when \glspl{cluster} have similar dynamics.

Overall, we see that when the textual overlap is small, MPDHP yields good results independently from the temporal overlap. It means that in this case, the textual \gls{content} is enough to differentiate \glspl{cluster} despite their dynamics being similar. However, as textual \gls{content} gets less informative (textual overlap $\geq$ 0.6), results are better when the temporal overlap is low. In these cases, textual information is not enough and MPDHP relies more on temporal data. Overall, MPDHP handles challenging cases provided either textual or temporal information is informative enough -- for instance temporal overlap of 0 and textual overlap of 0.7, or temporal overlap of 1 and textual overlap of 0.4. It fails when both are uninformative -- for instance, temporal and textual overlaps of 1.
\begin{figure}[h]
    \centering
    \includegraphics[width=0.55\textwidth]{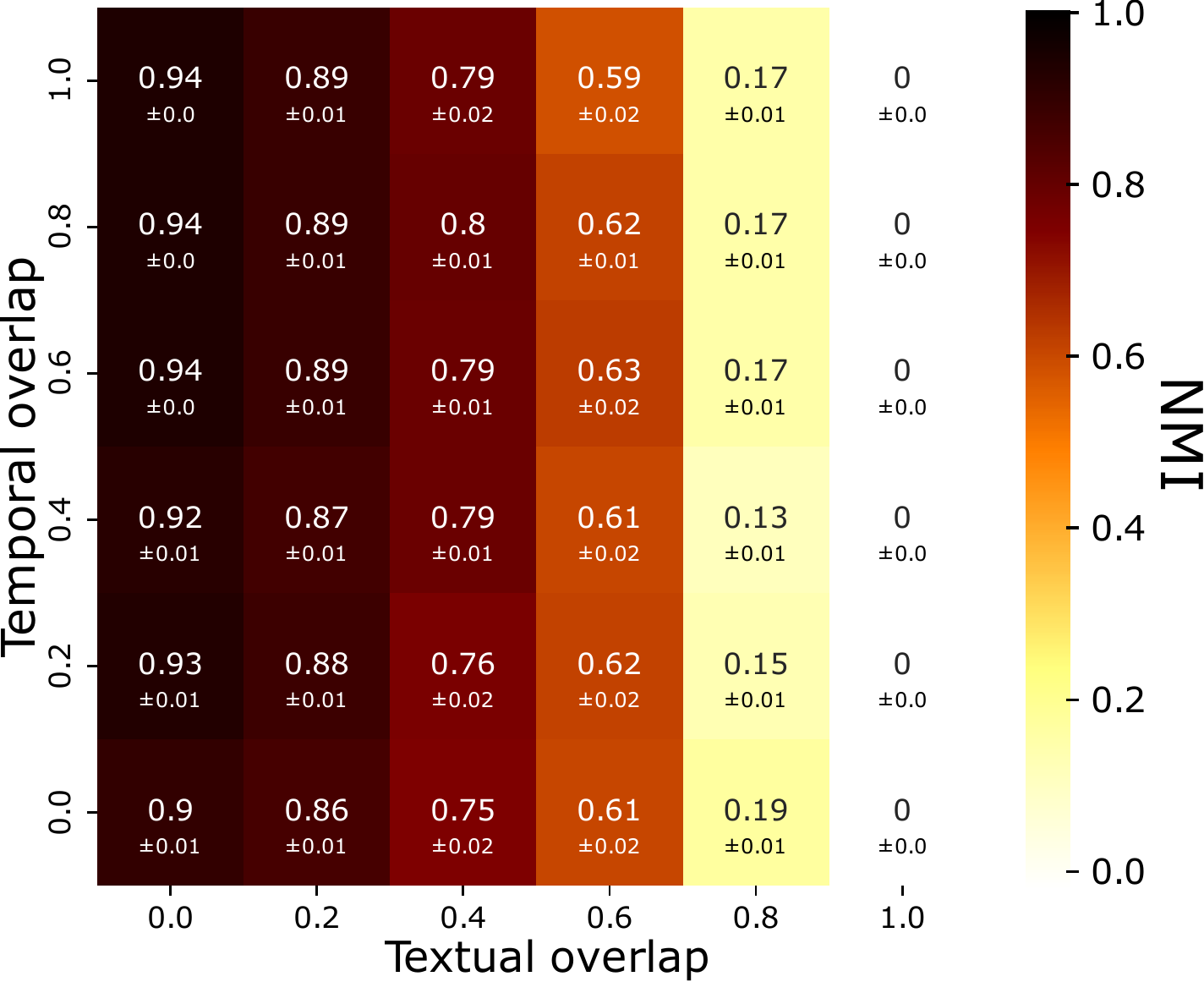}
    \caption[MPDHP - MPDHP handles scarce textual or temporal information]{\textbf{MPDHP handles scarce textual or temporal information} --- MPDHP handles challenging cases provided either textual of temporal information is informative enough (temporal overlap of 0 and textual overlap of 0.7; temporal overlap of 1 and textual overlap of 0.4) and fails when both are uninformative (overlaps of 1).}
    \label{fig:XP1}
\end{figure}

\paragraph{Highly \gls{interacting} processes} %
Next, we assess whether MPDHP works when a large number of \glspl{cluster} coexist simultaneously. The rate at which new \glspl{cluster} get opened is mainly controlled by the $\lambda_0$ hyperparameter (see Eq.~\ref{eq-MPDHP-prior}), which we vary to see whether MPDHP is robust against it. In Fig.~\ref{fig:XP2}, we plot the performances of MPDHP according to these two parameters. We can draw two conclusions:
\begin{itemize}
    \item MPDHP can handle a large number of coexisting \glspl{cluster} and still correctly identify to which one each document belongs.
    \item MPDHP is robust versus variations of $\lambda_0$. In this case, results are similar for $\lambda_0$ varying over 5 orders of magnitude. It means MPDHP does not have to be fine-tuned according to the number of expected \glspl{cluster} in cases where this number is not known in advance.
\end{itemize}
\begin{figure}[h]
    \centering
    \includegraphics[width=0.8\textwidth]{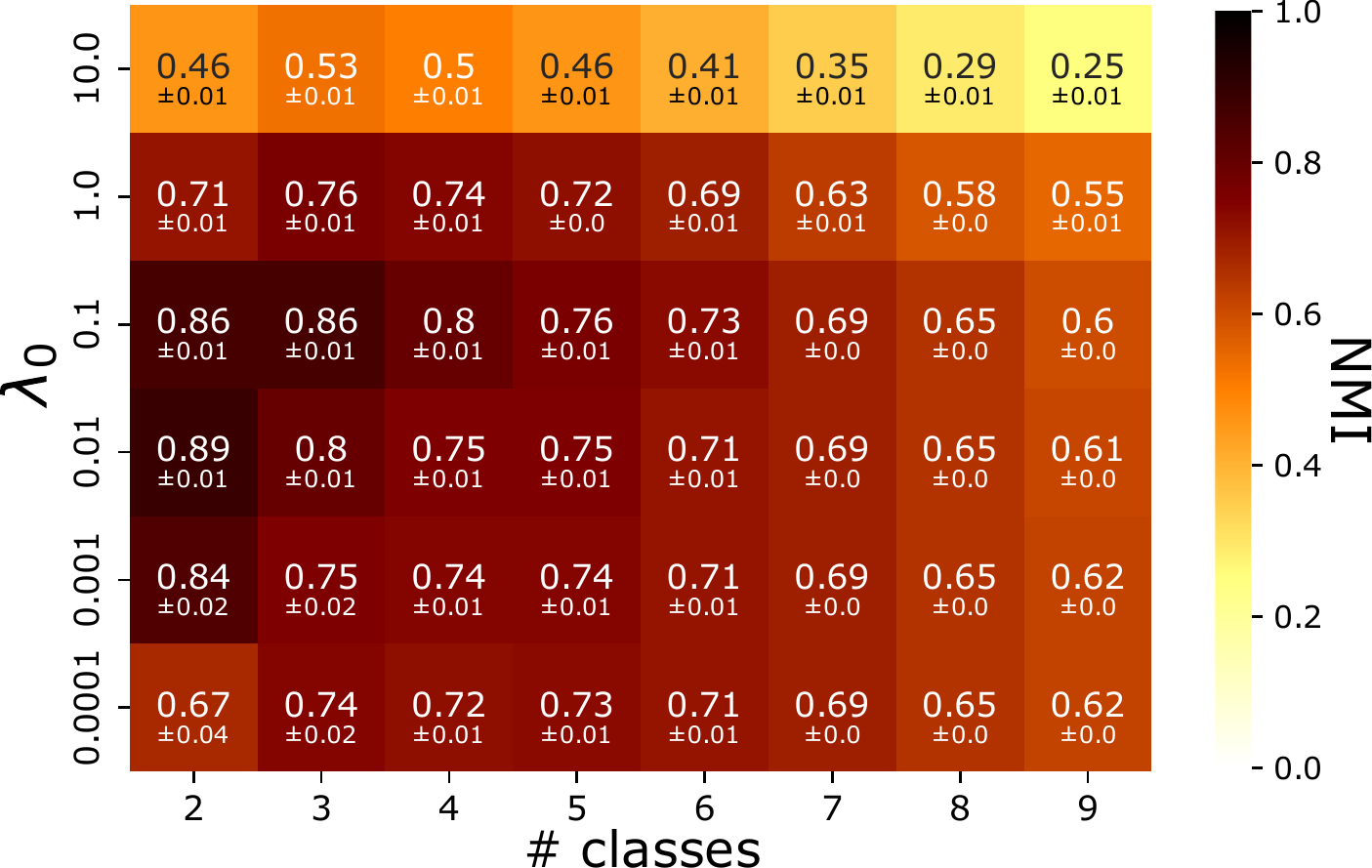}
    \caption[MPDHP - MPDHP handles many coexisting \glspl{cluster}]{\textbf{MPDHP can handle a large number of coexisting \glspl{cluster}} --- MPDHP yields good results when a large number of \glspl{cluster} coexist simultaneously. It is also robust against variations of $\lambda_0$ over 5 orders of magnitude.}
    \label{fig:XP2}
\end{figure}

\paragraph{Handling scarce textual information} %
In this paragraph, we determine how much data should be provided to MPDHP to get satisfying results. In Fig.~\ref{fig:XP3}, we plot the performances of MPDHP with respect to the number of words generated by each observation and to the \glspl{cluster}' vocabulary overlap. MPDHP needs a fairly small number of words to yield good results over 5,000 observations. For reference, the overlap between topics can be estimated at around 0.25 (\citep{Duran2019OverlapRW}, in Spanish). Similarly, we can estimate an average of $\sim$10-20 named \glspl{entity} per Twitter post (240 characters). These results support the application of MPDHP to model real-world situations.
\begin{figure}[h]
    \centering
    \includegraphics[width=0.99\textwidth]{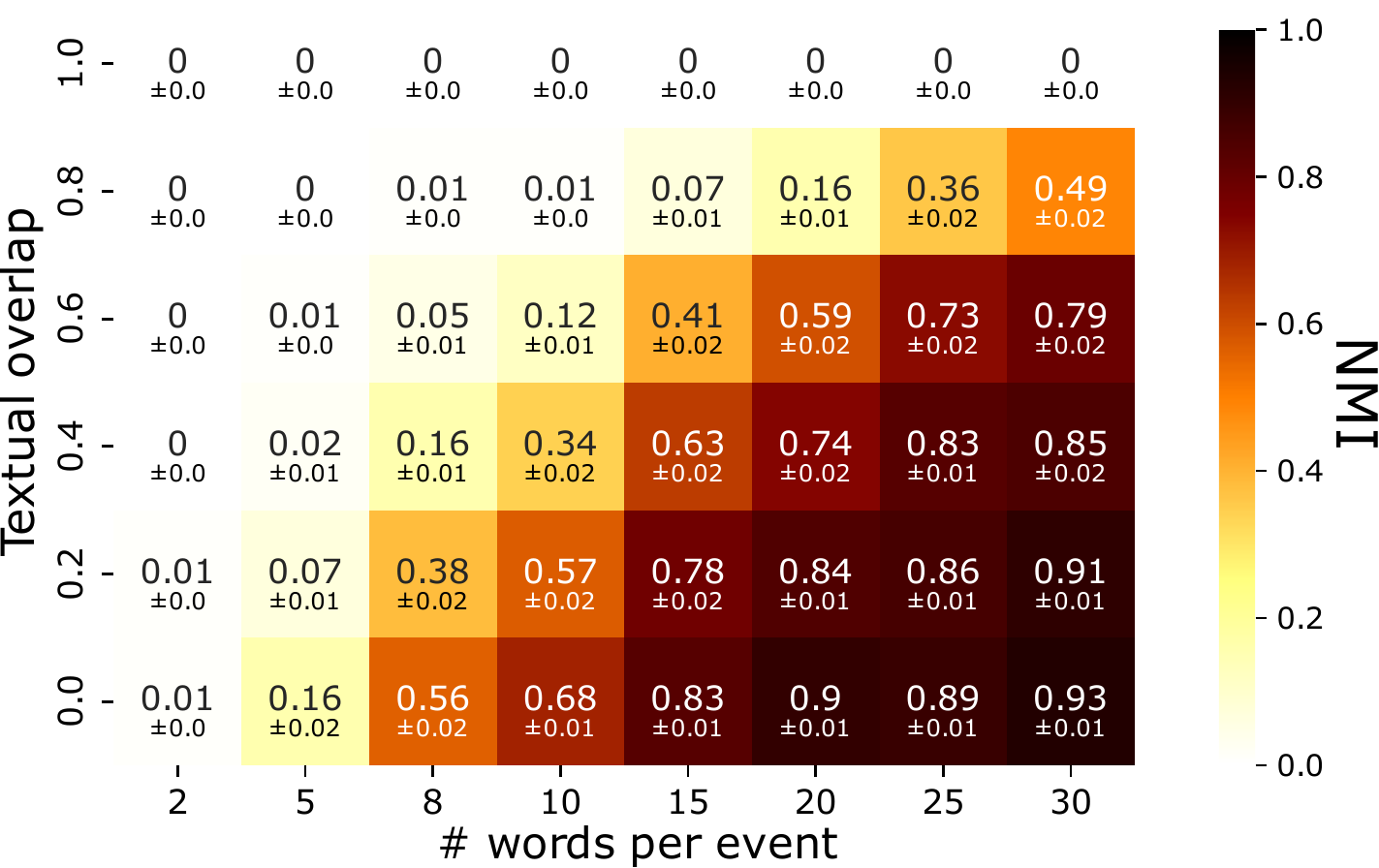}
    \caption[MPDHP - How much data to use MPDHP]{\textbf{How much data to use MPDHP} --- MPDHP performances according to the number of words associated with each event and the overlap between \glspl{cluster}' vocabulary. Overall, MPDHP needs little data. This plot provides a map of how much data must be provided to MPDHP to make it work. For reference, topic overlaps in the real world can be estimated at around 0.25 and the number of named \glspl{entity} per Twitter post around 20.}
    \label{fig:XP3}
\end{figure}

\paragraph{Computational needs} %
Finally, we investigate how much computational resources should be allocated to MPDHP's sequential Monte-Carlo (\acrshort{SMC}) inference algorithm to obtain good results. In Fig.~\ref{fig:XP6}, we plot the model's performance against the two main optimization parameters --the number of sample matrices and the number of particles. We recall that the sample matrices are used to infer each value of the kernel weights matrices for \gls{cluster} $c$, noted $\alpha_c$; the more sample matrices, the better the estimation. The number of particles represents the number of different \gls{cluster} allocations hypotheses explored by the SMC algorithm at each step; the more particles, the more hypotheses are tested simultaneously. Overall, we see that MPDHP works well with few resources. In our experiments, results do not seem to improve significantly when using more than 20 particles, and when using more than 1000 sample vectors.
\begin{figure}[h]
    \centering
    \includegraphics[width=0.99\textwidth]{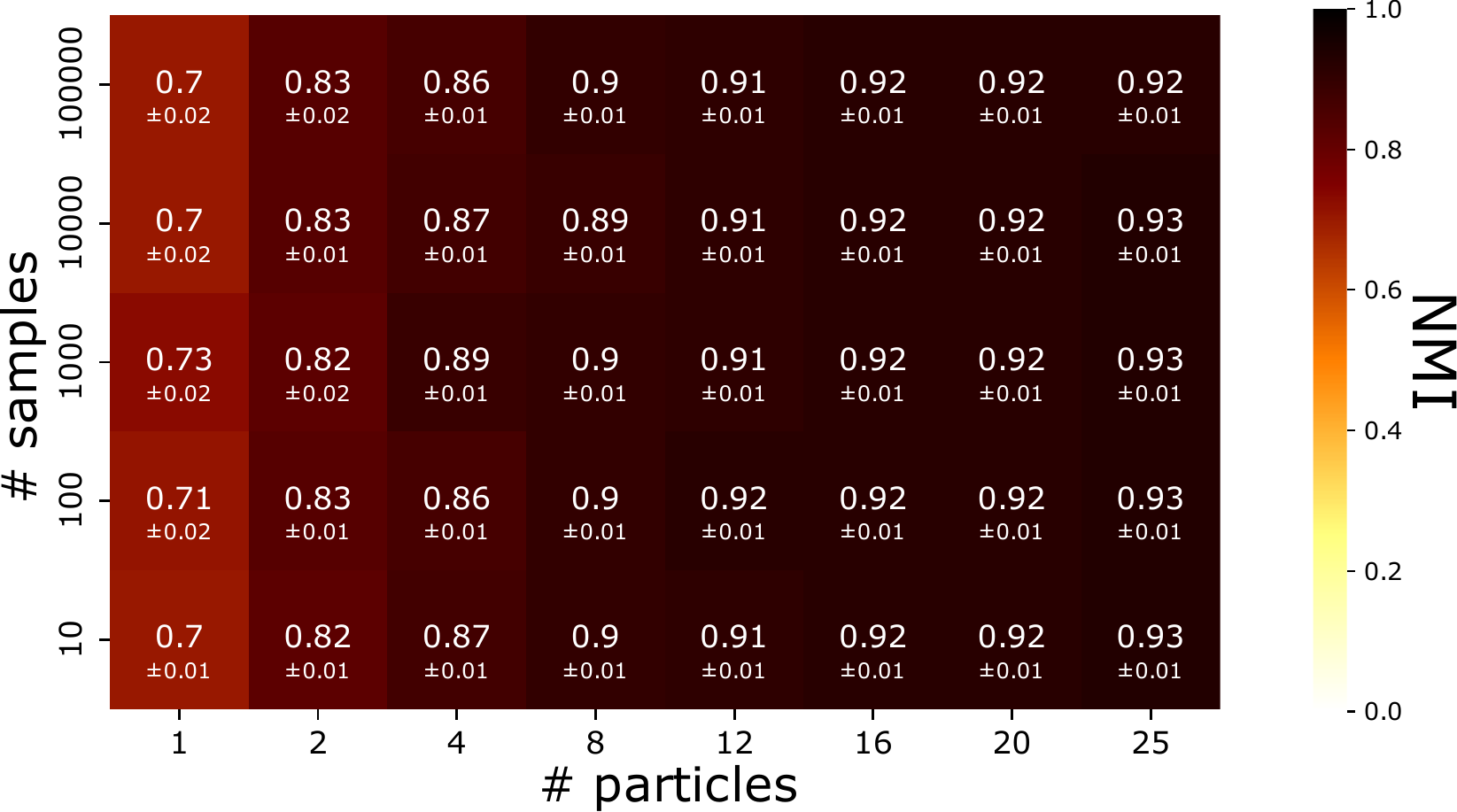}
    \caption[MPDHP - How complex should the algorithm be]{\textbf{How complex should the algorithm be} --- Performance of MPDHP using different versions of the Sequential Monte-Carlo algorithm. Here, we plot the model's performance with respect to the number of sample matrices used to estimate the kernel's weights $\alpha_c$ (see Eq.~\ref{eq-multiHawkes}) and the number of particles $N_{part}$ used for the inference. Overall, MPDHP functions well with few computational resources.}
    \label{fig:XP6}
\end{figure}

\subsection{Conclusion}
In this section, we extended existing priors (the Dirichlet-Hawkes process and the Powered Dirichlet-Hawkes process) so that they can consider multivariate Hawkes processes, resulting in the resulting Multivariate Powered Dirichlet-Hawkes process (\acrshort{MPDHP}). This new process is used as a Bayesian prior coupled to a textual model to infer \glspl{cluster} temporal \gls{interaction} network from textual data flow. Along with its derivation came several optimization challenges, that we overcome to preserve a computational time that scales linearly with the size of the dataset $\mathcal{O}(N)$.

Through systematic experiments, we tested our approach against state-of-the-art models and explored its limits by varying the parameters used for synthetic data generation. We show that MPDHP outperforms existing baselines when \glspl{cluster} are allowed to interact with each other, and performs at least as well as the PDHP baseline when \glspl{cluster} are only allowed to self-interact (which PDHP is designed to model). We show that MPDHP can handle cases where textual \gls{content} is uninformative better than other baselines. Besides, it handles cases where temporal dynamics are similar across \glspl{cluster}. We also show that MPDHP is robust against tuning of the temporal concentration parameter $\lambda_0$, which allows it to handle highly intricate processes where a lot of \glspl{cluster} coexist simultaneously. We evaluated the robustness of MPDHP against the number of words observed for each event and the overlap between various \glspl{cluster} and showed that it performs well with few textual data when vocabularies overlap is not total. Finally, we discussed the computational needs of PDHP, and show that it works correctly with minimal computational resources.

The present section was intended as a report on what can and what cannot be achieved using MPDHP. Our various results suggest that this prior can be applied in a robust way to a broad range of problems for a minimal computational cost. In particular, the results from these extensive experiments support the possibility of applying MPDHP to real-world situations.

Regarding the task at hand, it appears that MPDHP provides a robust way to model \glspl{interaction} in information \gls{spread}. The PDHP allowed to model sparse and dynamic self-\glspl{interaction} between semantic \glspl{entity}; these \glspl{interaction} can now take place between different \gls{entity} \glspl{cluster}.

\section{Case study on a real-world dataset -- Reddit news}
\label{MPDHP-Reddit}

\subsection{Introduction} 
Throughout the previous sections, we developed a plethora of models to investigate \glspl{interaction} in information \gls{spread}. A first approach underlined the necessity of considering clustering and a second one the necessity of considering the temporal dimension. To answer these conclusions, we extended a promising class of models, the Dirichlet-Hawkes process, so that it becomes possible to spot temporal \glspl{interaction} between \glspl{cluster} of textual documents in large-scale settings. The resulting Multivariate Powered Dirichlet-Hawkes Process (\acrshort{MPDHP}) answers all these constraints and yields interpretable results on \gls{interacting} processes.

As a closure of our work on \glspl{interaction} modelling in information \gls{spread}, we propose to conduct a large-scale experiment on a real-world dataset using MPDHP. In this section, we first describe and justify the choice of a large-scale real-world dataset from Reddit. In a second time, we conduct an in-depth analysis of the results yielded by MPDHP. We finally conclude on the role of \glspl{interaction} in the \gls{spread} of news headlines on Reddit.

\subsection{Dataset}
\subsubsection{Origin and raw data}
\paragraph{Why Reddit?}
Reddit is a social network platform that gathers 48M of monthly users, as of today. Reddit is composed of a galaxy of forums (or subreddits) that are often specific to a topic or an organisation. Within each forum, users can open a new post made of a title and a \gls{content} (typically textual or visual). Users can then comment on the post and get answered to.
The \glspl{content} published on the platform are therefore user-generated and highly dynamic, with an estimated 1M of new publications per day. Each post can be upvoted (score +1) or downvoted (score -1); the popularity of the post is defined as the difference between upvotes and downvotes. 

For these reasons, this corpus fits our demonstration: information emerges from a large user-generated data flow, its \glspl{content} are formatted, textual, timestamped and user-generated, and the number of daily publications makes it likely that some of them interact with each other.

\paragraph{Data}
We collected our dataset from the Pushshift Reddit repository \citep{Baumgartner2020PushshiftRedditDataset}. As its authors describe it, \textit{``Pushshift is a social media data collection, analysis, and archiving platform that since 2015 has collected Reddit data and made it available to researchers. Pushshift’s Reddit dataset is updated in real-time and includes historical data back to Reddit’s inception. [...] The Pushshift Reddit dataset makes it possible for social media researchers to reduce time spent in the data collection, cleaning, and storage phases of their projects.''}

In practice, the dataset comprises the entirety of the \gls{content} posted on Reddit up to June 2021. In particular, for each Reddit post, we can retrieve the subreddit it came from, the title of the publication, its publication date and its score.

\subsubsection{Preprocessing}
\paragraph{Selecting the news subreddits}
For the need of our study, we restrict ourselves to consider only popular English news subreddits. Namely, we select only posts from the following subreddits: inthenews, neutralnews, news, nottheonion, offbeat, open news, qualitynews, truenews, worldnews. This first routine leaves us with 867,328 headlines, which makes a total of 1,111,955 words drawn from a vocabulary of size 36,284. 

\paragraph{Cleaning the textual data}
As it is common in natural language processing, we must clean the raw text extracted from the Reddit posts so it becomes usable. To do so, we conduct the following routine:
\begin{enumerate}
    \item Remove the web addresses
    \item Put the text in lowercase
    \item Remove punctuation signs
    \item Remove extra white spaces
    \item Remove all English stopwords (imported from nltk)
    \item Remove all words whose length is lesser than 4
    \item Remove all words that appear less than 3 times in the original dataset
\end{enumerate}

\paragraph{Removing uninformative documents}
Next, we remove publications that carry lesser textual or temporal information. 

Firstly, we choose not to consider the publications that have a popularity lesser than 20 -- meaning that they received less than 20 positive votes more than negative votes. We make this choice so that we consider publications that are visible enough to have any influence on the data generation process.

Secondly, we remove the publications that comprise less than 3 words. The semantic information carried is expected to be poor and is not considered in our analysis.

\paragraph{Final dataset}
After curating the dataset in the way described above, we are left with 102,045 news headlines (one-eighth of the original data), which makes a total of 875,334 tokens (named entities, verbs, numbers, etc.) drawn from a vocabulary of size 13,241 (one-third of the original vocabulary). The characteristics of this dataset are shown in Fig.~\ref{fig:statsDSRedditafter}. 

\begin{figure}
    \centering
    \includegraphics[width=\textwidth]{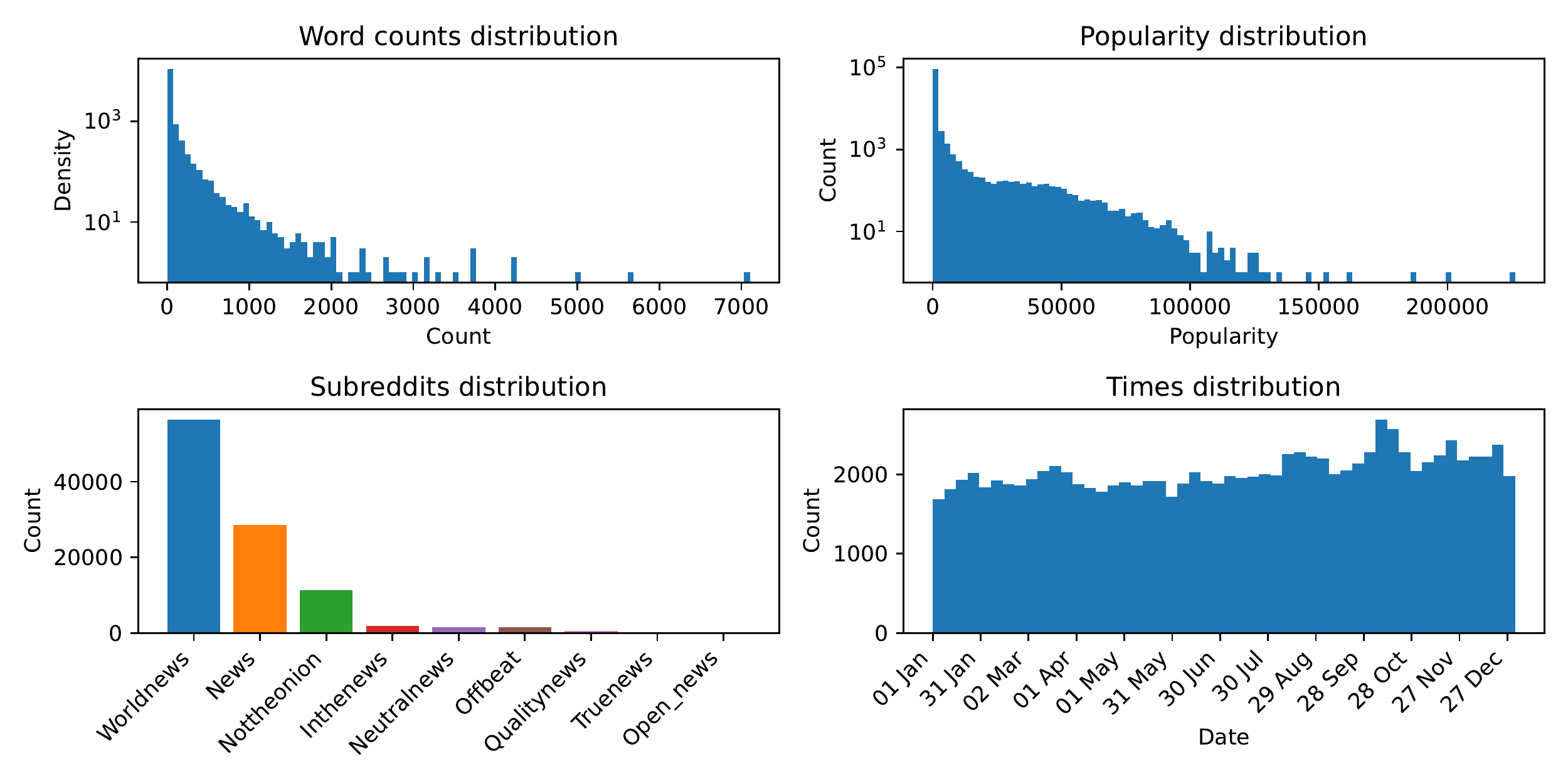}
    \caption[MPDHP x News - Characteristics of the News dataset]{\textbf{Characteristics of the News dataset} --- For $\sim$100,000 headlines and $\sim$13,000 different words (Top-Left) Distribution of the words count (Top-Right) Distribution of headlines popularity (Bottom-Left) How many headlines per subreddit (Bottom-Right) How publications \gls{spread} over time}
    \label{fig:statsDSRedditafter}
\end{figure}

\subsection{Experimental setup}
As announced earlier, we will apply the MPDHP of Section~\ref{MPDHP} to the News dataset detailed above. First, we must determine which hyper-parameters to use.

\paragraph{Temporal kernel}
We run our experiments using three different \acrshort{RBF} kernels, which should account for publication dynamics at three different timescales: minute, hour, and day. 
The ``\textbf{Minute}'' RBF kernel is made of Gaussian functions centered at the following times: $\left[ 0, 10, 20, 30, 40, 05, 60, 70, 80 \right]$ minutes; each entry shares a same standard deviation $\sigma$ of 5 minutes, and $\lambda_0=0.01$.
The ``\textbf{Hour}'' RBF kernel has Gaussians centered around $\left[ 0, 2, 4, 6, 8 \right]$ hours, with a standard deviation $\sigma$ of 1 hour, and $\lambda_0=0.001$.
The ``\textbf{Day}'' RBF kernel is centered around $\left[ 0, 1, 2, 3, 4, 5, 6 \right]$ days, with a standard deviation $\sigma$ of 0.5 days, and $\lambda_0=0.0001$.

For each of these kernels, we set the concentration parameter $\lambda_0$ so that it reaches roughly the value of one Gaussian function evaluated at $2\sigma$. It means that an event which is $2\sigma$ away from the center of the Gaussian kernel of a single observation has 50\% chances of getting associated with this Gaussian kernel entry, and 50\% chances of opening a new \gls{cluster}. It also translates that the chances to open a new \gls{cluster} given an observation right in the center of a single RBF kernel entry are roughly $\frac{1}{8}$.

This allows us to spot \glspl{interaction} that might occur at different timescales. However, one of these timescales is likely to explain the data much better than the others. If that is the case, it would mean that the dynamics at stake are not scale-free --that \glspl{interaction} do not have noteworthy influence at every temporal scale.

\paragraph{Language model}
We use the same Dirichlet-Multinomial language model as in \citep{Du2015DHP} and in Section~\ref{PDHP} and Section~\ref{MPDHP}. We expect the vocabulary to show enough variations across the range of topics discussed in the News dataset. 
We vary the concentration parameter of this model between $\theta_0=0.001$ and $\theta_0=0.01$.
The choice of these values is supported by other works using a similar model \citep{Du2015DHP} and by our own observations in Section~\ref{PDHP}. A larger value of $\theta_0$ makes the inferred \glspl{cluster} cover a broader range of document types, whereas a small value makes the inferred \glspl{cluster} more specific to a topic.

\paragraph{SMC algorithm}
Finally, we detail and justify the parameters used for running the \acrshort{SMC} algorithm. 
The procedure to update the optimal parameters matrix $\alpha$ relies on the average of a set of sample vectors weighted by these vectors' likelihood.
The number of active \glspl{cluster} can grow large, and the number of parameters scales as the square of this number. We therefore need to have enough sample matrices to guarantee a good approximation of the optimal $\alpha$ --the value recommended for 2 \glspl{cluster} in Fig.~\ref{fig:XP6} is likely to be too small for the task.

We set $N_{samples}=100000$. From our observations, the number of coexisting \glspl{cluster} can go as large as 80 coexisting ones (roughly 40,000 parameters to estimate), that is 2.5 sample values per parameter. In practice however, the number of coexisting \glspl{cluster} remains fairly low, around 10 coexisting \glspl{cluster} (roughly 1,000 parameters), allowing sampling each parameter from 100 candidate values.

Each sample parameter is drawn from an identical Beta distribution of concentration parameter $\alpha_0=2$. We set this value so that extreme values (0 and 1) for these parameters are rare, and so that the sampled parameters matrices can show great variations from one another.

Finally, we consider 8 particles for the SMC algorithm, similarly to what is done in \citep{Du2015DHP} and in Section~\ref{PDHP}, and recommended in Fig.~\ref{fig:XP6}.

\subsection{Results}
\subsubsection{Overview of the experiments}
In our experiments, we used three temporal kernels that account for dynamics at different time scales, and tested diverse values for the language model concentration parameter $\theta_0$ and for the exponent $r$. In Table~\ref{tab-diffruns}, we represent the main characteristics for each of these runs in terms of number of inferred \glspl{cluster} $K$, the average \gls{cluster} population $<N>$ (where $< \cdot >$ denotes the average), the average normalized entropy of the vocabulary of the top 20 \glspl{cluster} $S_{text}^{(20)}$, the average normalized entropy of the subreddits partition of the top 20 \glspl{cluster} $S_{sub}^{(20)}$.

We recall that the normalized entropy is defined as:
\begin{equation}
    S(\vec{x}) = -\frac{1}{\ln \vert \vec{x} \vert}\sum_{i}^{\vert \vec{x} \vert} x_i \ln x_i
\end{equation}
where $\vec{x}$ is a vector that sums to $1$ and $\vert \vec{x} \vert$ its cardinal (length). Each entry $x_i$ represents the probability of $i$. When considering counts, $\vec{x}$ can be set equal to the frequency of each observation. The entropy is normalized between 0 (minimal \gls{spread}, $\vec{x}_i = \delta_{ij} \, \forall i$) and 1 (maximal \gls{spread}, $\vec{x}_i = \frac{1}{\vert \vec{x} \vert} \, \forall i$). In our case, a low entropy $S_{text}^{(20)}$ (resp. $S_{sub}^{(20)}$) means that clusters contain documents that are concentrated around a reduced set of words (resp. of subreddits); conversely, a large entropy means that clusters do not account for documents concentrated around a specific vocabulary (resp. set of subreddits).

\begin{table}[h]
    \centering
    \begin{tabular}{|c|c|l|r|r|l|l|}
        \cline{1-7}
        $\vec{\kappa} (t)$ & $\theta_0$ & $r$ & $K$ & $<N>$ & $S_{text}^{(20)}$ & $S_{sub}^{(20)}$ \\
        \cline{1-7}
        {\multirow{8}{*}{\rotatebox[origin=c]{90}{\textbf{Minute}}}} & {\multirow{4}{*}{\rotatebox[origin=c]{90}{\textbf{0.01}}}} & 0.0 & 16150 & 6 & 0.744(68) & 0.400(36) \\
        & & 0.5 & 8498 & 12 & 0.796(55) & 0.441(24) \\
        & & 1.0 & 5069 & 20 & 0.790(49) & 0.476(53) \\
        & & 1.5 & 2730 & 37 & 0.808(43) & 0.490(52) \\
        \cdashline{2-7}
        & {\multirow{4}{*}{\rotatebox[origin=c]{90}{\textbf{0.001}}}} & 0.0 & 46304 & 2 & 0.475(48) & 0.239(56) \\
        & & 0.5 & 37277 & 3 & 0.485(40) & 0.256(60) \\
        & & 1.0 & 26858 & 4 & 0.501(37) & 0.266(72) \\
        & & 1.5 & 19275 & 5 & 0.506(39) & 0.280(58) \\
        \cline{1-7}

        {\multirow{8}{*}{\rotatebox[origin=c]{90}{\textbf{Hour}}}} & {\multirow{4}{*}{\rotatebox[origin=c]{90}{\textbf{0.01}}}} & 0.0 & 3792 & 27 & 0.798(52) & 0.474(106) \\
        & & 0.5 & 1735 & 59 & 0.791(45) & 0.469(77) \\
        & & 1.0 & 825 & 124 & 0.803(47) & 0.484(75) \\
        & & 1.5 & 426 & 240 & 0.795(37) & 0.489(69) \\
        \cdashline{2-7}
        & {\multirow{4}{*}{\rotatebox[origin=c]{90}{\textbf{0.001}}}} & 0.0 & 18012 & 6 & 0.760(86) & 0.397(63) \\
        & & 0.5 & 11923 & 9 & 0.784(72) & 0.425(38) \\
        & & 1.0 & 4837 & 21 & 0.821(50) & 0.497(30) \\
        & & 1.5 & 2368 & 43 & 0.814(42) & 0.481(91) \\
        \cline{1-7}

        {\multirow{8}{*}{\rotatebox[origin=c]{90}{\textbf{Day}}}} & {\multirow{4}{*}{\rotatebox[origin=c]{90}{\textbf{0.01}}}} & 0.0 & 609 & 168 & 0.713(34) & 0.413(103) \\
        & & 0.5 & 326 & 313 & 0.728(33) & 0.429(98) \\
        & & 1.0 & 172 & 593 & 0.743(31) & 0.461(94) \\
        & & 1.5 & 96 & 1063 & 0.755(36) & 0.464(86) \\
        \cdashline{2-7}
        & {\multirow{4}{*}{\rotatebox[origin=c]{90}{\textbf{0.001}}}} & 0.0 & 4349 & 23 & 0.705(49) & 0.396(103) \\
        & & 0.5 & 2654 & 38 & 0.721(59) & 0.404(104) \\
        & & 1.0 & 1399 & 73 & 0.734(57) & 0.431(106) \\
        & & 1.5 & 764 & 134 & 0.741(54) & 0.442(98) \\
        \cline{1-7}
         
    \end{tabular}
    
    \caption[MPDHP x News - Results for each experiment]{\textbf{Results for each experiment} --- We tried various combinations of parameters $\vec{\kappa} (t)$, $\theta_0$ and $r$ and observe how they result in a variety of outputs. We characterize these outputs in terms of clusters (number, size, textual entropy, subreddits entropy). The standard deviation on the last digits is given in standard notation -- $0.123(12) \Leftrightarrow 0.123 \pm 0.012$.}
    \label{tab-diffruns}
\end{table}

We can make several observations from Table~\ref{tab-diffruns}:
\begin{itemize}
    \item We recover the fact that textual \glspl{cluster} have a lower entropy for small values of $r$; this is because their creation is based more on textual coherence than on temporal coherence.
    \item The subreddit entropy \textit{seems} to grow with $r$, but no clear trend is visible due to large error bars. A possible interpretation is that favouring the temporal information for \gls{cluster} creation results in larger \glspl{cluster}. They would be too large to account for subreddit-specific dynamics. However, the entropy remains fairly low. The entropy of the distribution Fig.~\ref{fig:statsDSRedditafter}-top-left is equal to 0.51.
    \item The number of inferred \glspl{cluster} decreases with $r$, and their average population increases.
    \item The number of \glspl{cluster} grows large for the ``Minute'' kernel. This is because the short time range considered does not allow for \glspl{cluster} to last in time. A \gls{cluster} that does not replicate within 1h30 is forgotten.
\end{itemize}

\paragraph{Choosing a timescale}
From Table~\ref{tab-diffruns}, we see there can be a large variety of outputs to analyse, depending on the modelling choices. If we are interested in the micro-\glspl{interaction} that happen at tiny time scales, the ``Minute'' RBF kernel should be considered. However, the short time range it allows for \glspl{interaction} makes the number of \glspl{cluster} large, and the average \gls{cluster} population small. On the other hand, the ``Day'' RBF kernel spans over larger periods, which prevents discarding \glspl{cluster} too soon. For instance, it will not discard \glspl{cluster} that follow a circadian publication dynamic, unlike the ``Minute'' kernel that cannot account for such time ranges.

\paragraph{Choosing $\theta_0$}
In a Dirichlet-Multinomial textual model, such as written in Eq.~\ref{eq-likModelLg}, the hyperparameter $\theta_0$ controls the concentration of topics' vocabulary. A small value of $\theta_0$ makes it so that a new document should have an almost identical word distribution as a given \gls{cluster} to enter it; there are more chances that small discrepancies lead to opening a new \gls{cluster}. Conversely, larger values of $\theta_0$ allow documents to belong to \glspl{cluster} even if their word distribution does not fit exactly the \gls{cluster}'s \gls{content}. We tested a small value $\theta_0=0.001$ and a large value $\theta_0=0.01$, which are standard in usual text modelling \citep{Blei2003LDA,Blei2006DynamicTopicModel,Du2015DHP}. The choice of this parameter controls the level of specificity wanted for the textual \glspl{cluster}.

\paragraph{Choosing $r$}
We saw the experiments of Section~\ref{PDHP} that the choice of $r$ allows us to nudge the clustering towards textual-based clustering of temporal-based clustering. Smaller values of $r$ favour the textual \gls{content} as the main information in the creation of the \glspl{cluster}, whereas larger values of $r$ make their composition rely more on the inferred temporal dynamics. 

\begin{figure}
    \makebox[\textwidth][c]{
    \centering
    \includegraphics[width=1.15\textwidth]{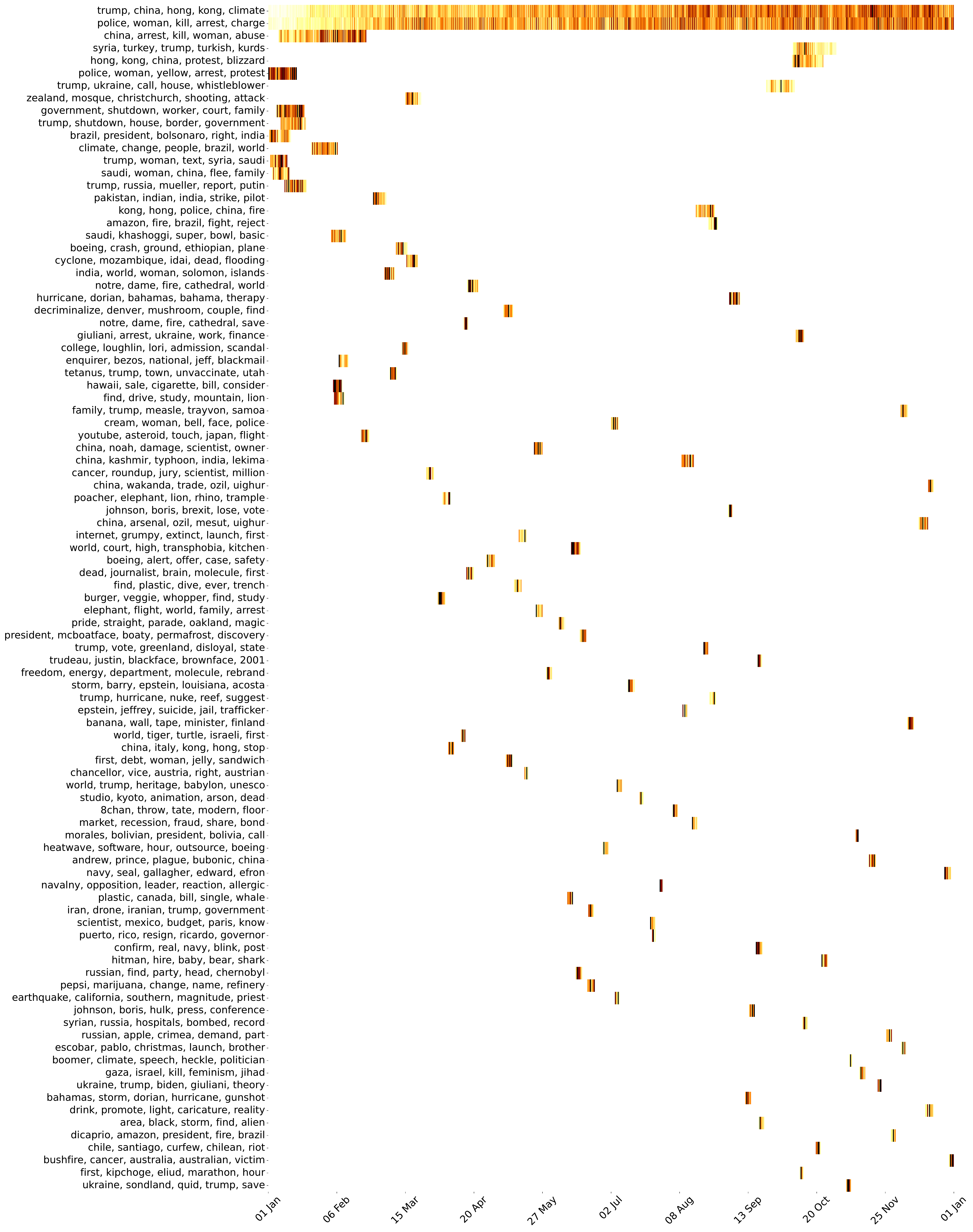}
    }    
    \caption[MPDHP x News - Timeline news 2019]{\textbf{Timeline of the top inferred \glspl{cluster} from news headlines from 01/2019 to 12/2019} --- Each line is normalized with respect to its maximum value. Each bin accounts for half a day. The darker, the more observations in the \gls{cluster} at a given time.}
    \label{fig-timeline-MPDHPNews}
\end{figure}
In our experiments, we see that smaller values of $r$ tend to increase the number of inferred \glspl{cluster}. This can be explained in the following way: the temporal concentration parameter $\lambda_0$ has been fixed so that an observation $2\sigma$ away from the center of one RBF kernel entry (that is $\sim$95\% chances not to be triggered by it) has $\sim$50\% chances to open a new \gls{cluster}. Mechanically, it is likely that $\lambda_0$ is lesser than the value of the RBF kernel most of the time. However, for smaller values of $r$, the gap between $\lambda_0$ and the temporal intensity $\lambda (t)$ fades to 0 (see Fig.~\ref{fig-illustr-PDHP}), making it more likely to open new \glspl{cluster} from the temporal perspective. In the limit $r \rightarrow 0$, we recover a Uniform Prior \citep{Wallach2010UnifP}, making the opening of new \glspl{cluster} entirely governed by the parameter $\theta_0$ --\glspl{cluster} are created and filled based on textual \gls{content} only. On the contrary, when $r$ is large, the gap between $\lambda_0$ and $\lambda (t)$ is greater, making it less likely to open new \glspl{cluster} from a temporal perspective. In the limit $r \rightarrow \infty$, the opening and filling of \glspl{cluster} is deterministically governed by the temporal information, as the smallest gap in the intensities leads to a Dirac distribution on the \gls{cluster} with the largest temporal intensity.

\subsubsection{Visualizing topics over time}
\label{MPDHPxNews-visutopics}
In Fig.~\ref{fig-timeline-MPDHPNews}, we plot the timeline of the inferred \glspl{cluster} on a real-time axis for one of our experiments (kernel ``Hour'', $\theta_0=0.01$, $r=1$) for illustration purpose. Each bin represents a half-day period.
We can make several interesting observations from this figure.

\textbf{Firstly}, two \glspl{cluster} seem to be always present. Their intensity does not follow any visible bursty dynamics. When we look at their composition, we notice that the first \gls{cluster} is made of 75\% of articles from the subreddit r/worldnews, which is +20\% from what one would expect from chance (55\% of the corpus is from r/worldnews, see Fig.~\ref{fig:statsDSRedditafter}). Similarly, the second \gls{cluster} comprises 46\% of r/news articles, which is also roughly +20\% from expected at random (28\% of the corpus is from r/news, see Fig.~\ref{fig:statsDSRedditafter}). These two \glspl{cluster} therefore significantly account for publications from either of these subreddits, independently from the textual \gls{content}. Our first intuition is that there are strong \glspl{interaction} between these subreddits. Both are general news forums with a large audience; an article that gets posted in one of them is highly likely to be copy-pasted on the other.

\textbf{Secondly}, topics that are not part of these two \glspl{cluster} appear and fade quickly in time. This is in line with the expected behaviour of news on the internet, which typically bursts for a few hours/days before being replaced by the latest news. We see however from this plot that only a small fraction of the inferred \glspl{cluster} coexists simultaneously in the dataset. The chance of spotting an \gls{interaction} is therefore weak, as noted in Chapter~\ref{Chapter-SBMs}.

\textbf{Thirdly}, it seems that \glspl{cluster} expand on larger periods at the beginning of the algorithm. This is due to the cold start of MPDHP. It needs some time before statistically distinguishing \glspl{cluster} and explores several directions at once. The early \glspl{cluster} are artefacts of such a cold-start effect.

\subsubsection{Quantifying \glspl{interaction}}
\paragraph{Effective \gls{interaction}}
We introduce the parameters we are going to use in follow-up analyses. The output of MPDHP consists of a list of \glspl{cluster} comprising timestamped bags of words --news headlines. Between each pair of \glspl{cluster}, MPDHP inferred a temporal influence function $\lambda (t)$, that represents the probability for one \gls{cluster} to trigger publications from another. Therefore, our model yields an adjacency matrix $A \in \mathbb{R}^{K \times K \times L}$, where $K$ is the number of \glspl{cluster} and $L$ the size of the RBF kernel $\vec{\kappa} (t)$. One entry $a_{i,j,l}$ represents the strength of the influence of $j$ in $i$ due to the $l^{th}$ entry of $\vec{\kappa}(t)$.

However, we must consider the effective number of \glspl{interaction} to get relevant metrics. A given triggering function $\lambda (t)$ could be inferred from the observation of very few observations only; we must weigh these \glspl{interaction}. To do so, we simply consider a normalized weight matrix $W \in \mathbb{R}^{K \times K \times L}$, whose entries $w_{i,j,l}$ are the average of the intensity of $i$ above $\lambda_0$ due to $j$ from the kernel entry $l$ for all observations. Explicitly:
\begin{equation}
    \label{eq-effInter}
    w_{i,j,l} = \frac{1}{\vert \mathcal{H}_i \vert}\sum_{t_i \in \mathcal{H}_i} \sum_{t_j<t_i} max(a_{i,j,l} \kappa_l (t_i-t_j)-\lambda_0, 0)
\end{equation}
where $t_x \in \mathcal{H}_x$ is the publication time of an observation from \gls{cluster} $x$ and $\lambda_0$ the temporal concentration parameter. Note that we retract $\lambda_0$ from the intensity term, because it is considered as a background probability for a publication to happen --similarly to the \gls{virality} in Chapter~\ref{Chapter-SBMs} and to the background noise Chapter~\ref{Chapter-InterRate}. Note that $W$ can also be interpreted as the instantaneous increase in probability due to \glspl{interaction}.

Note that in all the following computations, we do not consider \glspl{cluster} that comprise less than 10 documents. Such \glspl{cluster} are considered leftovers from the algorithm.

\paragraph{\Glspl{interaction} strength}
In Table~\ref{tab-interStrength}, we investigate the effective impact of \glspl{interaction} in the dataset. We consider the following metrics:
\begin{itemize}
    \item $<A>$: the average value of the whole adjacency matrix. It tells us to which extent \glspl{piece of information} are connected to each other according to MPDHP.
    \item $<W>$: the average value of the effective \glspl{interaction}. It tells us the extent to which the \glspl{interaction} (encoded in $A$) effectively happen in the dataset.
    \item $<A>_W$: the average of the inferred \gls{interaction} matrix $A$ weighted by the effective \glspl{interaction} $W$. In this case, $W$ can be interpreted as our confidence in the corresponding entries of $A$ given the data with which they were inferred. This value quantifies the overall role of \glspl{interaction} in the dataset.
    \item $\frac{<W^{intra}>}{<W^{extra}>}$: ratio of the intra-\gls{cluster} effective \glspl{interaction} with the extra-\gls{cluster} effective \glspl{interaction}. This tells us how much different \glspl{cluster} influence each other versus how much they influence themselves.
\end{itemize}

When computing the means, we discard the entries of $A$ and $W$ equal to 0. This is because all \glspl{cluster} do not exist simultaneously, and thus should not be considered. An \gls{interaction} strictly equal to 0 means that \glspl{cluster} simply did not exist at the same time.

\begin{table}[h]
    \centering
    \begin{tabular}{|c|c|l|c|c|c|l|}
        \cline{1-7}
        $\vec{\kappa} (t)$ & $\theta_0$ & $r$ & $<A>$ ($10^{-3}$) & $<W>$ ($10^{-5}$) & $<A>_W$ ($10^{-3}$) & $\frac{<W^{intra}>}{<W^{extra}>}$ \\
        \cline{1-7}
        {\multirow{6}{*}{\rotatebox[origin=c]{90}{\textbf{Minute}}}} & {\multirow{3}{*}{\rotatebox[origin=c]{90}{\textbf{0.01}}}} & 0.5 & 49(21) & 342(889) & 66(17) & 1.8(62) \\
        & & 1.0 & 48(20) & 478(1124) & 60(17) & 1.4(43) \\
        & & 1.5 & 48(20) & 746(1901) & 60(17) & 1.0(33) \\
        \cdashline{2-7}
        & {\multirow{3}{*}{\rotatebox[origin=c]{90}{\textbf{0.001}}}} & 0.5 & 50(22) & 316(882) & 66(17) & 3.1(138) \\
        & & 1.0 & 50(21) & 279(752) & 67(16) & 2.6(105) \\
        & & 1.5 & 50(22) & 268(665) & 67(16) & 2.3(84) \\
        \cline{1-7}

        {\multirow{6}{*}{\rotatebox[origin=c]{90}{\textbf{Hour}}}} & {\multirow{3}{*}{\rotatebox[origin=c]{90}{\textbf{0.01}}}} & 0.5 & 49(18) & 389(843) & 56(17) & 0.5(13) \\
        & & 1.0 & 49(18) & 478(1187) & 56(17) & 0.6(15) \\
        & & 1.5 & 48(17) & 471(789) & 52(15) & 0.7(13) \\
        \cdashline{2-7}
        & {\multirow{3}{*}{\rotatebox[origin=c]{90}{\textbf{0.001}}}} & 0.5 & 50(21) & 110(398) & 61(17) & 1.7(67) \\
        & & 1.0 & 50(18) & 133(506) & 57(17) & 1.4(60) \\
        & & 1.5 & 49(17) & 183(554) & 55(17) & 1.1(37) \\
        \cline{1-7}

        {\multirow{6}{*}{\rotatebox[origin=c]{90}{\textbf{Day}}}} & {\multirow{3}{*}{\rotatebox[origin=c]{90}{\textbf{0.01}}}} & 0.5 & 49(18) & 41(97) & 55(17) & 1.2(34) \\
        & & 1.0 & 49(19) & 63(131) & 54(17) & 1.2(31) \\
        & & 1.5 & 49(19) & 91(187) & 53(18) & 1.2(31) \\
        \cdashline{2-7}
        & {\multirow{3}{*}{\rotatebox[origin=c]{90}{\textbf{0.001}}}} & 0.5 & 50(20) & 18(90) & 60(19) & 1.1(59) \\
        & & 1.0 & 50(19) & 23(101) & 58(17) & 1.0(50) \\
        & & 1.5 & 50(19) & 37(111) & 56(18) & 1.0(36) \\
        \cline{1-7}
         
    \end{tabular}
    
    \caption[MPDHP x News - Interaction strength]{\textbf{Interaction strength} --- Overall, interaction between clusters is weak. The large standard deviations suggest that there is a large variety of interacting behaviours. Interactions tend to happen within a cluster (self-interactions).
    }
    \label{tab-interStrength}
\end{table}

The main conclusion of the results Table~\ref{tab-interStrength} is that most \glspl{interaction} are weak. The average value of $A$ tells us that the average value of the inferred parameters is around $0.05$, which is few given the value is bounded between 0 and 1. The metric ${<W>}$ tells us that on all events, the \gls{interaction} between \glspl{cluster} rose the probability of publication by 0.1\%-1\% on average. We can also note that the values of ${<W>}$ are of the same order of magnitude as $\lambda_0$ (0.01 for the ``Minute'' kernel, 0.001 for ``Hour'', and 0.0001 for ``Day''). We can interpret this as the probability for a new document belonging to a \gls{cluster} or being from a new \gls{cluster} is roughly the same from a temporal perspective. The metric $<A>_W$ tells us that when weighting the average of $A$ with the effective \gls{interaction}, the values of $A$ are slightly higher than $0.05$; we can now be confident in this value, given it has been inferred on a statistically significant number of observations. However, it still tells us that only some \glspl{interaction} are significant, which correlates with the findings of Chapter~\ref{Chapter-SBMs}. Finally, the last metric $\frac{<W^{intra}>}{<W^{extra}>}$ finds that most effective \glspl{interaction} take place more often within the same \gls{cluster}, meaning that \glspl{cluster} tend to self-replicate. Only for the ``Hour'' kernel and $\theta_0=0.01$ this value is lesser than one. It is because in this case, MPDHP infers two large \glspl{cluster} that exist for the entire year (see the 2 first rows of Fig.~\ref{fig-timeline-MPDHPNews}) and side topic-specific \glspl{cluster}. These \glspl{cluster} strongly influence each other, and all the topic-specific \glspl{cluster} can interact with them. In fact, the same effect explains the decrease of $\frac{<W^{intra}>}{<W^{extra}>}$ as $r$ grows: fewer \glspl{cluster} are inferred, and the probability of having large \glspl{cluster} that last for the whole period increases.

Another major observation from Table~\ref{tab-interStrength} is that standard deviations of effective \glspl{interaction} are large. It means that despite most \glspl{interaction} being weak, some of them play a significant role in the dataset. In Fig.~\ref{fig-MPDHP-figinterstrength}, we plot the distribution of effective \glspl{interaction} for one specific run (``Hour'' kernel, $\theta_0=0.01$, $r=1$). Note that we recover the same trend in all other experiments. The results of this figure are similar to the ones in Chapter~\ref{Chapter-SBMs} (Fig.~\ref{fig-IMMSBM-histRelInter}): most \glspl{interaction} are weak, and only a few of them are significant.

\begin{figure}
    \centering
    \includegraphics[width=0.7\textwidth]{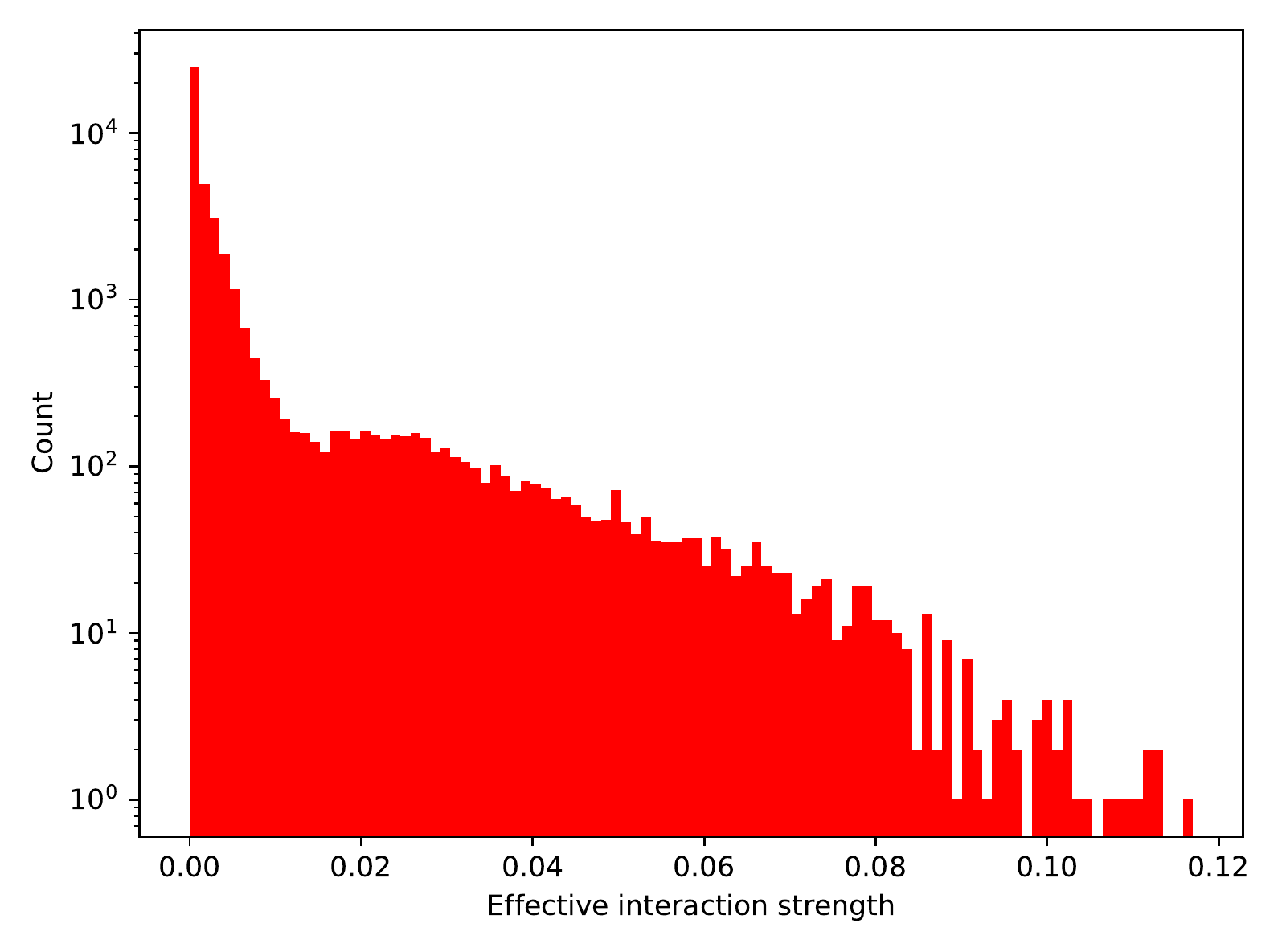}
    \caption[MPDHP x News - Distribution of \gls{interaction} strength]{\textbf{Distribution of \gls{interaction} strength} -- Most \glspl{interaction} are weak.}
    \label{fig-MPDHP-figinterstrength}
\end{figure}

\paragraph{\Glspl{interaction} range}
Finally, in Table~\ref{tab-interRange}, we investigate the range of effective \glspl{interaction} in every experiment. The \gls{interaction} range studies the persistence of \glspl{interaction}. To do so, we compute the effective \gls{interaction} for each kernel entry individually (and not the aggregated \gls{interaction} as in Table~\ref{tab-interStrength}) and average it over all existing \glspl{cluster}.

\begin{table}[h]
    \centering
    \begin{tabular}{|c|c|l|c|c|c|c|c|c|c|c|c|c|}
        \cline{1-12}
        $\vec{\kappa} (t)$ & $\theta_0$ & $r$ & $\kappa_1$ & $\kappa_2$ & $\kappa_3$ & $\kappa_4$ & $\kappa_5$ & $\kappa_6$ & $\kappa_7$ & $\kappa_8$ & $\kappa_9$ \\
        \cline{1-12}
        & & & 0m & 10m & 20m & 30m & 40m & 50m & 60m & 70m & 80m \\
        \cline{1-12}
        {\multirow{6}{*}{\rotatebox[origin=c]{90}{\textbf{Minute (m)}}}} & {\multirow{3}{*}{\rotatebox[origin=c]{90}{\textbf{0.01}}}} & 0.5 & 266 & 421 & 407 & 451 & 428 & 403 & 395 & 345 & 121 \\
        & & 1 & 396 & 591 & 580 & 607 & 532 & 575 & 521 & 507 & 224 \\
        & & 1.5 & 616 & 937 & 893 & 914 & 955 & 840 & 808 & 810 & 304 \\
        \cdashline{2-12}
        & {\multirow{3}{*}{\rotatebox[origin=c]{90}{\textbf{0.001}}}} & 0.5 & 436 & 509 & 457 & 424 & 371 & 340 & 313 & 178 & 52 \\
        & & 1 & 284 & 435 & 396 & 388 & 343 & 327 & 272 & 187 & 45 \\
        & & 1.5 & 208 & 388 & 366 & 353 & 326 & 333 & 290 & 215 & 61 \\
        \cline{1-12}

        & & & 0h & 2h & 4h & 6h & 8h & - & - & - & - \\
        \cline{1-12}
        {\multirow{6}{*}{\rotatebox[origin=c]{90}{\textbf{Hour (h)}}}} & {\multirow{3}{*}{\rotatebox[origin=c]{90}{\textbf{0.01}}}} & 0.5 & 494 & 430 & 502 & 456 & 324 & & & & \\
        & & 1 & 658 & 538 & 549 & 542 & 451 & & & & \\
        & & 1.5 & 558 & 615 & 532 & 526 & 411 & & & & \\
        \cdashline{2-8}
        & {\multirow{3}{*}{\rotatebox[origin=c]{90}{\textbf{0.001}}}} & 0.5 & 124 & 149 & 119 & 137 & 92 & & & & \\
        & & 1 & 154 & 164 & 172 & 149 & 111 & & & & \\
        & & 1.5 & 208 & 244 & 223 & 197 & 156 & & & & \\
        \cline{1-12}

        & & & 0d & 1d & 2d & 3d & 4d & 5d & 6d & - & - \\
        \cline{1-12}
        {\multirow{6}{*}{\rotatebox[origin=c]{90}{\textbf{Day (d)}}}} & {\multirow{3}{*}{\rotatebox[origin=c]{90}{\textbf{0.01}}}} & 0.5 & 44 & 45 & 47 & 46 & 47 & 47 & 37 & & \\
        & & 1 & 70 & 71 & 72 & 68 & 68 & 70 & 61 & & \\
        & & 1.5 & 102 & 100 & 101 & 105 & 98 & 105 & 82 & & \\
        \cdashline{2-10}
        & {\multirow{3}{*}{\rotatebox[origin=c]{90}{\textbf{0.001}}}} & 0.5 & 18 & 20 & 21 & 21 & 21 & 21 & 17 & & \\
        & & 1 & 24 & 26 & 24 & 27 & 28 & 26 & 22 & & \\
        & & 1.5 & 21 & 41 & 42 & 41 & 41 & 41 & 35 & & \\
        \cline{1-12}

    \end{tabular}
    
    \caption[MPDHP x News - Interaction range]{\textbf{Interaction range} --- All the values for effective interaction are given in ten-thousandth ($10^{-5}$). Influence tends to decrease over time for all the kernels considered.}
    \label{tab-interRange}
\end{table}

Importantly, the raw values of effective \gls{interaction} corresponding to the first entry of the triggering kernel $\kappa_1$ are consistently smaller than subsequent values. This is induced by our kernel choice, because $\kappa_1$ is always centered around $t=0$, which makes half of the associated Gaussian function account for (impossible) backwards influence. Therefore, where other kernels can contribute on both sides of their means, $\kappa_1$ cannot. In Table~\ref{tab-interRange}, we extrapolate their value as twice the computed one.

We see in Table~\ref{tab-interStrength} that influence tends to decrease over time for all the kernels considered, after reaching a first peak. Overall, the \gls{interaction} between documents seems to play a marginal role still. We did not plot the standard deviation for visualization purposes, but they are as large as in Table~\ref{tab-interStrength}. Therefore, \textit{most} \glspl{interaction} do not play a significant role in the publication of subsequent documents over time, but it greatly helps identify the right \gls{cluster} for some of them. Overall, the increase in probability for a new document to belong to a \gls{cluster} due to \glspl{interaction} is within 0.1\%-1\% (we recall that $\lambda (t) = \lim_{\Delta t \rightarrow \infty} \frac{P(\text{event in }\Delta t)}{\Delta t}$).

\subsubsection{Visualizing topical \glspl{interaction}}
In this section, we plot the temporal \gls{interaction} network between the inferred \glspl{cluster} both on the global and the monthly scale. The \glspl{cluster}' composition is given explicitly. Each edge between a pair of \glspl{cluster} represents two metrics: the inferred \gls{interaction} strength $A$, and the effective \gls{interaction} $W$. The inferred \gls{interaction} strength $A$ is plotted using a colour code (darker is stronger). The effective \gls{interaction} is represented using transparency (the less transparent the stronger the effective \gls{interaction}). Therefore, a barely visible dark edge means that MPDHP inferred a strong \gls{interaction}, but that few of them were effectively observed.

\paragraph{Experiment considered for subsequent analyses}
In the following paragraphs, we consider the ``Hour'' kernel, with $\theta_0 = 0.01$ and $r = 1$. We justify this choice for easing the interpretation of our results. This way, we restrict our analysis to \glspl{cluster} that do not contain fragments of a whole. For instance, we prefer to have only one \gls{cluster} about the Notre-Dame cathedral fire and related news instead of three \glspl{cluster} containing fragments of the news, such as the initial fire, the political reactions, funds raising, etc. Therefore, we choose to consider $\theta_0 = 0.01$, which avoids \glspl{cluster} to be overly specific. We choose the ``Hour'' kernel, which spans over long enough periods so that news fragments about a similar topic are considered as possibly related. Besides, from direct observation, it seldom happens for news to stick around for more than a few days, which the ``Hours'' kernel is fit to capture.
The choice of $r$ is based on an arbitrary trade-off between textual and temporal information. We do not want to consider extreme values ($r=0$ or $r > 2$) so that we exploit both pieces of information. Besides, we saw in Table~\ref{tab-interStrength} and Table~\ref{tab-interRange} that only slight variations are observed across the range of $r \in \{ 0.5, 1, 1.5 \}$ considered. Finally, we are interested in seeing how the large \glspl{cluster} spanning over the whole period relate to topic-specific smaller \glspl{cluster} seen in Fig.~\ref{fig-timeline-MPDHPNews}.

\begin{figure}
    \centering
    \includegraphics[width=\textwidth]{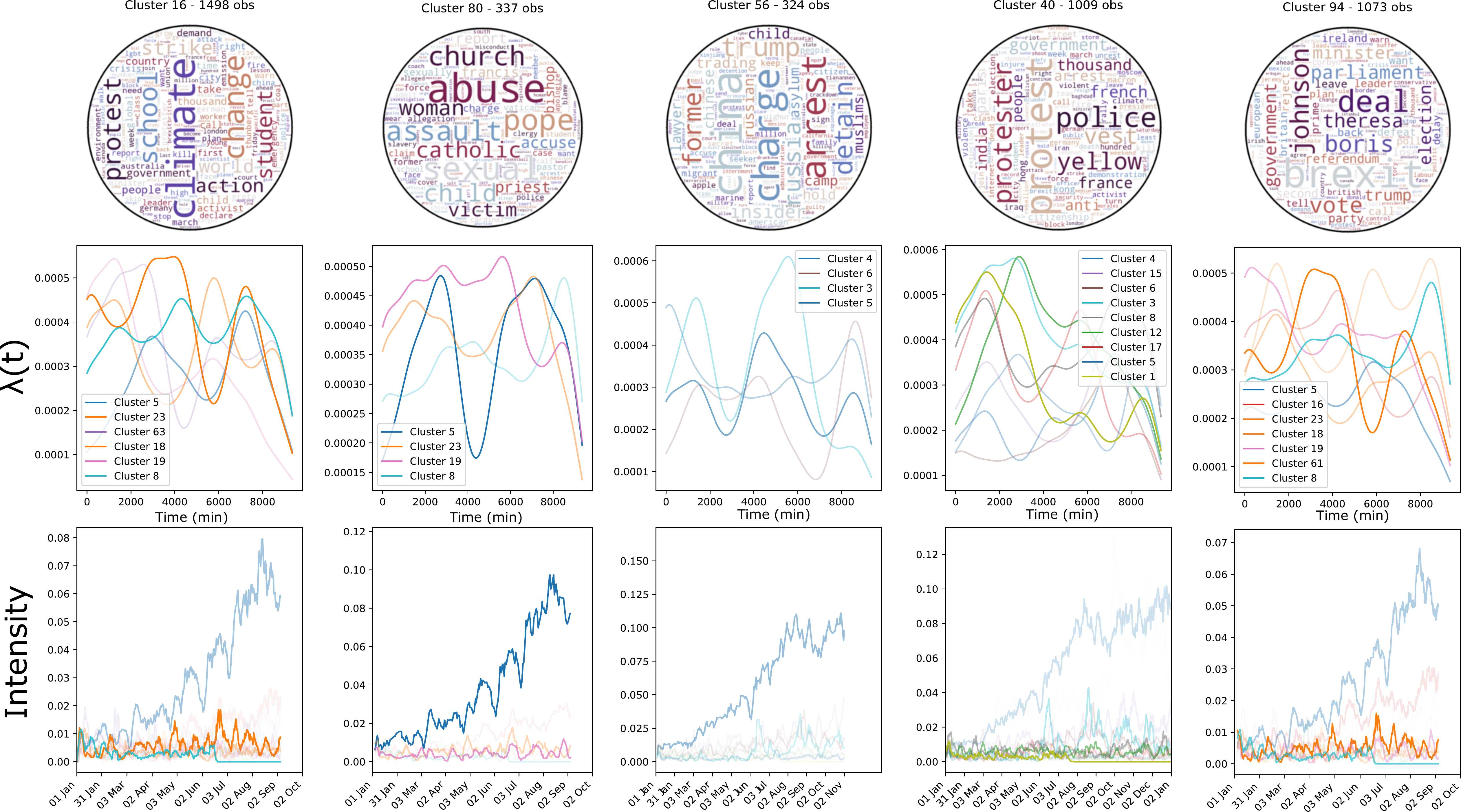}
    \caption[MPDHP x News - Typical output]{\textbf{A typical MPDHP output} -- A set of manually selected \glspl{cluster} along with the vocabulary of their documents (top), their inferred dynamics (middle) and the \glspl{cluster} that influenced them on the real-time axis (bottom).}
    \label{fig-illustrMPDHPClus}
\end{figure}
\paragraph{Representing individual \glspl{cluster}}
In Fig.~\ref{fig-illustrMPDHPClus}, we represent the raw data that we will use in ulterior visualizations for some selected \glspl{cluster}. In the top part, we represent their textual \gls{content} as a wordcloud. In this case, we chose to pick \glspl{cluster} about climate change inaction protests, catholic church child abuse scandals, China's Uighur detention camps, ``Gilets Jaunes'' protests in France, and Brexit. For each \gls{cluster}, we represent the top temporal influence that other \glspl{cluster} may exert on them. Transparency accounts for the effective \gls{interaction} $W$ discussed in the previous sections. The bottom plot represents the influence exerted on these \glspl{cluster} by all other \glspl{cluster} at all times --which does not mean this influence led to a publication.

This way of representing the data fits well in the univariate case, as in \citep{Du2015DHP}, but does not allow to capture the complexity of the inferred mechanisms at stake. In the following sections, we propose alternative visualizations in the form of temporal networks.

\paragraph{Globally}
In Fig.~\ref{fig-globInterGraph-MPDHPNews}, we plot the \glspl{cluster} \gls{interaction} network over the whole period we considered (12 months). This figure uses the same data as Fig.~\ref{fig-timeline-MPDHPNews}. Both the adjacency matrix and the transparency matrix are normalized by their highest value. We see that the strongest \gls{interaction} happens between the blue and the orange \glspl{cluster}. Besides, this \gls{interaction} seems to be stronger at smaller times. The \gls{interaction} between other \glspl{cluster} is essentially directed toward these two \glspl{cluster}. Shortly after a news \gls{cluster} appears, it is likely to find an echo on either r/news or r/worldnews. After existing for a while, subsequent publications are merged into one of these \glspl{cluster} depending on where they got published. This explains why most \glspl{cluster} last so few over time in Fig.~\ref{fig-timeline-MPDHPNews}.

We see that there are very few \glspl{interaction} between the \glspl{cluster} which account for actual news --in opposition to the \glspl{cluster} that account for a subreddit, see Section~\ref{MPDHPxNews-visutopics}.

\begin{figure}
    \centering
    \includegraphics[width=\textwidth]{Figures/Chapter_4/MPDHP-Reddit/GlobalInterGraph.pdf}
    \caption[MPDHP x News - Global \glspl{cluster} temporal \gls{interaction} graph 2019]{\textbf{Global \glspl{cluster} temporal \gls{interaction} from 01/2019 to 12/2019} --- Each node represents one \gls{cluster}, and each edge represents the \gls{interaction} strength at various times -- the darker and the less transparent the stronger the effective \gls{interaction}. The \glspl{cluster} composition is given below the network.}
    \label{fig-globInterGraph-MPDHPNews}
\end{figure}

\begin{figure}
    \centering
    \includegraphics[height=\textheight]{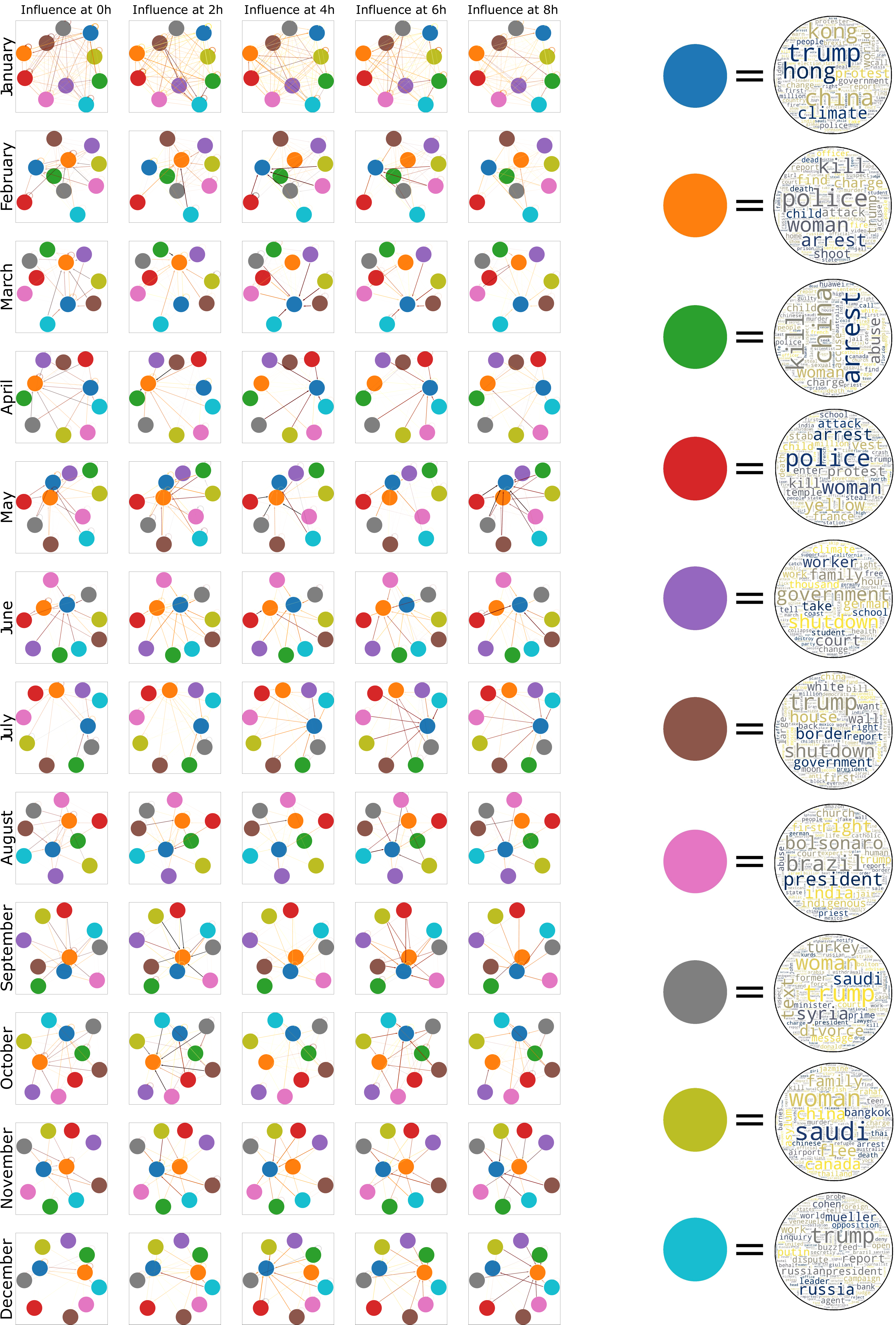}
    \caption[MPDHP x News - Monthly \glspl{cluster} temporal \gls{interaction} graph 2019]{\textbf{Monthly \glspl{cluster} temporal \gls{interaction} from 01/2019 to 12/2019} --- Each line stands for one month, each node represents one \gls{cluster}, and each edge represents the \gls{interaction} strength at various times -- the darker and the less transparent the stronger the effective \gls{interaction}. The \gls{cluster} composition is given below the network.}
    \label{fig-monthInterGraph-MPDHPNews}
\end{figure}
\paragraph{Monthly}
In Fig.~\ref{fig-monthInterGraph-MPDHPNews}, we plot the same \gls{interaction} between \glspl{cluster} as in Fig.~\ref{fig-globInterGraph-MPDHPNews}, but for each month. This figure uses the same data as Fig.~\ref{fig-timeline-MPDHPNews}, except it is now broken down into temporal slices. From this figure, we recover that at the early times of the algorithm, MPDHP has not converged yet. It makes the \gls{interaction} plot messy, with several \glspl{interaction} that are likely to be inaccurate. As time goes by, \glspl{interaction} are better defined --in particular between the blue and orange \glspl{cluster}. Another interesting fact is that most \glspl{interaction} happen in a 1h time frame: the publication of a document on r/news is highly likely to trigger a similar publication on r/worldnews within the hour. Here again, we see that most side \glspl{cluster} do not interact significantly with each other.

\subsection{Conclusion}
\paragraph{A real-world application}
In this section, we conducted extensive experiments on a single real-world large-scale dataset from Reddit. We described how we gather a year worth of news data and pre-processed it. We then argued how the dataset was fit to apply to \acrshort{MPDHP}. We extensively discussed the various hyper-parameters that had to be tuned to run our experiments (in particular $\kappa (t)$, $r$ and $\theta_0$). In the end, we conducted 6 different experiments, one for each combination of parameters. The choice of other hyper-parameters has also been justified. 

\paragraph{\Glspl{interaction} do not appear to play a significant role in this dataset}
We proposed several ways to assess the role of \glspl{interaction} in the dataset. In particular, we introduced the notion of \textit{effective \gls{interaction}} as a way to evaluate how confident we can be in MPDHP's output. On this basis, we analysed the importance of \glspl{interaction} in general, as well as from a temporal perspective. We recovered our conclusions from Chapter~\ref{Chapter-SBMs}: \glspl{interaction} are sparse. We also recovered our conclusions from Chapter~\ref{Chapter-InterRate}: \gls{interaction} strength decays over time. These observations emphasize the adequateness of our approach to model \glspl{interaction}. Building on these considerations and looking at the global effective \gls{interaction} average, we conclude that \glspl{interaction} play a minor role in such a dataset. Overall, they only increase the instantaneous probability for a new observation to appear by 1\%. Even the most extreme values represented in Fig.~\ref{fig-MPDHP-figinterstrength} seem to only increase this probability by 12\% top. 

\paragraph{Perspectives}
However, despite intending our study as exhaustive, there is room for improvement in \gls{interaction} modelling using \acrshort{MPDHP}. In particular, there are two biases that we could not explore yet. Firstly, the parameter $\lambda_0$ has been set according to a heuristic (so that a new \gls{cluster} is opened with fifty percent chances when we are 95\% sure that it does not match the existing one). Its direct inference would robustify the approach, despite seeming complicated. Indeed, $\lambda_0$ does not account for individual events realizations, but for Hawkes processes starts, which is not a trivial difference. Secondly, another possible improvement would be to allow \glspl{cluster} to passively replicate --without the need for an \gls{interaction}. We expect that this would boil down to adding a time-independent kernel entry to $\vec{\kappa}$, in a similar fashion as in Chapter~\ref{Chapter-InterRate}. However, non-technical questions may arise from such modification: when to consider a \gls{cluster} as extinct given a non-fading kernel? How should this kernel relate to the temporal concentration parameter $\lambda_0$?

We believe such improvements would make MPDHP more robust and interpretable, and find applications beyond \gls{interaction} modelling.

\chapter{Conclusion}
\label{Chapter-Conclusion}

\section{Contributions}
\subsection{Overview}
Throughout this manuscript, we developed seven original models that allowed us to tackle various aspects of \glspl{interaction} between \glspl{piece of information} in their \gls{spread}: \bl{\acrshort{IMMSBM}}, \bl{\acrshort{SIMSBM}}, \bl{SDSBM}, \ora{InterRate}, \gr{\acrshort{PDP}}, \rd{\acrshort{PDHP}}, \rd{\acrshort{MPDHP}}. Along the road, this research led us to study seemingly distant fields of machine learning. A first venture in \bl{stochastic block modelling} provided us with insights on the frequency at which \glspl{interaction} happen. An exploration of \ora{convex network inference methods} allowed us to conclude on the time range of \glspl{interaction}. Driven by the need for more global models, a dive into fundamental bricks of machine learning led us to explore and modify canonical \gr{Dirichlet processes}. Finally, answering our problematic required a \rd{junction between Dirichlet process and point processes}. We recall the summary of our contributions in Table~\ref{table-contrib-ccl} --this table is identical to Table~\ref{table-codesandDS} and is simply recalled here.

\begin{table}
    \caption[Conclusion - Contributions]{\textbf{Contributions presented in this manuscript} --- Our contributions are listed below in the order of their appearance in the text.}
    \label{table-contrib-ccl}
	\centering
    \noindent
    
    \makebox[\textwidth]{\resizebox{\textwidth}{!}{
	\begin{tabular}{|c|c|c|c|c|c|c|c|}
	\cline{1-8}
	& \hyperref[SIMSBM]{SIMSBM} & \hyperref[IMMSBM]{IMMSBM}& \hyperref[SDSBM]{SDSBM} & \hyperref[Chapter-InterRate]{InterRate} & \hyperref[PDP]{PDP} & \hyperref[PDHP]{PDHP} & \hyperref[MPDHP]{MPDHP}\\
	& Chap.~\ref{Chapter-SBMs} & Chap.~\ref{Chapter-SBMs} & Chap.~\ref{Chapter-SBMs} & Chap.~\ref{Chapter-InterRate} & Chap.~\ref{Chapter-DHPs} & Chap.~\ref{Chapter-DHPs} & Chap.~\ref{Chapter-DHPs} \\
	\cline{1-8}
	Self-\glspl{interaction}    & x & x & x & x & x & x & x \\
	\cline{1-8}
	Pair-\glspl{interaction}    & x & x & x & x &   & x & x \\
	\cline{1-8}
	N-\glspl{interaction}       & x &  & x &   &   & x & x \\
	\cline{1-8}
	Clustering           & x & x & x &   & x & x & x \\
	\cline{1-8}
	Discrete time    &   &   & x & x &   & x & x \\
	\cline{1-8}
	Continuous time  &   &   &   & x &   & x & x \\
	\cline{1-8}
	Online inference     &   &   &   &   & x & x & x \\
	\cline{1-8}
	\end{tabular}
	}}
	
\end{table}

\subsection{Answers to our problematic}
\subsubsection{Q1: How frequent are \glspl{interaction}? \bl{$\bullet$}}
To answer our first question \textbf{Q1}, we considered Stochastic Block Modelling to be a good approach. 

Indeed, a straightforward way to represent \glspl{interaction} is to embed them as a network. \Glspl{entity} are nodes of this network, and \glspl{interaction} between these \glspl{entity} are link between these nodes. These links can account for distinct types of relations that we can associate with different labels. 
Given the number of \textit{possibly} \gls{interacting} \glspl{entity} can be high in some real-world systems, we need to group them to see whether groups of \glspl{entity} interact in a similar way instead of dealing with each \gls{entity} individually. In practice, typically only a few of these possible \glspl{interaction} actually happen, but we did not know this beforehand. Besides, we considered that \glspl{entity} could interact as pairs, triplets or more, and that they could interact with different elements of their environment (time, user's metadata, etc.). After a careful review of the literature, we noticed that several state-of-the-art models could be expressed as special cases of a more global framework: the Serialized \Gls{interacting} Mixed \gls{membership} Stochastic Block Model (\acrshort{SIMSBM}, Section~\ref{SIMSBM-model}). This generalization allows modelling an arbitrarily large context (number of input pieces of information) as well as an arbitrarily high order of \glspl{interaction}.

Along with this generalization, we introduced a simple procedure to incorporate time modelling into any iteration of SIMSBM. This extension takes the form of a prior on the model's parameters. This approach relies on the single assumption that dynamics are not abrupt. 

We considered a special case of SIMSBM, SIMSBM(2) or \Gls{interacting} Mixed \Gls{membership} SBM (\acrshort{IMMSBM}), and used it to model \glspl{interaction} between \glspl{entity} in several real-world datasets. We investigated the role of \glspl{interaction} between different spreading \glspl{entity} (hashtags, words, memes, etc.) and quantified their importance in several corpora --see Section~\ref{IMMSBM}.

In particular, we focused on the study of a Twitter dataset and investigated the role of \gls{interaction} between Twitter URLs on their spreading probability.

The conclusion of this study answered the first question \textbf{Q1} raised in the introduction: \textbf{\glspl{interaction} are sparse}. On the Twitter dataset, significant \glspl{interaction} took place only between a limited fraction of \gls{cluster} pairs, and between an even smaller fraction of \gls{entity} pairs.

\subsubsection{Q2: How persistent are \glspl{interaction}? \ora{$\bullet$}}
Despite the generality of the proposed SIMSBM, this model was not fit for modelling the time range over which \glspl{interaction} may take place. However, taking the \gls{interaction} range into account is fundamental following a simple consideration: that the influence of \glspl{entity} on someone cannot last indefinitely as they get gradually forgotten by the users.

We introduced a convex multi-kernel model designed for this task: InterRate --see Chapter~\ref{Chapter-InterRate}. This model was able to infer the evolution of pair \glspl{interaction}' intensity over time, for any pair of \glspl{entity}, called the \gls{interaction profile}. It allowed us to study how users exhibit different behaviours according to \textit{combinations} of exposures they have been exposed to in continuous time. We showed that the joint effect of two exposures on a user is more than the disjoint sum of their individual effect, which means there is an \gls{interaction}.

The answer of this study to the second question (\textbf{Q2}) raised in the introduction is that \textbf{\glspl{interaction} are brief}. A study of the processes at stake in the Twitter dataset revealed that \gls{interaction} between two \glspl{entity} is significant only when those are close to each other in time. Typically, the intensity of an \gls{interaction} decreases exponentially with the time separating the \gls{interacting} \glspl{entity}.

\subsubsection{Q3: Can we efficiently model \glspl{interaction}? \gr{$\bullet$} \rd{$\bullet$}}
Our third question raised the problem of efficiently model \glspl{interaction}. Designing a dedicated study to get insights into one aspect or the other of the \gls{interacting} processes is one thing. Designing a scalable and useful method that answers the possible challenges raised by the task at hand is quite another.

The conclusions drawn from Chapter~\ref{Chapter-SBMs} and Chapter~\ref{Chapter-InterRate} are that \glspl{interaction} are sparse (we need \glspl{cluster} to model them) and that \glspl{interaction} are brief (we need to consider time). Besides, we discussed the fact that \gls{interacting} \glspl{entity} may have a short lifespan, and that their semantic \gls{content} must be taken into account. In Chapter~\ref{Chapter-DHPs}, we introduced the steps paving our way to such a model. 

Our approach was based on in-depth modifications of an existing Bayesian prior: the Dirichlet-Hawkes Process. This model was promising regarding our problem: it allowed us to create \glspl{cluster} (\bl{Q1}) that can have a lasting influence in time on ulterior observations (\ora{Q2}). Moreover, it could perform this task in linear time with the size of the dataset. However, it also suffered from several limitations. In particular, it was not fit to retrieve meaningful \glspl{cluster} when textual information conveyed little information or when temporal dynamics were hard to unveil. Furthermore, it assumed that the textual \gls{content} of a document is perfectly correlated to its temporal dynamics, which cannot be rigorously true in real-world processes.

As an indirect way to overcome these limits, we explored the Powered Dirichlet Process as an alternative to the Dirichlet Process. This new prior alleviates the ``\gls{rich-get-richer}'' property of vanilla Dirichlet processes.

We then incorporated this Powered Dirichlet Process into the standard Dirichlet-Hawkes process to create the Powered Dirichlet-Hawkes Process. This formulation improved the results when temporal information or textual \gls{content} were weakly informative and alleviated the hypothesis that textual \gls{content} and temporal dynamics were always perfectly correlated. 
Thus, our approach eventually allowed us to correctly model self-\glspl{interaction} within a \gls{cluster}.

As a final improvement on the proposed approach, we extended the Powered Dirichlet-Hawkes Process to the multivariate case. This way, we can efficiently model \glspl{interaction} between all pairs of \glspl{cluster}, and by extension for all pairs of \glspl{entity} they comprise. The model can also run in constant time, and typically takes 300ms per document on large-scale real-world datasets. 

We therefore provided a way to efficiently model \glspl{interaction}, positively answering our third introductory question (\textbf{Q3}). Efficiency can be understood both as technical efficiency and relevance efficiency. This former because our approach scales well on large real-world datasets and takes a minimal time to process a consequent amount of data. The latter because the final model can answer all the crucial challenges raised by \gls{interaction} modelling, as it...
\begin{itemize}    
    \item considers \glspl{entity}' \gls{content}. An \gls{entity} is no more described as a unique identifier, but instead by its semantic \gls{content}. Two \glspl{entity} that convey the same information are now considered as such and clustered together as a more global \gls{entity} --a topic.
    \item models sparse \glspl{interaction}. \Glspl{entity} are now clustered together into temporal \glspl{cluster}. It makes it feasible to spot \gls{interaction} terms between sets of \glspl{entity}. The lifespan of \glspl{entity} is no more a problem since \glspl{cluster} can comprise \glspl{entity} spanning over extended periods, which also increases the data available for each \gls{cluster}.
    \item models dynamic \glspl{interaction}. Each \gls{cluster} is associated with its own intensity function, which determines its effect on ulterior observations. Eventually, \glspl{entity}’ influence fades away as time goes by.
    \item models multivariate \glspl{interaction}. Every \gls{cluster} interacts as a pair with other \glspl{cluster}, enabling the study of \glspl{interaction} between the inferred topics.
\end{itemize}

\subsubsection{Q4: Do \glspl{interaction} play a significant role in spreading processes? \rd{$\bullet$}}
We conducted extensive experiments on a real-world large-scale dataset made of one year of news headlines gathered from Reddit. This dataset seems fit for \gls{interaction} modelling: \glspl{content} are user-generated, dynamic, and mutating. 
We proposed several ways to assess the role of \glspl{interaction} in this specific dataset. The notion of effective \gls{interaction} has been chosen as a privileged way to evaluate how much \gls{interaction} has a role in the assumed data generation process. Firstly, we recovered our conclusions from Chapter~\ref{Chapter-SBMs}: \glspl{interaction} are sparse. Secondly, we recovered our conclusions from Chapter~\ref{Chapter-InterRate}: \gls{interaction} strength decays over time. By looking at the average role of \glspl{interaction} in this dataset, we saw that \glspl{interaction} play a minor role in such a dataset. On average, \glspl{interaction} increase the instantaneous probability for a new observation to appear by 1\% over its assumed random appearance probability. Without looking at the aggregated statistics, extreme values typically increase this probability by $\sim$10\%, which does not hint toward strong \gls{interaction} effects.

\subsection{General uses for our models}
Each of the models introduced in this manuscript has been presented as a novel way to tackle the \gls{interaction} modelling problem. However, efforts have systematically been made to illustrate other, more general use cases. As a final note, we argue and detail some use cases of our approaches outside of \gls{interaction} modelling.

\subsubsection{Powered Dirichlet Processes}
The Dirichlet process (\acrshort{DP}) is canonical in Bayesian nonparametric modelling. Enumerating the models based on DP is not feasible; among the most popular ones, we find the Latent Dirichlet Allocation model, the Infinite Gaussian Mixture Model, and the non-parametric MMSBMs and the Dirichlet-Hawkes process, all of which have seen numerous extensions depending on the use context. The popularity of Dirichlet processes is explained by the possibility to automatically infer the number of latent groups along with their composition, and the possibility to process data sequentially, in the order of arrival.

A current way to extend models based on DP is to make them hierarchical. Using the hierarchical DP, one can consider a mixture of partitions, by drawing the base distribution of a DP from another DP. Other extensions such as the Pitman-Yor process and the nested DP have been discussed in Section~\ref{PDP}. By proposing the Powered DP (\acrshort{PDP}), we paved the way for a whole new class of extensions. We illustrated the impact of redefining DP by adding a nonlinear dependence term. Possible applications for this advance would already comprise all existing models based on DP and models based on its extension, as the Powered DP is not exclusive. In the case of hierarchical DP for instance, the base distribution from which is drawn a Powered DP would also be drawn from a Powered DP, making the Powered hierarchical DP. Similar extensions and combinations are possible for the nested DP, the hierarchical nested DP, etc.

\subsubsection{Stochastic Block Models}
When working on stochastic block models, our main goal was to design a framework (\acrshort{SIMSBM}, Section~\ref{SIMSBM-model}) that allows us to model almost any type of categorical data. It has been designed so that it can take as many pieces of information as needed as an input context, as many symmetric \glspl{interaction} between these \glspl{entity} as needed, and consider the temporal dimension to model dynamic \gls{cluster} \glspl{membership} (SDSBM extension, Section~\ref{SDSBM}). The flexibility provided by this work fits \gls{interaction} modelling purposes, but also many other applications. The IMMSBM (Section~\ref{IMMSBM-model}) has been accepted as a contribution to recommender systems, which is not explicitly about \glspl{interaction} modelling \citep{Poux2021IMMSBM}. 

Already discussed use cases in recommender systems include product recommendations on online retail websites, movies recommendation on streaming platforms, and songs on music streaming platforms. We also showed these models could be used to predict players' next move on real-world datasets and replicated studies of \citep{Poux2021MMSBMMrBanks} that identified elementary players' behaviours. We argued for a possible application in automated medical diagnosis, where the combination of a few symptoms leads to high-quality diagnoses. We showed it could help predict the next tweet retweeted by Twitter users.

Further uses are encouraged in the field of humanities, where data often lies unexploited due to the lack of explainable, readily usable models. We illustrated a possible use case using a Latin epigraphy database. Projects using SIMSBM to infer the gender and age of antic remains given the tools and objects found in their tombs are currently ongoing. We strongly argued for such use in the SDSBM section. We paid particular attention to humanities recurrent challenges on data availability; our approach is shown to work even with scarce data.

The point is, that the field of applications for (dynamical) SIMSBM ranges way beyond simply \gls{interaction} modelling. Dedicated studies using these new models may provide interesting results in social sciences, medicine and online recommendation.

\subsubsection{Dirichlet-Point processes}
In Chapter~\ref{Chapter-DHPs} and in this Conclusion, we extend the idea of Dirichlet-Hawkes processes to a broader class of models. The challenges inherent to Dirichlet-Hawkes processes have been answered by developing the Multivariate Powered Dirichlet-Hawkes process. This model can specifically be used to create a time-lined summary of event streams. At a time when internet \gls{content} appears at an unprecedented pace, dedicated tools for big data summarizing will become increasingly necessary in many applications. Understanding ongoing trends on social platforms would help identify topics of interest, or simply provide these platform’s users with an overview of what is going on. Several digital newspapers manually compile trends of interest that appeared over a day, a week; such tasks could be helped if not automated by such tools.

Another possible application specific to the \acrshort{MPDHP} is the understanding of publication mechanisms. Significant research is being done in identifying and countering fake-news \gls{diffusion} on social media. The tool we developed allows us to get insights into the way topics relate to each other. Dedicated studies on the interplay between fake-news topics and disclaimers could help develop countering strategies. Going one step further, we showed Dirichlet-Point processes could be used to make unsupervised modelling of topic-dependent spreading subnetworks. In the same line as before, automatically identifying fake news spreading subnetworks could help to surgically burst opinion bubbles by encouraging new links to other communities. As a more direct approach, it could help target specific nodes with disclaimers.

\section{Perspectives}
\subsection{Towards more general block-modelling approaches}
In Chapter~\ref{Chapter-SBMs}, we developed a global framework that allows us to model \glspl{interaction} of any given order using any size of context. We then proposed an extension to our method that allows us to consider time. We believe that this modelling flexibility can serve as a base to develop improved, even more flexible models to tackle a range of problems. We consider two possible extensions below.

\subsubsection{Considering time as a continuous variable}
In Section~\ref{SDSBM}, we proposed a way to model time as an additional constraint to the stochastic block modelling approach. There, time is discrete and one model is inferred for each time slice, conditional on models from other time slices. However, we discussed in subsequent sections how slicing time in discrete intervals can induce biases in the modelling. A possible lead to alleviate this problem would be to merge our work on \acrshort{SIMSBM} (Section~\ref{SIMSBM}) with the Dirichlet-Point processes discussed in Chapter~\ref{Chapter-DHPs}. In particular, SIMSBM uses a Dirichlet prior as \textit{a priori} on its \gls{membership} vectors. It happens that some works explored this direction to derive a non-parametric version of \acrshort{MMSBM}, by expressing the Dirichlet prior as a Chinese Restaurant process \citep{Fan2015DynInfMMSBM} --as what we did in Chapter~\ref{Chapter-DHPs}. Once the MMSBM is expressed as a sequential Dirichlet process, it might be possible to include the advances in Dirichlet-Point processes as an explicit way to model time as a continuous variable. In general, this approach could make SIMSBM-based approaches fit to consider continuous data in general, provided it obeys an underlying point process.

\subsubsection{Considering nodes' metadata}
Up to now, we modelled \glspl{interaction} at the level of the \gls{interacting} \glspl{entity} themselves. Using our Stochastic Block Modelling framework, it is now possible to account for the \glspl{entity}’ \gls{content} and \gls{interaction} time. However, this is not the most elegant way to consider the context in which a \gls{piece of information} interacts with others. Recent works based on similar models proposed to model nodes' metadata as an additional layer, whose links can be activated or deactivated \citep{Oscar2022NodeMetadata}. In this case, metadata does not have to be of a given type, not it is mandatory for it to be useful. The inclusion of these advances into the proposed framework would make a significant step towards an \acrshort{SBM} that could be applied to any problem at hand with minimal model designing effort.

\subsection{Improving the Multivariate Powered DHP}
\subsubsection{Accounting for exogenous data generation}
A consideration that we factored out from our analysis is the role of exogenous data generation. It has been underlined on some occasions that documents can get published according to dynamics external to the dataset. Its apparition could have been conditioned by other media sources (TV, radio, ...), social links not accounted for on Twitter or simply a demonstration of free will, and therefore should not be included in our dataset generative assumption. In \citep{Myers2012ExternalInfluence}, the authors model the rate of arrival of documents from such exogenous influence using temporal point processes. They conclude that the role of external influence is not trivial. Supporting this claim, \citep{He2015HawkesTopic} introduced a term that accounts for exogenous events in Hawkes-based modelling. The fact that both works make extensive use of temporal point processes in their modelling suggests that the inclusion of their findings into \acrshort{MPDHP} is doable and would certainly yield insightful results. 

\subsubsection{Going further than Dirichlet-Hawkes processes}
In Chapter~\ref{Chapter-DHPs}, we studied in-depth the Dirichlet-Hawkes process and various models that can be built on them. However, the full picture is broader than simply the association of Dirichlet processes and Hawkes processes. Instead, the method described throughout Chapter~\ref{Chapter-DHPs} can be applied to merge any Dirichlet process (hierarchical, nested, or powered) or variants (Indian Buffet Process, Pitman-Yor process, etc.) with any point process (Hawkes, Cox, Poisson, Determinantal, Geometric, etc.). We believe that the resulting Dirichlet-Point processes are powerful tools that can adapt to many modelling problems. This field of such combinations has been little explored up to now and may offer interesting insights for further studies.

\subsection{Considering the network structure}
In this manuscript, we considered both the \gls{content} and the dynamics of spreading \glspl{entity} to unveil \glspl{interaction}. However, our models are not fit to consider the structure of the network \glspl{entity} \gls{spread} on. An interesting lead to explore is how \glspl{interaction} between pieces of information differ at the user level, depending on their position in the spreading network.

Both as a final perspective and as an illustration of a broader use for Dirichlet-Point processes, we propose to develop a model that considers the network structure based on Dirichlet-Point processes. We sketch a model that can jointly infer dynamic (Chapter~\ref{Chapter-InterRate}) \glspl{cluster} (Chapter~\ref{Chapter-SBMs}) of textual documents (Chapter~\ref{Chapter-DHPs}) spreading online \textit{and} the subnetworks they \gls{spread} along. We did not find any previous attempt to \textit{jointly} infer these parameters, by using an iteration of the Dirichlet-Point processes discussed in this conclusion.

\subsubsection{Possible lead: Dirichlet-Survival process}
We can bridge the gap between network inference and dynamic clustering models by defining the \textbf{Dirichlet-Survival process}.

\paragraph{Network inference as a survival process}
In \citep{GomezRodriguez2013SurvivalAnalysis}, the authors demonstrate that most of the then-existing underlying network inference models can be derived as a special case of a global framework. Without entering the details here, it is shown that using a specific formulation of Survival processes allows retrieving network inference models such as NetRate \citep{GomezRodriguez2011NetRate}, KernelCascade \citep{Du2012KernelCascade}, MoNet \citep{Wang2012Monet}, InfoPath \citep{GomezRodriguez2013InfoPath}, etc.

Each of these models can be characterized by a single intensity function, similar to the Hawkes intensity discussed in Chapter~\ref{Chapter-DHPs}, named the hazard rate. For the interested reader, we detail how to get to this function for the NetRate model \citep{GomezRodriguez2011NetRate} in Appendix~\ref{Appendix5}.

We write this intensity function $\lambda (t_i^c \vert t_j^c, \alpha_{j,i})$. It represents the instantaneous probability for a node $i$ to be \gls{infected} at time $t_i^c$ by an infection $c$ because of the node $j$ that got \gls{infected} by $c$ at time $t_j^c$. The strength of the link between $i$ and $j$ is encoded in $\alpha_{j,i}$. As this intensity function results from a survival point-process similar to the Hawkes process, we will see that we can substitute it to the counts of a Dirichlet process to create a Dirichlet-Survival process.

\paragraph{Dirichlet-Survival prior}
As in \citep{Du2015DHP,Valera2017HDHP,Tan2018IBHP} and what we did throughout Chapter~\ref{Chapter-DHPs}, we will create a new Dirichlet-Point process by merging the Dirichlet Process in its Chinese Restaurant iterative metaphor, to a Point process such as the one characterizing NetRate. We write the intensity of such process $\lambda (t_i^c \vert t_j^c, \alpha_{j,i})$.

Doing so essentially breaks down the temporal dynamics at the level of the network's nodes' in-going edges. Each edge is now associated with its own point process. This makes a yet unexplored bridge between Dirichlet processes and Survival analysis. 

Instead of considering a single network on which data spreads, we assume there is any number of such subnetworks, each associated with a \gls{cluster}. Instead of associating one point process to one \gls{cluster} as in Chapter~\ref{Chapter-DHPs}, we associate a collection of point processes to one \gls{cluster} --one per edge in the network. We write $\lambda (t_i^c \vert t_j^c, \alpha_{j,i}^{(k)})$ the intensity associated with the edge between $i$ and $j$ given their infection times by $c$ are separated of a time $\Delta t = t_j-t_i$ \gls{cluster} $k$.

An infection event from \gls{cascade} $c$ is now assumed to have a given probability of being trigger by any of the $k$ existing \glspl{cluster} (or subnetworks) on which information \gls{spread}. By substituting the counts in the Dirichlet process with the total intensity on node $i$ due to all its neighbours $\mathcal{H}_{i,c}$ that got \gls{infected} earlier by $c$ in subnetwork $k$, noted $\Lambda_i^{c,(k)} = \sum_{j \in \mathcal{H}_{i,c}} \lambda (t_i^c \vert t_j^c, \alpha_{j,i}^{(k)})$. Let $\lambda_0$ be the probability that the observation did not get triggered by any existing subnetwork --the concentration parameter.
\begin{equation}
    \label{eq-DSP}
        P(s_n = k \vert \mathcal{H}_{i,c}) = \begin{cases} 
        \frac{\Lambda_i^{c,(k)}}{\lambda_0^{(K+1)} + \Lambda_i^{c,(k)}} \text{ if k = 1, ..., K}\\
        \frac{\lambda_0^{(K+1)}}{\lambda_0^{(K+1)} + \Lambda_i^{c,(k)}} \text{ if k = K+1}
        \end{cases}
\end{equation}
\myequations{\ \ \ \ Conclusion - Dirichlet-Survival Process}

\begin{table}[h]
    \caption[DSP - Numerical results of Dirichlet-Survival process on synthetic data]{Numerical results of Dirichlet-Survival process, TopicCascade, Dirichlet-Hawkes process and NetRate models.
    The AUC, F1 score and MAE are computed considering every top cluster's edges at once so there is no error to report. 
    \label{Houston-tabMetrics}}
    \centering
    \setlength{\lgCase}{2.6cm}
    \noindent\makebox[\textwidth]{
    \begin{tabular}{|p{0.08\lgCase}|p{0.5\lgCase}|p{0.8\lgCase}|p{0.8\lgCase}|p{0.8\lgCase}|p{0.8\lgCase}|p{0\lgCase}}
    \cline{3-6}
      \multicolumn{0}{c}{\rotatebox[origin=c]{90}{}} &  & \centering Dir-Surv & \centering TC & \centering DHP & \centering NetRate & \\

     \cline{1-6}
        \centering\multirow{5}{*}{\centering\rotatebox[origin=c]{90}{ER}} & \centering NMI & \centering \textbf{0.787} & \centering 0.711 & \centering 0.638 & \centering - & \\
        
         & \centering ARI & \centering \textbf{0.631} & \centering 0.488 & \centering 0.411 & \centering - & \\
         
     \cdashline{2-6}
        & \centering AUC & \centering \textbf{0.849} & \centering 0.800 & \centering - & \centering 0.659 & \\
        
         & \centering F1 & \centering \textbf{0.263} & \centering 0.176 & \centering - & \centering 0.005 & \\
         
         & \centering MAE & \centering \textbf{0.229} & \centering 0.278 & \centering - & \centering 0.481 & \\

     \cline{1-6}
    \end{tabular}
    }
\end{table}

\paragraph{Some preliminary experimental results}
Similarly to what we did in Chapter~\ref{Chapter-DHPs}, we coupled the Dirichlet-Survival prior to a Dirichlet-Multinomial language model. 
In this section, we sketch some experimental results as a way to support the relevance of further studies using Dirichlet-Point processes, especially for modelling \glspl{interaction} in information \gls{spread}.

We run the Dirichlet-Survival process on synthetic data first. We generate an Erd\"os-Renye network (\textbf{ER}) \citep{Erdos1960ERgraph}, on which we simulate information \gls{spread} using a Dirichlet-Survival process. Using 500 nodes, we create 5 subnetworks of size of approximately 250 nodes and 700 edges, one per different topic.

We present the results obtained using Dirichlet-Survival processes in Table~\ref{Houston-tabMetrics}. We compare to several models: TC \citep{Du2013TopicCascade}, DHP \citep{Du2015DHP} and NetRate \citep{GomezRodriguez2011NetRate}. The NetRate model infers edges without taking \glspl{cluster} or textual \gls{content} into account. The DHP considers the dynamics and the textual \gls{content}, but not the network's structure. The TC model first infers textual \glspl{cluster} and then infers one subnetwork for each of them. Overall, we see that accounting for all three data types (\gls{content}, time, and structure) increases the model’s performance.

We finally illustrate a possible use case on real-world data in Fig.~\ref{fig-outputrw}. We used data from the Memetracker dataset \citep{Leskovec2009Memetracker}. %

\begin{figure*}
    \centering
    \includegraphics[width=\textwidth]{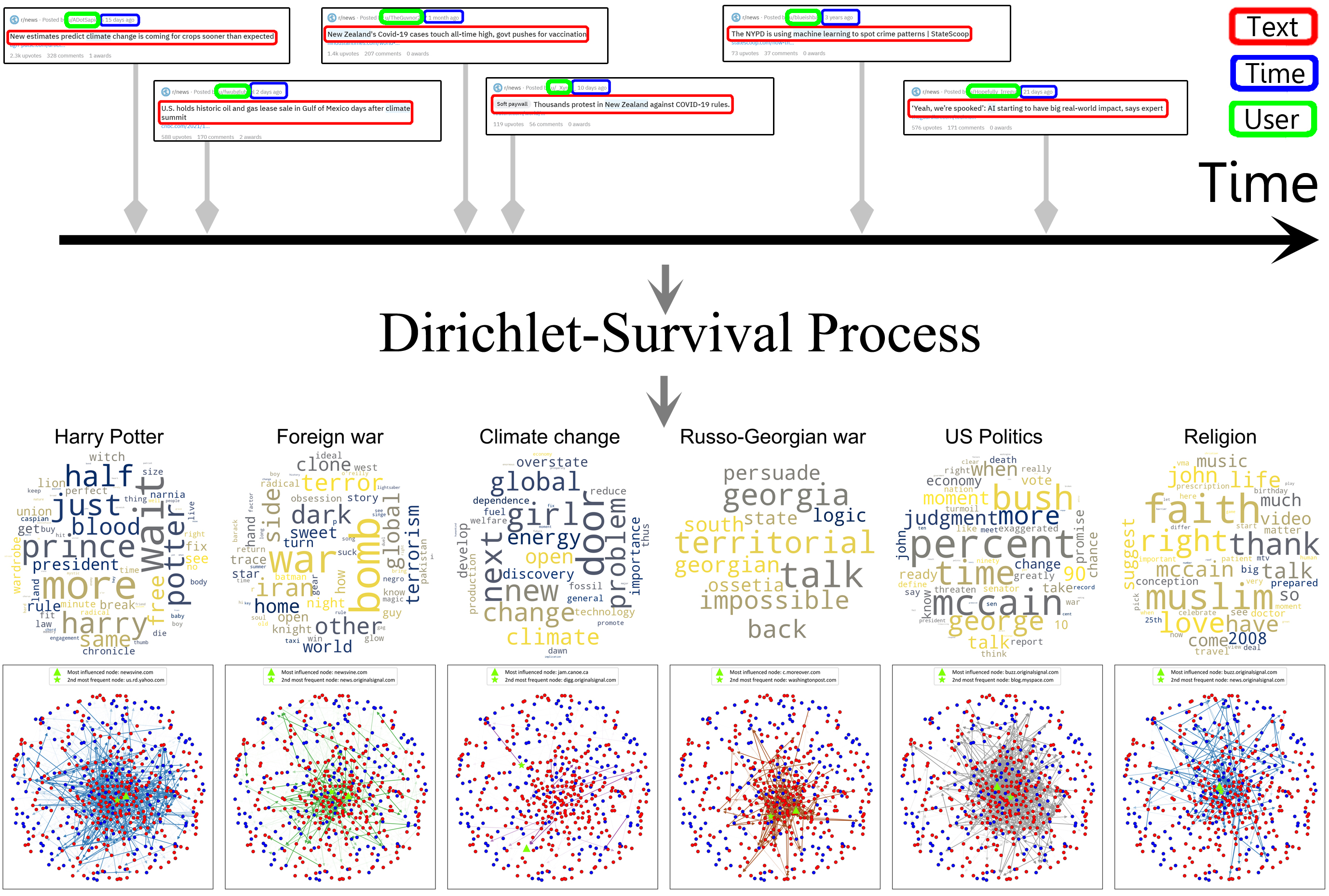}
    \caption[DSP - Dirichlet-survival Process on real-world data]{Dirichlet-survival process applied to real data. (Top row) The most frequent words within the inferred \gls{cluster}. (Bottom row) The inferred subnetworks associated with each \gls{cluster}. Mass media nodes are represented in red, and blogs nodes are represented in blue. The most influenced node is the one having the highest total in-going transmission rate. The nodes’ positions are identical for every network to ease comparison.}
    \label{fig-outputrw}
\end{figure*}

\subsubsection{Perspectives on \gls{interaction} modelling}
We briefly showed a possible use case for Dirichlet-Point processes in information \gls{spread} modelling by defining the Dirichlet-Survival process.
In this case, \glspl{interaction} are not modelled since \glspl{cascade} \gls{spread} independently from each other. 

However, it paves the way for defining even more complex Dirichlet-Point processes that consider time, \gls{content} and structure of information \gls{spread} in \gls{interaction} modelling. In our case, substituting the non-\gls{interacting} \glspl{cascade} model with a more elaborated underlying network inference model might allow us to uncover \gls{interaction} mechanisms. In particular, recent years have seen some works tackling the network inference problem by assuming a Hawkes process on each edge of a network. Considering a multivariate Hawkes process instead, similarly to what we did Section~\ref{MPDHP} would also be an interesting lead.

It should be noted, however, that the Dirichlet-Survival process introduced in this chapter may not be the best approach to model such phenomena. In particular, comparing its performances to \citep{He2015HawkesTopic,Barbieri2017SurvivalFactorization} is needed to get solid results on this specific application. However, the mere fact that we can almost instantly elaborate new models to tackle new problems argues in favour of the use of Dirichlet-Point processes in a broader context.

\section{Final words}
The work presented in this manuscript is the \gls{outcome} of three years of questioning, exploring, and discovering various aspects of \glspl{interaction} at stake in information \gls{spread}. As it seems to be the norm in research, this initial question served as a guiding thread. A thread that sewed this manuscript through large and disconnected areas of the machine learning canvas: stochastic block models, dynamic networks inference, Dirichlet processes, temporal point processes. Here, our developments over those pieces are used to answer our problematic. However, we essentially focused on improving the backbone of these areas, whose specific application to \gls{interaction} modelling yields interesting insights. From a broader perspective, our work is also intended as a contribution to machine learning in general; Formulating alternative Dirichlet Processes, granting flexibility and dynamism to block models, merging Dirichlet and Point processes, have implications that range beyond solely \glspl{interaction} modelling. Our efforts resulted in explainable, scalable and flexible methods that can tackle a wide range of problems. It is our sincere hope that the advances presented in this manuscript will be of greater help for researchers of all horizons, either to improve over them or to use them as tools for answering concrete real-world questions.

\printbibliography[heading=bibintoc]

\appendix %

\chapter{Appendix -- Stochastic Block Models}
\label{Appendix2}

\section{SIMSBM - Additional experimental results}
\label{SI-SIMSBM-Additional-Results}
\begin{table}[h]
	\centering
	\caption[SIMSBM - Experimental results on real-world datasets]{Replication results on two datasets used in \citep{Antonia2016AccurateAndScalableRS} and \citep{Poux2021MMSBMMrBanks}, referenced in the main text. The standard error on the last digits over all 100 runs is indicated in standard notation -- $0.123(12) \Leftrightarrow 0.123 \pm 0.012$. Overall, we retrieve the same results as those presented in \citep{Antonia2016AccurateAndScalableRS} and \citep{Poux2021MMSBMMrBanks}. The models presented in this chapter are \underline{underlined}.}
	\noindent\makebox[\textwidth]{\resizebox{\textwidth}{!}{
	\begin{tabular}{|l|l|l|S|S|S|S|S|S|S}

		\cline{1-9}
		& & & \text{F1} & \text{P@1} & \text{AUCROC} & \text{AUCPR} & \text{RAP} & \text{NCE} \\ 

		\cline{1-9}
		\multirow{5}{*}{\rotatebox[origin=c]{90}{\footnotesize \text{\textbf{Imdb}}}}

		& \multirow{5}{*}{\rotatebox[origin=c]{90}{\footnotesize \text{\textbf{User, Movie}}}}
		& \underline{SIMSBM(1,1)} & \maxf{ \num{ 0.3995 +- 0.0002 } } & \maxf{ \num{ 0.3558 +- 0.0003 } } & \maxf{ \num{ 0.7665 +- 0.0001 } } & \maxf{ \num{ 0.3406 +- 0.0003 } } & \maxf{ \num{ 0.5805 +- 0.0002 } } & \maxf{ \num{ 0.1593 +- 0.0001 } } \\ 
		& & TF &  \num{ 0.2570 } &  \num{ 0.2348 } &  \num{ 0.5031 } &  \num{ 0.1541 } &  \num{ 0.4627 } &  \num{ 0.2573 } \\ 
		& & KNN &  \num{ 0.2668 } &  \num{ 0.2002 } &  \num{ 0.5558 } &  \num{ 0.1735 } &  \num{ 0.3308 } &  \num{ 0.4834 } \\ 
		& & NB &  \num{ 0.2585 } &  \num{ 0.2382 } &  \num{ 0.5377 } &  \num{ 0.1660 } &  \num{ 0.4664 } &  \num{ 0.2536 } \\ 
		& & BL &  \num{ 0.2570 } &  \num{ 0.2349 } &  \num{ 0.5000 } &  \num{ 0.1525 } &  \num{ 0.4647 } &  \num{ 0.2557 } \\ 

		\cline{1-9}
		
		\multirow{5}{*}{\rotatebox[origin=c]{90}{\footnotesize \text{\textbf{MrBanks}}}}

		& \multirow{5}{*}{\rotatebox[origin=c]{90}{\footnotesize \text{\textbf{Ply, Full sit}}}}
		& \underline{SIMSBM(1,1)} & \maxf{ \num{ 0.7126 +- 0.0002 } } & \maxf{ \num{ 0.6688 +- 0.0004 } } & \maxf{ \num{ 0.7126 +- 0.0003 } } & \maxf{ \num{ 0.7180 +- 0.0004 } } & \maxf{ \num{ 0.8344 +- 0.0002 } } & \maxf{ \num{ 0.1656 +- 0.0002 } } \\ 
		& & TF &  \num{ 0.6795 } &  \num{ 0.6037 } &  \num{ 0.5176 } &  \num{ 0.5363 } &  \num{ 0.8019 } &  \num{ 0.1981 } \\ 
		& & KNN &  \num{ 0.6940 } &  \num{ 0.6433 } &  \num{ 0.6668 } &  \num{ 0.6430 } &  \num{ 0.8217 } &  \num{ 0.1783 } \\ 
		& & NB &  \num{ 0.6795 } &  \num{ 0.6037 } &  \num{ 0.5907 } &  \num{ 0.5822 } &  \num{ 0.8019 } &  \num{ 0.1981 } \\ 
		& & BL &  \num{ 0.6795 } &  \num{ 0.6037 } &  \num{ 0.5000 } &  \num{ 0.5215 } &  \num{ 0.8019 } &  \num{ 0.1981 } \\ 

		\cline{1-9}
	\end{tabular}
	}}
\end{table}

\section{IMMSBM - Datasets}
\label{SI-IMMSBM-datasets}
\subsection{Medical records}
The Pubmed dataset collect has been inspired by \citep{Zhou2014DiseaseSymptomNet}. Every article on PubMed is manually annotated by experts with a list of keywords describing the main topics of the publication. We downloaded a list of 322 symptoms and 4,442 diseases provided by \citep{Zhou2014DiseaseSymptomNet}. Then, we used the PubMed API to query each one of the symptom/disease keywords aforementioned. For each result, we got a list of every publication in which the keyword is among the main topics. Then, we build the dataset by considering every publication in which there is at least one symptom and one disease. Finally, we create the triplets (symptom1, symptom2, disease) by looking at all the pairs of symptoms in an article and linking each one of them to all the diseases observed in the same article. In the end, we are left with a total of 52,833,690 observed triplets, distributed over 15,809,271 PubMed publications.

\subsection{Spotify}
The Spotify dataset has been collected using the Spotify API. We randomly sampled 2,000 playlists using the keywords "english" and "rock", which corresponds to a total of 135,100 songs. Then, for each playlist, we used a running window of 4 songs to build the dataset. The artist of the song immediately after the running window is the output we aim at predicting, x, and the artists of the 4 songs within the running window are the interacting inputs. Once again, we consider all the possible pairs of artists in the running window and associate them to the output artist x. Note that we only considered the artists that appear more than 50 times in the whole dataset, for the sake of statistical relevance. The resulting dataset is consists of 1,236,965 triplets for 2,028 artists.

\subsection{Twitter}
We gathered the Twitter dataset used in \citep{Hodas2014DataSetTwitter}. It consists in a collection of all the tweets containing URLs posted during the month of October 2010. A first operation consisted in cleaning the dataset of URLs that are considered as aggressive advertising. To do so, we considered only the URLs whose retweets built a chain of length at least 50; this choice comes from the idea that commercial spams are not likely to be retweeted by actual users and therefore do not create chains. Secondly, we considered only the users who have not tweeted a given URL more than 5 times, this behavior being an activity typical of spamming bots. Doing so, we are left with tweets that are mostly coming from the non-commercial activity of human users. Then, we follow a dataset building process similar to \citep{Myers2012CoC}. For each user, we slice her feed + tweets temporal sequence in intervals separated by the tweets of the user. Every time a user tweets something, the interval ends. An interval therefore consists of the tweet of the user and all the tweets she has been exposed to right before tweeting. Following the suggestion of \citep{Myers2012CoC}, we only consider the 3 last tweets the user has been exposed to before retweeting one of them. Each one of these intervals form an entry of our message+answer dataset (3 last entries in the feed + next tweets). In the end, we are left with 284,837 intervals containing a total of 2,110 different tweeted URLs. Our dataset then consists of 1,181,543 triplets.

\subsection{Reddit}
Finally, we downloaded the May 2019 Reddit dataset from the data repository pushhift.io, which stores regular saves of the comments posted on the website. We chose to consider only the comments made in the subreddit r/news. Reddit's comments system work as a directed tree network, where each answer to a given comment initiates a new branch. We considered pairs of messages such as one (the answer) has for direct parent the other one (the comment). We then extracted all the named entities in both of them using the Spacy Python library. For each pair of named entities in the comment, we associated every name entity in the answer. We consider here only the named entities that appear at least 200 times in the subreddit, for the same reason as for the Spotify dataset. The final dataset results in 35,364,725 triplets for a total of 1,656 named entities.

\section{IMMSBM - Upper limit to predictions}
\label{SI-IMMSBM-upperLim}
We derive an analytical expression for the upper-limit to our model for a given dataset. Explicitly, we analytically maximize the likelihood according to each entry of the dataset.

We enforce the constraint that the sum over the output space of probabilities given any observations made has to sum to 1. To do so, we use a Lagrange multiplier $\lambda_{obs}$ for every different observation (in the case of our model: for every different triplet). The log-likelihood then takes the following form:
\begin{equation}
    \label{eqLimHauteP}
    \begin{split}
        &\ell = \sum_{(obs, x)} \ln P_{obs}(x) - \sum_{obs}\lambda_{obs}(\sum_x P_{(obs,x)}-1)\\
        \Leftrightarrow\ &\frac{\partial \ell}{\partial P_{obs_i}(x_i)} = \sum_{\partial (obs_i,x_j)}\frac{1}{P_{obs_i}(x_j)} - \lambda_{obs_i} = 0 \\
        \Leftrightarrow\ &P_{obs_i}(x_j) = \frac{1}{\lambda_{obs_i}}\sum_{\partial (obs_i,x_j)} 1
    \end{split}
\end{equation}
Where $obs_i$ correspond to any given couple of inputs (i,j) in the model presented in the main paper. We use the following notation: ${\partial (obs_i,x_j) = \{ obs \vert (obs_i, x_j) \in R^{\circ} \}}$, with $R^{\circ}$ the dataset entries. Therefore, we can define $\sum_{\partial (obs_i,x_j)} 1 \equiv N_{obs_i,x_j}$ the number of times $obs_i$ appears jointly with $x_j$ in the dataset. We are now looking for the $\lambda_{obs_i}$ maximizing the likelihood:
\begin{equation}
    \label{eqLimHauteLamb}
    \begin{split}
        &\frac{\partial \ell}{\partial \lambda_{obs_i}} = \sum_x P_{obs_i}(x)-1  = 0 \\
        \Leftrightarrow\ &\sum_x \frac{N_{obs_i,x}}{\lambda_{obs_i}} = 1 \\
        \Leftrightarrow\ &\lambda_{obs_i} = \sum_x N_{obs_i,x}\\
    \end{split}
\end{equation}
Finally, plugging Eqs.\ref{eqLimHauteP} and \ref{eqLimHauteLamb} together, we obtain:
\begin{equation}
    P_{obs_i}(x_j) = \frac{N_{obs_i, x_j}}{\sum_x N_{obs_i, x}}
\end{equation}
This equation gives the probability maximizing the likelihood for any entry of the dataset. In the main model, it translates to $P_{(i,j)}(x) = N_{(i,j), x}/\sum_x' N_{(i,j), x'}$ with $N_{(i,j),x}$ the number of times output x has been witnessed after a pair of inputs (i,j). Note that this is simply the frequency of an output given a pair of input entities.

Keep in mind that the result we just derived gives perfect predictions only for a particular dataset, and therefore and has no global predictive value. This tool is useful when it comes to assess the performance of other models' results, but is unusable in prediction tasks.

\comment{

\subsection{Clusters composition}
We provide in Tables \ref{TabPubMed},\ref{TabRed} and \ref{TabSpot} the exhaustive list of entities belonging to each cluster with more than 50\% membership (80\% for Spotify) for the datasets considered. 

Each one of the clusters has been manually given a name according to the lists present in the tables. The names presented here are the same as the ones shown in Fig.3 of the main article.
\begin{table*}
\caption{Composition of clusters for the PubMed dataset.}
\label{TabPubMed}
\centering
\setlength{\lgCase}{12cm}
\begin{tabular}{|p{0.2\lgCase}|p{\lgCase}|p{0\lgCase}}
 \cline{1-2}
 \centering Cluster & \centering Entities &  \\
 \cline{1-2}
1 - Vision & anisocoria, blindness, color vision defects, diplopia, eye hemorrhage, eye manifestations, eye pain, miosis, photophobia, pseudophakia, pupil disorders, scotoma, sensation disorders, tonic pupil, usher syndromes, vision, vision disorders, waterhouse-friderichsen syndrome & \\ 
 \cline{1-2} 
 2 - Blood & ecchymosis, glossalgia, oral hemorrhage, purpura & \\ 
 \cline{1-2} 
 3 - Seizures & seizures & \\ 
 \cline{1-2} 
 4 - Neurons & neurologic manifestations & \\ 
 \cline{1-2} 
 5 - Male genitals & dysmenorrhea, dysuria, encopresis, hirsutism, pelvic pain, prostatism, urinary bladder, virilism & \\ 
 \cline{1-2} 
 6 - Eating dis. & hyperphagia, obesity, overweight & \\ 
 \cline{1-2} 
 7 - Coma & coma, consciousness disorders, hypothermia, tetany, unconsciousness & \\ 
 \cline{1-2} 
 8 - Pain & pain & \\ 
 \cline{1-2} 
 9 - Facial pain & facial pain, toothache, trismus & \\ 
 \cline{1-2} 
 10 - Back pain & back pain, hot flashes, low back pain & \\ 
 \cline{1-2} 
 11 - Muscle & muscle cramp, muscle hypertonia, muscle rigidity, muscle spasticity, muscle weakness, myotonia & \\ 
 \cline{1-2} 
 12 - Mental & cafe-au-lait spots, deaf-blind disorders, hydrops fetalis, mental retardation, morning sickness & \\ 
 \cline{1-2} 
 13 - Paralysis & paralysis & \\ 
 \cline{1-2} 
 14 - Undefined & neuralgia, piriformis muscle syndrome & \\ 
 \cline{1-2} 
 15 - Speech disorder & articulation disorders, communication disorders, language disorders, speech disorders & \\ 
 \cline{1-2} 
 16 - Kidney & albuminuria, flank pain, hemoglobinuria, oliguria, polyuria, proteinuria, renal colic & \\ 
 \cline{1-2} 
 17 - Voice & aphonia, dysphonia, hoarseness, orthostatic intolerance, respiratory aspiration, vocal cord paralysis, voice disorders & \\ 
 \cline{1-2} 
 18 - Neuropsychatry & agraphia, alien hand syndrome, aphasia, gerstmann syndrome, headache, hemiplegia, paresis, systolic murmurs & \\ 
 \cline{1-2} 
 19 - Sceptisemia & hypergammaglobulinemia, pruritus, purpura fulminans, tinea pedis & \\ 
 \cline{1-2} 
 20 - Vertigo & dizziness, earache, hearing disorders, hyperacusis, motion sickness, presbycusis, tinnitus, vertigo & \\ 
 \cline{1-2} 
 21 - Mental & agnosia, alexia, anomia, auditory perceptual disorders, bulimia, delirium, dyslexia, hallucinations, learning disorders, memory disorders, mutism, neurobehavioral manifestations, perceptual disorders, phantom limb, pseudobulbar palsy, stuttering & \\ 
 \cline{1-2} 
 22 - Articulations & arthralgia, metatarsalgia, shoulder pain & \\ 
 \cline{1-2} 
 23 - Muscle weight & birth weight, body weight, labor pain, sarcopenia & \\ 
 \cline{1-2} 
 24 - Heart & acute coronary syndrome, amaurosis fugax, angina, angina pectoris, chest pain, heart murmurs, syncope & \\ 
 \cline{1-2} 
 25 - Liver & hyperemesis gravidarum, jaundice, necrolytic migratory erythema & \\ 
 \cline{1-2} 
 26 - Breath & cough, dyspnea, hemoptysis, respiratory paralysis & \\ 
 \cline{1-2} 
 27 - Sleep & sleep deprivation, sleep disorders, snoring & \\ 
 \cline{1-2} 
 28 - Abdominal pain & abdomen, abdominal pain, diarrhea, dyspepsia, eructation, flatulence, gastroparesis, vomiting & \\ 
 \cline{1-2} 
 29 - Movement & amblyopia, ataxia, athetosis, catalepsy, cerebellar ataxia, chorea, dyskinesias, dystonia, hyperkinesis, hypokinesia, myoclonus, supranuclear palsy, tics, torticollis, tremor & \\ 
 \cline{1-2} 
 30 - Brain & ageusia, amnesia, cerebrospinal fluid otorrhea, cerebrospinal fluid rhinorrhea, decerebrate state, persistent vegetative state & \\ 
 \cline{1-2} 
 
 \cline{1-2}
\end{tabular}
\end{table*}

\begin{table*}
\caption{Composition of clusters for the Reddit dataset.}
\label{TabRed}
\centering
\setlength{\lgCase}{12cm}
\begin{tabular}{|p{0.2\lgCase}|p{\lgCase}|p{0\lgCase}}
 \cline{1-2}
 \centering Cluster & \centering Entities &  \\
 \cline{1-2}
1 - Komodo dragon & dragon, eye, king, komodo, link, mate, ring & \\ 
 \cline{1-2} 
 2 - Diseases & antivaxxer, autism, child, disease, gene, herpes, immunity, measles, outbreak, parent, vaccination, vaccine & \\ 
 \cline{1-2} 
 3 - Weapons & ar15, felon, firearm, gun, handgun, knife, nra, nz, ownership, pistol, rifle, shooter, shooting, shotgun, weapon, zealand & \\ 
 \cline{1-2} 
 4 - Driving & accident, car, driver, driving, drunk, dui, license, road, spider, traffic, vehicle & \\ 
 \cline{1-2} 
 5 - Media & click, cnn, coverage, forum, fox, karma, marathon, media, meme, motive, news, outlet, page, post, reddit, sub, subreddit, td, upvote & \\ 
 \cline{1-2} 
 6 - J. Smollet hoax & hoax, jussie, maga, smollett, supporter & \\ 
 \cline{1-2} 
 7 - Punishment & capital, cruel, death, execution, inmate, murderer, offender, penalty, prison, prisoner, punishment, rehabilitation, revenge, row, sentence, torture & \\ 
 \cline{1-2} 
 8 - Catholicism & aid, bible, catholic, church, conversion, gay, homosexuality, marriage, mormon, pedophile, priest, satan, scout, temple, therapy, troop & \\ 
 \cline{1-2} 
 9 - Planes & air, airline, airport, battery, boeing, bottle, delivery, flight, passenger, pilot, plane, seat, tip, tsa & \\ 
 \cline{1-2} 
 10 - Statistics & average, cdc, homicide, increase, math, number, percent, rate, stat, statistic, study, suicide & \\ 
 \cline{1-2} 
 11 - Managment & business, ceo, charity, corporation, economy, employee, executive, fund, incentive, income, investment, market, pay, payer, profit, raise, revenue, salary, shareholder, spending, stock, tax, wage, wealth & \\ 
 \cline{1-2} 
 12 - Undefined & fallacy, slavery & \\ 
 \cline{1-2} 
 13 - Animals & animal, bag, beef, burger, cat, chicken, cow, dog, dumpster, factory, farm, farmer, food, fry, meat, milk, pet, pig, pit, plastic, poacher, puppy, restaurant, taste, vegan & \\ 
 \cline{1-2} 
 14 - Notre-Dame fire & art, building, cathedral, construction, dame, design, fire, france, glas, lead, notre, pari, roof, spire, stone, structure, tower & \\ 
 \cline{1-2} 
 15 - US politics & bernie, bush, campaign, candidate, clinton, congres, dem, democrat, election, gop, governor, hillary, obama, party, patriot, president, republican, senator, vote, voter & \\ 
 \cline{1-2} 
 16 - Police & cop, department, officer, polouse, taser, union & \\ 
 \cline{1-2} 
 17 - Feminism & birth, consent, discrimination, equality, feminism, feminist, gap, gender, incel, male, partner, peni, pill, sexism, tran, woman & \\ 
 \cline{1-2} 
 18 - School & bathroom, bully, clas, class, college, district, grade, school, student, teacher, university & \\ 
 \cline{1-2} 
 19 - US laws & amendment, constitution, freedom, liberty, ruling, speech, supreme, violation & \\ 
 \cline{1-2} 
 20 - International & arabia, china, east, eu, iran, iraq, oil, regime, saudi, socialism, tank, venezuela, war & \\ 
 \cline{1-2} 
 21 - Medicine & addict, addiction, cure, doctor, drug, fentanyl, healthcare, hospital, insulin, insurance, med, medication, medicine, patient, pharma, professional, substance, surgery, tooth, treatment, va & \\ 
 \cline{1-2} 
 22 - GAFA & amazon, apple, artist, computer, device, facebook, password, phone, software, tech, technology, thief & \\ 
 \cline{1-2} 
 23 - Smokers & capacity, cigarette, corner, disney, pack, park, smell, smoke, smoker, smoking, weed & \\ 
 \cline{1-2} 
 24 - Immigration & asylum, border, citizenship, entry, felony, illegal, immigrant, immigration, mexico, migrant, militia, patrol, port & \\ 
 \cline{1-2} 
 25 - Trial & accusation, attorney, bail, charge, client, condom, da, judge, jury, lawyer, plea, precedent, prosecution, prosecutor, rape, sweden, trial & \\ 
 \cline{1-2} 
 26 - Extremisms & alt, antifa, antisemitism, atheist, christian, christianity, extremist, fascism, fascist, hitler, ideology, islam, islamic, israel, jew, kkk, mosque, muslim, nationalism, nationalist, nazi, religion, supremacist, supremacy, symbol, terrorism, terrorist & \\ 
 \cline{1-2} 
 27 - Wikileak & alright, assange, attitude, dirt, dnc, email, info, intelligence, journalist, leak, proof, propaganda, public, putin, russia, russian, support, west, wikileak & \\ 
 \cline{1-2} 
 28 - US Coasts & apartment, area, baltimore, chicago, city, coast, francisco, homeles, housing, la, neighborhood, nyc, poop, san, sf, shelter, shithole, south, town, weather, york & \\ 
 \cline{1-2} 
 29 - Hobbies & band, channel, character, club, cult, episode, festival, game, golf, hitman, movie, music, night, pizza, player, podcast, robert, season, series, tiger, watch & \\ 
 \cline{1-2} 
 30 - Students & age, beer, card, credit, debt, drink, drinking, loan, minimum & \\ 
 \cline{1-2}

 \cline{1-2}
\end{tabular}
\end{table*}

\begin{table*}
\caption{Composition of clusters for the Spotify dataset. Note that here we define the membership threshold at 80\%: since T is smaller, more entities are likely to belong to any cluster more than 50\%, hence leading to hardly human-readable results.}
\label{TabSpot}
\centering
\setlength{\lgCase}{12cm}
\begin{tabular}{|p{0.25\lgCase}|p{\lgCase}|p{0\lgCase}}
 \cline{1-2}
 \centering Cluster & \centering Entities &  \\
 \cline{1-2}
1 - Latino & alaska y dinarama, alejandro sanz, andrés calamaro, arena hash, asian kung-fu generation, aterciopelados, bacilos, café tacvba, caifanes, d.a.n., donots, duncan dhu, ekhymosis, el gran silencio, el tri, elefante, fito paez, fobia, frankie valli and the four seasons, gustavo cerati, hawkwind, heroes del silencio, hombres g, jaguares, jarabe de palo, jethro tull, jr jr, jumbo, la ley, la mosca tse-tse, la unión, los bunkers, los claxons, los enanitos verdes, los fabulosos cadillacs, los prisioneros, los tres, magneto, maldita vecindad y los hijos del 5to. patio, maná, mar de copas, marty robbins, mikel erentxun, molotov, mägo de oz, nacha pop, nek, paellas, pedro suárez-vértiz, rata blanca, soda stereo, the horrors, van der graaf generator, victimas del doctor cerebro, vilma palma e vampiros & \\ 
 \cline{1-2} 
 2 - Blues rock & cartel de santa, david gilmour, hank williams, johnny hallyday, keith richards, kiss, melbourne symphony orchestra, myles kennedy & \\ 
 \cline{1-2} 
 3 - Christian rock \& Alt. rock & 6cyclemind, angee rozul, audio adrenaline, bamboo, beyond creation, charles bradley, cheese, chicosci, chris quilala, cueshé, eraserheads, francism, franco, hale, hilera, hillsong united, imago, kamikazee, kjwan, mayra andrade, mosaic msc, newsboys, orange and lemons, parokya ni edgar, petra, powerspoonz, rico blanco, rivermaya, sandwich, shawn mcdonald, slapshock, static-x, stray kids, the afters, the dawn, the speaks, the stone foxes, the twang, tobymac, typecast, urbandub & \\ 
 \cline{1-2} 
 4 - Metal & a skylit drive, fort minor, heilung, limp bizkit, p.o.d., papa roach, powerwolf & \\ 
 \cline{1-2} 
 5 - J-pop \& Alt. rock & alison mosshart, bigbang, bring me the horizon, crossfaith, dani filth, egoist, exo, exo-k, grimes, hyde, lisa, one ok rock, placebo, primal scream, reol, self deception, the oral cigarettes, vamps, ves tal vez, vixx, while she sleeps & \\ 
 \cline{1-2} 
 6 - Electro \& House & afrojack, diplo, dragonland, gloria trevi, james bay, labrinth, lost frequencies, lsd, nayer, scouting for girls, the word alive, walk off the earth & \\ 
 \cline{1-2} 
 7 - New wave & alphaville, baltimora, bananarama, barbra streisand, belinda carlisle, billy ocean, bronski beat, cool kids of death, culture club, cyndi lauper, desireless, dreamtale, erasure, eric 'et' thorngren, fine young cannibals, gentle giant, googolplex, haschak sisters, huey lewis and the news, imagination, information society, johnny hates jazz, kool and the gang, laura branigan, level 42, men at work, michael sembello, opeth, orchestral manoeuvres in the dark, pet shop boys, roxy music, soft cell, spandau ballet, starship, tears for fears, the bangles, the buggles, the human league, the j. geils band, the romantics, thompson twins, wang chung & \\ 
 \cline{1-2} 
 8 - Rock'n'roll \& J.Lennon & bill haley and his comets, chuck berry, frankie valli, jay and the americans, john denver, john lennon, little richard, the beach boys, the flux fiddlers, the plastic ono band, van morrison & \\ 
 \cline{1-2} 
 9 - Undefined & jeff buckley, karnivool, nirvana & \\ 
 \cline{1-2} 
 10 - Heavy metal & anthrax, arsenal, avenged sevenfold, beatsteaks & \\ 
 \cline{1-2} 
 11 - Folk rock & coldplay, dark moor, dermot kennedy, james blunt, power quest, taburete, tom walker & \\ 
 \cline{1-2} 
 12 - Recent pop & alan walker, alessia cara, alesso, alexandra porat, aloe blacc, anne-marie, arizona zervas, au/ra, ava max, axwell /\ ingrosso, becky hill, before you exit, benny blanco, billie eilish, blackbear, blackpink, bloodpop®, bonn, bryson tiller, camila cabello, cardi b, chance the rapper, clean bandit, daddy yankee, dakota, daya, demi lovato, digital farm animals, disciples, dj khaled, dj snake, elina, farruko, fifth harmony, francesco yates, grey, hailee steinfeld, halsey, illenium, jack and jack, jax jones, jessie reyez, jonas blue, jonas brothers, jp cooper, julia michaels, justin bieber, k-391, kane brown, khalid, kygo, lauv, lennon stella, liam payne, little mix, luis fonsi, major lazer, maren morris, marshmello, martin garrix, mø, niall horan, nick jonas, noah cyrus, normani, sabrina carpenter, sandro cavazza, selena gomez, shawn mendes, steeleye span, stefflon don, tainy, tiësto, tomine harket, troye sivan, why don't we, william singe, zara larsson, zedd & \\ 
 \cline{1-2} 
 13 - Alt. rock & blutengel, brian eno, greta van fleet, ledger, mew, muse, starsailor & \\ 
 \cline{1-2} 
 14 - Pop/Glam-rock & andy black, asobi seksu, el haragán y compañía, funeral for a friend, panic! at the disco, suspekt, twenty one pilots & \\ 
 \cline{1-2} 
 15 - Garage rock & interpol, kings of leon, liquido, mono, rainbow99, the hives & \\ 
 \cline{1-2} 
\end{tabular}
\end{table*}
}

\section{SDSBM - Explicit derivation of the E-step}
\label{SI-SDSBM-derivationEstep}
\subsection{Short derivation}
This demonstration can be found in \citep{Antonia2016AccurateAndScalableRS,Tarres2019TMBM,Poux2021IMMSBM}. We recall the log-likelihood as defined in the main paper:
\begin{align}
    \label{SDSBM-eq-L}
    \log P(\theta, p \vert R^{\circ}) &\propto \log P(R^{\circ} \vert \theta, p) \prod_t\prod_i P(\theta_i^{(t)})\prod_k P(p_k^{(t)}) \notag \\ 
    = &\sum_{(i,o,t) \in R^{\circ}} \log \sum_{k \in K} \theta_{i,k}^{(t)} p_k^{(t)}(o) \\
    &+ \sum_t \sum_i \log P(\theta_i^{(t)}) \sum_k \log P(p_k^{(t)}) \notag \\
    \geq &\sum_{(i,o,t) \in R^{\circ}}\sum_{k \in K} \omega_{i,o}^{(t)}(k) \log \frac{\theta_{i,k}^{(t)} p_k^{(t)}(o)}{\omega_{i,o}^{(t)}(k)} \notag \\
    &+ \sum_t \sum_i \log P(\theta_i^{(t)}) \sum_k \log P(p_k^{(t)}) \notag
\end{align}
In Eq.\ref{SDSBM-eq-L}, we introduced a proposal distribution $\omega_{i,o}^{(t)}(k)$, that represents the probability of one cluster allocation $k$ given the observation $(i,o,t)$. The last line followed from Jensen's inequality assuming that $\sum_k \omega_{i,o}^{(t)}(k) = 1$. We notice that Jensen's inequality holds as an equality when:
\begin{equation}
    \omega_{i,o}^{(t)}(k) = \frac{\theta_{i,k}^{(t)} p_k^{(t)}(o)}{\sum_{k'} \theta_{i,k'}^{(t)} p_{k'}^{(t)}(o)}
\end{equation}
which provides us with the expectation formula. The prior terms $P(\theta_i^{(t)})$ and $P(p_k^{(t)})$ have no effect on the result as they cancel in the inequality \ref{SDSBM-eq-L}.

\subsection{Full derivation}

The derivation presented in this section follows a well-known general derivation of the EM algorithm, which can be found in C.M. Bishop's \textit{Pattern Recognition and Machine Learning}-p.450 for instance.

We recall that one entry of the dataset $R^{\circ}$ takes the form of a tuple $(i, o, t)$, where $i$ is the input item and $o$ an associated label at time $t$. $k \in K$ denotes the latent variable accounting for cluster allocation among $K$ possible values. 
The total log-likelihood is the sum of each observation's log-likelihood. Without loss of generality, we focus on a single observation $(i,o,t)$. The expression of the log-posterior distribution for one observation reads:
\begin{align}
\label{SDSBM-eq-L2}
    &\log P(\theta^{(t)}, p^{(t)} \vert (i,o,t)) \\
    &\propto \log P(R^{\circ} \vert \theta^{(t)}, p^{(t)}) P(\theta_i^{(t)})\prod_k P(p_k^{(t)}) \notag \\ 
    &= \log P^{(t)}(i,o \vert \theta^{(t)}, p^{(t)}) + \log P(\theta_i^{(t)}) + \sum_k \log P(p_k^{(t)}) \notag
\end{align}
For an observation $(i,o,t) \in R^{\circ}$, we assume a distribution $Q_{i,o}^{(t)}(k)$ on the latent variables associated to it; this distribution is yet to be defined. Because $k$ takes values among $K$ possible ones, we have $\sum_{k \in K} Q_{i,o}^{(t)}(k) = 1$. Given this normalization condition, we can decompose each summed term in Eq.\ref{SDSBM-eq-L2} for any distribution $Q_{i,o}^{(t)}(k)$ as:

\begin{align}
    \label{SDSBM-eq-decomp}
    &\log P^{(t)}(i, o \vert \theta^{(t)}, p^{(t)}) \notag \\
    &= \underbrace{\log P^{(t)}(i, o, k \vert \theta^{(t)}, p^{(t)}) - \log P^{(t)}(k \vert i, o, \theta^{(t)}, p^{(t)})}_{\text{Does not depend on $k$}} \notag \\
    &=\sum_{k \in K} Q_{i,o}^{(t)}(k) \log P^{(t)}(i, o, k \vert \theta^{(t)}, p^{(t)}) \notag \\
    &\,\,\,\,- \sum_{k \in K} Q_{i,o}^{(t)}(k) \log P^{(t)}(k \vert i, o, \theta^{(t)}, p^{(t)}) \notag \\
    &=\sum_{k \in K} Q_{i,o}^{(t)}(k) \log \frac{P^{(t)}(i, o, k \vert \theta^{(t)}, p^{(t)})}{Q_{i,o}^{(t)}(k)}  \notag \\
    &\,\,\,\,- \sum_{k \in K} Q_{i,o}^{(t)}(k) \log \frac{P^{(t)}(k \vert i, o, \theta, p^{(t)})}{Q_{i,o}^{(t)}(k)}
\end{align}

We note that the term in the last line of Eq.\ref{SDSBM-eq-decomp}, $- \sum_{k \in K} Q_{i,o}^{(t)}(k) \log \frac{P^{(t)}(k \vert i, o, \theta^{(t)}, p^{(t)})}{Q_{i,o}^{(t)}(k)}$, is the Kullback-Leibler (KL) divergence between $P^{(t)}$ and $Q_{i,o}^{(t)}$, noted $KL(P^{(t)} \vert \vert Q_{i,o}^{(t)})$. The KL divergence obeys $KL(P^{(t)} \vert \vert Q_{i,o}^{(t)}) \geq 0$, and is null iif $P^{(t)}$ equals $Q_{i,o}^{(t)}$. Therefore, the term in the before-last line of Eq.\ref{SDSBM-eq-decomp}, $\sum_{k \in K} Q_{i,o}^{(t)}(k) \log \frac{P^{(t)}(i, o, k \vert \theta^{(t)}, p^{(t)})}{Q_{i,o}^{(t)}(k)}$, is interpreted as a lower bound on the log-likelihood $\log P^{(t)}(i, o \vert \theta^{(t)}, p^{(t)})$. 

The aim of the E-step is to find the expression of $Q_{i,o}^{(t)}(k)$ that maximizes the lower bound of the log-likelihood with respect to the latent variables $k$. Given that the log-likelihood does not depend on $Q_{i,o}^{(t)}(k)$ and $KL(P^{(t)} \vert \vert Q_{i,o}^{(t)}) \geq 0$, the lower-bound is maximized when $KL(P^{(t)} \vert \vert Q_{i,o}^{(t)}) = 0$, which occurs when $Q_{i,o}^{(t)}(k) = P^{(t)}(k \vert i, o, \theta^{(t)}, p^{(t)})$. In this case, the lower-bound on the log-likelihood equals the likelihood itself and thus reaches a global maximum with respect to the latent variables $k$ for fixed parameters $\theta^{(t)}$ and $p^{(t)}$.

Given the definition of our simple model, the derivation of $P(k \vert i, o, \theta^{(t)}, p^{(t)})$ is straightforward. The probability of one combination of clusters $k$ among $K$ possible combinations given an input features vector and an output $o$ is proportional to $p_{k}^{(t)}(o) \theta_{i,k}$. Therefore:

\begin{align}
    P^{(t)}(k \vert i, o, \theta^{(t)}, p^{(t)}) = \frac{p_{k}^{(t)}(o) \theta_{i,k}^{(t)}}{\sum_{k' \in K} p_{k'}^{(t)}(o)\theta_{i,k'}^{(t)}}
\end{align}

which is the expression of $\omega_{i, o}^{(t)}(k)$ in the main article. 

\section{SDSBM - Explicit derivation of the M-step for p}
\label{SI-SDSBM-derivationMstep}

\begin{align}
    &\frac{\partial \left(\log P(\theta, p \vert R^{\circ}) - \sum_{k',t'} \psi_{k'}^{(t')}(\sum_{o}p_{k'}^{(t')}(o)-1)\right)}{\partial p_{k}^{(t)}(o)} = 0 \notag \\
    &\Leftrightarrow \sum_{(i,t) \in \partial o} \frac{\omega_{i,o}^{(t)}(k)}{p_{k}^{(t)}(o)} + \frac{\beta\langle p_{k}^{(t)}(o)\rangle}{p_{k}^{(t)}(o)} - \psi_{k}^{(t)} = 0 \notag \\
    &\Leftrightarrow \sum_{(i,t) \in \partial o} \omega_{i,o}^{(t)}(k) + \beta\langle p_{k}^{(t)}(o)\rangle = \psi_{k}^{(t)}p_{k}^{(t)}(o) \notag \\
    &\Leftrightarrow \sum_{(i,t) \in \partial o} \sum_o \omega_{i,o}^{(t)}(k) + \beta\underbrace{\sum_o \langle p_{k}^{(t)}(o)\rangle}_{=1 \text{}} = \psi_{k}^{(t)} \notag \\
    &\Leftrightarrow \frac{\sum_{(i,t) \in \partial o} \omega_{i,o}^{(t)}(k) + \beta\langle p_{k}^{(t)}(o)\rangle}{\sum_{(i,o,t) \in R^{\circ}} \omega_{i,o}^{(t)}(k) + \beta} = p_{k}^{(t)}(o)
\end{align}

\section{SDSBM - Using the prior in related works}
\label{SI-SDSBM-inclusion-SotA}
Throughout this section, we highlight the changes brought by our method to the EM equations derived in the mentioned papers. In summary, we see that out method allows to make these works dynamic with minimal changes of the original models.

\subsection{Bi-MMSBM}
In \citep{Antonia2016AccurateAndScalableRS}, the authors apply a MMSBM to a labeled bipartite network. The nodes on each side of the bipartite network are associated to their own membership matrix; membership of nodes $i \in I$ over $K$ clusters is encoded into $\theta \in \mathbb{R}^{I \times K}$, and membership of nodes $j \in J$ over $L$ clusters is encoded into $\eta \in \mathbb{R}^{J \times L}$. The block-interaction matrix for the label $o \in O$ is noted $p(o) \in \mathbb{R}^{K \times L}$.

Assuming a temporal version, items $i$ and $j$ to be linked by a label $o$ at time $t$ reads:
\begin{equation}
    P^{(t)}(o \vert i,j) = \sum_{k \in K}\sum_{l \in L} \theta_{i,k}^{(t)}\eta_{j,l}^{(t)}p_{k,l}^{(t)}(o)
\end{equation}

Given the same set of observation s $R^{\circ}$ as in the main article, the posterior distribution follows:
\begin{align}
\label{eq-postBiMMSBM}
    P(\theta, \eta, p \vert R^{\circ}) &= P(R^{\circ} \vert \theta, \eta, p) \\
    &\times \prod_t \left(\prod_i P(\theta_i^{(t)}) \prod_j P(\eta_j^{(t)}) \prod_{k,l} P(p_{k,l}^{(t)}) \right) \notag
\end{align}
such that:
\begin{align}
    &P(R^{\circ} \vert \theta, \eta, p) = \prod_{(i,j,t,o)\in R^{\circ}} \sum_{k \in K}\sum_{l \in L} \theta_{i,k}^{(t)}\eta_{j,l}^{(t)}p_{k,l}^{(t)}(o) \\
    &P(\theta_i^{(t)}) \propto \prod_k {\theta_{i,k}^{(t)}}^{\beta \langle \theta_{i,k}^{(t)} \rangle} \\
    &P(\eta_j^{(t)}) \propto \prod_l {\eta_{j,l}^{(t)}}^{\beta \langle \eta_{j,l}^{(t)} \rangle} \\
    &P(p_{k,l}^{(t)}) \propto \prod_o {p_{k,l}^{(t)}}(o)^{\beta \langle p_{k,l}^{(t)}(o) \rangle}
\end{align}

where $\langle x^{(t)} \rangle = \frac{\sum_{t' \neq t} \kappa(t,t') x^{(t')}}{\sum_{t' \neq t} \kappa(t,t')}$. The expectation step is not influenced by the priors choice and is the same as in \citep{Antonia2016AccurateAndScalableRS} for each temporal slice. The new maximization steps are:

\begin{align}
    &\theta_{i,k}^{(t)} = \frac{\sum_l \sum_{(o,j) \in \partial(i,t)} \omega_{i,j,o}^{(t)}(k,l) \rd{+ \beta\langle\theta_{i,k}^{(t)}\rangle}}{N_{i,t}\rd{+\beta}} \notag\\
    &\eta_{j,l}^{(t)} = \frac{\sum_k \sum_{(o,i) \in \partial(j,t)} \omega_{i,j,o}^{(t)}(k,l) \rd{+ \beta\langle\eta_{j,l}^{(t)}\rangle}}{N_{j,t}\rd{+\beta}} \notag\\
    &p_{k,l}^{(t)}(o) = \frac{\sum_{(i,j,t) \in \partial o} \omega_{i,j,o}^{(t)}(k,l) \rd{+ \beta\langle p_{k,l}^{(t)}(o)\rangle}}{\sum_{(i,j,o,t) \in R^{\circ}} \omega_{i,j,o}^{(t)}(k,l)\rd{+\beta}} \notag
\end{align}

Here again, $\beta$ is set fixed for demonstration purposes, but can be tuned at will by the user. This allows to choose the extent to which dynamics shall be smoothed, or ignored.

\subsection{T-MBM}
The T-MBM is essentially the same model as \citep{Antonia2016AccurateAndScalableRS} but with one type of entry that can appear twice in one observation. Both entries share the same membership matrix $\theta$. The probability of a label of type $o$ given entries $h$; $i$ and $j$ at time $t$ is now written:
\begin{equation}
    P(o \vert h,i,j,t) = \sum_{k \in K}\sum_{l \in L}\sum_{m \in M} \theta_{h,k}^{(t)}\theta_{i,l}^{(t)}\eta_{j,m}^{(t)}p_{k,l,m}^{(t)}(o)
\end{equation}

The posterior distribution follows the same expression as in Eq.\ref{eq-postBiMMSBM}. The expectation step is left unchanged by the choice of the priors, and the new maximization equations are given below:
\begin{align}
    &\theta_{h,k}^{(t)} = \frac{\sum_{l,m} \sum_{(o,i,j) \in \partial(h,t)} \omega_{h,i,j,o}^{(t)}(k,l,m) \rd{+ \beta\langle\theta_{h,k}^{(t)}\rangle}}{N_{h,t}\rd{+\beta}} \notag\\
    &\eta_{i,l}^{(t)} = \frac{\sum_{k,m} \sum_{(o,h,j) \in \partial(i,t)} \omega_{h,i,j,o}^{(t)}(k,l,m) \rd{+ \beta\langle\eta_{i,l}^{(t)}\rangle}}{N_{i,t}\rd{+\beta}} \notag\\
    &p_{k,l,m}^{(t)}(o) = \frac{\sum_{(h,i,j,t) \in \partial o} \omega_{h,i,j,o}^{(t)}(k,l,m) \rd{+ \beta\langle p_{k,l,m}^{(t)}(o)\rangle}}{\sum_{(h,i,j,o,t) \in R^{\circ}} \omega_{h,i,j,o}^{(t)}(k,l,m)\rd{+\beta}} \notag
\end{align}

\section{SDSBM - Inferring two dynamic matrices of parameters}
\label{SI-SDSBM-bothdynmatrices}
In the main article, $p$ is provided to the model and only $\theta$ has to be inferred. Doing so, we can confront inferred membership vectors to the ground truth while avoiding label-switching issues \citep{Matias2017DynSBM,Lee2019ReviewSBMs}. When $p$ is also to be inferred, finding a correspondence between the inferred clusters and the ground-truth is not a trivial task, and cannot be performed in unbiased ways. However, the good results yielded by the model, presented in Fig.\ref{fig-varPinferred}, when also inferring $p$ hints that the membership vectors are correctly inferred.

\begin{figure}
    \centering
    \includegraphics[width=\columnwidth]{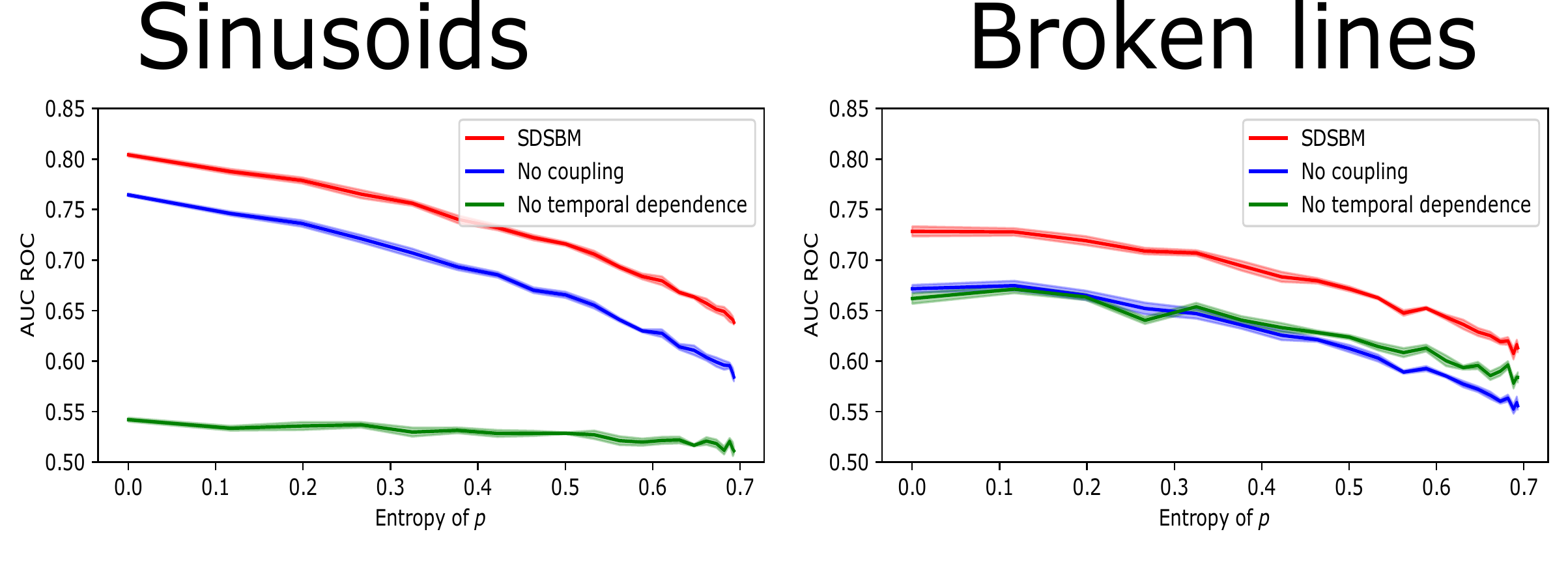}
    \caption[SDSBM - Experimental results when inferring both $\theta$ and $p$ jointly]{Experimental results when inferring both $\theta$ and $p$ jointly. The AUC-ROC is as good as when $p$ is provided to the model.}
    \label{fig-varPinferred}
\end{figure}

\section{SDSBM - Clusters composition for the Epigraphy experiment}
\label{SI-SDSBM-clusComp}
\begin{itemize}
    \item Cluster 0
    \begin{itemize}
    	\item Roma (98.0\%)
    \end{itemize}
    \item Cluster 1
    \begin{itemize}
    	\item Latium et Campania (32.0\%)
    	\item Venetia et Histria (14.0\%)
    	\item Samnium (11.0\%)
    	\item Umbria (10.0\%)
    	\item Apulia et Calabria (8.0\%)
    \end{itemize}
    \item Cluster 2
    \begin{itemize}
    	\item Pannonia superior (15.0\%)
    	\item Dalmatia (11.0\%)
    	\item Noricum (10.0\%)
    	\item Hispania citerior (7.0\%)
    	\item Gallia Narbonensis (6.0\%)
    \end{itemize}
    \item Cluster 3
    \begin{itemize}
    	\item Dacia (24.0\%)
    	\item Pannonia inferior (17.0\%)
    	\item Moesia inferior (15.0\%)
    	\item Syria (6.0\%)
    	\item Numidia (6.0\%)
    	\item Pannonia superior (5.0\%)
    \end{itemize}
    \item Cluster 4
    \begin{itemize}
    	\item Germania superior (24.0\%)
    	\item Mauretania Caesariensis (11.0\%)
    	\item Asia (11.0\%)
    	\item Etruria (11.0\%)
    	\item Galatia (9.0\%)
    \end{itemize}
\end{itemize}

\clearpage
\chapter{Appendix -- Temporal diffusion networks}

\label{Appendix3}

\section{Implementation of Clash of the Contagions}
\label{InterRate-implemCoC}
In this appendix, we provide technical details on the way the Clash of Contagions baseline is implemented. Following the directions given in the reference article \citep{Myers2012CoC}, we implemented a Stochastic Gradient Descent (SGD) method for parameters inference. Given the small number of entities considered in the experiments, each iteration of the SGD is computed using the full dataset instead of slicing it into mini-batches. 

\subsection{Setup}
For each corpus, we run the SGD algorithm 100 times, from which we save the parameters maximizing the likelihood the most. At the beginning of each run, parameters M and $\Delta$ are randomly initialized. The stopping condition makes the algorithm ends when the relative variation of the likelihood according to the last iteration is been lesser than $10^{-6}$ for more than 30 times in a row; those numbers have been chosen empirically to maximize the performances of the algorithm. The hyper-parameters have been set to: T=5 (number of clusters) and K=20 (number of considered time steps).

\subsection{Update rule}
In each iteration, we update the parameters in the direction of the gradient descent (noted G). However, a major problem when dealing with SGD is to choose the line step length $\eta$ (the amplitude of the variation of the parameters in the direction of the gradient G). After each iteration, we compare several update rules, and we select the one maximizing the likelihood. Those rules are as follows:
\begin{itemize}
\item AdaGrad: $\eta^{AG} \times G$
\item AdaDelta: $\eta^{AD} \times G$
\item Line search in the direction of the gradient: $\eta^{LS} \times G$
\item Line search in the direction of AdaDelta: $\eta^{LS} \times \eta^{AD} \times G$
\end{itemize}
The line search snippets consider 50 values of $\eta^{LS}$ logarithmically distributed in the interval $[ 10^{-6} ; 10^3 ]$. 

\subsection{Constraints on the parameters}
The membership vectors entries $M_{i,t}$ (membership of i to cluster t) must be positive and sum to 1 over all the clusters ($\sum_{t} M_{i,t} = 1$). In order to enforce this constraint, we consider the following variable change: $M_{i,t} \rightarrow \frac{\phi_{i,t}^2}{\sum_{t'} \phi_{i,t'}^2}$. This transformation guarantees the membership vector properties with no need for penalty methods in the implementation.

Besides, as stated in \citep{Myers2012CoC}, it can happen that a probability is larger then 1 or lesser than 0. In the absence of complementary details in the main paper, we implemented our own method to force the probabilities to stay within reasonable bounds. Here it is impossible to make a simple variable change to enforce this constraint, since the probability results of a non-linear combination of the model's parameters. We added to the likelihood an exponential penalty term. Let P denote a quantity we want to constrain between 0 and 1. The penalty term equals $e^{-\lambda P} + e^{\lambda(P-1)}$. $\lambda$ here is a parameter that tunes the intensity of the penalty, and is empirically set to $\lambda=75$. This penalty function has the form of a well with very steep walls in x=0 and x=1. In this way, it seldom happens that probabilities are larger than 1 or lesser than 0, as said in the main article. When such cases happen, we simply set it back to the closest bound for methods comparisons.

\clearpage

\chapter{Appendix -- Dirichlet-Survival Process}

\label{Appendix5}

\section{Deriving NetRate}
The input of NetRate is a collection of observed cascades $C=\{ \Vec{c} \}_{\Vec{c}=\Vec{c_1},\Vec{c_2},...}$. Each cascade is a collection of events with timestamps $\Vec{c} = \{ (u_i^c, t_i^c) \}_i$, where $u_i^c$ is the node on which the $i^{th}$ event occurred and $t_i^c$ the time at which it happened in cascade $c$. Using survival analysis, \citep{GomezRodriguez2011NetRate} denotes the likelihood of an infection of node $u_i^c$ at time $t_i^c$ in cascade $c$ by any other node $u_j^c$ previously infected at time $t_j^c$ in the same cascade as $f(t_i^c \vert t_j^c, \alpha_{u_j^c,u_i^c})$, where $\alpha_{u_j^c,u_i^c}$ is an entry of the objective network's adjacency matrix. In survival analysis' framework, $f(t_i^c \vert t_j^c, \alpha_{u_j^c,u_i^c})$ is linked to the instantaneous infection rate (or hazard rate) of $u_i^c$ at time $t_i^c$ by $u_j^c$ previously infected at time $t_j^c$, noted $\lambda (t_i^c \vert t_j^c, \alpha_{u_j^c,u_i^c})$, and the probability of non-infection of $u_i^c$ up to time $t_i^c$ by $u_j^c$ previously infected at time $t_j^c$, noted $S(t_i^c \vert t_j^c, \alpha_{u_j^c,u_i^c})$, by the following relation:
\begin{equation}
    \label{eq-survival}
    f(t_i^c \vert t_j^c, \alpha_{u_j^c,u_i^c}) = \lambda (t_i^c \vert t_j^c, \alpha_{u_j^c,u_i^c}) S(t_i^c \vert t_j^c, \alpha_{u_j^c,u_i^c})
\end{equation}

Within a cascade, the likelihood that a node get infected by only one neighbour $u_j^c$ previously infected at time $t_j^c$ can be written as the likelihood of infection at time $t_i^c$ by $u_j^c$ previously infected at time $t_j^c$ times its probability of survival to every previously infected node:
\begin{equation}
    \label{eq-lik-link}
    f(t_i^c \vert t_j^c, \alpha_{u_j^c,u_i^c}) \prod_{k \neq j, t_k^c<t_i^c} S(t_i^c \vert t_k^c, \alpha_{u_k^c,u_i^c})
\end{equation}

The likelihood of an infection by any neighbour then becomes the sum of those candidate disjoint events:
\begin{equation}
    \label{eq-infection}
    \begin{split}
          &\sum_{j, t_j^c<t_i^c} f(t_i^c \vert t_j^c, \alpha_{u_j^c,u_i^c}) \prod_{k \neq j, t_k^c<t_i^c} S(t_i^c \vert t_k^c, \alpha_{u_k^c,u_i^c})\\
          \stackrel{\text{Eq.\ref{eq-survival}}}{=} &\sum_{j, t_j^c<t_i^c} \lambda (t_i^c \vert t_j^c, \alpha_{u_j^c,u_i^c}) \prod_{t_k^c<t_i^c} S(t_i^c \vert t_k^c, \alpha_{u_k^c,u_i^c})
    \end{split}
\end{equation}

The likelihood of a cascade then becomes the product of the likelihood of every event it contains, and the total likelihood $\mathcal{L}(C \vert A)$ of every cascade the product over every cascade. Let $A$ be the network's adjacency matrix whose entries $\Vec{\alpha_{i,j}}$ are directed edges from $i$ to $j$. Then:
\begin{equation}
    \label{eq-cascades}
    \mathcal{L}(C \vert A) = \prod_{\Vec{c} \in C} \prod_{t_i^c \in \Vec{c}} \sum_{j, t_j^c<t_i^c} \lambda (t_i^c \vert t_j^c, \alpha_{u_j^c,u_i^c}) \prod_{t_k^c<t_i^c} S(t_i^c \vert t_k^c, \alpha_{u_k^c,u_i^c})
\end{equation}

\cleardoublepage

\listoffigures %
\addchaptertocentry{List of figures}

\listoftables %
\addchaptertocentry{List of tables}

\listofmyequations
\addchaptertocentry{List of equations}

\printglossary[type=\acronymtype, toctitle=Acronyms]

\printglossary[toctitle=Glossary]

\clearpage
\setcounter{tocdepth}{4}
\tableofcontents %
\addchaptertocentry{Full table of contents}

\clearpage
\thispagestyle{empty}
\vspace*{\fill}
\begin{center}
\qedsymbol
\end{center}
\vfill %
\clearpage
\thispagestyle{empty}
\ 
\clearpage

\end{document}